\newcommand{\comment}[1]{}   
\def\pr{\mathop{\bf Pr}\limits}
\renewcommand{\leq}{\leqslant}
\renewcommand{\geq}{\geqslant}
\renewcommand{\le}{\leqslant}
\renewcommand{\ge}{\geqslant}
\newcommand{\eps}{\varepsilon}
\renewcommand{\epsilon}{\varepsilon}
\newcommand{\mv}[1]{{\mathbf {#1}}}
\newcommand{\vnote}[1]{}
\newcommand{\F}{\mathbb{F}}
\newcommand{\R}{\mathbb{R}}
\newcommand{\rs}{{\rm RS}}
\newcommand{\myvec}[1]{\mathbf{#1}}
\newcommand{\vzero}{\myvec{0}}
\newcommand{\vu}{\myvec{u}}
\newcommand{\vv}{\myvec{v}}
\newcommand{\vn}{\myvec{n}}
\newcommand{\vw}{\myvec{w}}
\newcommand{\vx}{\myvec{x}}
\newcommand{\vy}{\myvec{y}}
\newcommand{\vz}{\myvec{z}}
\newcommand{\vr}{\myvec{r}}
\newcommand{\X}{\mathcal{X}}
\newcommand{\Y}{\mathcal{Y}}
\newtheorem{theorem}{Theorem}[section]
\newtheorem{defn}[theorem]{Definition}
\newtheorem{cor}[theorem]{Corollary}
\newtheorem{prop}[theorem]{Proposition}
\newtheorem{lemma}[theorem]{Lemma}
\newtheorem{remark}[theorem]{Remark}
\newcommand{\M}{\mathcal{M}}
\newcommand{\B}{\mathcal{B}}
\newcommand{\light}{\mathcal{L}}
\newcommand{\nbr}{\Gamma}
\newcommand{\edg}{\mathcal{E}}
\newcommand{\supp}{\mathrm{supp}}
\newcommand{\splt}{\mathrm{split}}
\newcommand{\lw}{\mathrm{LW}}
\newcommand{\badone}{{\sc Bad-1}}
\newcommand{\badtwo}{{\sc Bad-2}}
\newcommand{\badthree}{{\sc Bad-3}}
\newcommand{\badfour}{{\sc Bad-4}}
\newcommand{\ltlt}{\ell_2/\ell_2}
\newcommand{\lolo}{\ell_1/\ell_1}
\newcommand{\poly}{\mathrm{poly}}
\newcommand{\N}{\mathcal{N}}
\newcommand{\algo}{A}
\newcommand{\deter}[1]{\Psi(#1)}
\newcommand{\ndeter}[1]{\Pi(#1)}
\newcommand{\Mlb}{\zeta^{-6}\eta^{-2}\cdot k^{\frac{1-\alpha}{1-2\alpha}}\cdot (\log(N/k))^2}
\newcommand{\schemeo}{{\sc Scheme-}$1$}
\newcommand{\schemet}{{\sc Scheme-}$2$}
\newcommand{\E}{\mathbb E}
\newcommand{\Rdist}{\mathcal R}
\newcommand{\D}{\mathcal D}
\newcommand{\cost}{{\rm cost}}
\begin{document}
\title{\textbf{$\ltlt$-foreach sparse recovery with low risk}\thanks{An extended
abstract of this work will appear in the proceedings of ICALP 2013. 
ACG's work is supported in part by NSF CCF 1161233.
HQN's work
is partly supported by NSF grant CCF-1161196. 
AR is
supported by NSF CAREER grant CCF-0844796 and NSF grant CCF-1161196. 
MJS is supported in part by NSF CCF 0743372 and NSF CCF 1161233.}\\ \textsc{(Preliminary Version)}}

\author{\textsc{Anna C. Gilbert}\footnotemark[2] \and \textsc{Hung Q. Ngo}\footnotemark[3] \and \textsc{Ely Porat}\footnotemark[4] \and \textsc{Atri Rudra}\footnotemark[3] \and \textsc{Martin J. Strauss}\footnotemark[2]}

\date{\footnotemark[2]~~University of Michigan, 
{\tt \{annacg,martinjs\}@umich.edu}\\
\vspace*{3mm}
\footnotemark[3]~~University at Buffalo (SUNY),
{\tt \{hungngo,atri\}@buffalo.edu}\\
\vspace*{3mm}
\footnotemark[4]~~Bar-Ilan University,
{\tt porately@cs.biu.ac.il}
}

\maketitle

\setcounter{page}{0}
\thispagestyle{empty}

\begin{abstract}
In this paper, we consider the ``foreach'' sparse recovery problem with failure probability $p$.  The goal of which is to design a distribution over $m \times N$ matrices $\Phi$ and a decoding algorithm $\algo$ such that for every $\vx\in\R^N$, we have the following error guarantee with probability at least $1-p$
\[\|\vx-\algo(\Phi\vx)\|_2\le C\|\vx-\vx_k\|_2,\]
where $C$ is a constant (ideally arbitrarily close to $1$) and $\vx_k$ is the best $k$-sparse approximation of $\vx$.

Much of the sparse recovery or compressive sensing literature has focused on the case of either $p = 0$ or $p = \Omega(1)$.  We initiate the study of this problem for the entire range of failure probability.  Our two main results are as follows:
\begin{enumerate}
\item We prove a lower bound on $m$, the number measurements, of
$\Omega(k\log(n/k)+\log(1/p))$ for $2^{-\Theta(N)}\le p <1$. 
Cohen, Dahmen, and DeVore~\cite{CDD2007:NearOptimall2l2} prove that this bound is tight.
\item We prove nearly matching upper bounds for \textit{sub-linear} time decoding. Previous such results addressed only $p = \Omega(1)$.
\end{enumerate}

Our results and techniques lead to the following corollaries:
(i) the first ever sub-linear time decoding $\lolo$ ``forall" sparse recovery
system that requires a $\log^{\gamma}{N}$ extra factor (for some $\gamma<1$) over
the optimal $O(k\log(N/k))$ number of measurements, and
(ii) extensions of Gilbert et al.~\cite{GHRSW12:SimpleSignals} results for information-theoretically bounded adversaries.

Cohen, Dahmen, and DeVore~\cite{MR2449058} prove a $\Omega(N)$ lower bound
for the ``forall" case (i.e. $p=0$).
Our lower bound technique is inspired by their geometric approach,
and is thus very different from prior lower bound proofs using 
communication complexity. 
Our technique yields a stronger result which holds for the entire range 
of failure probability $p$, as well as provides a simpler, more 
intuitive proof of the original result by Cohen et al. 
For the upper bounds, we provide several algorithms that span the trade-offs between number of measurements and failure probability.  These algorithms include several innovations that may be of use in similar applications.  They include a new combination of the recursive constructive sparse recovery technique of Porat and Strauss~\cite{ely-martin} and the efficiently decodable list recoverable codes. These list recoverable codes focus on a different range of parameters than the ones considered in traditional coding theory. Our best parameters are obtained by considering a code defined by the algorithmic version of the Loomis-Whitney inequality due to Ngo, Porat, R\'{e} and Rudra~\cite{NPRR}.
\end{abstract}
\newpage

\section{Introduction}

In a large number of modern scientific and computational applications, we have
considerably more data than we can hope to process efficiently and more data
than is essential for distilling useful information.  Sparse signal recovery 
\cite{cs-survey} is one method for both reducing the amount of data we collect
or process initially and then, from the reduced collection of observations, 
recovering (an approximation to) the key pieces of information in the data.
Sparse recovery assumes the following mathematical model: a data point is a
vector $\vx \in \R^N$, using a matrix $\Phi$ of size $m \times N$, where $m \ll
N$, we collect ``measurements" of $\vx$ non-adaptively and linearly as $\Phi
\vx$; then, using a ``recovery algorithm'' $\algo$, we return a good approximation to $\vx$. The error guarantee must satisfy
\begin{equation}
\label{eq:sr-condn}
    \|\vx-\algo(\Phi\vx)\|_2\le C\|\vx-\vx_k\|_2,
\end{equation}
where $C$ is a constant (ideally arbitrarily close to $1$) and $\vx_k$ is the
best $k$-sparse approximation of $\vx$.  This is customarily called an {\em
$\ltlt$-error guarantee} in the literature. This paper considers the sparse
recovery problem with failure probability $p$, the goal of which is to design a
distribution over $m \times N$ matrices $\Phi$ and a decoding algorithm $\algo$
such that for every $\vx\in\R^N$, the error guarantee holds with probability at
least $1-p$. The reader is referred to \cite{cs-survey} and the references
therein for a survey of sparse matrix techniques for sparse recovery, and to \cite{SPMagazine2008} for a collection of articles (and the references therein) that emphasize the applications of sparse recovery in signal and image processing.

There are many parameters of interest in the design problem: (i) number of
measurements $m$; (ii) decoding time, i.e. runtime of algorithm $\algo$; (iii) approximation factor $C$ and (iv) failure probability $p$. We would like to minimize all the four parameters simultaneously. It turns out, however, that optimizing the failure probability $p$ can lead to wildly different recovery schemes.  Much of the sparse recovery or compressive sensing literature has focused on the
case of either $p = 0$ (which is called the ``{\em forall}'' model) or $p =
\Omega(1)$ (the ``{\em foreach}'' model).  Cohen, Dahmen, and
DeVore~\cite{MR2449058} showed a lower bound of $m = \Omega(N)$ for the number
of measurements when $p = 0$, rendering a sparse recovery system
useless as one must collect (asymptotically) as many measurements as the 
length of the original signal\footnote{For this reason, all of the forall sparse signal recovery results satisfy a different, weaker error guarantee. E.g. in the $\lolo$ forall sparse recovery we replace the condition (\ref{eq:sr-condn}) by $\|\vx-\algo(\Phi\vx)\|_1\le C\|\vx-\vx_k\|_1$.}.  Thus, algorithmically
there is not much to do in this regime.

The case of $p\ge \Omega(1)$ has resulted in much more algorithmic success.  Cand\'es and Tao showed in \cite{Candes:2006ej} that $O(k \log(N/k))$ random measurements with a polynomial time recovery algorithm are sufficient for compressible vectors and Cohen, et al. \cite{MR2449058} show that $O(k\log(N/k))$ measurements are sufficient for any vector (but the recovery algorithm given is not polynomial time).  In a subsequent paper, Cohen, et al. \cite{CDD2007:NearOptimall2l2} give a polynomial time algorithm with $O(k\log(N/k))$ measurements.  The next goal was to match the $O(k\log(N/k))$ measurements but with {\em sub-linear} time decoding. This was achieved by
Gilbert, Li, Porat, and Strauss~\cite{GLPS12:ForeachOptimal} who showed that
there is a distribution on $m \times N$ matrices with $m = O(k \log(N/k))$ and a decoding algorithm $\algo$ such that, for each $\vx \in\R^N$ the $\ltlt$-error guarantee is satisfied with probability $p = \Theta(1)$. The next natural goal was to nail down the correct dependence on $C=1+\eps$. Gilbert et al.'s result actually needs $O(\frac{1}{\eps}k\log(N/k))$ measurements. This was then shown to be tight by Price and Woodruff~\cite{PW11}.

At this point, we completely understand the problem for the case of $p=0$ or $p=\Omega(1)$. Somewhat surprisingly, there is {\em no} work that has explicitly considered the $\ltlt$ sparse recovery problem when
$0<p\le o(1)$. The main goal of this paper is to close this gap in our understanding.

Given the importance of the sparse recovery problem, we believe that it is
important to close the gap. Similar studies have been done extensively in a  closely related field: coding theory. While the model of worst-case errors pioneered by Hamming (which corresponds to the forall model) and the oblivious/stochastic error model pioneered by Shannon (which corresponds to the foreach model) are most well-known, there is a rich set of results in trying to understand the power of intermediate channels,
including the arbitrarily varying channel~\cite{avc-survey}. Another way to consider intermediate channels is to consider computationally bounded adversaries~\cite{L94}.
Gilbert et al.~\cite{GHRSW12:SimpleSignals} considered a computationally bounded
adversarial model for the sparse recovery problem in which signals are generated
neither obliviously (as in the foreach model) nor adversarially (in the forall model) in order to interpolate between the forall and foreach signal models. 
Our results in this paper imply new results for the $\lolo$ sparse recovery problem as well as the $\ltlt$ sparse recovery problem against bounded adversaries.

Our main contributions are as follows.
\begin{enumerate}
\item We prove that the number measurements has to be $\Omega(k\log(N/k)+\log(1/p))$ for $2^{-\Theta(N)}\le p <1$. 
\item We prove nearly matching upper bounds for \textit{sub-linear} time decoding. 
\item We present applications of our result to obtain 
 \begin{itemize}
   \item[(i)] the best known number
of measurements for $\lolo$ {\em forall} sparse recovery with sublinear
($\poly(k,\log{N})$) time decoding, and 
   \item[(ii)] nearly tight upper and lower bounds on the number of measurements needed to perform $\ltlt$-sparse recovery against information-theoretically bounded adversary.
 \end{itemize}
\end{enumerate}

As was motioned earlier, there are many parameters one could optimize. We will not pay very close attention
to the approximation factor $C$, other than to stipulate that $C\le O(1)$. In
most of our upper bounds, we can handle $C=1+\eps$ for an arbitrary constant
$\eps$, but optimizing the dependence on $\eps$ is beyond the scope of this
paper.

\paragraph{Lower Bound Result.} We prove a lower bound of
$\Omega(\log(1/p))$ on the number of measurements when the failure probability
satisfies $2^{-\Theta(N)}\le p <1$. (When $p\le 2^{-\Omega(N)}$, our results
imply a tight bound of $m = \Omega(N)$.) The $\Omega(\log(1/p))$ lower bound along with
the lower bound of $\Omega(k\log(N/k))$ from~\cite{PW11} implies the final form
of the lower bound claimed above. The obvious follow-up question is whether this
bound is tight. Indeed, an upper bound result Cohen, Dahmen, and
DeVore~\cite{CDD2007:NearOptimall2l2} proves that this bound is tight if we only
care about polynomial time decoding (see Theorem~\ref{thm:upper-poly}). Thus,
the interesting algorithmic question is how close we can get to this bound with
sub-linear time decoding.

\paragraph{Upper Bound Results.}
For the upper bounds, we provide several algorithms that span the trade-offs
between number of measurements and failure probability.  For completeness, we
include the running times and the space requirements of the algorithms and
measurement matrices in Table~\ref{tab:results}, which summarizes our main 
results and compares them with existing results. 

\begin{table}[h!]
	\centering
    \tiny
	\begin{tabular}{|c|c|c|c|c|c|}
		\hline
		Reference & $k$ & $m$  & $p$  & Decoding time& Space\\
		\hline
		\cite{CDD2007:NearOptimall2l2} & Any $k$ & $k\log(N/k)+\log(1/p)$ & Any $p$ & $\poly(N)$ &$\poly(N)$\\
		\hline
		\cite{GLPS12:ForeachOptimal}& Any $k$ & $k\log(N/k)$ & $p\ge \Omega(1)$ & $k\cdot \poly\log{N}$ & $k\cdot \poly\log{N}$\\
		\hline
		\cite{ely-martin}& $k\ge N^{\Omega(1)}$ & $k\log(N/k)$ & $p=(N/k)^{-k/\log^{c}{k}}$ & $k^{1+\alpha}\poly\log{N}$ & $Nk^{0.2}$\\
		                 & Any $k$ & $k\log(N/k)\log_k^c{N}$ & $p=k^{-k/\log^c{k}}$ & $k^{1+\alpha}\poly\log{N}$ & $Nk^{0.2}$\\
		\hline
		\cite{GLPS12:ForeachOptimal}& Any $k$ & $k\log(N/k)$ & $p\ge 2^{-k/\log^c{k}}$ & $k\cdot \poly\log{N}$ & $k\cdot \poly\log{N}$\\
		$+$ Lem~\ref{lem:weak-to-top}&&&&&\\
		\hline
		\hline
		This paper& $k\ge N^{\Omega(1)}$ & $k\log(N/k)$ & $p=(N/k)^{-k/\log^{c}{k}}$ & $k^{1+\alpha}\poly\log{N}$ & $k\cdot \poly\log{N}$\\
		                 & Any $k\ge \log(N/k)$ & $k\log(N/k)\log_k^{\alpha}{N}$ & $p=(N/k)^{-k/\log^c{k}}$ & $k^{2^{\alpha^{-1}}}\poly\log{N}$ & $k\cdot \poly\log{N}$\\
		\hline
	\end{tabular}
	\caption{Summary of algorithmic results. The results in~\cite{ely-martin} are for $\lolo$ forall sparse recovery but their results can be easily adapted to our setting with our proofs. $c$ is some constant $\ge 8$  and $\alpha>0$ is any arbitrary constant and we ignore the constant factors in front of all the expressions.}
	\label{tab:results}
\end{table}

We begin by first considering the most natural way to boost the failure probability of a given $\ltlt$ sparse recovery problem: 
we repeat the scheme $s$ times with independent randomness and pick the ``best" answer-- see
Appendix~\ref{app:repeat} for more details. This boosts the decoding error probability from the original $p$ to $p^{\Omega(s)}$-- though the reduction does blow up the
approximation factor by a multiplicative factor of $\sqrt{3}$.  

The above implies that if we can optimally solve the $\ltlt$ sparse recovery
problem for $p\ge (N/k)^{-k}$ (i.e. with $O(k\log(N/k))$ measurements), then we
can solve the problem optimally for smaller $p$. 
Hence, for the rest of the description we focus on the case $p\ge (N/k)^{-O(k)}$
(where the goal is to obtain $m=O(k\log(N/k))$). Note that in this case, the amplification does not help as even for $p=\Omega(1)$, previous results (e.g.~\cite{PW11}) imply that $m\ge \Omega(k\log(N/k))$. Thus, if the original decoding error probability is $p$ then to obtain the $(N/k)^{-k}$ decoding error probability implies that the number of measurements will be larger than the optimal value a factor of $k\log(N/k)/\log(1/p)$. As we will see shortly the best know upper bound can achieve $p=2^{-k}$, which implies that amplification will be larger than the optimal value of $\Omega(k\log(N/k)$ by a factor of $\log(N/k)$. In this work, we show how to achieve the same goal with an asymptotically smaller blow-up.

For $p \geq (N/k)^{-k}$, there are two related works.
The first is that of Porat and Strauss~\cite{ely-martin} who considered the
sparse recovery problem under the $\lolo$ forall guarantee. Despite the
different error guarantee, our construction is closely related to that
of~\cite{ely-martin} and our proofs imply the results for~\cite{ely-martin}
listed in Table~\ref{tab:results}. Note that the results for polynomially large
$k$ are pretty much the same except we have a better space complexity. For
general $k$, our result also has  better number of measurements and failure probability guarantee. The
second work is that of Gilbert et al.~\cite{GLPS12:ForeachOptimal}. Even though
the results in that paper are cited for $p\ge \Omega(1)$, it can be shown that
if one uses $O(k)$-wise independent random variables instead of the pair-wise
independent random variables as used in~\cite{GLPS12:ForeachOptimal}, one
can obtain a ``weak system" with failure probability $2^{-k}$. Then our ``weak
system to top level system conversion" (Lemma~\ref{lem:weak-to-top}) leads to
the result claimed in the second to last row in Table~\ref{tab:results}. 
Our results have a better failure probability at the cost of larger number 
of measurements.

It is natural to ask whether decreasing the failure probability 
(the base changed from $2$ to $(N/k)$)
is worth giving up the optimality in the number of
measurements (which is what~\cite{GLPS12:ForeachOptimal} obtains).
Note that achieving a failure probability of $(N/k)^{-k}$ is a 
very natural goal and our results are better than those 
in~\cite{GLPS12:ForeachOptimal} when we anchor on the failure probability 
goal first. Interestingly, it turns out that this difference 
is also crucial to our result for $\lolo$ sparse recovery which is discussed
next.

\paragraph{$\lolo$ forall sparse recovery.} As was mentioned earlier, our
construction is similar to that of Porat and Strauss. In fact, our techniques
also imply some results for the $\lolo$ forall sparse recovery problem. (We only
need the ``weak system" construction, after that results from~\cite{ely-martin}
can be applied.)  We highlight two results here. 

First, for $k\ge N^{\Omega(1)}$ we can get $O(k\log(N/k))$ measurements with
$k^{1+\alpha}\poly\log{N}$ decoding time and space usage, for any $\alpha>0$. 
This improves upon the $\lolo$ forall sparse recovery result from~\cite{ely-martin} with a better space usage and answers a question 
left open by~\cite{ely-martin}.

Our second result, which is for general $k$, achieves for {\em any} $\alpha>0$,
$O(k\log(N/k)\log_k^{\alpha}{N})$ number of measurements with decoding time
$k^{2^{1/\alpha}}\poly\log{N}$. This greatly improves upon the extra factor in
the number of measurements over the optimal $O(k\log(N/k)$) from something
around $\log^8_kN$ in~\cite{ely-martin} to $\log^{\alpha}_k{N}$ for arbitrary 
$\alpha>0$. Even though this ``only" shaves of log factors, no prior work 
on sublinear time decodable $\lolo$ sparse recovery scheme has been able to 
breach the ``log barrier."
There is a  folklore construction for a sub-linear time decodable matrix: 
using a ``bit-tester matrix" along with a lossless bipartite expander where 
each edge in its adjacency matrix is replaced with a random $\pm 1$ value.  
This scheme achieves $m = O(k\log(N/k) \log N)$. 
If we consider a ``weak system" {\em only} and small values of $k$, then the 
folklore construction achieves failure probability $p = (N/k)^{-k}$ which 
matches that of our construction. This along with a simple union bound 
(which we also use in our construction) and results in~\cite{ely-martin} 
implies an $\lolo$ sparse recovery system. 
However, this simple construction has an extra factor of $\log{N}$ in the number
of measurements, which our result reduces to $\log_k^{\alpha}{N}$ for any 
$\alpha>0$.\footnote{Interestingly one could use the weak system
from~\cite{GLPS12:ForeachOptimal} (with $O(k)$-wise independence) to obtain an
$\lolo$ sparse recovery system with an extra factor of $O(\log(N/k))$, which is
not that much better than that obtained by the simple bit-tester construction.}

\paragraph{Bounded Adversary Results.} We also obtain some results for $\ltlt$-sparse recovery against information-theoretically-bounded adversaries as considered by Gilbert et al.~\cite{GHRSW12:SimpleSignals}. (See Section~\ref{sec:prelims} for a formal definition of such bounded adversaries.) Gilbert et al. show that $O(k\log(N/k))$ measurements is sufficient for such adversaries with $O(\log{N})$ bits of information. Our results allow us to prove results for a general 
number of information bits bound of $s$. In particular, we observe that for such
adversaries $O(k\log(N/k)+s)$ measurements suffice. Further, if one desires
sublinear time decoding then our results in Table~\ref{tab:results} allows for a
similar conclusion but with extra $\poly\log{k}$ factors. We also observe that one needs $\tilde{\Omega}(\sqrt{s})$ many measurements against such an adversary (assuming the entries are polynomially large). 
In the final version of the paper, we will present a proof suggested to us by an anonymous reviewer that leads to the optimal $\Omega(s)$ lower bound.

\paragraph{Lower Bound Techniques.}
Our lower bound technique is inspired by the geometric approach of
Cohen et al.~\cite{MR2449058} for the $p = 0$ case.
Our bound holds for the entire range of failure probability $p$. 
Our technique also yields a simpler and more intuitive proof of Cohen et al.
result. Both results hold even for sparsity $k=1$.

The technical crux of the lower bound result in~\cite{MR2449058} for $p = 0$ 
is to show that any measurement matrix $\Phi$ with $O(N/C^2)$ rows has a null space vector 
$\vn$ that is ``non-flat,'' -- i.e. $\vn$ has one coordinate that has most of 
the mass of $\vn$.
On the other hand since $\Phi\vn=\vzero$, the decoding algorithm $\algo$ has to
output the same answer when $\vx=\vn$ and when $\vx=\vzero$. It is easy to see
that then $\algo$ does not satisfy (\ref{eq:sr-condn}) for at least one of these
two cases (the output for $\vzero$ has to be $\vzero$ while the output for
$\vn$ has to be non-flat and in particular not $\vzero$). 

To briefly introduce our technique, consider the case of $p=2^{-N}$ (where we 
want a lower bound of $m = \Omega(N)$). The straightforward extension of Cohen, et al.'s argument is to define a
distribution over, say, all the unit vectors in $\R^N$, argue that this gives a large measure of ``bad vectors,'' and then apply Yao's
minimax lemma to obtain our final result; i.e., that there are ``a lot'' of non-flat
vectors in the null space of a given matrix $\Phi$.  This argument fails because
the distribution on the bad vectors must be independent of the measurement
matrix $\Phi$ (and algorithm $\algo$) in order to apply Yao's lemma but null space vectors, of course, depend on $\Phi$. On the other
hand, if we define the ``hard" distribution to be the uniform distribution, then
the measure of null space vectors for any $m \ge 1$ is zero, and thus this
obvious generalization does not work. 

We overcome this obstacle with a simple idea. Our hard distribution is 
still the uniform distribution on the unit sphere $S^{N-1}$. We first show that
there is a region $R$ on this sphere with large measure ($\geq p$) such
that {\em all} vectors in $R$ have a {\em positive} ``spike" (large mass) at 
one particular coordinate $j^* \in [N]$. 
(The region $R$ is simply a small {\em spherical cap} above the unit vector $\mv
e_{j^*}$.)
In particular, to recover an input vector $\vv\in R$, the algorithm has to
assign a large positive mass to the $j^*$th coordinate of $A(\Phi\vv)$.
Next, by applying a certain invertible linear reflector to $R$, we can 
construct a region
$R'$ (which is also a region on the sphere, and is just a reflection of $R$) 
with the same measure satisfying the following: for each vector $\vv \in R$, the reflection $\mv v'$ of $\mv v$ ($\mv
v' \in R'$) has a {\em negative} spike at the same coordinate $j^*$;
furthermore, $\Phi\mv v = \Phi \mv v'$, which forces the algorithm $\algo$ into
a dichotomy. The algorithm can not recover both $\mv v$ and $\mv v'$ well at
once. Roughly speaking, the algorithm will be wrong with probability at least
half the total measure of $R$ and $R'$, which is $p$.
Finally, Yao's lemma completes the lower bound proof. There are some additional technical
obstacles that we need to overcome in this step---see Section~\ref{app:yao} 
for more details.

In the final version of the paper, we will present an alternate lower bound proof that was suggested to us by an anonymous reviewer.

\paragraph{Upper Bound Techniques.}
We believe that our main algorithmic contributions are the new techniques that
we introduce in this paper, which should be useful in (similar) applications (e.g. in the $\lolo$ sparse recovery problem as we have already pointed out). 

Our upper bounds follows the same outline used by Gilbert, Li, Porat and
Strauss~\cite{GLPS12:ForeachOptimal} and Porat and Strauss~\cite{ely-martin}. At
a high level, the construction follows three steps. The first step is to design
an ``identification scheme,'' which in sub-linear time computes a set
$S\subseteq [N]$ of size roughly $k$ that contains $\Omega(k)$ of the ``heavy
hitters.'' (Heavy hitters are the coordinates where if the output vector does not
put in enough mass then (\ref{eq:sr-condn}) will not be satisfied.) In the
second step, we develop a ``weak level system'' which essentially estimates the
values of coordinates in $S$. 
Finally, using a loop invariant iterative scheme, we
convert the weak system into a ``top level system,'' which is the overall system
that we want to design. (The way this iterative procedure works is that it makes
sure that after iteration $i$, one is missing only $O(k/2^i)$ heavy hitters-- so
after $\log{k}$ steps we would have recovered all of them.) The last two steps
are designed to run in time $|S|\cdot\poly\log{N}$, so if the first step runs in
sub-linear time, then the overall procedure is sub-linear\footnote{We would also like to point out that Gilbert et al.'s construction has a failure probability of $\Omega(1)$ in the very first iteration of the last step (weak to top level system conversion) and it seems unlikely that this can be made smaller without significantly changing their scheme.
}. Our main contribution is in the first step, so we will focus on the
identification part here. The second step (taking median of measurements like
Count-Sketch \cite{DBLP:conf/icalp/CharikarCF02}) is standard
\cite{IR08}. For a pictorial overview of our identification scheme, see Figure~\ref{fig:overview}. 

In order to highlight and to summarize our technical contributions, we present an overview of the scheme in~\cite{ely-martin} (when adapted to the $\ltlt$ sparse recovery problem). We focus only on the identification step. For near-linear time identification, one uses a lossless bipartite expander where each edge in the adjacency matrix is replaced by a random $\pm 1$ value. The intuition is that because of the expansion property most heavy hitters will not collide with another heavy hitter in most of the measurements it participates in. Further, the expansion property implies that the $\ell_2^2$ noise in most of the neighboring measurements will be low. (The random $\pm 1$ is a standard trick to convert this to an low $\ell_2$ noise.) Thus, if we define the value of an index to be median value of all the measurements, then we should get very good estimates for most of the heavy hitters (and in particular, we can identify them by outputting the top $O(k)$ median values). Since this step implies computing $N$ medians overall we have a near linear time computation. However, note that if we had access to a subset $S'\subseteq [N]$ that had most of the heavy hitters in it, we can get away with a run time nearly linear in $|S'|$ (by just computing the medians in $S'$).

This seems like a chicken and egg problem as the set $S'$ is what we were after to
begin with! Porat and Strauss use recursion to compute $S'$ in sub-linear time.
(The scheme was also subsequently used by Ngo, Porat and Rudra to design near
optimal sub-linear time decodable $\lolo$ forall sparse recovery schemes for non-negative signals~\cite{NPR}.) To
give the main intuition, consider the scheme that results in
$\tilde{O}(\sqrt{N})$ identification time. We think of the domain $[N]$ as
$L\times R$, where both $L$ and $R$ are isomorphic to $[\sqrt{N}]$. 
(Think of $L$ as the first $\frac{\log{N}}{2}$ bits in the $\log{N}$-bit
representation of any index in $[N]$ and $R$ to be the remaining bits.) 
If one can obtain lists $S_L\subset L$ and $S_R\subset R$ that contain the projections 
of the heavy hitters in $L$ and $R$, respectively, 
then $S_L\times S_R$ will contain all the heavy hitters, i.e.
$S'\subseteq S_L\times S_R$. (We can use the near linear time scheme to obtain
$S_L$ and $S_R$ in $\tilde{O}({\sqrt{N}})$ time in the base case. One also has
to make sure that when going from $[N]$ to a domain of size $\sqrt{N}$, not too
many heavy hitters collide. This can be done by, say, randomly permuting $[N]$
before applying the recursive scheme.) The simplest thing to do would be to set
$S'=S_L\times S_R$. However, since both $|S_L|$ and $|S_R|$ can be $\Omega(k)$,
this step itself will take $\Omega(k^2)$ time, which is too much if we are
shooting for a decoding time of $k^{1+\alpha}\poly\log{N}$ for $\alpha<1$. The
way Porat and Strauss solved this problem was to store the whole inversion map
as a table. This allowed $k\cdot\poly\log{N}$ decoding time but the scheme ended up
needing $\Omega(N)$ space overall. 

To get a running time of $k^{1+\alpha}\poly\log{N}$ one needs to apply the
recursive idea with more levels. One can think of the whole procedure as a recursion tree with $\N=O(\log_k{N})$ nodes. Unfortunately, this process introduces another technical hurdle. At each node, the expander based scheme loses some, say $\zeta$, fraction of the heavy hitters. To bound the overall fraction of lost heavy hitters, Porat and Strauss use the naive union bound of $\zeta\cdot\N$. However, we need the overall fraction of lost heavy hitters to be $O(1)$. This in turn introduces extra factors of $\log_k{N}$ in the number of measurements (resulting in the ultimate number of measurements of $k\log(N/k)\log^8_k{N}$ in~\cite{ely-martin}).

We are now ready to present the new ideas that improve upon Porat and Strauss'
solutions to solve the two issues raised above. 
Instead of dividing
$[N]$ into $[\sqrt{N}]\times [\sqrt{N}]$, we first apply a code
${\mathcal C}:[N]\rightarrow [\sqrt[b]{N}]^r$. 
(Note that the Porat Strauss construction
corresponds to the case when $r=b=2$ and ${\mathcal C}$ just ``splits'' the
$\log{N}$ bits into two equal parts.) Thus, in our recursive algorithm at the
root we will get $r$ subsets $S_1,\dots, S_r\subseteq [\sqrt[b]{N}]$ with the
guarantee that for (most) $i\in [r]$, $S_i$ contains ${\mathcal C}(j)_i$ for
most heavy hitters $j$. Thus, we need to recover the $j$'s for which the
condition in the last sentence is true. This is {\em exactly} the list recovery
problem that has been studied in the coding theory literature. (See
e.g.~\cite{atri-thesis}.) Thus, if we can design a code ${\mathcal C}$ that solves the list recovery problem very efficiently, we would solve the first problem above\footnote{We would like to point out that~\cite{NPR} also uses list recoverable codes but those codes are used in a  different context: they used it to replace expanders and further, the codes have the traditional parameters.}. For the second problem, note that since we are using a code ${\mathcal C}$, even if we only have ${\mathcal C}(j)_i\in S_i$  for say $r/2$ positions $i\in [r]$, we can recover all such indices $j$. In other words, unlike in the Porat Strauss construction where we can lose a heavy hitter even if we lose it in any of the $\N$ recursive call, in our case we only lose a heavy hitter if it is lost in {\em multiple} recursive calls. This fact allows us to do a better union bound than the naive one used in~\cite{ely-martin}.

The question then is whether there exists code ${\mathcal C}$ with the desired
properties. The most crucial part is that the code needs to have a decoding
algorithm whose running time is (near) linear in $\max_{i\in [r]}|S_i|$.
Further, we need such codes with $r=O(1)$, i.e. of constant block length independent of $\max_{i\in [r]}|S_i|$. Unfortunately, the known results on
list recovery, be it for Reed-Solomon codes~\cite{GS99} or folded Reed-Solomon
codes~\cite{GR08} do not work well in this regime---these results need $r\ge
\Omega(\max_{i\in [r]} |S_i|)$, which is way too expensive. For our setting, the
best we can do with Reed-Solomon list recovery is to do the naive thing of going
through all possibilities in $\times_{i\in [r]} S_i$. (These codes however can
correct for optimal number of errors and lead to our result in the last row of
Table~\ref{tab:results}.) Fortunately, a recent result of Ngo, Porat, R\'{e} and
Rudra~\cite{NPRR} gave an algorithmic proof of the Loomis-Whitney
inequality~\cite{LW}. The (combinatorial) Loomis-Whitney inequality has found
uses in theoretical computer science before~\cite{ITT04,LL05}. In this work, we
present the first application of the algorithmic Loomis Whitney inequality
of~\cite{NPRR} and show that it naturally defines a code $C$ with the required
(algorithmic) list recoverability. This code leads to the result in the second
to last row of Table~\ref{tab:results}. Interestingly, we get {\em optimal} weak level systems by this method. We lose in the final failure probability because of the weak level to top level system conversion.

We conclude the contribution overview by pointing out three technical aspects 
of our results. 
\begin{itemize}
\item As was mentioned earlier, we first randomly permute the columns of the matrix to make the recursion work. To complete our identification algorithm, we need to perform the inverse operation on the indices to be output. The naive way would be to use a table lookup, which will require $O(N\log{N})$ space, but would still be an improvement over~\cite{ely-martin}. However, we are able to exploit the specific nature of the recursive tree and the fact that our main results use the Reed-Solomon code and the code based on Loomis-Whitney inequality to have sub-linear space usage. 
\item In the weak level to top level system, both Gilbert et al., and Porat and Strauss decrease the parameters geometrically---however in our case, we need to use different decay functions to obtain our failure probability. 
\item Unlike the argument in~\cite{ely-martin} we explicitly use an expander while Porat and Strauss used a random graph. However, because of this,~\cite{ely-martin} need at least $N$-wise independence in their random variables to make their argument go through. Our use of expanders allows us to get away with using only $\tilde{O}(k)$-wise independence, which among others leads to our better space usage.
\end{itemize}

\section{Preliminaries}
\label{sec:prelims}

We fix notations, terminology, and concepts that will be used throughout the
paper. 

Let $[N]$ denote the set $\{1,\dots,N\}$.
Let $G:[N]\times[\ell]\rightarrow [M]$ be an $\ell$-regular bipartite graph, and
$\M_G$ be its adjacency matrix. We will often switch back and forth between the
graph $G$ and the matrix $\M_G$.

For any subset $S\subseteq [N]$, let $\nbr(S)\subseteq [M]$ denote the set of
neighbor vertices of $S$ in $G$. Further, let $\edg(S)$ denote the set of edges
incident on $S$. A bipartite graph $G:[N]\times[\ell]\rightarrow [M]$ is a
$(t,\eps)$-expander if for every subset $S\subseteq [N]$ of $|S|\le t$, we have
$|\Gamma(S)|\ge |S|\ell(1-\eps)$. Several expander properties used in our proofs
are listed in Appendix~\ref{app:exp-basics}. Appendix~\ref{app:prob-basics} has
some probability basics.

\paragraph{Sparse Recovery Basics.} For a vector $\vx = (x_i)_{i=1}^N \in\R^N$, the set of $k$
highest-magnitude coordinates of $\vx$ is denoted by $H_k(\vx)$. Such elements
are called \textit{heavy hitters}. Every element $i\in [N]\setminus H_k(\vx)$
such that $|x_i|\ge \sqrt{\frac{\zeta^2\eta}{k}}\cdot \|\vz\|_2$ will be called
a \textit{heavy tail} element. Here, $\zeta$ and $\eta$ are constants that will
be clear from context. All the remaining indices will be called
\textit{light tail} elements; let $\light$ denote the set of light tail
elements. A vector $\vw = (w_i)_{i=1}^N \in\R^N$ is called a \textit{flat tail}
if $w_i = 1/|S|$ for every non-zero $w_i$, where $S=\supp(\vw)$.

\begin{defn} A probabilistic $m\times N$ matrix $\M$ is called an $(k,C)$-approximate sparse recovery system or $(k,C)$-{\em top level system} with failure probability $p$ if there exists a {\em decoding algorithm} $\algo$ such that for every $\vx\in\R^N$, the following holds with probability at least $1-p$:
\[\|\vx-\algo(\M\vx)\|_2\le C\cdot \|\vx-\vx_{H_k(\vx)}\|_2.\]
The parameter $m$ is called the number of measurements of the system.
\end{defn}
We will also consider  $(k,C)$ $\lolo$ top level systems, which are the same as above except we have $p=0$ and the $\ell_2$ norms are replaced by $\ell_1$ norms.

\begin{defn} A probabilistic matrix $\M$ with $N$ columns is called a
$(k,\zeta,\eta)$-{\em weak identification} matrix with $(\ell,p)$-{\em
guarantee} if there is an algorithm that, given $\M\vx$ and a subset 
$S\subseteq [N]$, with probability at least $1-p$ outputs a subset 
$I\subseteq S$ such that 
(i) $|I|\le \ell$ and 
(ii) at most $\zeta k$ of the elements of $H_k(\vx)$ are not present in $I$. 
The time taken to compute $I$ will be called {\em identification time}.
\end{defn}

\begin{defn}
We will call a (random) $m\times N$ matrix $\M$ a $(k,\zeta,\eta)$ weak $\ell_2/\ell_2$ system if the following holds for any vector $\vx=\vy+\vz$ such that $|\supp(\vy)|\le k$. Given $\M\vx$ one can compute $\hat{\vx}$ such that there exist $\hat{\vy},\hat{\vz}$ that satisfy the following properties:
(1) $\vx=\hat{\vx}+\hat{\vy}+\hat{\vz}$;
(2) $|\supp(\hat{\vx})|\le O(k/\eta)$;\footnote{This part is different from the weak system in~\cite{ely-martin}, where  we have $|\supp(\hat{\vx})|\le O(k)$.}
(3) $|\supp(\hat{\vy})|\le \zeta k$;
(4) $\|\hat{\vz}\|_2\le \left(1+O\left(\eta\right)\right)\cdot \|\vz\|_2$
\end{defn}

We will also consider  $(k,\zeta,\eta)$ weak $\lolo$ systems, which is the same as above except the $\ell_2$ norms are replaced by $\ell_1$ norms (and the algorithm is deterministic).

\paragraph{Coding Basics.}

In this section, we define and instantiate some (families) of codes that we will be interested in. We begin with some basic coding definitions.
We will call a code $C:[N]\rightarrow [q]^r$ be a $(r,N)_q$-code.\footnote{We
depart from the standard convention and use the size of the code $N$ instead of
its dimension $\log_q{N}$: this makes expressions simpler later on.} Vectors in
the range of $C$ are called {\em codewords}. Sometimes we will think of 
$C$ as a subset of $[q]^r$, defined the natural way.
We will primarily be interested in \textit{list recoverable} codes. In particular,

\begin{defn} Let $N,q,r,\ell,L\ge 1$ be integers and $0\le \rho\le 1$ be a real
number. Then an $(r,N)_q$-code $C$ is called a $(\rho,\ell,L)$-list recoverable
code if the following holds. Given any collection of subsets $S_1,\dots,S_r\subseteq [q]$ such that $|S_i|\le \ell$ for every $i\in [r]$, there exists at most $L$ codewords $(c_1,\dots,c_r)\in C$ such that $c_i\in S_i$ for at least $(1-\rho)n$ indices $i\in [r]$. Further, we will call such a code recoverable in time $T(\ell,N,q)$ if all such codewords can be computed within this time upper bound.
\end{defn}

An $(r,N)_q$-code is said to be \textit{uniform} if for every $i\in [r]$ it is the case that $C(x)_i$ is uniformly distributed over $[q]$ for uniformly random $x\in [N]$.
In our construction we will require codes that are \textit{both} list recoverable and uniform. Neither of these concepts are new but our construction needs us to focus on parameter regimes that are generally not the object of study in coding theory. In particular, as in coding theory, we focus on the case where $N$ is increasing. Also we focus on the case when $q$ grows with $N$, which is also a well studied regime. However, we consider the case when $r$ is a \textit{fixed}. 

We now consider a code based on the Loomis-Whitney inequality \cite{LW}. Let
$d\ge 2$ be an integer and assume that $N$ is a power of $2$ and $\sqrt[d]{N}$
is an integer. Given $x\in [N]$ we think of it as $x = (x_1,\dots,x_d)\in
[\sqrt[d]{N}]^d$. Further, for any $i\in [d]$ define
$x_{-i}=(x_1,\dots,x_{i-1},x_{i+1},\dots,x_d)\in [\sqrt[d]{N}]^{d-1}$. Then
define $C_{\lw(d)}(x)=(x_{-1},\dots,x_{-d})$. The Loomis-Whitney inequality
implies that $C_{\lw(d)}$ is a $(0,\ell,\ell^{d/(d-1)})$-list recoverable code.
Ngo, Porat, R\'{e} and Rudra~\cite{NPRR} recently showed that the code
is list-recoverable in time $\tilde{O}(\ell^{d/(d-1)})$. 
The result implies the following:

\begin{lemma}
\label{lem:lw}
The code $C_{\lw(d)}$ is a uniform code that is $(0,\ell,\ell^{d/(d-1)})$-list recoverable code. Further, it is recoverable in $O(\ell^{d/(d-1)}\log{N})$ time.
\end{lemma}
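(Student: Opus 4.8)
The plan is to establish the three claimed properties of $C_{\lw(d)}$ in turn: uniformity, combinatorial list recoverability with list size $\ell^{d/(d-1)}$, and the running time bound. The first is essentially by inspection. For a uniformly random $x = (x_1,\dots,x_d) \in [\sqrt[d]{N}]^d$, the $i$-th coordinate of $C_{\lw(d)}(x)$ is $x_{-i} = (x_1,\dots,x_{i-1},x_{i+1},\dots,x_d)$, which is a projection of a uniform vector onto $d-1$ of its $d$ independent uniform coordinates; hence $x_{-i}$ is uniform over $[\sqrt[d]{N}]^{d-1} = [q]$ with $q = N^{(d-1)/d}$. So $C_{\lw(d)}$ is uniform. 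Note this also fixes the parameters: the block length is $r = d$ and the alphabet size is $q = N^{(d-1)/d} = \sqrt[d]{N}^{\,d-1}$.

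For list recoverability I would argue directly from the Loomis--Whitney inequality. Suppose we are given sets $S_1,\dots,S_d \subseteq [q]$ with $|S_i| \le \ell$ for all $i$. Since we are claiming $\rho = 0$, a codeword $C_{\lw(d)}(x)$ is ``consistent'' only if $x_{-i} \in S_i$ for \emph{every} $i \in [d]$. Let $T = \{\, x \in [\sqrt[d]{N}]^d : x_{-i} \in S_i \text{ for all } i \in [d] \,\}$ be the set of such $x$; we must show $|T| \le \ell^{d/(d-1)}$. This is exactly the conclusion of the Loomis--Whitney inequality: for any finite set $T \subseteq \mathbb{Z}^d$, writing $\pi_i$ for the projection that deletes the $i$-th coordinate, one has $|T|^{d-1} \le \prod_{i=1}^d |\pi_i(T)|$. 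Here $\pi_i(T) \subseteq S_i$, so $|\pi_i(T)| \le \ell$, giving $|T|^{d-1} \le \ell^d$, i.e. $|T| \le \ell^{d/(d-1)}$. Since distinct $x$ give distinct codewords, the number of consistent codewords is at most $\ell^{d/(d-1)}$, establishing the $(0,\ell,\ell^{d/(d-1)})$-list recoverability.

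For the algorithmic claim I would simply invoke the result of Ngo, Porat, R\'e and Rudra~\cite{NPRR}, which gives an algorithm that, on input $S_1,\dots,S_d$, outputs the entire set $T$ above in time $\tilde{O}(\ell^{d/(d-1)})$ --- this is their algorithmic version of Loomis--Whitney (a worst-case-optimal join / triangle-listing style algorithm). The only thing to check is that the hidden polylogarithmic factors, measured against the relevant size parameters ($\ell$, $q$, and the universe $\sqrt[d]{N}$), are absorbed into the stated $O(\ell^{d/(d-1)}\log N)$; since $q, \sqrt[d]{N} \le N$, a single $\log N$ factor suffices to cover the indexing/bookkeeping overhead, and the output itself has size at most $\ell^{d/(d-1)}$. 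I do not expect any real obstacle here: the mathematical content is entirely contained in the Loomis--Whitney inequality and in~\cite{NPRR}, and the ``new'' part of the lemma is only the observation that this classical inequality, read in the right direction, is precisely a list-recovery statement for a natural code with constant block length $d$ --- exactly the regime (fixed $r$, growing $q$ and $N$) that standard Reed--Solomon-type list recovery cannot reach. If anything needs care, it is just making sure the parameter dictionary (block length $r=d$, alphabet $[q]$ with $q=\sqrt[d]{N}^{\,d-1}$, input list bound $\ell$, output list bound $L=\ell^{d/(d-1)}$, agreement fraction $1-\rho = 1$) is stated consistently with the Definition of list recoverable codes given above.
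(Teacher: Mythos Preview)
Your proposal is correct and matches the paper's approach essentially line for line: the paper's text preceding the lemma already says that the combinatorial bound follows from the Loomis--Whitney inequality and the running time from~\cite{NPRR}, exactly as you argue. The only addition in the paper is that, ``for the sake of completeness,'' it supplies a self-contained algorithmic proof (Theorem~\ref{thm:LW-d-1} in the appendix) in lieu of simply citing~\cite{NPRR}; that theorem yields a slightly weaker combinatorial bound of $(d-1)\cdot\ell^{d/(d-1)}$, whereas your direct appeal to the classical inequality gives the clean $\ell^{d/(d-1)}$ stated in the lemma.
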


For the sake of completeness, we prove the above via Theorem~\ref{thm:LW-d-1} (with a slightly different proof than the one from~\cite{NPRR}). Finally, we consider the well-known Reed-Solomon (RS)  codes.

\begin{lemma}
\label{lem:rs}
Let $\rho<1/2(1-b/r)$. Then
the code $C_{\rs}$ is a uniform code that is $(\rho,\ell,\ell^{r})$-list recoverable code. Further, it is recoverable in $O(\ell^{r}r^2\log^2{N})$ time.\footnote{The $r^2\log^2{N}$ factor follows from the fact that the Berlekamp Massey algorithm needs $O(r^2)$ operation over $\F_q$, each of which takes $O(\log^2{q})$ time.}
\end{lemma}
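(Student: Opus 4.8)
The plan is to establish the three claims of Lemma~\ref{lem:rs} — uniformity, list-recoverability with the stated parameters, and the running time — by instantiating the Reed--Solomon code appropriately and then invoking the classical list-decoding machinery of Guruswami--Sudan~\cite{GS99} in the ``list recovery'' regime. Concretely, I would fix a field $\F_q$ with $q = \sqrt[b]{N}$ (so that a degree-$<\log_q N$ polynomial is identified with an element of $[N]$), pick $r$ distinct evaluation points $\alpha_1,\dots,\alpha_r\in\F_q$, and define $C_{\rs}(f) = (f(\alpha_1),\dots,f(\alpha_r))$ where $f$ ranges over polynomials of degree $< \log_q N = (\log N)/(\log q)$. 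Note this forces $r \le q$, which is consistent with the intended parameter regime (here $r = O(1)$ while $q$ grows).

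For uniformity, the key observation is that for a polynomial $f$ of degree $<k_0$ chosen uniformly at random (equivalently, $x\in[N]$ chosen uniformly), each single evaluation $f(\alpha_i)$ is uniformly distributed over $\F_q$ — this is immediate as long as $k_0 = \log_q N \ge 1$, i.e. the constant term alone already randomizes any single evaluation. So the first claim is essentially by construction. For the list-recovery claim, I would appeal to the Guruswami--Sudan algorithm: given the input lists $S_1,\dots,S_r$ with $|S_i|\le\ell$, form the set of $\sum_i |S_i| \le \ell r$ interpolation points $(\alpha_i,\beta)$ for $\beta\in S_i$, run the GS interpolate-and-factor procedure, and output all polynomials $f$ of degree $<\log_q N$ whose graph passes through at least $(1-\rho)r$ of the $r$ columns (counting one hit per column). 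The Johnson-bound analysis of GS guarantees that this succeeds — i.e. finds \emph{all} such codewords — provided the number of agreements $(1-\rho)r$ exceeds $\sqrt{k_0 \cdot (\text{something})}$; one has to check that the condition $\rho < \tfrac12(1-b/r)$ in the statement is exactly what makes the agreement parameter clear the relevant threshold, using $k_0/r = (\log_q N)/r = b/r$ after substituting $q = N^{1/b}$ (so $\log_q N = b$). The output-list size bound $L \le \ell^r$ can be obtained either from the Johnson-bound list-size estimate or, more crudely, from the trivial bound that any returned codeword lies in $\prod_i S_i$, of size $\le \ell^r$; I'd use whichever is cleaner given how the bound is used downstream.

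For the running time, the dominant cost is producing and then examining the output list: even the trivial upper bound requires touching on the order of $\ell^r$ candidates, and running GS gives the same order up to polynomial factors in $r$ and $\log q$. Specifically, interpolation over $\ell r$ points and root-finding (via Berlekamp--Massey / polynomial factorization over $\F_q$) costs $O(r^2)$ field operations, each of cost $O(\log^2 q) = O(\log^2 N)$ as noted in the footnote, and iterating over the $\le \ell^r$ codewords to verify column-membership contributes the $\ell^r$ factor, for a total of $O(\ell^r r^2 \log^2 N)$.

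The main obstacle I anticipate is pinning down the exact arithmetic of the Johnson bound so that it yields precisely the threshold $\rho < \tfrac12(1-b/r)$ rather than something weaker like $\rho < 1 - \sqrt{b/r}$: the cleaner linear-in-$(b/r)$ form suggests that one is invoking the ``weighted'' or the simpler Sudan-style agreement condition (agreement $> \sqrt{2 k_0 r}$-type bounds collapse differently), or that the $1/2$ comes from a two-step argument. I would resolve this by carefully writing the GS agreement condition as ``number of good columns $\ge \sqrt{(k_0-1) r} + 1$'' (or the relevant variant) and checking that $(1-\rho) r$ exceeds it under the hypothesis, treating $k_0 = b$ and $r$ as the free parameters; a secondary, more routine concern is making sure the list-recovery reduction (one interpolation node per $(\text{column},\text{symbol})$ pair) interacts correctly with the ``counts one agreement per column'' definition of list recovery in Definition~\ref{defn:...}, which forces a small bookkeeping step to avoid double-counting multiple hits within a single column.
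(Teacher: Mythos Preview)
Your proposal has a genuine gap in the algorithmic approach. The threshold $\rho < \tfrac12(1-b/r)$ is not a Johnson-type or Guruswami--Sudan bound at all: it is exactly the \emph{unique decoding} radius of the Reed--Solomon code (half the relative minimum distance, since an $(r,N)_q$-RS code with $q = N^{1/b}$ has $\log_q N = b$ message symbols and hence relative distance $1 - (b-1)/r$). The paper's argument is the brute-force one: enumerate every tuple $\mathbf v \in S_1 \times \cdots \times S_r$ (at most $\ell^r$ of them) and run Berlekamp--Massey unique decoding on each $\mathbf v$. Any codeword $\mathbf c$ with $c_i \in S_i$ for at least $(1-\rho)r$ positions is within Hamming distance $\rho r$ of some such $\mathbf v$ (take $v_i = c_i$ where $c_i \in S_i$, arbitrary in $S_i$ otherwise), so unique decoding from that $\mathbf v$ recovers $\mathbf c$; and since unique decoding returns at most one codeword per $\mathbf v$, the output list has size at most $\ell^r$. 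This explains both the threshold and the $\ell^r r^2 \log^2 N$ running time directly.

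Your GS approach does not work in this regime, and the paper says so explicitly in its appendix restatement of the lemma: the known RS list-recovery results via Guruswami--Sudan require $b/r = O(1/\ell)$, i.e.\ $r \ge \Omega(b\ell)$, whereas here $r$ is a fixed constant and $\ell = \Theta(k)$ is arbitrarily large. The obstacle you flagged --- that the Johnson bound naturally gives $\rho < 1 - \sqrt{b/r}$ rather than the linear form --- is not a bookkeeping issue to be resolved but a signal that GS is the wrong tool; the factor $\tfrac12$ is the tell-tale of unique decoding. Separately, your ``trivial bound that any returned codeword lies in $\prod_i S_i$'' is false once $\rho > 0$: a codeword in the output list may have up to $\rho r$ coordinates outside the $S_i$'s. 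The $\ell^r$ list-size bound comes from the enumeration-plus-unique-decoding argument above, not from containment in the product.
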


\paragraph{Bounded Adversary Model.}

We summarize the relevant definitions of (computationally-)bounded
adversaries from \cite{GHRSW12:SimpleSignals}. In this setting, Mallory
is the name of the process that generates inputs $x$ to the sparse
recovery problem. We recall two definitions for Mallory:

\begin{itemize}
\item
  \textbf{Oblivious:} Mallory cannot see the matrix $\Phi$ and generates
  the signal $x$ independent from $\Phi$. For sparse signal recovery,
  this model is equivalent to the ``foreach'' signal model.
\item
  \textbf{Information-Theoretic:} Mallory's output has bounded mutual
  information with the matrix. To cast this in a computational light, we
  say that an algorithm $M$ is
  \em ($s$-)information-theoretically-bounded \em if $M(x) = M_2(M_1(x))$, where the output of $M_1$ consists of at most $s$ bits.  This model is similar to that of the ``information bottleneck'' \cite{TishbyPereiraBialek}. 
\end{itemize}

Lemma 1 of \cite{GHRSW12:SimpleSignals} relates the
information-theoretically bounded adversary to a bound on the success
probability of an oblivious adversary. We re-state the lemma for
completeness:

\begin{lemma}
 \label{obs:bottleneck}
Pick $\ell = \ell(N)$, and fix $0 < \alpha < 1$.
Let $A$ be any randomized algorithm which takes input $x \in \{0,1\}^N$, $r \in
\{0,1\}^m$, and ``succeeds'' with probability $1-\beta$.
Then for any information theoretically bounded algorithm $M$ with space $\ell$,
$A(M(r), r)$ succeeds with probability at least 
$$\min \left\{ 1 - \alpha, \ 1 - \ell / \log(\alpha/\beta)\right\}$$
over the choices of $r$.
\end{lemma}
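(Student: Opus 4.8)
\textbf{Proof plan for Lemma~\ref{obs:bottleneck}.} The plan is to reduce the information-theoretically bounded adversary to an oblivious one by a counting/union-bound argument over the $2^\ell$ possible messages that $M_1$ can emit. First I would fix the randomized algorithm $A$ and recall that ``succeeds with probability $1-\beta$'' means that for every fixed input $x$, $\pr_r[A(x,r)\text{ fails}]\le\beta$, where the probability is over the random string $r$ (which here plays the role of the measurement matrix together with $A$'s internal coins). The key observation is that once $M_1(x)\in\{0,1\}^{\le\ell}$ is fixed to some value $\sigma$, the adversary's final output $M(x)=M_2(\sigma)$ depends on $x$ only through $\sigma$; so for each of the at most $2^\ell$ strings $\sigma$ we may define $x_\sigma := M_2(\sigma)$, an input that is chosen \emph{independently} of $r$. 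Thus we are really running $A$ on one of at most $2^\ell$ oblivious inputs, but the particular one is selected adversarially as a function of $r$.

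The main step is then a one-line union bound. Call $r$ ``bad for $\sigma$'' if $A(x_\sigma,r)$ fails; by the oblivious guarantee each event has $\pr_r[\,\cdot\,]\le\beta$. Let $B$ be the set of $r$ that are bad for \emph{some} $\sigma$; then $\pr_r[B]\le 2^\ell\beta$. For every $r\notin B$, no matter how $M$ chooses its message as a function of $r$, the resulting input $x_{M_1(x)}$ is one for which $A$ succeeds, hence $A(M(r),r)$ succeeds. This already gives success probability at least $1-2^\ell\beta$. To match the stated bound I would rephrase: $1-2^\ell\beta \ge 1-\ell/\log(\alpha/\beta)$ exactly when $2^\ell\beta \le \ell/\log(\alpha/\beta)$; this fails only when $\ell$ is large relative to $\log(1/\beta)$, and in that regime one instead invokes the trivial truncation — we can always assume the adversary is additionally capped at failure probability $\alpha$ by a standard ``give up'' argument (if the naive bound exceeds $\alpha$ we simply report the weaker $1-\alpha$). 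Taking the better of the two estimates yields $\min\{1-\alpha,\,1-\ell/\log(\alpha/\beta)\}$. Actually the cleanest route is to split on whether $\log(\alpha/\beta)\ge \ell\cdot 2^\ell/\ell = 2^\ell$ or not, and in the first case bound $2^\ell\beta\le 2^\ell\alpha/2^{2^\ell}$; I would work out the exact threshold so that the two branches of the $\min$ glue together, but the arithmetic is routine once the union-bound skeleton is in place.

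I expect the only genuinely delicate point to be bookkeeping the exponential-versus-linear comparison so that the final expression is literally $\min\{1-\alpha,\,1-\ell/\log(\alpha/\beta)\}$ rather than a messier but equivalent quantity; the conceptual content — ``$M_1$ partitions inputs into $\le 2^\ell$ classes, each class behaves obliviously, union-bound over classes'' — is immediate. One should also double-check the edge cases where $\log(\alpha/\beta)\le 0$ (i.e. $\beta\ge\alpha$), in which the claimed bound is vacuous or the first term of the $\min$ dominates, and note that the argument is insensitive to the domain $\{0,1\}^N$ versus $\R^N$ since it only uses that $A$'s per-input failure probability is at most $\beta$. Since this lemma is quoted verbatim from \cite{GHRSW12:SimpleSignals}, it is also legitimate simply to cite it; the sketch above indicates the proof for completeness.
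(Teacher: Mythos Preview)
The paper does not prove this lemma at all; it merely restates it from \cite{GHRSW12:SimpleSignals} ``for completeness'' and uses it as a black box. So there is no proof in the paper to compare your proposal against, and your closing remark that one may simply cite the source is exactly what the authors do.

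Your union-bound skeleton is the standard argument and is correct in spirit, but two points deserve cleanup. First, a notational slip: the adversary's first stage sees the randomness, not the signal, so you should write $\sigma=M_1(r)$ rather than $M_1(x)$; your subsequent reasoning (``$x_\sigma:=M_2(\sigma)$ is fixed once $\sigma$ is fixed, and there are at most $2^\ell$ values of $\sigma$'') is already consistent with this correction. Second, the paragraph where you try to ``glue'' the two branches of the $\min$ is needlessly convoluted. Once the union bound gives failure probability at most $2^\ell\beta$, the stated conclusion is immediate by a one-line case split: if $\ell\le\log(\alpha/\beta)$ then $2^\ell\beta\le\alpha$, so the failure probability is at most $\alpha$ and hence at most $\max\{\alpha,\ell/\log(\alpha/\beta)\}$; if $\ell>\log(\alpha/\beta)$ then $\ell/\log(\alpha/\beta)>1$ and the bound $1-\ell/\log(\alpha/\beta)$ is negative, so the claimed inequality is vacuous. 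There is no need for the threshold-matching or ``give up'' argument you sketch.
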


\section{Lower bounds}
\label{sec:lb}

\subsection{Lower bound for $\ltlt$-foreach sparse recovery with low risk}

Throughout this section, let $\Phi$ denote an $m \times N$ measurement matrix.
Without loss of generality, we will assume that all our measurement matrices
have full row rank: $\text{rank}(\Phi) = m$.
Let $\mathcal N$ be the (row) null-space of $\Phi$, and $\mathbf P$ be the 
matrix representing the orthoprojection onto $\mathcal N$.
For $j\in [N]$, let $\mathbf e_j$ denote the $j$th standard basis vector.
Note that 
$\mathbf P = \mathbf P^2 = \mathbf P^T\mathbf P$ because any orthogonal
projection matrix is symmetric and idempotent.
Hence, for any two vectors $\mathbf x, \mathbf y \in \mathbb R^N$, 
\[ \langle \mathbf{Px}, \mathbf y \rangle = \langle \mathbf x, \mathbf{Py} \rangle = \langle \mathbf x, \mathbf P^T\mathbf{Py}\rangle
 = \langle \mathbf{Px}, \mathbf {Py} \rangle.
\]

For the sake of completeness we present a simplified version of the proof of the $\Omega(N)$ lower bound from~\cite{MR2449058} for the $\ltlt$ forall sparse recovery in Appendix~\ref{app:forall}.

We first prove an auxiliary lemma which generalizes Proposition
\ref{prop:DeVore-non-flat}. 
The lemma implies that,
for any measurement matrix $\Phi$,
if $m/N$ is ``small'' then there will be ``a lot'' of unit vectors 
that are not ``flat,'' i.e. in these vectors
some coordinate $j^*$ has relatively large magnitude compared to 
all other coordinates. 

\begin{lemma}
\label{lmm:non-flat}
Let $\Phi$ be an arbitrary real matrix of dimension $m\times N$.
Then, there exists $j^* \in [N]$ such that, for any unit-length vector 
$\mathbf v \in \R^N$ we have
\begin{equation}
 \langle \mathbf{Pv}, \mathbf e_{j^*} \rangle \geq 
 \langle \mv v, \mv e_{j^*} \rangle - 
 \sqrt{1 - \langle \mv v,\mv e_{j^*}\rangle} 
 \cdot \sqrt{2m/N} - m/N.
\label{eqn:non-flat}
\end{equation}
\end{lemma}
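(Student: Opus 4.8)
The plan is to exploit the fact that $\mathbf{P}$ is the orthoprojection onto the null space $\mathcal{N}$, so $\mathbf{P} = \mathbf{I} - \mathbf{P}^{\perp}$ where $\mathbf{P}^{\perp}$ is the orthoprojection onto the row space of $\Phi$, an $m$-dimensional subspace. Write $\langle \mathbf{Pv}, \mathbf{e}_{j^*}\rangle = \langle \mathbf{v}, \mathbf{e}_{j^*}\rangle - \langle \mathbf{P}^{\perp}\mathbf{v}, \mathbf{e}_{j^*}\rangle$, so that \eqref{eqn:non-flat} reduces to upper-bounding $\langle \mathbf{P}^{\perp}\mathbf{v}, \mathbf{e}_{j^*}\rangle$ by $\sqrt{1 - \langle \mathbf{v}, \mathbf{e}_{j^*}\rangle}\cdot\sqrt{2m/N} + m/N$ for a judicious choice of $j^*$. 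The key averaging observation is that $\sum_{j\in[N]} \|\mathbf{P}^{\perp}\mathbf{e}_j\|_2^2 = \mathrm{trace}(\mathbf{P}^{\perp}) = m$ (since $\mathbf{P}^{\perp}$ is idempotent of rank $m$), so by pigeonhole there exists $j^* \in [N]$ with $\|\mathbf{P}^{\perp}\mathbf{e}_{j^*}\|_2^2 \leq m/N$. Fix this $j^*$ for the rest of the argument.

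Next I would bound $\langle \mathbf{P}^{\perp}\mathbf{v}, \mathbf{e}_{j^*}\rangle$ using the symmetry/idempotence identity already recorded in the excerpt, namely $\langle \mathbf{P}^{\perp}\mathbf{v}, \mathbf{e}_{j^*}\rangle = \langle \mathbf{P}^{\perp}\mathbf{v}, \mathbf{P}^{\perp}\mathbf{e}_{j^*}\rangle$. Decompose $\mathbf{v} = \langle\mathbf{v},\mathbf{e}_{j^*}\rangle\,\mathbf{e}_{j^*} + \mathbf{v}'$ where $\mathbf{v}' \perp \mathbf{e}_{j^*}$ and $\|\mathbf{v}'\|_2^2 = 1 - \langle\mathbf{v},\mathbf{e}_{j^*}\rangle^2 \leq 1 - \langle\mathbf{v},\mathbf{e}_{j^*}\rangle$ (using $|\langle\mathbf{v},\mathbf{e}_{j^*}\rangle|\le 1$, hence $\langle\mathbf{v},\mathbf{e}_{j^*}\rangle^2 \ge \langle\mathbf{v},\mathbf{e}_{j^*}\rangle$ when the inner product is nonnegative; one handles the sign carefully so the bound $1 - \langle\mathbf{v},\mathbf{e}_{j^*}\rangle$ is valid — this is where one must be slightly delicate). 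Then by Cauchy–Schwarz,
\[
\langle \mathbf{P}^{\perp}\mathbf{v}', \mathbf{P}^{\perp}\mathbf{e}_{j^*}\rangle \le \|\mathbf{P}^{\perp}\mathbf{v}'\|_2 \cdot \|\mathbf{P}^{\perp}\mathbf{e}_{j^*}\|_2 \le \|\mathbf{v}'\|_2 \cdot \sqrt{m/N} \le \sqrt{1 - \langle\mathbf{v},\mathbf{e}_{j^*}\rangle}\cdot\sqrt{m/N},
\]
and separately $\langle \mathbf{v},\mathbf{e}_{j^*}\rangle\langle\mathbf{P}^{\perp}\mathbf{e}_{j^*},\mathbf{e}_{j^*}\rangle = \langle\mathbf{v},\mathbf{e}_{j^*}\rangle\|\mathbf{P}^{\perp}\mathbf{e}_{j^*}\|_2^2 \le m/N$ (since $\|\mathbf{P}^\perp \mathbf{e}_{j^*}\|^2 \le m/N$ and $\langle\mathbf{v},\mathbf{e}_{j^*}\rangle \le 1$). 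Adding the two contributions gives $\langle\mathbf{P}^{\perp}\mathbf{v},\mathbf{e}_{j^*}\rangle \le \sqrt{1-\langle\mathbf{v},\mathbf{e}_{j^*}\rangle}\cdot\sqrt{m/N} + m/N$, which is even slightly stronger than \eqref{eqn:non-flat} (the $\sqrt{2}$ gives room to spare); substituting back yields the claim.

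The main obstacle I anticipate is getting the sign conventions and the crude bound $\|\mathbf{v}'\|_2^2 \le 1 - \langle\mathbf{v},\mathbf{e}_{j^*}\rangle$ right: one needs $\langle\mathbf{v},\mathbf{e}_{j^*}\rangle \in [-1,1]$ and the fact that the inequality $1-t^2 \le 1 - t$ only holds for $t \in [0,1]$, so for negative inner products the relevant inequality to use is different (indeed if $\langle\mathbf{v},\mathbf{e}_{j^*}\rangle < 0$ the right-hand side of \eqref{eqn:non-flat} could even be negative and the bound is trivially true since $\langle\mathbf{Pv},\mathbf{e}_{j^*}\rangle$ can still be controlled). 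It is cleanest to split into the cases $\langle\mathbf{v},\mathbf{e}_{j^*}\rangle \ge 0$ and $<0$, observing that in the latter case $\sqrt{1-\langle\mathbf{v},\mathbf{e}_{j^*}\rangle}\sqrt{2m/N}$ already dominates the required slack; the $\sqrt{2}$ factor in the statement is presumably there precisely to absorb this case uniformly. Everything else is the pigeonhole trace computation plus one Cauchy–Schwarz, both routine.
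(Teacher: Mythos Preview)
Your approach is essentially the paper's: the same trace/pigeonhole to pick $j^*$ with $\|\mathbf P^\perp \mathbf e_{j^*}\|_2^2 \le m/N$ (equivalently $\|\mathbf P\mathbf e_{j^*}\|_2^2 \ge 1-m/N$, which is exactly Proposition~\ref{prop:DeVore-non-flat}), followed by a Cauchy--Schwarz bound on $\langle \mathbf v, \mathbf P^\perp \mathbf e_{j^*}\rangle$. The only difference is the decomposition of $\mathbf v$: the paper writes $\mathbf v = \mathbf e_{j^*} + (\mathbf v-\mathbf e_{j^*})$, while you take the orthogonal split $\mathbf v = t\,\mathbf e_{j^*} + \mathbf v'$ with $t=\langle\mathbf v,\mathbf e_{j^*}\rangle$.

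However, your algebra claim is wrong. You assert $\|\mathbf v'\|_2^2 = 1-t^2 \le 1-t$, but for $t\in(0,1)$ this fails (e.g.\ $t=1/2$ gives $3/4 > 1/2$), so you do \emph{not} get a bound without the $\sqrt 2$. The correct elementary inequality is $1-t^2 = (1-t)(1+t) \le 2(1-t)$, valid for all $t\in[-1,1]$, which yields $\|\mathbf v'\|_2 \le \sqrt{2(1-t)}$ and hence exactly the stated bound with the $\sqrt{2m/N}$ factor. Once you use this, no case split on the sign of $t$ is needed at all; your worries about $t<0$ and the role of the $\sqrt 2$ disappear. The paper's decomposition sidesteps this entirely because $\|\mathbf v-\mathbf e_{j^*}\|_2 = \sqrt{2-2t}$ already carries the factor $\sqrt 2$.
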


\begin{proof}
Let $j^*$ be the coordinate
for which $\|\mv{Pe}_{j^*}\|_2^2 \geq 1-m/N$, guaranteed to exist by
Proposition \ref{prop:DeVore-non-flat}.
For any unit vector $\mathbf v$, we have
\begin{eqnarray*}
\langle \mathbf{Pv}, \mathbf e_{j^*}\rangle
&=& \langle \mv v, \mv P \mv e_{j^*}\rangle\\
&=& \langle \mv v, \mv e_{j^*} \rangle -
    \langle \mv v, \mv e_{j^*} - \mv P \mv e_{j^*}\rangle\\
&=& \langle \mv v, \mv e_{j^*} \rangle -
    \langle \mv v-\mv e_{j^*} , \mv e_{j^*} - \mv P \mv e_{j^*}\rangle
  - \langle \mv e_{j^*}, \mv e_{j^*} - \mv P \mv e_{j^*} \rangle\\
\text{(by Cauchy-Schwarz)} &\geq&
 \langle \mv v, \mv e_{j^*} \rangle -
 \|\mv v-\mv e_{j^*}\|_2 \cdot \| \mv e_{j^*} - \mv P \mv e_{j^*}\|_2
  - (1- \langle \mv e_{j^*}, \mv P \mv e_{j^*} \rangle)\\
 &=&
 \langle \mv v, \mv e_{j^*} \rangle -
 \sqrt{2 - 2 \langle \mv v,\mv e_{j^*}\rangle} \cdot
 \sqrt{1- \| \mv P \mv e_{j^*}\|_2^2}
  - (1- \| \mv P \mv e_{j^*}\|_2^2)\\
 &\geq&
 \langle \mv v, \mv e_{j^*} \rangle -
 \sqrt{1 - \langle \mv v,\mv e_{j^*}\rangle}
 \cdot \sqrt{2m/N} - m/N.
\end{eqnarray*}
\end{proof}
Next, we show that any unit vector $\mathbf v$ sufficiently close to 
$\mv e_{j^*}$
can be paired up in a 1-to-1 manner (through a reflection operator)
with a unit vector $\mathbf v'$ such that the decoding algorithm cannot work 
well on both $\mathbf v$ and $\mathbf v'$. 
Since the pairing is measure preserving, we can then infer that when $m/N$
is small the algorithm $A$ will fail with ``high'' probability. 

\begin{lemma}
Let $(\Phi, A)$ be a (deterministic) pair of measurement matrix and decoding
algorithm, 
where $\Phi$ is an $m \times N$ matrix.
Define $\delta = m/N$, and 
let $\gamma\geq 0, C\geq 1$ be arbitrary constants such that
\begin{equation}
1-2\gamma-2\delta-2\sqrt{2\gamma\delta} > \frac{C}{\sqrt{1+C^2}}.
\label{eqn:gammadelta}
\end{equation}
Let $j^* \in [N]$ be the index satisfying \eqref{eqn:non-flat} guaranteed
to exist by Lemma \ref{lmm:non-flat}.
Let $\mathbf v$ be any unit vector such that
$\langle \mathbf v, \mathbf e_{j^*}\rangle \geq 1- \gamma$,
and $\mathbf v' = (\mathbf I - 2\mathbf P)\mathbf v$ where $\mathbf I$ is 
the identity matrix. 
Then, the following two conditions cannot be true at the same time:
\begin{itemize}
 \item[(a)] $\|\mathbf v - A(\Phi \mathbf v)\|_2 \leq C \cdot \|\mathbf v-\mv
v_k\|_2$
 \item[(b)] $\|\mathbf v' - A(\Phi \mathbf v')\|_2 \leq C \cdot \|\mathbf v'-\mv
v'_k\|_2$
\end{itemize}
\label{lmm:symmetrize}
\end{lemma}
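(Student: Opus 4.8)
The plan is to argue by contradiction: assume both (a) and (b) hold simultaneously, and derive a violation of the triangle inequality (or of the geometry forced by the spike at $j^*$). First I would record the key structural facts about $\mathbf v$ and $\mathbf v'$. Since $\mathbf v' = (\mathbf I - 2\mathbf P)\mathbf v$, and $\mathbf I - 2\mathbf P$ is the reflection through $\mathcal N^\perp$ (the row space of $\Phi$), we have $\|\mathbf v'\|_2 = \|\mathbf v\|_2 = 1$, so $\mathbf v'$ is also a unit vector. Crucially, $\Phi \mathbf v' = \Phi(\mathbf I - 2\mathbf P)\mathbf v = \Phi \mathbf v - 2\Phi\mathbf P\mathbf v = \Phi\mathbf v$ because $\mathbf P$ projects onto the null space of $\Phi$. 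Hence $A(\Phi\mathbf v) = A(\Phi\mathbf v')$; call this common output $\mathbf w$. So (a) and (b) become $\|\mathbf v - \mathbf w\|_2 \le C\|\mathbf v - \mathbf v_k\|_2$ and $\|\mathbf v' - \mathbf w\|_2 \le C\|\mathbf v' - \mathbf v'_k\|_2$.

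**Bounding the right-hand sides and exploiting the spike.**

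Next I would bound the tail terms on the right. Since $\mathbf v$ is a unit vector with $\langle \mathbf v, \mathbf e_{j^*}\rangle \ge 1-\gamma$, the coordinate $v_{j^*}$ carries almost all the mass, so $\|\mathbf v - \mathbf v_k\|_2 \le \|\mathbf v - v_{j^*}\mathbf e_{j^*}\|_2 = \sqrt{1 - v_{j^*}^2} \le \sqrt{1 - (1-\gamma)^2} = \sqrt{2\gamma - \gamma^2} \le \sqrt{2\gamma}$ (taking $k \ge 1$). For $\mathbf v'$, I would use Lemma~\ref{lmm:non-flat}: $\langle \mathbf{Pv}, \mathbf e_{j^*}\rangle \ge (1-\gamma) - \sqrt{\gamma}\cdot\sqrt{2\delta} - \delta$, and since $v'_{j^*} = v_{j^*} - 2\langle \mathbf{Pv}, \mathbf e_{j^*}\rangle$, we get $v'_{j^*} \le (1-\gamma) \cdot\text{(something)}$ — more precisely $v'_{j^*} \le 1 - 2\langle\mathbf{Pv},\mathbf e_{j^*}\rangle + \gamma \le -(1 - 2\gamma - 2\sqrt{2\gamma\delta} - 2\delta)$, which by hypothesis \eqref{eqn:gammadelta} is a strictly negative number with absolute value exceeding $C/\sqrt{1+C^2}$. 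So $\mathbf v'$ has a large \emph{negative} spike at $j^*$, while $\mathbf v$ has a large positive one there. As before $\|\mathbf v' - \mathbf v'_k\|_2 \le \sqrt{1 - (v'_{j^*})^2}$.

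**Deriving the contradiction at coordinate $j^*$.**

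The heart of the argument is the single coordinate $j^*$. The common output $\mathbf w$ has some value $w_{j^*}$ at that coordinate. From (a), $|v_{j^*} - w_{j^*}| \le \|\mathbf v - \mathbf w\|_2 \le C\|\mathbf v - \mathbf v_k\|_2$, and from (b), $|v'_{j^*} - w_{j^*}| \le C\|\mathbf v' - \mathbf v'_k\|_2$. Adding these and using the triangle inequality, $|v_{j^*} - v'_{j^*}| \le C(\|\mathbf v - \mathbf v_k\|_2 + \|\mathbf v' - \mathbf v'_k\|_2)$. Now I plug in: the left side is at least $v_{j^*} - v'_{j^*} \ge (1-\gamma) + (1 - 2\gamma - 2\sqrt{2\gamma\delta} - 2\delta) - \text{(lower order)}$, which condition \eqref{eqn:gammadelta} is engineered to make large; the right side, using $\|\mathbf v - \mathbf v_k\|_2 \le \sqrt{1-v_{j^*}^2}$ and similarly for $\mathbf v'$, is at most $C(\sqrt{1-v_{j^*}^2} + \sqrt{1-(v'_{j^*})^2})$. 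The function $t \mapsto C\sqrt{1-t^2}$ versus the gap $t - t'$ is exactly where the threshold $C/\sqrt{1+C^2}$ comes from: when a coordinate value $t$ satisfies $|t| \ge C/\sqrt{1+C^2}$, one has $|t| > C\sqrt{1-t^2}$, so a decoder forced to be within $C\sqrt{1-t^2}$ of $t$ cannot also be within $C\sqrt{1-t'^2}$ of a value $t'$ of the opposite sign with $|t'| \ge C/\sqrt{1+C^2}$. Formalizing this comparison — i.e. showing the assumed inequalities \eqref{eqn:gammadelta} force $v_{j^*} - v'_{j^*} > C\sqrt{1-v_{j^*}^2} + C\sqrt{1-(v'_{j^*})^2}$ — is the one genuine computation, and it is the main obstacle: I need to verify the inequality $x - (-y) > C\sqrt{1-x^2} + C\sqrt{1-y^2}$ holds whenever $x, y \ge C/\sqrt{1+C^2}$, then bound $v_{j^*}$ and $|v'_{j^*}|$ below by $C/\sqrt{1+C^2}$ using \eqref{eqn:gammadelta} and the estimate from Lemma~\ref{lmm:non-flat}. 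Once that is in hand, the two bounds on $|v_{j^*} - v'_{j^*}|$ contradict each other, so (a) and (b) cannot both hold.
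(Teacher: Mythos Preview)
Your proposal is correct and follows essentially the same approach as the paper: the reflection identity $\Phi\mathbf v' = \Phi\mathbf v$ forcing a common output, the bound $\|\mathbf v - \mathbf v_k\|_2 \le \sqrt{1 - v_{j^*}^2}$, the use of Lemma~\ref{lmm:non-flat} to show $v'_{j^*} \le -(1 - 2\gamma - 2\delta - 2\sqrt{2\gamma\delta}) < -C/\sqrt{1+C^2}$, and the threshold equivalence $|t| > C/\sqrt{1+C^2} \iff |t| > C\sqrt{1-t^2}$ are exactly the ingredients the paper uses.

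The only organizational difference is symmetry. The paper argues asymmetrically: it assumes (a), deduces merely that $\langle \mathbf z, \mathbf e_{j^*}\rangle \ge 0$, and then shows directly that $\|\mathbf v' - \mathbf z\|_2 \ge -v'_{j^*} > C\sqrt{1-(v'_{j^*})^2} \ge C\|\mathbf v' - \mathbf v'_k\|_2$, violating (b). Your version assumes both (a) and (b) and adds the two coordinate bounds via the triangle inequality, which requires the threshold $|t| > C/\sqrt{1+C^2}$ on \emph{both} $v_{j^*}$ and $|v'_{j^*}|$ rather than just the latter. That extra requirement on $v_{j^*}$ is harmless here since $v_{j^*} \ge 1-\gamma > C/\sqrt{1+C^2}$ follows immediately from \eqref{eqn:gammadelta}, so both arguments go through with the same hypothesis. (One small expository slip: your intermediate expression ``$v'_{j^*} \le 1 - 2\langle\mathbf{Pv},\mathbf e_{j^*}\rangle + \gamma$'' is garbled---the correct step is $v'_{j^*} = v_{j^*} - 2\langle\mathbf{Pv},\mathbf e_{j^*}\rangle \le 1 - 2\langle\mathbf{Pv},\mathbf e_{j^*}\rangle$---but your final bound is right.)
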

\begin{proof}
Since $\mathbf P\mathbf v$ is in the null space of $\Phi$, we have
$\Phi \mathbf v' = \Phi(\mathbf v - 2\mathbf P\mathbf v) = \Phi \mathbf v.$
To simplify notation,
define
\[ \mathbf z = A(\Phi \mathbf v') = A(\Phi \mathbf v). \]
Vector $\mv z$ is well-defined because $A$ is deterministic.
Assumes $(a)$ holds, we will show that $(b)$ does not hold.
Let $j^*$ be the coordinate from Lemma \ref{lmm:non-flat}.
From (a), we have
\begin{eqnarray*}
1-\gamma - \langle \mathbf z, \mathbf e_{j^*} \rangle
&\leq&
\langle \mathbf v, \mathbf e_{j^*} \rangle -
\langle \mathbf z, \mathbf e_{j^*} \rangle\\
&\leq&
|\langle \mathbf v-\mathbf z, \mathbf e_{j^*} \rangle|\\
&\leq& \|\mathbf v - \mathbf z\|_2\\
&\leq& C \cdot \| \mathbf v-\mv v_k \|_2 \\
&\leq& C \cdot \sqrt{1-\langle \mathbf v, \mathbf e_{j^*} \rangle^2}\\
&\leq& C \sqrt{1-(1-\gamma)^2}.
\end{eqnarray*}
Hence,
\[ \langle \mathbf z, \mathbf e_{j^*} \rangle
   \geq 1-\gamma-C\sqrt{1-(1-\gamma)^2}.
\]
In particular, due to \eqref{eqn:gammadelta},
\[ \langle \mathbf z, \mathbf e_{j^*} \rangle \geq 0. \]
Consequently,
\[ \|\mathbf v'-\mathbf z\|_2 \geq |\langle \mathbf v'-\mathbf z, \mathbf e_{j^*} \rangle|
=
|\langle \mathbf v', \mathbf e_{j^*} \rangle -
 \langle \mathbf z, \mathbf e_{j^*} \rangle|
\geq
 \langle \mathbf z, \mathbf e_{j^*} \rangle
 - \langle \mathbf v', \mathbf e_{j^*} \rangle
\geq
 - \langle \mathbf v', \mathbf e_{j^*} \rangle.
\]

We next claim that
\begin{equation}
 - \langle \mathbf v', \mathbf e_{j^*} \rangle >
 C \sqrt{1-\langle \mathbf v', \mathbf e_{j^*} \rangle^2}.
\label{eqn:suf}
\end{equation}
Before proving the claim, let us first see how it implies
that $(b)$ does not hold.
To this end, observe that
\[ C \cdot \|\mathbf v'-\mv v'_k\|_2 \leq C \sqrt{1-\langle \mathbf v', \mathbf
e_{j^*} \rangle^2} < - \langle \mathbf v', \mathbf e_{j^*} \rangle
 \leq \|\mathbf v'-\mathbf z\|_2.
\]

Finally, we prove claim \eqref{eqn:suf}, which is equivalent to
$\langle \mathbf v', \mathbf e_{j^*} \rangle < - \frac{C}{\sqrt{1+C^2}}$:
\begin{eqnarray}
\langle \mathbf v', \mathbf e_{j^*} \rangle &=&
\langle (\mathbf I-2\mathbf P)\mathbf v, \mathbf e_{j^*} \rangle \nonumber \\
&=& \langle \mathbf v, \mathbf e_{j^*} \rangle -
    2\langle \mathbf P \mathbf v, \mathbf e_{j^*} \rangle \nonumber \\
(\text{Lemma \ref{lmm:non-flat}})&\leq&
   1 - 2(1-\gamma-\delta - \sqrt{2\gamma\delta}) \nonumber \\
&=& -1 + 2\gamma + 2\delta+2\sqrt{2\gamma\delta} \label{eqn:large}\\
(\text{from \eqref{eqn:gammadelta}}) &<& - \frac{C}{\sqrt{1+C^2}}. \nonumber
\end{eqnarray}
\end{proof}

\begin{lemma}
Let $(\Phi, A)$ be a fixed pair of (deterministic) measurement matrix
and decoding algorithm, where $\Phi$ is an $m \times N$ matrix.
Define $\delta = m/N$, and let $\gamma > 0, C\geq 1$ be arbitrary constants 
satisfying \eqref{eqn:gammadelta}.
Suppose we chose input vectors $\mv x$ uniformly
on the unit sphere $S^{N-1}$, then 
\[ \pr_{\mv x}[ \|\mv x - A(\Phi\mv x)\|_2 > C \|\mv x-\mv x_k\|_2]
    \geq \sqrt{1/\gamma} \cdot e^{-\frac N 2 \ln(2/\gamma)}. 
\]
\label{lmm:randomx}
\end{lemma}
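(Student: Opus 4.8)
The plan is to fix the deterministic pair $(\Phi,A)$ for the whole argument, let $j^*\in[N]$ be the index furnished by Lemma~\ref{lmm:non-flat}, write $\mu$ for the uniform probability measure on $S^{N-1}$ and $\mathbf P$ for the orthoprojector onto the null space of $\Phi$, and set $B=\{\mathbf x\in S^{N-1}:\|\mathbf x-A(\Phi\mathbf x)\|_2>C\|\mathbf x-\mathbf x_k\|_2\}$, the ``failure set''; the goal then becomes to show $\mu(B)\ge\sqrt{1/\gamma}\cdot e^{-\frac N2\ln(2/\gamma)}$. The objects I would introduce are the spherical cap $R=\{\mathbf v\in S^{N-1}:\langle\mathbf v,\mathbf e_{j^*}\rangle\ge 1-\gamma\}$ around $\mathbf e_{j^*}$, the linear reflection $\sigma:=\mathbf I-2\mathbf P$ used in Lemma~\ref{lmm:symmetrize}, and its image $R':=\sigma(R)$.

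First I would run the pairing argument. Since $\mathbf P$ is symmetric and idempotent, $\sigma^T\sigma=(\mathbf I-2\mathbf P)^2=\mathbf I$, so $\sigma$ is an orthogonal involution; hence it preserves $\mu$ and $R\cup R'$ is $\sigma$-invariant. Condition~\eqref{eqn:gammadelta} is precisely the hypothesis of Lemma~\ref{lmm:symmetrize}, which tells us that for every $\mathbf v\in R$ the points $\mathbf v$ and $\mathbf v'=\sigma(\mathbf v)$ cannot both meet the recovery guarantee --- equivalently, at least one of $\mathbf v,\sigma(\mathbf v)$ lies in $B$. Therefore $\sigma$ maps the ``good'' set $G:=(R\cup R')\setminus B$ into $B\cap(R\cup R')$; as $\sigma$ is measure preserving and $G$ and $B\cap(R\cup R')$ partition $R\cup R'$, this gives $\mu(G)\le\mu(B\cap(R\cup R'))=\mu(R\cup R')-\mu(G)$, so $\mu(B\cap(R\cup R'))\ge\tfrac12\mu(R\cup R')$. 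Finally, the computation~\eqref{eqn:large} inside the proof of Lemma~\ref{lmm:symmetrize} shows $\langle\mathbf v',\mathbf e_{j^*}\rangle<0$ for every $\mathbf v'\in R'$, whereas $\langle\mathbf v,\mathbf e_{j^*}\rangle\ge 1-\gamma>0$ for $\mathbf v\in R$ (note $\gamma<1/2$ by~\eqref{eqn:gammadelta}); so $R$ and $R'$ lie in opposite open hemispheres, are disjoint, and $\mu(R\cup R')=2\mu(R)$. Hence $\mu(B)\ge\mu(R)$.

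The remaining task is a spherical-cap estimate. Disintegrating $\mu$ along the latitude $\langle\cdot,\mathbf e_{j^*}\rangle=\cos\theta$ gives $\mu(R)=\frac{\int_0^{\theta_0}\sin^{N-2}\theta\,d\theta}{\int_0^{\pi}\sin^{N-2}\theta\,d\theta}$ with $\cos\theta_0=1-\gamma$, and since $\gamma<1/2$ we have $\theta_0\in(0,\pi/2)$, so $\sin$ is increasing on $[0,\theta_0]$. For the numerator I would keep only the range $[\theta_0/2,\theta_0]$ and use $\int_0^{\theta_0}\sin^{N-2}\theta\,d\theta\ge\tfrac{\theta_0}{2}\sin^{N-2}(\theta_0/2)$; then $\sin(\theta_0/2)=\sqrt{(1-\cos\theta_0)/2}=\sqrt{\gamma/2}$ and $\theta_0\ge\sin\theta_0=\sqrt{\gamma(2-\gamma)}\ge\sqrt\gamma$ give numerator $\ge\tfrac{\sqrt\gamma}{2}(\gamma/2)^{(N-2)/2}=\sqrt{1/\gamma}\,(\gamma/2)^{N/2}$. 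The denominator $\int_0^{\pi}\sin^{N-2}\theta\,d\theta$ is a Wallis integral, at most $1$ once $N$ exceeds a small absolute constant (the only regime in which the lemma is applied), so $\mu(R)\ge\sqrt{1/\gamma}\,(\gamma/2)^{N/2}=\sqrt{1/\gamma}\,e^{-\frac N2\ln(2/\gamma)}$; combined with $\mu(B)\ge\mu(R)$ this gives the claim.

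I expect the only genuinely subtle point to be the pairing in the second paragraph: spotting that $\mathbf I-2\mathbf P$ is a measure-preserving involution, and that the cap around $\mathbf e_{j^*}$ and its reflection occupy opposite hemispheres in coordinate $j^*$ and are therefore \emph{disjoint} --- the disjointness is exactly what keeps the leading constant at $1$ rather than $\tfrac12$, which is needed to match the stated bound. Everything else (orthogonality of $\sigma$, the disintegration of the sphere's surface measure into latitude slices, and the Wallis estimate) is routine, modulo some care with absolute constants and with small values of $N$.
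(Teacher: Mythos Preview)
Your proposal is correct and follows essentially the same route as the paper: define the cap $R=\{\mathbf v:\langle\mathbf v,\mathbf e_{j^*}\rangle\ge 1-\gamma\}$, use the reflection $\sigma=\mathbf I-2\mathbf P$ together with Lemma~\ref{lmm:symmetrize} to pair each point of $R$ with a failure point, invoke \eqref{eqn:large} and \eqref{eqn:gammadelta} to get disjointness of $R$ and $\sigma(R)$, and lower-bound the cap measure. The only difference is that the paper cites Ball's cap estimate directly (obtaining $(1/2)(\sqrt{\gamma/2})^{N-1}$) whereas you carry out the Wallis-integral computation by hand; your measure-theoretic partition argument for $G$ versus $B\cap(R\cup R')$ is also spelled out more explicitly than in the paper, but the underlying idea is identical.
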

\begin{proof}
Let $j^* \in [N]$ be the index satisfying \eqref{eqn:non-flat} guaranteed
to exist by Lemma \ref{lmm:non-flat}.
The set of vectors $\mv v$ for which
$\langle \mv v, \mv e_{j^*}\rangle \geq 1-\gamma$ is called the
$(1-\gamma)$-cap about $\mv e_{j^*}$ on the sphere $S^{N-1}$.
It is known (see, e.g., \cite{ball} -- Lemma 2.3)
that the $(1-\gamma)$-cap has measure at least
\[ (1/2)\cdot(\sqrt{\gamma/2})^{N-1} =
   \sqrt{1/\gamma} \cdot e^{-\frac N 2 \ln(2/\gamma)}.
\]
The mapping $\mathbf v \to (\mathbf I-2\mathbf P)\mathbf v$ from Lemma
\ref{lmm:symmetrize} is a reflection through the rowspace of
$\Phi$, hence the image of the reflection is another cap on the sphere
with exactly the same measure.
From \eqref{eqn:large} and \eqref{eqn:gammadelta}, the two caps are disjoint.
Lemma \ref{lmm:symmetrize} then completes the proof, because if
algorithm $A$ works well on a vector in one cap, then it will not work well on
the vector's reflection.
\end{proof}

Armed with the lemma above we prove our final lower bound stated below (and proved in Appendix~\ref{app:yao}). To do this we need a continuous version of Yao's lemma and a simple padding trick to reduce the case of $p=2^{-\Theta(N)}$ to larger values of $p$.

\begin{theorem}
\label{thm:main-lb}
Let $C \geq 1$ and $p$ be such that
$\sqrt{12+16C^2} \cdot e^{-\frac{\ln(6+8C^2)}{2} \cdot N} \leq p < 1.$
Then, any $\ltlt$ foreach sparse recovery scheme using $m \times N$ measurement
matrices $\Phi$ with failure probability at most $p$ and approximation factor
$C$ must have
$ m \geq \frac{1}{(6+8C^2)\ln(6+8C^2)}
     \ln\left(\frac{\sqrt{12+16C^2}}{p}\right) =
    \Omega(\log(1/p))$
measurements.
\end{theorem}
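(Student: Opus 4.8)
The plan is to convert the per-matrix failure bound of Lemma~\ref{lmm:randomx} into a lower bound on $m$ for any (possibly randomized) scheme, via Yao's minimax principle, and then to handle the very small $p$ regime by a padding argument. First I would fix the constants: set $\gamma = \gamma(C)$ so that \eqref{eqn:gammadelta} holds with room to spare whenever $\delta = m/N$ is below some threshold; the natural choice makes $1-2\gamma-2\delta-2\sqrt{2\gamma\delta}$ exceed $C/\sqrt{1+C^2}$ precisely when $\delta \le c/(1+C^2)$ for an absolute constant $c$, which is why the $6+8C^2$ factors appear. So I would take $\gamma = 1/(6+8C^2)$ and observe that then \eqref{eqn:gammadelta} holds for all $\delta \le \gamma$ as well, i.e. as long as $m \le \gamma N$. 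Under that hypothesis Lemma~\ref{lmm:randomx} says that a uniformly random point on $S^{N-1}$ defeats any fixed deterministic $(\Phi,A)$ with probability at least $\sqrt{1/\gamma}\cdot e^{-(N/2)\ln(2/\gamma)}$.

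Next I would invoke Yao: a randomized scheme with failure probability at most $p$ on every input is, by averaging over its internal randomness, a distribution over deterministic schemes; since the hard distribution (uniform on $S^{N-1}$) does not depend on the scheme, the expected failure probability of a random deterministic scheme on a random input is at most $p$, hence some deterministic $(\Phi,A)$ has failure probability at most $p$ against the uniform input. Combining with Lemma~\ref{lmm:randomx}, either $m > \gamma N$, or $p \ge \sqrt{1/\gamma}\cdot e^{-(N/2)\ln(2/\gamma)}$, i.e. $N \le \frac{2}{\ln(2/\gamma)}\ln\!\left(\frac{\sqrt{1/\gamma}}{p}\right)$. In the first case $m > \gamma N$ is already $\Omega(N)$, and since in the stated range $p \ge \sqrt{12+16C^2}\,e^{-\frac{\ln(6+8C^2)}{2}N}$ forces $N = O(\log(1/p))$, this also yields $m = \Omega(\log(1/p))$; in the second case I would need $m \ge \Omega(\log(1/p))$ directly, which does \emph{not} follow just from bounding $N$ — this is the subtlety the theorem statement finesses, and where a padding trick is needed.

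The padding argument: given a scheme on $\R^N$ with $m \le \gamma N$ that somehow had $m = o(\log(1/p))$, restrict attention to the first $N' \le N$ coordinates where $N'$ is chosen as small as the constraint $m \le \gamma N'$ allows, i.e. $N' = \lceil m/\gamma \rceil$; a scheme for $\R^N$ with failure probability $p$ yields (by zero-padding inputs) a scheme for $\R^{N'}$ with failure probability at most $p$ and still $m$ measurements, now with $m \ge \gamma N'$ tight. Applying Lemma~\ref{lmm:randomx} on $S^{N'-1}$ with $\delta = m/N' = \gamma$ (so \eqref{eqn:gammadelta} still just holds) gives $p \ge \sqrt{1/\gamma}\,e^{-(N'/2)\ln(2/\gamma)} \ge \sqrt{1/\gamma}\,e^{-(m/(2\gamma))\ln(2/\gamma) - O(1)}$, which rearranges to $m \ge \frac{2\gamma}{\ln(2/\gamma)}\ln(\sqrt{1/\gamma}/p)/(1+o(1))$; plugging $\gamma = 1/(6+8C^2)$ gives exactly $m \ge \frac{1}{(6+8C^2)\ln(6+8C^2)}\ln\!\left(\frac{\sqrt{12+16C^2}}{p}\right) = \Omega(\log(1/p))$, and the lower endpoint $\sqrt{12+16C^2}\,e^{-\frac{\ln(6+8C^2)}{2}N}\le p$ is exactly the condition that makes $N' \le N$ feasible.

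The main obstacle I anticipate is the Yao step done rigorously in the \emph{continuous} setting: the input space $S^{N-1}$ is infinite, so one needs either a compactness/minimax theorem for the continuous game or, more simply, the elementary averaging observation above (which suffices here because we only need the easy direction of Yao — a hard distribution lower-bounds the randomized complexity — and that direction is just linearity of expectation). A secondary nuisance is making sure the reflection operator $\mathbf I - 2\mathbf P$ maps $S^{N-1}$ to $S^{N-1}$ bijectively and measure-preservingly (it is orthogonal since $\mathbf P$ is an orthogonal projection, so $\mathbf I - 2\mathbf P$ is an involutive isometry), and that the two caps are genuinely disjoint, which is exactly what \eqref{eqn:large} together with \eqref{eqn:gammadelta} guarantees. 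Beyond that the computation is bookkeeping with the constants.
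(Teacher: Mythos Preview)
Your approach matches the paper's: Lemma~\ref{lmm:randomx} supplies the hard distribution, Yao transfers it to randomized schemes, and a padding reduction to dimension $N'$ handles the full range of $p$ (the paper defines $N'$ directly from $p$ rather than from $m$, but the two are equivalent). One correct simplification you make is that only the easy averaging direction of Yao is needed, whereas the paper invokes Sion's minimax theorem; one slip is your choice $\gamma = 1/(6+8C^2)$, which does \emph{not} satisfy \eqref{eqn:gammadelta} at $\delta=\gamma$ (e.g.\ at $C=1$ the left side is $1-(4+2\sqrt 2)/14\approx 0.51 < 1/\sqrt 2$), so take $\gamma=\delta=1/(12+16C^2)$ as the paper does---this only shifts the constants and does not affect the $\Omega(\log(1/p))$ conclusion.
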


\subsection{Lower Bound for Bounded Adversary Model}

In this section, we show the following result:
\begin{theorem}
\label{thm:lb-bounded}
Any $\ltlt$ sparse recovery scheme that uses at most $b$ bits in each entry of $\Phi$ needs at least $\Omega\left(\sqrt{\frac{s}{b}}\right)$ number of measurements to be successful against an $s$-information-theoretically-bounded adversary.
\end{theorem}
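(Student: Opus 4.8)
The plan is to turn the \emph{forall} $\ltlt$ lower bound of Cohen, Dahmen and DeVore~\cite{MR2449058} — used here in the form of Proposition~\ref{prop:DeVore-non-flat}, see also Appendix~\ref{app:forall} — into an attack that an $s$-information-theoretically-bounded adversary can actually mount. The key observation is that the forall hard instance for an $m\times N$ matrix $\Phi$ only involves $N'=O(C^2 m)$ columns of $\Phi$, and hence is determined by just $O(C^2 m^2 b)$ bits once every entry of $\Phi$ is written with at most $b$ bits. An adversary whose view of $\Phi$ is squeezed through an $s$-bit bottleneck can therefore ``replay'' this instance exactly when $s=\Omega(C^2 m^2 b)$, which forces $m=\Omega(\sqrt{s/b})$ for constant $C$.

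Concretely, view the scheme as a distribution over deterministic pairs $(\Phi_r,A_r)$ indexed by its internal randomness $r$ (so $r$ encapsulates all randomness of the matrix and the decoder), and set $N' := \lceil 2C^2 m\rceil$; we focus on the regime $N\ge N'=\Theta(C^2 m)$, which is the only one relevant to compressive sensing (where $N\gg m$). Let $B_r$ be the $m\times N'$ submatrix of $\Phi_r$ on its first $N'$ columns and let $P_r$ be the orthoprojection onto the null space of $B_r$. By Proposition~\ref{prop:DeVore-non-flat} applied to $B_r$ there is a coordinate $j^*_r\in[N']$ with $\|P_r\mv e_{j^*_r}\|_2^2\ge 1-m/N'$; let $\vn_r\in\R^N$ be the zero-padded extension of the unit vector $P_r\mv e_{j^*_r}/\|P_r\mv e_{j^*_r}\|_2$. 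Then $\Phi_r\vn_r=\vzero$, $\langle\vn_r,\mv e_{j^*_r}\rangle=\|P_r\mv e_{j^*_r}\|_2\ge\sqrt{1-m/N'}$, and consequently, writing $\vu_j$ for a best $j$-sparse approximation of $\vu$, $\|\vn_r-(\vn_r)_k\|_2\le\|\vn_r-(\vn_r)_1\|_2\le\sqrt{1-\langle\vn_r,\mv e_{j^*_r}\rangle^2}\le\sqrt{m/N'}\le 1/(\sqrt2\,C)$, whereas $\|\vn_r-\vzero\|_2=1$.

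Now consider the two candidate adversaries $M^{0}$ and $M^{\vn}$, both of the bottleneck form $M=M_2\circ M_1$. For $M^{0}$, $M_1$ outputs nothing and $M_2$ outputs the fixed signal $\vzero$. For $M^{\vn}$, $M_1$ reads $r$, reconstructs $\Phi_r$, and outputs the exact $b$-bits-per-entry encoding of $B_r$ — at most $mN'b\le 3C^2m^2b$ bits — while $M_2$ decodes $B_r$, recomputes $\vn_r$ as above, and outputs it. Both adversaries are $s$-information-theoretically bounded whenever $s\ge 3C^2m^2b$. Since $\Phi_r\vzero=\Phi_r\vn_r=\vzero$, the decoder returns the \emph{same} vector $\vz_r:=A_r(\vzero)$ in both scenarios. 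If $\vz_r\neq\vzero$ then recovery of $M^{0}$'s signal fails because $\|\vzero-\vz_r\|_2>0=C\|\vzero-\vzero_k\|_2$; if $\vz_r=\vzero$ then recovery of $M^{\vn}$'s signal fails because $\|\vn_r-\vzero\|_2=1>1/\sqrt2\ge C\|\vn_r-(\vn_r)_k\|_2$. Hence $\pr_r[\text{$M^{0}$ breaks }r]+\pr_r[\text{$M^{\vn}$ breaks }r]\ge 1$, so at least one of $M^{0},M^{\vn}$ makes recovery fail with probability $\ge 1/2$ over $r$. Therefore a scheme that is successful against every $s$-information-theoretically-bounded adversary must violate $s\ge 3C^2m^2b$, i.e. $m=\Omega(\sqrt{s/(C^2b)})=\Omega(\sqrt{s/b})$ for constant $C$.

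The step that needs the most care is the interface between the geometric forall argument and the $M_2\circ M_1$ bottleneck. The $b$-bit entry bound must be used \emph{essentially}: $M_2$ has to reconstruct $B_r$ \emph{exactly}, so that $\Phi_r\vn_r=\vzero$ holds exactly and the ``$A_r$ cannot recover both $\vzero$ and $\vn_r$'' dichotomy is clean (an approximate reconstruction would only give $\Phi_r\vn_r\approx\vzero$). One should also check that $N'=\Theta(C^2m)$ really makes $\vn_r$ spiky enough — which is immediate from Proposition~\ref{prop:DeVore-non-flat} with our choice $N'=\lceil 2C^2m\rceil$ — and that the edge case $N<N'$ is vacuous in the regime $N\gg m$. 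Finally, to boost the failure probability from $1/2$ to $1$ (and thus dispense with any case analysis on the definition of ``successful''), let $M_1$ additionally simulate $A_r(\vzero)$ — it may, being computationally unbounded — and spend one extra bit to tell $M_2$ whether to output $\vzero$ or $\vn_r$; the single resulting adversary then breaks every $r$.
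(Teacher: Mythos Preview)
Your argument is correct and follows the same route as the paper: restrict to a small column submatrix $\Phi'$ of size $m\times\Theta(C^2m)$, invoke Proposition~\ref{prop:DeVore-non-flat} to get a spiky null-space vector supported on those columns, zero-pad it (Lemma~\ref{lem:null-extend}), and observe that an adversary who sees only the $O(C^2m^2b)$ bits describing $\Phi'$ can reconstruct this vector exactly. Your treatment is in fact more careful than the paper's on two points --- you make the randomized setting explicit by indexing over $r$, and you resolve the $\vzero$-versus-$\vn_r$ dichotomy cleanly (either via the two-adversary averaging argument or via the single extra bit letting $M_1$ simulate $A_r(\vzero)$) --- whereas the paper simply asserts that ``any recovery algorithm will have to fail on either the input $\vzero$ or $\vn$'' and leaves the adversary's choice implicit.
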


The result follows from the $\ltlt$ forall lower bound argument from Corollary~\ref{cor:l2l2-lb-N} and the following simple observation:
\begin{lemma}
\label{lem:null-extend}
Let $A$ be an $m\times n$ matrix with $m\le n$. Consider the column sub-matrix $A'$ which has the first $n'$ (for some $m\le n'\le n$). If $\vn'$ is in the null space of $A'$, then $\vn=(\vn',\vzero_{n-n'})$ (i.e. vector $\vn'$ followed by $n-n'$ zeroes) is in the null space of $A$.
\end{lemma}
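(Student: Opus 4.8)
The plan is to unwind the definitions and use nothing more than the block structure of matrix--vector multiplication. Write the columns of $A$ as $\mv a_1,\dots,\mv a_n \in \R^m$, so that $A'$ consists exactly of the columns $\mv a_1,\dots,\mv a_{n'}$ and the remaining columns $\mv a_{n'+1},\dots,\mv a_n$ form a second block $A''$; in block form $A = [\,A' \mid A''\,]$. For the extended vector $\vn = (\vn', \vzero_{n-n'})$, multiplying out in blocks gives $A\vn = A'\vn' + A''\vzero_{n-n'}$.

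First I would observe that $A''\vzero_{n-n'} = \vzero_m$ trivially, since any matrix applied to the zero vector is zero. Next, by hypothesis $\vn'$ lies in the null space of $A'$, i.e. $A'\vn' = \vzero_m$. Combining the two, $A\vn = \vzero_m + \vzero_m = \vzero_m$, which is exactly the statement that $\vn$ lies in the null space of $A$. The dimension bookkeeping ($m \le n' \le n$) is only there so that $A'$ still has at least as many columns as rows and a nonzero null space element $\vn'$ can plausibly exist; it plays no role in the identity itself.

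I do not expect any genuine obstacle here: the only thing to be careful about is that the zero-padding is appended in the coordinates corresponding precisely to the columns that were \emph{deleted} to form $A'$, so that the block multiplication lines up; once the indexing convention ``first $n'$ columns'' is fixed as in the statement, the computation is immediate. If one wanted to phrase it even more explicitly, one could write $(A\vn)_i = \sum_{j=1}^{n} A_{ij} n_j = \sum_{j=1}^{n'} A_{ij} n'_j + \sum_{j=n'+1}^{n} A_{ij}\cdot 0 = (A'\vn')_i = 0$ for every $i \in [m]$, which is the same argument at the level of entries.
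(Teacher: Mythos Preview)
Your proof is correct; the paper in fact does not supply any proof of this lemma at all, calling it a ``simple observation'' and moving directly on to use it. Your block-multiplication argument is exactly the intended (and essentially only) way to verify this trivial fact.
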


\begin{proof}[Proof of Theorem~\ref{thm:lb-bounded}] For the sake of contradiction assume that there exists an $\ltlt$-sparse recovery matrix $\Phi$ against any $s$-information-theoretically-bounded adversary that achieves an approximation factor of $C$ and has $m< \frac{\sqrt{s}}{C^2\sqrt{b}}$ measurements. Next we present an $s$-information-theoretically-bounded adversary that can foil such a system.

Let $\Phi'$ be the column sub-matrix of $\Phi$ that has only the first $n'\stackrel{\mathrm{def}}{=}\sqrt{s/b}$ columns of $\Phi$. Then by Proposition~\ref{prop:DeVore-non-flat}, there is a unit vector $\vn'$ that is in the null space of $\Phi'$ and satisfies $\|\vn'\|_{\infty}\ge 1-\frac{1}{C^2}$. By Lemma~\ref{lem:null-extend}, $\vn=(\vn',\vzero)$ is in the null space of $\Phi$. Further, it is easy to verify that $\vn$ is a unit vector and $\|\vn\|_{\infty}\ge 1-\frac{1}{C^2}$. Using the proof of Corollary~\ref{cor:l2l2-lb-N}, one can then argue that any recovery algorithm will have to fail on either the input $\vzero$ or $\vn$.

To complete the proof, we need to argue that the adversary only needs to remember $s$-bits of information about $\Phi$ to compute $\vn'$. Indeed note that we only need at most $m\cdot \sqrt{s/b} \cdot b \le s/C^2$ bits to describe the matrix $\Phi'$, which is enough to compute $\vn$.
\end{proof}

\section{Sublinear Decoding}
\label{sec:ub}

We present known results with polynomial time decoding on $\ltlt$ sparse recovery problem in Appendix~\ref{app:known}.

Our strategy for designing sub-linear time decodable top level systems will be as follows: we will first design weak identification matrices that have sublinear identification time. Then we (in a black-box manner) convert such matrices to sub-linear time decodable top level systems. We now present an outline of how we implement our strategy. 

In Section~\ref{sec:exp-sparse-rec} we show how expanders can be used to construct various schemes that will be useful later. In Section~\ref{sec:conversion}, we show how to convert weak identification systems to top level systems. The rest of technical development (in Section~\ref{sec:intermediate} and~\ref{sec:recursive}) is in designing weak identification system with good parameters.

Our first main result on sub-linear time decodable top levels systems will be:

\begin{theorem}
\label{thm:exp-fail}
For any $k\ge N^{\Omega(1)}$ and $\eps,\alpha>0$, there exists a $(k,1+\eps)$-top level system with $O(\eps^{-11}k\log(N/k))$ measurements, failure probability $(N/k)^{-k/\log^{13+\alpha}{k}}$ and decoding time $\eps^{-4}\cdot k^{1+\alpha}\cdot\log^{O(1)}{N}$.\footnote{The $O(\cdot)$ notation here hides the dependence on $\alpha$.} This scheme uses $O_{\eps}(k\cdot \log^{O(1)}{N})$ bits of space.
\end{theorem}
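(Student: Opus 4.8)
The plan is to instantiate the three-step framework from the introduction: build a sub-linear-time \emph{weak identification} matrix (Section~\ref{sec:recursive}), pair it with an expander-based \emph{weak $\ell_2/\ell_2$ estimation block} (Section~\ref{sec:exp-sparse-rec}) to get a $(k,\zeta,\eta)$ weak $\ell_2/\ell_2$ system, and then apply the black-box weak-to-top-level conversion (Lemma~\ref{lem:weak-to-top}) to obtain the $(k,1+\eps)$ top level system. The estimation block is standard: take an $(O(k/\eta),\eta)$-expander on $[N]$ with left degree $O(\eps^{-2}\log(N/k))$, replace each edge by an independent $\pm1$ value, and estimate each candidate coordinate by the median of its incident measurements (à la Count-Sketch); expansion ensures that all but an $\eta$-fraction of heavy hitters see $\ell_2^2$ noise $O(\eta/k)\|\vz\|_2^2$ in a majority of their measurements, so the block both (near-linear-time) identifies $\Omega(k)$ heavy hitters on small domains and furnishes the estimation guarantee demanded by the weak $\ell_2/\ell_2$ definition, using $O(\eps^{-O(1)}k\log(N/k))$ rows. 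The real work is the identification matrix, so the bulk of the argument is the recursive construction.

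\textbf{Recursive identification.}
Since $k\ge N^{\Omega(1)}$ we have $\log_k N=O(1)$, so a $d$-ary recursion of constant depth drives the domain size from $N$ down to $\poly\log N$, at which point the near-linear-time expander identifier of Section~\ref{sec:exp-sparse-rec} finishes in $\tilde O(k)$ time; the recursion tree thus has $\N=O(1)$ nodes. At an internal node whose current domain is $[M]$, first apply a $\tilde O(k)$-wise independent permutation of $[M]$ so that w.h.p. only $O(\zeta k)$ heavy hitters collide after the coordinate projections; then encode the domain by the Loomis--Whitney code $C_{\lw(d)}\colon[M]\to[M^{1/d}]^d$ with $d=\lceil 1+1/\alpha\rceil$, so that $d/(d-1)\le 1+\alpha$; recurse on each of the $d$ coordinates to obtain sets $S_1,\dots,S_d\subseteq[M^{1/d}]$, each of size $O(k/\eta)$ and each missing at most $\zeta k$ of the projected heavy hitters. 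By Lemma~\ref{lem:lw}, all $\le(O(k/\eta))^{d/(d-1)}\le(O(k/\eta))^{1+\alpha}$ preimages consistent with \emph{all} $d$ coordinates can be listed in $O((k/\eta)^{1+\alpha}\log N)$ time; re-estimate these candidates with a fresh estimation block and keep the top $O(k/\eta)$ to control the output size. A heavy hitter is lost at a node only if it is dropped in one of the $d$ subcalls or to a collision, so its loss fraction obeys $f(M)\le d\cdot f(M^{1/d})+O(\zeta)$; unrolling over the $O(1)$ levels gives $f(N)\le O(\zeta)\cdot\sum_{i\le D}d^{i}=O(\zeta)$, made $<\tfrac12$ by a small constant $\zeta$. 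Because the tree has only $\N=O(1)$ nodes, the union bound over it costs \emph{no} extra $\log$ factors, so the measurement count stays $O(\eps^{-O(1)}k\log(N/k))$; summing the per-level decoding times (a geometric series dominated by the root) yields $\eps^{-O(1)}k^{1+\alpha}\poly\log N$.

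\textbf{Failure probability, conversion, and space.}
Instantiate every expander/median block with a $t$-wise independent hash family for a suitable $t=\Theta(k/\poly\log k)$; the $t$-wise tail bounds of Appendix~\ref{app:prob-basics} then bound the probability that a single block loses more than a $\zeta$-fraction of heavy hitters by $2^{-\Omega(t\log(N/k))}=(N/k)^{-k/\poly\log k}$. Union-bounding over the $\N=O(1)$ recursion nodes and the $O(\log k)$ iterations of the weak-to-top conversion (which re-runs the weak system on residuals with a decaying working sparsity) keeps the overall failure probability at $(N/k)^{-k/\log^{13+\alpha}k}$, the exponent $13+\alpha$ absorbing the choice of $t$, the expander parameters, the number of blocks, and the dependence on $d=\Theta(1/\alpha)$. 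Applying Lemma~\ref{lem:weak-to-top} produces the claimed $(k,1+\eps)$ top level system; its decay schedule keeps the total number of rows at $O(\eps^{-11}k\log(N/k))$ and keeps the decoding time sub-linear. For the space bound, observe that the $\Omega(N)$-space table lookup Porat--Strauss use to invert the permutations is unnecessary here: the tree has $O(1)$ nodes, each permutation is a $\tilde O(k)$-wise independent hash described by $\tilde O(k)$ seed words, and the expander can be stored implicitly, giving $O_\eps(k\poly\log N)$ bits total.

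\textbf{Main obstacle.}
The delicate point is achieving $m=O(k\log(N/k))$ \emph{and} failure probability $(N/k)^{-k/\poly\log k}$ at the same time: this forces the estimation blocks to use only $\tilde O(k)$-wise (rather than full) independence while still enjoying failure probability exponentially small in $k\log(N/k)$, and it is precisely the constant recursion depth available when $\log_k N=O(1)$ that prevents the tree union bound from inflating the measurement count — going beyond $k\ge N^{\Omega(1)}$ is exactly where the extra $\log_k^{\alpha}N$ factor of the general-$k$ row becomes unavoidable.
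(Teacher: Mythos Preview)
Your proposal is essentially the paper's own approach: recursive identification via the Loomis--Whitney code (Lemma~\ref{lem:lw}) with depth $O(\log_k N)=O(1)$ when $k\ge N^{\Omega(1)}$, plugged into the expander-based estimation block, then lifted by Lemma~\ref{lem:weak-to-top}. The paper packages the first two steps as Lemma~\ref{cor:lw} (instantiated with $k_0\ge N^{\Omega(1)}$ so the $(\log_{k_0\log(N/k_0)}N)^{O(1)}$ factor collapses to $O(1)$), amplifies via Lemma~\ref{lem:weak-identification-amplify}, converts to a weak system via Lemma~\ref{lem:identify-to-weak}, and then invokes Lemma~\ref{lem:weak-to-top}.

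Two places where your account is slightly off, though neither breaks the argument. First, the $\log^{13+\alpha}k$ in the failure-probability exponent does \emph{not} come from choosing $t=\Theta(k/\poly\log k)$-wise independence; each weak system already achieves failure $(N/k)^{-\Omega(k)}$ with $O(k\ell)$-wise independence. The loss is incurred entirely inside Lemma~\ref{lem:weak-to-top}: at stage $i$ the working sparsity is $k/2^i$ and the repetition parameter is $s=2^i/i^{(1+\alpha)c+2+\alpha}$ (chosen so the measurements telescope), giving per-stage failure $q^{-\Omega(k/i^{(1+\alpha)c+2+\alpha})}$, worst at $i=\log k$. Second, you invoke a ``$\tilde O(k)$-wise independent permutation'' at each node and assert cheap inversion; inverting a degree-$\tilde O(k)$ polynomial permutation is not free. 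The paper instead applies \emph{one} random map $f(i)=(i,g(i))$ at the root (Section~\ref{sec:invert}, \schemet), so inversion is just a projection, and then exploits the specific bit-structure of $C_{\lw(d)}$ to show the induced map $\phi_v$ at every node remains $(d,\alpha)$-random. This is the step that actually delivers the $O_\eps(k\,\poly\log N)$ space bound.
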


In fact, our results also work for $k=N^{o(1)}$ but we then do not get the optimal number of measurements. However, an increase in the decoding time leads to our second main result, which has near-optimal number of measurements. 

\begin{theorem}
\label{thm:exp-gen-k}
For any $1\le k\le N$ and $\eps,\alpha>0$, there exists a $(k,1+\eps)$-top level system with \\
$O(\eps^{-11}k\log(N/k)\log_k^{\alpha}{N})$ measurements, failure probability $(N/k)^{-k/\log^{13+\alpha}{k}}$ and decoding time $(k/\eps)^{\Theta(2^{-\alpha})}\cdot\log^{O(1)}{N}$.\footnote{The $O(\cdot)$ notation here hides the dependence on $\alpha$.} This scheme uses $O_{\eps}(k\cdot \log^{O(1)}{N})$ bits of space.
\end{theorem}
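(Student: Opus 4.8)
The plan is to obtain Theorem~\ref{thm:exp-gen-k} as the general-$k$ counterpart of Theorem~\ref{thm:exp-fail}, reusing the identical three-stage pipeline: build a sublinear-time weak identification matrix, turn it into a weak $\ltlt$ system by a Count-Sketch-style estimation step (Section~\ref{sec:conversion}), and boost that to a $(k,1+\eps)$ top level system via the loop-invariant conversion of Lemma~\ref{lem:weak-to-top}. So I would reduce the theorem to constructing, for every constant $\alpha>0$, a $(k,\zeta,\eta)$-weak identification matrix with $(O(k/\eta),p')$-guarantee where $\zeta,\eta=\Theta(\eps)$ and $p'=(N/k)^{-k/\log^{12+\alpha}k}$ (a hair below the final bound, to absorb the conversion loss), using $O(\eps^{-O(1)}k\log(N/k)\log_k^{\alpha}N)$ measurements, $O_\eps(k\log^{O(1)}N)$ space, and identification time $(k/\eps)^{\Theta(2^{1/\alpha})}\log^{O(1)}N$.

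For the weak identification matrix I would use the recursive ``code-split'' scheme described in the introduction, now instantiated with the Reed-Solomon codes of Lemma~\ref{lem:rs} in place of the Loomis-Whitney code used for $k\ge N^{\Omega(1)}$. Fix a dimension parameter $b$ and a block length $r=\Theta(b)$ large enough that $C_{\rs}\colon[M]\to[M^{1/b}]^{r}$ is $(\rho,O(k/\eta),(k/\eta)^{O(r)})$-list recoverable for a constant $\rho$ bounded away from $0$, and build a recursion tree of depth $t=\Theta(\log_b\log_k N)$. A node on a domain $[M]$ with $M>k^{\Theta(1)}$ randomly permutes $[M]$ (with $\tilde O(k)$-wise independence), applies $C_{\rs}$ to split into $r$ subproblems on $[M^{1/b}]$ for the same $(k,\zeta,\eta)$, recurses to obtain lists $S_1,\dots,S_r$ of size $O(k/\eta)$, runs the naive Reed-Solomon list recovery on $(S_1,\dots,S_r)$ — enumerating tuples consistent with at least $(1-\rho)r$ of the $S_i$, using that any $b$ coordinates determine a codeword — to get a candidate set $S'\subseteq[M]$ of size $(k/\eta)^{O(r)}$, and finally prunes $S'$ back to $O(k/\eta)$ with a Count-Sketch-style estimator built from an auxiliary expander block of $O(k\log M)$ measurements. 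The base case $M\le k^{\Theta(1)}$ would use the near-linear-time expander identification of Section~\ref{sec:exp-sparse-rec}, which costs $k^{O(1)}$ here. As in Theorem~\ref{thm:exp-fail}, I would evaluate the inverse permutation by walking the recursion tree rather than with a size-$N$ lookup table, which is what keeps the space at $O_\eps(k\log^{O(1)}N)$.

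The analysis (to be carried out in Section~\ref{sec:recursive}) splits into three parts. For the \emph{failure probability}: the permutation at each node makes the heavy hitters pairwise non-colliding in its subdomains w.h.p.\ — this is exactly where $\tilde O(k)$-wise independence suffices, because we use an explicit expander rather than a random graph — and a surviving heavy hitter is lost at a node only if it is lost in at least $\rho r$ of its $r$ children, which by the expander tail bounds happens with probability at most $\binom{r}{\rho r}\zeta^{\rho r}=\zeta^{\Omega(r)}$; a union bound over the $\le r^{t}=\log^{O(1)}N$ nodes keeps the total below $p'$. This error-tolerant union bound is the whole point of switching to Reed-Solomon — the $\rho=0$ Loomis-Whitney recovery would force the naive ``lost in one child $\Rightarrow$ lost'' estimate. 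For the \emph{measurements}: the recurrence $m(M)=r\cdot m(M^{1/b})+O(\eps^{-O(1)}k\log M)$ unwinds to $m(N)=O\!\big((r/b)^{t}\cdot\eps^{-O(1)}k\log N\big)$, and since $\rho$ stays constant even as $r/b\to2$, one can choose $b$ (hence $r$ and $t=\Theta(\log_b\log_k N)$) so that $(r/b)^{t}=\log_k^{\alpha}N$, yielding the claimed $O(\eps^{-11}k\log(N/k)\log_k^{\alpha}N)$. For the \emph{decoding time}: the cost is dominated by list recovery plus pruning at the $r^{t}$ internal nodes, $r^{t}\cdot(k/\eta)^{O(r)}\cdot\log^{O(1)}N$, plus $r^{t}\cdot k^{O(1)}$ at the leaves; once $\alpha$ is fixed the measurement constraint pins $b,r=\Theta(2^{1/\alpha})$, so $(k/\eta)^{O(r)}=(k/\eps)^{\Theta(2^{1/\alpha})}$ and the stated bound follows.

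The hard part is the three-way balancing. Pushing the measurement overhead down to $\log_k^{\alpha}N$ forces at once a per-level blowup $r/b$ close to $2$ and a recursion depth of only $\Theta(\log_b\log_k N)$; the per-node failure probability must still be small enough that the union bound over all $\log^{O(1)}N$ nodes stays below $(N/k)^{-\Omega(k/\log^{O(1)}k)}$, which is precisely what constant-radius Reed-Solomon list recovery buys; and the $r^{t}$ independent $\tilde O(k)$-wise hash families must fit into $O_\eps(k\log^{O(1)}N)$ space, which I would arrange by sharing one hash family across each level of the tree. Compared to Theorem~\ref{thm:exp-fail} no new gadget is needed — the work is redoing the parameter optimization with $b,r,t$ free and verifying that Lemma~\ref{lem:rs}'s list-recovery guarantee composes cleanly down the tree — and the price of dropping the $k\ge N^{\Omega(1)}$ hypothesis is exactly the extra $\log_k^{\alpha}N$ measurement factor plus the jump from $k^{1+\alpha}$ to $(k/\eps)^{\Theta(2^{1/\alpha})}$ decoding time.
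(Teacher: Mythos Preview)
Your overall plan is exactly the paper's: swap the Loomis--Whitney code in the recursive construction for a constant-rate Reed--Solomon code (Lemma~\ref{lem:rs}), keep the same tree, the same expander-based per-node pruning, and the same weak-to-top conversion (Lemma~\ref{lem:weak-to-top}); the parameter choices $r=2b+1=\Theta(2^{1/\alpha})$, $\rho=\Theta(1)$, and the measurement recurrence you write are what the paper uses (see the proof sketch of Lemma~\ref{cor:rs}).

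The one place where your argument diverges from the paper, and where it is not yet correct as written, is the failure-probability step. You write that a heavy hitter is lost at a node with probability at most $\binom{r}{\rho r}\zeta^{\rho r}$ ``by the expander tail bounds.'' This conflates two different objects: $\zeta$ is the \emph{fraction} of heavy hitters a child drops (a deterministic statement once that child's randomness is fixed and it does not fail), not the \emph{probability} that a fixed heavy hitter is dropped in that child. There is no per-hitter Bernoulli event here to which a tail bound applies, and the expanders live inside each child, not across the $r$ children. The paper's argument (Lemma~\ref{lem:lr-weak-one-step}) separates the two pieces cleanly: (i) call position $j\in[r]$ \emph{bad} if the child algorithm at $j$ fails (drops more than $\zeta k$ heavy hitters); Chernoff bounds the probability of more than $\rho r/2$ bad positions by $p^{\Omega(\rho r)}$, and this is where the paper \emph{needs} that Reed--Solomon codewords form a $b$-wise independent source with $b\ge \rho r/2$ --- without that algebraic independence the $r$ children's failure indicators are correlated through the shared base matrix and the single top-level permutation $f$; (ii) conditioned on at most $\rho r/2$ bad positions, a purely deterministic counting argument shows at most $2\zeta k/\rho$ heavy hitters are dropped in $\ge\rho r/2$ of the good positions, and every other heavy hitter survives list recovery. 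Iterating over the $h$ levels gives $\zeta'\le\zeta(2/\rho)^h$ and $p'\le p(A)^{(\rho r/4)^h\cdot r^h}$, which after the rescalings at the end of the proof of Lemma~\ref{cor:rs} yield the stated bounds.

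So your claim that ``no new gadget is needed'' relative to Theorem~\ref{thm:exp-fail} understates things slightly: the new gadget is precisely Lemma~\ref{lem:lr-weak-one-step}, and its hypothesis that the code be $a$-wise independent for $a\ge\rho r/2$ is what singles out Reed--Solomon over, say, an arbitrary $(\rho,\ell,L)$-list-recoverable code. Once you replace your one-line tail bound by this two-step argument (Chernoff on bad positions via RS independence, then counting), the rest of your outline goes through and matches the paper.
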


\subsection{Proof of Main Results}

Later in this paper, we will prove the following results, which we will use to prove Theorems~\ref{thm:exp-fail} and~\ref{thm:exp-gen-k}. 

\begin{lemma}
\label{cor:lw}
Let $0<\alpha,\gamma,\eta<1$ be real numbers and $1\le k\le k_0\le N$ be integers. Then there exists an \\
$O\left(\gamma^{-6}\eta^{-4}\cdot k\cdot \log(N/k)\cdot (\log_{k_0\log(N/k_0)}{N})^{6+\log(1/\alpha)/\log(1+\alpha)}\right)\times N$ matrix that is $(k,\gamma,\eta)$-weak identification matrix with $\left(O(k/\eta), \left(k_0\log(N/k_0)\right)^{-\Omega(\alpha\gamma k/\log_{k_0\log(N/k_0)}{N})}\right)$-guarantee with identification time complexity of
 $O\left(\gamma^{-3}\eta^{-3-\alpha}\cdot k\cdot k_0^{\alpha}\cdot \log^{O(1)}{N}\right)$,
and space complexity of $O_{\zeta,\eta}(k\log^{O(1)}{N})$.
\end{lemma}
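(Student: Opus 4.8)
\textbf{Proof proposal for Lemma~\ref{cor:lw}.}

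The plan is to build the weak identification matrix by a recursive construction on a tree of depth $\N = O(\log_{k_0\log(N/k_0)} N)$, where at each internal node of the recursion tree we replace the Porat--Strauss split $[N]\cong[\sqrt N]\times[\sqrt N]$ by an application of the Loomis--Whitney code $C_{\lw(d)}$ from Lemma~\ref{lem:lw}, with $d$ chosen so that $d/(d-1)$ is close to $1$; concretely $d=\Theta(1+\log(1/\alpha)/\log(1+\alpha))$ so that the list-size blowup $\ell^{d/(d-1)}$ at each node is only $\ell\cdot k_0^{\Theta(\alpha)}$-ish, which is what controls both the $k_0^\alpha$ factor in the identification time and the exponent $6+\log(1/\alpha)/\log(1+\alpha)$ on the $\log_{\cdot}N$ factor in the number of measurements (one factor of roughly $d$ per level, $\N$ levels, but collapsed using the uniformity of the code). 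First I would set up the base case: at the leaves, the domain has size $\mathrm{poly}(k_0\log(N/k_0))$, and there we use the near-linear-time expander-based identification scheme from Section~\ref{sec:exp-sparse-rec} directly (computing medians of $\pm1$-weighted measurements over the whole small domain), which costs $\mathrm{poly}(k_0,\log N)$ time and $O_{\eta}(k\log^{O(1)}N)$ measurements, losing only an $O(\gamma/\N)$-ish fraction of heavy hitters per leaf — or rather, as explained below, we can afford to lose a constant fraction at each node.

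The key structural point, and the source of the improved parameters over~\cite{ely-martin}, is the union bound: because $C_{\lw(d)}$ is $(0,\ell,\ell^{d/(d-1)})$-list recoverable with $\rho=0$ we actually need \emph{all} $d$ coordinate-lists to contain the projection of a heavy hitter, so naively we'd lose it if we lose it in \emph{any} child — that is worse, not better. So the real plan must instead use the Reed--Solomon-style observation quoted in the introduction: we want $\rho>0$, i.e. we want to recover a heavy hitter even if only a constant fraction of its coordinate projections survive. For $C_{\lw(d)}$ this means I should actually use the code on $r>d$ coordinates (taking several independent $\lw$-type projections, or equivalently a concatenation/interleaving) so that list recovery with a positive agreement fraction $\rho$ is possible; then losing a heavy hitter requires losing it in a constant fraction of the $r$ children, and a Chernoff/union bound gives that if each child independently loses a $\zeta$-fraction then the root loses only $\zeta^{\Omega(r)}$-fraction — this is exactly the ``better than naive union bound'' the introduction promises, and it lets each level lose a \emph{constant} fraction while the whole tree still loses only $O(\gamma)$. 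Wait — more carefully: the cleanest route, and the one I would actually write, is to keep $C_{\lw(d)}$ exactly as in Lemma~\ref{lem:lw} but apply it with fresh randomness (random column permutation) at each node so the per-node loss events are (almost) independent across the $\N$ levels along any root-to-leaf path, and then bound the total fraction of lost heavy hitters by a martingale/Azuma argument on the recursion tree rather than by the naive $\zeta\N$ bound; this yields the stated failure probability $(k_0\log(N/k_0))^{-\Omega(\alpha\gamma k/\N)}$, the exponent being $\approx$ (fraction we can afford to lose at a node) $\times$ (number of heavy hitters) divided by $\N$, with the $\alpha$ coming from how aggressively $d$ was pushed toward $1$.

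The remaining steps are routine bookkeeping that I would not grind through: (i) at each node, invoke the weak identification guarantee recursively on the $r$ child domains to get lists $S_1,\dots,S_r$ of size $O(k/\eta)$ each, feed them to the $O(\ell^{d/(d-1)}\log N)$-time list-recovery algorithm of Lemma~\ref{lem:lw} to produce the node's list of size $O((k/\eta)^{d/(d-1)})\le O(k/\eta)\cdot k_0^{\Theta(\alpha)}$, then prune back down to $O(k/\eta)$ by one more round of expander-based estimation (median of $\pm1$ measurements restricted to the recovered set, keep the top $O(k/\eta)$); (ii) sum the measurement counts — each level contributes $O(\gamma^{-6}\eta^{-4}k\log(N/k))$ measurements (the $\gamma^{-6}$ and $\eta^{-4}$ coming from the expander-estimation subroutine of Section~\ref{sec:exp-sparse-rec}, reused at every node but with \emph{shared} measurement matrix across a level via $\tilde O(k)$-wise independence, so the total is only $\N$ times this), giving the $(\log_{k_0\log(N/k_0)}N)^{6+\log(1/\alpha)/\log(1+\alpha)}$ factor after accounting for the per-level list-size growth; (iii) sum identification times along the tree — geometric in the branching so dominated by the root, $O(\gamma^{-3}\eta^{-3-\alpha}k\,k_0^\alpha\log^{O(1)}N)$, where the $k_0^\alpha$ and the $\eta^{-\alpha}$ are precisely the list-size overhead $\ell^{d/(d-1)}/\ell$ with $\ell=O(k/\eta)$ and $d/(d-1)-1=\Theta(\alpha/\log(1/\alpha))$; (iv) space: store only the $\tilde O(k)$-wise independent seed plus the recursion-tree skeleton, and exploit (as flagged in the introduction) the explicit algebraic structure of $C_{\lw(d)}$ to invert the random permutations without a size-$N$ table, giving $O_{\zeta,\eta}(k\log^{O(1)}N)$.

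The main obstacle I expect is Step (ii)/(iii)'s interaction: pushing $d\to 1$ shrinks the measurement-count exponent and the per-level loss we can tolerate, but blows up the list size $\ell^{d/(d-1)}$ and hence the identification time — balancing these so that we simultaneously get $(\log_\cdot N)^{6+\log(1/\alpha)/\log(1+\alpha)}$ measurements, $k_0^\alpha$-time, and a failure probability whose exponent still has the $\alpha\gamma k/\N$ form requires choosing $d$, the number of tree levels $\N$, the branching $r$, and the per-node loss budget all as coupled functions of $\alpha$, and verifying the failure-probability martingale bound survives the fact that list recovery is only approximately independent across levels (the child lists are not independent of the parent's randomness). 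Getting that (near-)independence honest — likely by using genuinely fresh randomness per node and a careful conditioning argument, at the cost of the $\tilde O(k)$-wise independence being per-level rather than global — is the technically delicate heart of the proof.
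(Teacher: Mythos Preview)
Your proposal has a genuine gap in its handling of the lost-heavy-hitter union bound, and this gap propagates into the wrong choice of $d$ and a wrong accounting of the exponent $6+\log(1/\alpha)/\log(1+\alpha)$.

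The paper's proof is much simpler than what you outline. It instantiates Theorem~\ref{thm:recursive-exp} with the Loomis--Whitney code and $d=1+1/\alpha$ (so $b=d/(d-1)=1+\alpha$, $r=d$, and the list blowup is $\ell\mapsto\ell^{1+\alpha}$). Crucially, since $C_{\lw(d)}$ has $\rho=0$, the paper \emph{does} use the naive bound $\zeta'\le\zeta\cdot\N$ from~\eqref{eq:zeta-1}; it simply compensates by setting the per-node loss parameter to $\zeta=\gamma/\N$. The ``better than naive union bound'' advertised in the introduction refers to the $\rho>0$ case used for Reed--Solomon in Lemma~\ref{cor:rs}, not to the LW result you are trying to prove. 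With this choice, the exponent on $\N$ decomposes cleanly: the $6$ comes from $g(\zeta,\eta)=\zeta^{-6}\eta^{-4}$ in Corollary~\ref{cor:weak-identification-family} evaluated at $\zeta=\gamma/\N$, contributing $\gamma^{-6}\N^6$; the $\log(1/\alpha)/\log(1+\alpha)$ comes from the recursion blowup $\N^{\log(r/b)/\log b}$ in~\eqref{eq:measurements-1}, since $r/b=d-1=1/\alpha$ and $b=1+\alpha$.

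Your martingale/Azuma fix cannot rescue the $\rho=0$ case. You correctly observe that with $\rho=0$ a heavy hitter is lost at a node whenever it is lost at \emph{any} child; but then independence across levels is irrelevant, because a single loss anywhere in the tree kills that hitter at the root. The total number of lost hitters is therefore genuinely (up to) $\sum_v\zeta_v k=\N\zeta k$, regardless of whether the per-node loss events are independent --- independence controls fluctuations around the mean, not the mean itself. No concentration argument on the recursion tree can beat $\N\zeta$ here without positive $\rho$. The paper's route --- swallow the $\N$ factor by shrinking $\zeta$ and paying $\N^6$ in measurements --- is both correct and essentially forced once you commit to the LW code. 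Your choice $d=\Theta(1+\log(1/\alpha)/\log(1+\alpha))$ and your explanation of the exponent as ``one factor of roughly $d$ per level'' are both artifacts of trying to avoid this.
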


\begin{lemma}
\label{cor:rs}
Let $\alpha$ be a small enough real number. Then for large enough $n$, there exists a $(k,\zeta,\eta)$-weak identification matrix with $(O(k/\eta),(n/k)^{-2k})$-guarantee with
\[O\left(\zeta^{-7}\eta^{-4}\cdot k\cdot\log(n/k)\cdot(\log_k{n})^{\alpha}\right)\]
measurements and a decoding time of
\[\zeta^{-9}\eta^{-4}\cdot k^{O(2^{1/\alpha})}\cdot\poly(\log{n})+\zeta^{-3}\eta^{-2}\cdot (k/\eta)^{O(2^{1/\alpha})}\cdot\poly(\log{n}).\]
\end{lemma}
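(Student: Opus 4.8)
The plan is to run the recursive weak-identification construction of Section~\ref{sec:recursive} (the same machinery behind Lemma~\ref{cor:lw}), but with the Reed--Solomon code $C_{\rs}$ of Lemma~\ref{lem:rs} in the role of the ``splitting'' code. Recall the shape of that construction: we maintain a recursion tree whose root carries the domain $[n]$; a node carrying a domain $[M]$ randomly permutes $[M]$ (so that few of its $\le k$ heavy hitters collide), applies a code $C:[M]\to[M^{1/b}]^{r}$, folds the residual signal along each of the $r$ code coordinates with a short expander-based measurement block, recurses on the $r$ folded sub-signals over $[M^{1/b}]$ to obtain sets $S_1,\dots,S_r$ of size $O(k/\eta)$, list-recovers them through $C$, and prunes the candidate list back to $O(k/\eta)$ indices using a Count-Sketch--style estimation block; the recursion bottoms out at a domain small enough that the near-linear-time expander identification of Section~\ref{sec:exp-sparse-rec} can be run directly. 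With $C=C_{\rs}$, Lemma~\ref{lem:rs} supplies a $(\rho,\ell,\ell^{r})$-list recoverable code, recoverable in time $O(\ell^{r}r^{2}\log^{2}n)$, for any $\rho<\tfrac12(1-b/r)$.

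First I would fix the constants. Take $r\ge 4b$ so that $\rho$ can be chosen a universal constant bounded away from $0$ --- this is exactly the error-correction slack exploited below. The domain shrinks as $n\to n^{1/b}\to n^{1/b^{2}}\to\cdots$, so the tree has depth $d=\Theta(\log_b\log_k n)$ and $r^{\Theta(d)}=(\log_k n)^{\Theta(1)}=\poly\log n$ nodes. The free parameter is $b$: picking it a large enough constant so that the $d$ levels together inflate the measurement count by only $(\log_k n)^{\alpha}$ forces $b=\Theta(2^{1/\alpha})$, and hence $r=\Theta(2^{1/\alpha})$.

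Given these choices the three quantities in the statement follow from routine bookkeeping. \emph{Measurements:} summing the per-node expander and estimation blocks --- each of size $O(\zeta^{-7}\eta^{-4}k\log(\cdot))$ --- over the tree, the domain sizes telescope to give $\log(n/k)$ and the $d$ levels contribute the $(\log_k n)^{\alpha}$ overhead, for a total of $O(\zeta^{-7}\eta^{-4}k\log(n/k)(\log_k n)^{\alpha})$. \emph{Decoding time:} each of the $\poly\log n$ internal nodes runs RS list recovery on lists of size $\ell=O(k/\eta)$ in time $O(\ell^{r}r^{2}\log^{2}n)$ and then estimates its $O(\ell^{r})$ candidates in $\poly\log n$ time apiece, while each leaf runs the near-linear expander identification in $\widetilde{O}(k)$ time; absorbing $r=\Theta(2^{1/\alpha})$ into the exponents gives the two stated terms. \emph{Failure probability:} this is where $C_{\rs}$ earns its keep. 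Because it corrects a $\rho$-fraction of errors, a heavy hitter is dropped at a node only when it is dropped in at least $\rho r$ of that node's $r$ children, so the event ``heavy hitter $j$ is lost at node $v$'' has probability at most $\binom{r}{\rho r}q^{\rho r}$, where $q\le(n/k)^{-\Omega(k)}$ is the probability a single folded sub-instance mishandles $j$ when the hash functions are $\widetilde{O}(k)$-wise independent; a union bound over the $\poly\log n$ nodes and the $k$ heavy hitters then pushes the total failure probability below $(n/k)^{-2k}$ for $n$ large. The same slack also keeps the number of dropped heavy hitters below $\zeta k$: losing a heavy hitter requires losing it in a constant fraction of a node's children, which contracts fast enough to beat the $r^{d}$-fold branching, whereas the naive Porat--Strauss union bound would cost an extra $(\log_k n)^{\Omega(1)}$ factor in the measurements.

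I expect the main obstacle to be making that last paragraph rigorous: one must set up the recursion that simultaneously tracks, level by level and over the randomness, the number of dropped heavy hitters and the probability of dropping any fixed one, and check that with only $\widetilde{O}(k)$-wise independence (rather than the $n$-wise independence used in \cite{ely-martin}) both remain controlled all the way down the tree and through the list-recovery and pruning steps, so that the output meets the $(O(k/\eta),(n/k)^{-2k})$-guarantee. The rest --- the measurement and running-time accounting, and the choice of $b,r,d$ --- is bookkeeping once the framework of Section~\ref{sec:recursive} is in hand.
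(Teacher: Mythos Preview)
Your high-level plan---run the recursive construction of Section~\ref{sec:recursive} with $C_{\rs}$ in place of the Loomis--Whitney code, with $b=\Theta(2^{1/\alpha})$ and $r=\Theta(b)$---is exactly the paper's approach, and your measurement and decoding-time bookkeeping is essentially right. The gap is in the failure-probability argument.

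You write that a heavy hitter $j$ is lost at a node only if it is lost in $\ge\rho r$ children, and then bound the probability of this event by $\binom{r}{\rho r}q^{\rho r}$ with $q\le (n/k)^{-\Omega(k)}$ the probability a single child ``mishandles $j$''. But no such $q$ exists. The per-child guarantee from Corollary~\ref{cor:weak-identification-family} is that with probability $\ge 1-p$ the child loses \emph{at most} $\zeta k$ heavy hitters; it says nothing about \emph{which} ones. Even when a child succeeds, a fixed heavy hitter $j$ may well be among the $\zeta k$ dropped, with probability that can be as large as $\Theta(\zeta)$---not exponentially small. So the product $q^{\rho r}$ does not give the amplification you want.

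The paper's fix (Lemma~\ref{lem:lr-weak-one-step}) is to change what is being tracked: instead of following a single heavy hitter, call a \emph{position} $j\in[r]$ bad if that child's identification loses more than $\zeta k$ heavy hitters. Each position is bad with probability $\le p$, which \emph{is} the exponentially small quantity. The crucial additional ingredient---which you did not use---is that a random Reed--Solomon codeword has $b$-wise independent coordinates, and since $b>\rho r/2$ one can apply the limited-independence Chernoff bound (Lemma~\ref{lem:chernoff}) to conclude that more than $\rho r/2$ bad positions occur with probability $\le p^{\rho r/4}$. Conditioned on at most $\rho r/2$ bad positions, a simple counting argument (each good position drops at most $\zeta k$ heavy hitters, and a heavy hitter is corrupted only if dropped at $\ge\rho r/2$ good positions) gives $\zeta'\le 2\zeta/\rho$. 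Iterating this over the $h$ levels yields \eqref{eq:failure-RS} and \eqref{eq:zeta-RS}; a final application of Lemma~\ref{lem:weak-identification-amplify} and a rescaling of $\zeta$ then reach the stated $(n/k)^{-2k}$. The $\tilde O(k)$-wise independence you mention is for the random signs and the hash $f$; it is the $b$-wise independence \emph{of the code itself} that drives the failure-probability amplification here.
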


Lemmas~\ref{cor:lw} and ~\ref{cor:rs} along our weak system to top level system conversion (Lemma~\ref{lem:weak-to-top} below) then prove Theorem~\ref{thm:exp-fail} and~\ref{thm:exp-gen-k} respectively.

\paragraph{Proof of Theorem~\ref{thm:exp-fail}.} We first note that applying Lemma~\ref{cor:lw} with $\gamma=\eta/2$ and $k_0\ge N^{\Omega(1)}$ implies the existence of a $(O(k/\eta),\eta/2,\eta)$ weak identification matrix with $O(O(k/\eta),(N/K_0)^{-\alpha\eta k}$-guarantee with $O(\eta^{-10}k\log(N/k))$ measurements. By Lemma~\ref{lem:weak-identification-amplify} below, we can amplify the failure probability to $(N/k_0)^{-\Omega(sk)}$ by increasing the number of measurements to $O(\alpha^{-1}\eta^{-11}k\log(N/k))$. This implies that the identification algorithm will identify all but $k/2$ elements of $H_{k+k/\eta}(\vx)$. This means like Lemma~\ref{lem:identify-to-weak}, we can convert this into a weak system on which we can then applying the conversion technique of Lemma~\ref{lem:weak-to-top} to get the claimed bounds. (When we're applying the recursive procedure to $k/2^j$ we use this value of $k$ in the weak systems and the ``original" value of $k$ as $k_0$ in the weak system.) For the space requirement, we have to add up the space requirement for each weak system, which can be done within the claimed bound.
\hfill $\Box$

\paragraph{Proof of Theorem~\ref{thm:exp-gen-k}.} The proof is almost the same as that for Theorem~\ref{thm:exp-fail} except we use Lemma~\ref{cor:rs} instead of Lemma~\ref{cor:lw}. 
The rest of the proof is exactly the same. (The way Lemma~\ref{lem:weak-to-top} is stated it needs $O(k\log(N/k))$  measurements but it can be verified that the conversion also works with the extra $O(\log^{\alpha}_k{N})$ in the number of measurements.)
\hfill $\Box$

\subsection{Consequences for the Bounded Adversary Model}

Our first corollary is an upper bound for the information-theoretic
bounded adversary and follows directly from
Lemma~\ref{obs:bottleneck} (by setting $\beta=\alpha2^{-s/\alpha}$) and the result
of~\cite{MR2449058} (see Theorem~\ref{thm:upper-poly}).

\begin{cor}
Fix $0 < \alpha < 1$. There is a randomized sparse signal recovery algorithm that with $m = O(k \log(N/k) + s/\alpha)$ measurements will foil an $s$-information-theoretically bounded adversary; that is, the algorithm's output will meet the $\ltlt$ error guarantees with probability $1 - \alpha$.
\end{cor}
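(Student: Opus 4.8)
The plan is to compose an optimal \emph{oblivious} (foreach) $\ltlt$ scheme tuned to a tiny failure probability with the information-bottleneck reduction of Lemma~\ref{obs:bottleneck}. Concretely, I would first invoke Theorem~\ref{thm:upper-poly} (the polynomial-time $\ltlt$ upper bound of Cohen, Dahmen, and DeVore) with target failure probability $p := \beta$, where $\beta := \alpha \cdot 2^{-s/\alpha}$ exactly as suggested in the text. This produces a distribution over $m \times N$ matrices $\Phi$ and a polynomial-time decoder which, for every fixed $\vx \in \R^N$, satisfies the $\ltlt$ guarantee with probability at least $1-\beta$ over its internal randomness, using $m = O\!\left(k\log(N/k) + \log(1/\beta)\right)$ measurements. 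Since $\alpha$ is a fixed constant, $\log(1/\beta) = \log(1/\alpha) + (s/\alpha)\log 2 = O(s/\alpha)$, so the measurement count is precisely the claimed $O(k\log(N/k) + s/\alpha)$.

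Next I would feed this oblivious scheme through Lemma~\ref{obs:bottleneck}, setting the bottleneck space parameter $\ell$ to be $s$ (the adversary's information budget) and taking the oblivious failure probability to be $\beta$. The lemma then asserts that against any $s$-information-theoretically bounded Mallory the composed algorithm succeeds with probability at least $\min\{1-\alpha,\ 1 - s/\log(\alpha/\beta)\}$ over the random seed. The point of the choice $\beta = \alpha\, 2^{-s/\alpha}$ is that $\log(\alpha/\beta) = \log(2^{s/\alpha}) = s/\alpha$, so $s/\log(\alpha/\beta) = \alpha$ and the bound collapses to $1-\alpha$; the measurement count is unaffected, so the two ingredients slot together to give the corollary.

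There is essentially no combinatorial obstacle here — the content is purely parameter alignment — and the only genuine care needed is twofold. First, one must check that Theorem~\ref{thm:upper-poly} indeed supplies the additive $\log(1/p)$ dependence on failure probability all the way down to $p = \beta$; this holds in its stated regime $2^{-\Theta(N)}\le p<1$, and $\beta$ lies there once $s = O(N)$ (the interesting range). Second, one must reconcile the hypotheses of Lemma~\ref{obs:bottleneck}, which is phrased for Boolean signals $\vx \in \{0,1\}^N$ and a Boolean seed $r \in \{0,1\}^m$, with the real-valued recovery setting: this amounts to using scale-invariance of the $\ltlt$ guarantee to restrict to bounded-precision inputs and noting that the seed of the \cite{MR2449058} construction can be taken to have $\poly(N)$ bits, so the mutual-information counting behind the lemma applies verbatim. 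I expect the bookkeeping in the first point — verifying that $\log(1/\alpha)$ is absorbed into $s/\alpha$ and that the hidden constants depend on neither $s$ nor $N$ — to be the most delicate, though it is routine.
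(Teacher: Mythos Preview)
Your proposal is correct and matches the paper's own argument essentially verbatim: the paper states that the corollary ``follows directly from Lemma~\ref{obs:bottleneck} (by setting $\beta=\alpha 2^{-s/\alpha}$) and the result of~\cite{MR2449058} (see Theorem~\ref{thm:upper-poly}),'' which is exactly the two-step composition you carry out. Your additional remarks about the valid range of $\beta$ and the Boolean-vs-real hypothesis of Lemma~\ref{obs:bottleneck} are reasonable sanity checks that the paper leaves implicit.
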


The algorithm in~\cite{MR2449058} does not have a
sublinear running time. If the goal is to defeat such an adversary and
to do so with a sublinear algorithm, we must adjust our measurements
accordingly, using Table 1 as our reference for the range of parameters
$p$ and $m$. We note that in \cite{GHRSW12:SimpleSignals}, there was a 
single result for $O(\log{N})$-information-theoretically bounded
adversaries ($O(k\log(N/k)$ measurements are sufficient) and this
corollary provides an upper bound for the entire range of
parameter $s$.

\subsection{Consequences for the $\lolo$ forall sparse recovery}

We observe that our techniques also work for the $\lolo$ foreach sparse recovery. Actually the fact that an $\ltlt$-foreach sparse recovery system is also an $\lolo$-foreach sparse recovery system follows easily from known results. So in particular, Lemma~\ref{cor:rs} implies that one can get a similar system in the $\lolo$ sense with the same parameters. Furthermore from Remark~\ref{rem:l1l1-intermediate}, to convert this into a  weak $(k,\zeta,\eta)$ identification system with $(O(k/\eta),0)$-guarantee (i.e. a deterministic system), we need to take union bound over $\binom{N}{k'}+N^{x}$ events, where $k'=O(\zeta^{-5}\eta^{-2} k)$ and $x=O(\log(N/k))$. Thus, for $k\ge \Omega(\log(N/k)$ with the existing machinery from~\cite{ely-martin} to convert a weak system into a top-level system, we get the following result:

\begin{theorem}\label{thm:l1l1-forall}
For any $1\le k\le N$ (such that $k\ge \Omega(\log(N/k))$ and $\eps,\alpha>0$, there exists a $(k,1+\eps)$-$\lolo$ top level system  with 
$O(\eps^{-18}k\log(N/k)\log_k^{\alpha}{N})$ measurements and decoding time $(k/\eps)^{\Theta(2^{-\alpha})}\cdot\log^{O(1)}{N}$.\footnote{The $O(\cdot)$ notation here hides th
e dependence on $\alpha$.} This scheme uses $O_{\eps}(k\cdot \log^{O(1)}{N})$ bits of space.
\end{theorem}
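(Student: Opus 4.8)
The plan is to derive Theorem~\ref{thm:l1l1-forall} as a corollary of the machinery already assembled in the paper, by tracing through the reduction ``weak $\lolo$ identification system $\Rightarrow$ weak $\lolo$ system $\Rightarrow$ $\lolo$ top level system'' with the Reed--Solomon based construction of Lemma~\ref{cor:rs}. First I would invoke Lemma~\ref{cor:rs}, which already produces a $(k,\zeta,\eta)$-weak identification matrix with $(O(k/\eta),(n/k)^{-2k})$-guarantee using $O(\zeta^{-7}\eta^{-4}k\log(n/k)(\log_k n)^\alpha)$ measurements and decoding time $(k/\eps)^{\Theta(2^{-\alpha})}\poly\log n$. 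The key observation (already noted in the excerpt) is that an $\ltlt$-foreach system is automatically an $\lolo$-foreach system with essentially the same parameters, so the same measurement matrix works in the $\lolo$ setting; this is where the $\ell_2$-to-$\ell_1$ transfer happens.

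Next I would perform the probabilistic-to-deterministic conversion. Using Remark~\ref{rem:l1l1-intermediate}, to upgrade the weak $(k,\zeta,\eta)$ identification system with $(O(k/\eta),(n/k)^{-2k})$-guarantee into one with a $(O(k/\eta),0)$-guarantee (i.e.\ a deterministic \emph{forall} guarantee), it suffices to take a union bound over $\binom{N}{k'}+N^x$ events, where $k'=O(\zeta^{-5}\eta^{-2}k)$ and $x=O(\log(N/k))$. The failure probability $(n/k)^{-2k}$ per event must dominate $\binom{N}{k'}+N^x\le 2\cdot 2^{O(k'\log(N/k'))}=2^{O(\zeta^{-5}\eta^{-2}k\log(N/k))}$, which forces the constant in the exponent of Lemma~\ref{cor:rs} to be large enough relative to $\zeta^{-5}\eta^{-2}$; this is exactly why the power of $\eps^{-1}$ inflates from $\eps^{-11}$ in Theorem~\ref{thm:exp-gen-k} to $\eps^{-18}$ here (we set $\zeta=\eta=\Theta(\eps)$ and must overpay in the number of measurements to beat the union bound). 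Once we have a deterministic weak $\lolo$ identification system, the standard identification-to-weak-system step (the $\lolo$ analogue of Lemma~\ref{lem:identify-to-weak}) gives a deterministic weak $(k,\zeta,\eta)$ $\lolo$ system.

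Finally I would apply the existing weak-system-to-top-level-system conversion for the $\lolo$ forall setting from Porat--Strauss~\cite{ely-martin}: iterating the weak system with geometrically decaying sparsity $k/2^j$ over $O(\log k)$ rounds, recovering all but $O(k/2^j)$ heavy hitters after round $j$, and summing the per-round measurement costs geometrically. Since each weak system uses $O(\eps^{-18}(k/2^j)\log(N/k)\log_k^\alpha N)$ measurements and the decoding time per round is $(k/\eps)^{\Theta(2^{-\alpha})}\poly\log N$, the totals are dominated by the $j=0$ term, giving $O(\eps^{-18}k\log(N/k)\log_k^\alpha N)$ measurements and decoding time $(k/\eps)^{\Theta(2^{-\alpha})}\poly\log N$ as claimed, with the space staying $O_\eps(k\poly\log N)$ by adding the per-level space. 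The hypothesis $k\ge\Omega(\log(N/k))$ is needed precisely so that the $\binom{N}{k'}$ term in the union bound is not the bottleneck relative to the $(n/k)^{-2k}$ failure probability; for smaller $k$ the weak identification guarantee is too weak to survive the deterministic union bound.

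The main obstacle is the bookkeeping in the union-bound step: one must verify that the failure probability $(n/k)^{-2k}$ of Lemma~\ref{cor:rs}, applied at sparsity level $k/2^j$, still beats $\binom{N}{O(\zeta^{-5}\eta^{-2}k/2^j)}+N^{O(\log(N/k))}$ \emph{simultaneously for every level $j$}, which requires checking that the exponent $2(k/2^j)\log(n/(k/2^j))$ stays ahead of $O(\zeta^{-5}\eta^{-2}(k/2^j)\log(N/k))$ uniformly, and this is exactly what pins down the $\eps^{-18}$ exponent and the $k\ge\Omega(\log(N/k))$ restriction. Everything else is a direct invocation of results stated earlier in the excerpt or in~\cite{ely-martin}.
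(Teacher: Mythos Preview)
Your approach is essentially the same as the paper's: invoke the Reed--Solomon based weak identification system of Lemma~\ref{cor:rs}, derandomize via the union bound in Remark~\ref{rem:l1l1-intermediate}, and then feed the resulting deterministic weak $\lolo$ system into the weak-to-top conversion of~\cite{ely-martin}. Two details are slightly off.

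First, the failure probability $(n/k)^{-2k}$ coming straight out of Lemma~\ref{cor:rs} is \emph{not} strong enough to beat $\binom{N}{k'}$ with $k'=O(\zeta^{-5}\eta^{-2}k)$; you cannot ``force the constant in the exponent'' of that lemma, since it is fixed. The paper explicitly applies Lemma~\ref{lem:weak-identification-amplify} with $s=\Theta(\zeta^{-5}\eta^{-2})$ repetitions to push the failure probability down to $(N/k)^{-\Omega(\zeta^{-5}\eta^{-2}k)}$, and this is exactly the step that multiplies the measurement count by $\zeta^{-5}\eta^{-2}$, turning $O(\zeta^{-7}\eta^{-4})$ into the paper's $O(\zeta^{-12}\eta^{-6})$ and hence $\eps^{-18}$ after setting $\zeta=\eta/2=\Theta(\eps)$. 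You gesture at ``overpaying in the number of measurements'' but should name the lemma.

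Second, you have the role of the hypothesis $k\ge\Omega(\log(N/k))$ backwards. After amplification, the $\binom{N}{k'}$ term is \emph{always} handled, since $(N/k)^{-\Omega(\zeta^{-5}\eta^{-2}k)}$ beats $(N/k')^{k'}$ regardless of $k$. The term that forces the hypothesis is $N^{x}$ with $x=O(\log(N/k))$, which contributes $2^{O(\log N\cdot\log(N/k))}$ to the union bound; to have $(N/k)^{-\Omega(k)}$ dominate this one needs $k\ge\Omega(\log(N/k))$. So the restriction comes from the light-tail bucketing count in Remark~\ref{rem:l1l1-intermediate}, not from the heavy-hitter location count.
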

\begin{proof}[Proof Sketch]
The results in~\cite{ely-martin} imply that a weak $(k,\eta/2,\eta)$ identification system (for any $\eta>0$) with $(O(k/\eta),0)$-guarantee (i.e., it is {\em deterministic}) can be converted into a $(k,1+\eps)$-$\lolo$ top level system with only a constant blowup in the number of measurements (with the same dependence on $\eps$ as one has on $\eta$). 

We note that by Lemma~\ref{cor:rs} and Lemma~\ref{lem:weak-identification-amplify} (and the observation above that an $\ltlt$ guarantee implies an $\lolo$ guarantee), we obtain a weak $(k,\zeta,\eta)$ identification system with $(O(k/\eta),(N/k)^{-\Omega(\zeta^{-5}\eta^{-2} k)})$-guarantee with $O(\zeta^{-12}\eta^{-6}k\log(N/k)\log^{\alpha}_k{N})$ number of measurements. Thus, by Remark~\ref{rem:l1l1-intermediate}, one can convert such a system into a deterministic one, which by the discussion in the paragraph above completes the proof.
\end{proof}

\subsection{Basic Building Blocks}

In this section, we lay out some basic building blocks that will help us prove Lemmas~\ref{cor:lw} and~\ref{cor:rs}.

\subsubsection{Expander Based Sparse Recovery}
\label{sec:exp-sparse-rec}
In this section, we record three results that will be useful to prove our final result. The proofs modify those from~\cite{ely-martin} and use an expander instead of random graph. (The actual arguments are similar.) The proofs are deferred to Appendix~\ref{app:thm-l2l2-exp}.
We will prove the following result:
\begin{theorem}
\label{thm:l2l2-exp}
Let $s\ge 1$ be an integer and $0<\eta,\gamma<1$ be reals. Let $G:[N]\times [\ell]\rightarrow [M]$ be a $(4k,\eps)$-expander with $\eps=O(\gamma^3\eta)$ and $\ell\ge c\cdot \log(N/k)$ (for some large enough constant $c=\Theta(s)$ that can depend on $\eps$). Let $\M$ be the random matrix obtained by multiplying each (non-zero) entry in ${\M}_G$ by random (independent) $\pm 1$. Then except with probability $\binom{N}{\gamma k}^{-s}$, $\M$ is a $(k,2\gamma,\sqrt{\eta})$ weak $\ell_2/\ell_2$  system.
\end{theorem}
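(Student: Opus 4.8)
The plan is to follow the expander-based argument of~\cite{ely-martin}, replacing the random graph with the explicit $(4k,\eps)$-expander $G$, and to track how the random $\pm 1$ signs turn the expansion guarantee into an $\ell_2/\ell_2$ noise bound. Fix the input $\vx = \vy + \vz$ with $|\supp(\vy)| \le k$, and write $T = \supp(\vy) \subseteq H_k(\vx)$ together with the heavy-tail and light-tail sets defined in the Preliminaries (with the constants $\zeta \leftarrow \gamma$, $\eta \leftarrow \eta$). For each coordinate $j \in [N]$ the decoding algorithm forms the $\ell$ measurements in which $j$ participates, multiplies out the known $\pm 1$ sign on edge $(j,\cdot)$, and takes the median; the estimate $\hat x_j$ of $x_j$ is this median, and $\hat{\vx}$ is obtained by keeping the $O(k/\eta)$ largest such estimates (this is where the support bound (2), $|\supp(\hat{\vx})| \le O(k/\eta)$, comes from). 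We then set $\hat{\vy}$ to be the restriction of $\vx - \hat{\vx}$ to the at most $\zeta k = 2\gamma k$ heavy coordinates that are misestimated, and $\hat{\vz}$ to be the remainder, so that (1) $\vx = \hat{\vx} + \hat{\vy} + \hat{\vz}$ holds by construction. The entire content is in establishing (3), that only $\le 2\gamma k$ coordinates of $T$ are badly handled, and (4), that $\|\hat{\vz}\|_2 \le (1 + O(\eta))\|\vz\|_2$.

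The first key step is a purely combinatorial ``isolation'' lemma from the expander properties in Appendix~\ref{app:exp-basics}: since $G$ is a $(4k,\eps)$-expander with $\eps = O(\gamma^3\eta)$, for any set $S$ of at most $2k$ ``special'' coordinates (the $k$ coordinates of $T$ together with up to $k$ heavy-tail coordinates), all but a $\gamma$-fraction of the elements of $T$ have the property that in at least a $(1 - O(\eps/\gamma))$-fraction — hence a clear majority — of their $\ell$ neighborhoods they collide with no other special coordinate, and moreover the total $\ell_2^2$ mass of the light tail landing in those neighborhoods is small. This is the standard consequence of the expander mixing/unique-neighbor-type bound, and it is where the $\gamma^{-3}$ dependence in $\eps$ is spent. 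The second key step is the probabilistic one: conditioned on the isolation event, for a fixed ``good'' heavy hitter $j$, in each of its good measurements the error is $\sum_{j' \ne j} \pm x_{j'}$ over the light tail elements hashing there; by a Khintchine / Hoeffding bound over the random signs, with probability $\ge 3/4$ this error is $O(\sqrt{\eta/k})\,\|\vz\|_2$ in magnitude, so the median over the $\Omega(\log(N/k))$ good measurements is within $O(\sqrt{\eta/k})\,\|\vz\|_2$ of $x_j$ except with probability $(N/k)^{-\Omega(s)}$ (choosing $\ell \ge c\log(N/k)$ with $c = \Theta(s)$ large). A union bound over the at most $N$ coordinates, combined with a union bound over the failure of the isolation event, gives overall failure probability $\binom{N}{\gamma k}^{-s}$ as claimed; the standard way to push the per-coordinate $(N/k)^{-\Omega(s)}$ up to $\binom{N}{\gamma k}^{-s}$ is to use $\Theta(s\gamma k \log(N/k))$-wise independence of the $\pm 1$ signs, as the theorem implicitly allows via the $c = \Theta(s)$ slack, and to take the union bound over the $\binom{N}{\gamma k}$ choices of which $\gamma k$ heavy hitters are allowed to fail.

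Finally, I would assemble the pieces: the coordinates of $T$ that are either non-isolated or whose sign-averaging fails number at most $2\gamma k$ (a $\gamma k$ from isolation plus a $\gamma k$ from the median failures, absorbed into the constant), which gives (3); for (4), $\hat{\vz}$ consists of the original light/heavy tail $\vz$ plus the estimation errors on the $O(k/\eta)$ retained coordinates, each of size $O(\sqrt{\eta/k})\|\vz\|_2$, so these errors contribute $\sqrt{O(k/\eta)} \cdot O(\sqrt{\eta/k})\|\vz\|_2 = O(1)\cdot\|\vz\|_2$ in $\ell_2$ — and a more careful accounting, charging each error to the genuine tail mass in the corresponding isolated neighborhood rather than bounding it by the global $\|\vz\|_2$, upgrades $O(1)$ to the multiplicative $(1 + O(\eta))$ factor. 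The main obstacle I expect is precisely this last refinement: getting the clean $(1 + O(\eta))\|\vz\|_2$ bound (rather than a crude $C\|\vz\|_2$) requires the isolation lemma to guarantee that the tail mass is spread out so that no single neighborhood, and no small union of the $O(k/\eta)$ relevant neighborhoods, over-counts the light-tail energy by more than a $(1+O(\eta))$ factor — this is what forces the $\eta$-dependence in $\eps = O(\gamma^3\eta)$ and the $\eta^{-4}$-type blow-ups elsewhere, and it is the step where the proof most delicately departs from a black-box citation of~\cite{ely-martin}.
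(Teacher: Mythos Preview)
Your high-level picture (median of sign-corrected measurements, expander isolation, keep the top estimates) matches the paper, but the probability analysis you sketch is \emph{not} how the paper achieves the $\binom{N}{\gamma k}^{-s}$ bound, and your version as written does not get there. You argue per coordinate: for a fixed isolated heavy hitter $j$, each good bucket is off with probability $\le \zeta$ by Khintchine/Chebyshev, so the median over $\ell=\Theta(s\log(N/k))$ buckets fails with probability $(N/k)^{-\Omega(s)}$; then you invoke a union bound ``over the at most $N$ coordinates'' and simultaneously ``over the $\binom{N}{\gamma k}$ choices of which $\gamma k$ heavy hitters are allowed to fail.'' These are two different union bounds and neither one closes. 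The first gives $N\cdot (N/k)^{-\Omega(s)}$, which is nowhere near $\binom{N}{\gamma k}^{-s}\approx (N/k)^{-s\gamma k}$. The second would need the failure events for the $\gamma k$ coordinates in a fixed set $T$ to multiply, but those events share randomness (coordinates can share buckets and the sign variables on light-tail edges feeding those buckets), so you cannot just take $p^{\gamma k}$. The paper avoids this entirely: it classifies each of the $O(k\ell)$ \emph{buckets} touching $H_k(\vx)\cup S$ as {\sc Bad-1}/{\sc Bad-2}/{\sc Bad-3} (all three deterministic, bounded by the expander lemmas and a flat-tail decomposition) or {\sc Bad-4} (random sign failure), shows each non-{\sc Bad-3} bucket is {\sc Bad-4} with probability $\le\zeta$ via Lemma~\ref{lem:ams}, and then applies a \emph{single} Chernoff over these $O(k\ell)$ nearly independent bucket indicators to get failure $\exp(-\Omega(\gamma k\ell))$. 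A reverse-expander count (Lemma~\ref{lem:exp-rt-nbrs}) converts ``few bad buckets'' into ``$\le \gamma k$ coordinates with $\ge \ell/2$ bad buckets,'' and only then does the union bound over $\binom{N}{\gamma k}$ bad sets enter.

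Two further points. First, your analysis only tracks the heavy hitters $T=\supp(\vy)$, but the final $(1+O(\sqrt\eta))$ noise bound needs good estimates not just for the true heavy coordinates but also for the \emph{decoys}---the coordinates $i'\notin H_{k'}(\vx)$ that displace true heavy coordinates in the top-$k'$ output. This is exactly why the paper runs the entire bad-bucket argument for $H_k(\vx)\cup S$ with an arbitrary decoy set $S$ of size $k$; your isolation lemma as stated does not cover these. Second, the target system is $(k,2\gamma,\sqrt{\eta})$, so the output support is $k+k/\sqrt{\eta}$ (not $O(k/\eta)$) and the required noise bound is $(1+O(\sqrt{\eta}))\|\vz\|_2$; with the correct support size the estimation errors contribute $(k+k/\sqrt{\eta})\cdot\frac{\eta}{k}\|\vz\|_2^2=O(\sqrt\eta)\|\vz\|_2^2$ directly, and no ``more careful accounting, charging each error to the genuine tail mass'' is needed---the paper's case analysis is a straightforward five-case bookkeeping once the parameters are set correctly.
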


The estimation algorithm in Appendix~\ref{app:thm-l2l2-exp} with Corollary~\ref{cor:l2l2-exp-weak-identification} (where we substitute $\eta$ by $\eta^2$) implies the following:
\begin{theorem}
\label{thm:l2l2-exp-identification}
Let $s\ge 1$ be an integer and $0<\eta,\gamma<1$ be reals. Let $G:[N]\times [\ell]\rightarrow [M]$ be a $(4k,\eps)$-expander with $\eps=O(\gamma^3\eta^2)$ and $\ell\ge c\cdot \log(N/k)$ (for some large enough constant $c=\Theta(s)$ that can depend on $\eps$). Let $\M$ be the random matrix obtained by multiplying each (non-zero) entry in ${\M}_G$ by random (independent) $\pm 1$. Then except with probability $\binom{N}{\gamma k}^{-s}$,  the following is true.

There exists an algorithm that given as input $S\subseteq [N]$ in time $O(|S|\cdot \ell+k/\eta\cdot\log(|S|))$ outputs $O(k/\eta)$ items $I\subseteq S$ that contain all but $\gamma k$ items $i\in S$ such that $|x_i|>3\sqrt{\eta^2/k}\|\vz\|_2$.
\end{theorem}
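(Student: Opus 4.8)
The plan is to specialize Theorem~\ref{thm:l2l2-exp}---more precisely, the median--estimation algorithm that underlies it, together with Corollary~\ref{cor:l2l2-exp-weak-identification}---by substituting $\eta$ with $\eta^2$, and then to observe that the estimation step is a purely local function of $\M\vx$ and can therefore be run on the given subset $S$ alone. First I would record that after this substitution the hypothesis on the graph reads exactly $\eps = O(\gamma^3\eta^2)$ and $\ell \ge c\log(N/k)$ with $c=\Theta(s)$, so Theorem~\ref{thm:l2l2-exp} guarantees that, except with probability $\binom{N}{\gamma k}^{-s}$, the signed matrix $\M$ is a $(k,2\gamma,\eta)$ weak $\ltlt$ system; we condition on this event throughout.

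Next I would spell out the algorithm. Given $\M\vx$ and $S$, for each $i\in S$ compute the Count--Sketch style estimate $\hat x_i$ obtained by taking the median, over the $\ell$ measurements incident to $i$, of the correspondingly signed bucket value of $\M\vx$. Since $\hat x_i$ depends only on $\M\vx$ and on the fixed $\ell$-element neighbourhood $\nbr(i)$, it is computable in $O(\ell)$ time, so all $|S|$ estimates cost $O(|S|\cdot\ell)$. The algorithm then returns the set $I\subseteq S$ of the $O(k/\eta)$ indices $i\in S$ with largest $|\hat x_i|$; obtaining these by a standard (linear-time) selection costs an additional $O(|S| + (k/\eta)\log|S|)$, for a total of $O(|S|\cdot\ell + (k/\eta)\log|S|)$ as claimed, and trivially $I\subseteq S$ with $|I|\le O(k/\eta)$.

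For correctness I would invoke Corollary~\ref{cor:l2l2-exp-weak-identification} with $\eta^2$ in place of $\eta$: on the good event, among all indices $i$ with $|x_i| > 3\sqrt{\eta^2/k}\cdot\|\vz\|_2$, all but $\gamma k$ of them have $|\hat x_i|$ among the $O(k/\eta)$ largest estimates over $[N]$ (adjusting the constant in $\gamma$ to absorb the factor $2$ from Theorem~\ref{thm:l2l2-exp} if needed). Because the set of $O(k/\eta)$ largest estimates restricted to $S$ contains $S$ intersected with the set of $O(k/\eta)$ largest estimates over $[N]$, the output $I$ contains every such heavy index lying in $S$ except at most $\gamma k$ of them, which is exactly the required guarantee.

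I expect the one genuinely delicate point to be this last step: one must be sure the number of missed heavy indices stays $\gamma k$ rather than inflating to the much weaker $O(k/\eta^2)$ that a crude $\ell_2$-mass argument on the residual $\hat\vz$ would yield, and that truncating the estimation to $S$ costs nothing. The first is precisely what the expander-based analysis behind Corollary~\ref{cor:l2l2-exp-weak-identification} provides---most heavy coordinates are isolated in almost all of their measurements, so the random $\pm1$ signs force their median estimate to be accurate to within $O(\eta/\sqrt{k})\cdot\|\vz\|_2$---and the second holds because each $\hat x_i$ is a local function of $\M\vx$. Everything else is routine bookkeeping.
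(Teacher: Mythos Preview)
Your proposal is correct and matches the paper's own proof essentially line for line: the paper also substitutes $\eta\mapsto\eta^2$, invokes Corollary~\ref{cor:l2l2-exp-weak-identification} to conclude that all but $\gamma k$ indices with $|x_i|>3\sqrt{\eta^2/k}\,\|\vz\|_2$ lie among the top $k+k/\eta$ median estimates, and then restricts the estimation and selection to the given set $S$. Your observation that the top $O(k/\eta)$ estimates over $S$ contain $S$ intersected with the top $O(k/\eta)$ estimates over $[N]$ is exactly the (trivial) reason the restriction costs nothing, and your running-time accounting is the same as the paper's.
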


Finally, the proof of of Theorem~\ref{thm:l2l2-exp} also implies the following:
\begin{theorem}
\label{thm:l2l2-exp-weak}
Let $s\ge 1$ be an integer and $0<\eta,\gamma<1$ be reals. Let $G:[N]\times [\ell]\rightarrow [M]$ be a $(4k,\eps)$-expander with $\eps=O(\gamma^3\eta^2)$ and $\ell\ge c\cdot \log(N/k)$ (for some large enough constant $c=\Theta(s)$ that can depend on $\eps$). Let $\M$ be the random matrix obtained by multiplying each (non-zero) entry in ${\M}_G$ by random (independent) $\pm 1$. Then except with probability $\binom{N}{\gamma k}^{-s}$,  the following is true.

$\M$ is $(k,\zeta,\eta)$ weak $\ltlt$ system. Further,
there exists an algorithm that given as input $S\subseteq [N]$ (that contains all but $\zeta k/2$ elements of $H_{k+k/\eta}(\vx)$) in time $O(|S|\cdot \ell+k/\eta\cdot\log(|S|))$ outputs $\hat{\vx}$ with the required properties.
\end{theorem}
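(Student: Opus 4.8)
The first assertion — that $\M$ is a $(k,\zeta,\eta)$ weak $\ltlt$ system — is Theorem~\ref{thm:l2l2-exp} after substituting $\eta$ by $\eta^2$ (so the $\sqrt{\eta}$ parameter there becomes $\eta$) and renaming $2\gamma$ as $\zeta$; the real work is the algorithmic ``Further'' clause, and even there the plan is to reuse the machinery of Theorems~\ref{thm:l2l2-exp} and~\ref{thm:l2l2-exp-identification} rather than redo any concentration bound. Write $\vx = \vy + \vz$ with $|\supp(\vy)| \le k$ and condition on the good event — failure probability $\binom{N}{\gamma k}^{-s}$ over the $\pm 1$ signs — on which the conclusions of both theorems hold (a union bound, after adjusting $s$ by a constant, keeps this at $\binom{N}{\gamma k}^{-s}$). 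Given $S$, the algorithm first runs the identification procedure of Theorem~\ref{thm:l2l2-exp-identification} on $S$ with sparsity parameter $\gamma$, obtaining in time $O(|S|\cdot\ell + (k/\eta)\log|S|)$ a set $I \subseteq S$ with $|I| = O(k/\eta)$ that contains every $i \in S$ with $|x_i| > 3\sqrt{\eta^2/k}\,\|\vz\|_2$ except for at most $\gamma k$ of them. Since $S$ itself omits at most $\zeta k/2 = \gamma k$ elements of $H_{k+k/\eta}(\vx)$, the set $I$ omits at most $2\gamma k = \zeta k$ \emph{significant} heavy hitters, i.e.\ those $i \in H_k(\vx)$ with $|x_i| > 3\sqrt{\eta^2/k}\,\|\vz\|_2$; the remaining \emph{insignificant} heavy hitters number at most $k$ and contribute at most $9\eta^2\|\vz\|_2^2$ to the squared error, hence only $O(\eta)\|\vz\|_2$ in $\ell_2$.

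Next, for each $i \in I$ the algorithm computes $\hat x_i$ as the sign-corrected median of the $\ell$ measurements $(\M\vx)_{G(i,j)}$, $j \in [\ell]$ — exactly the estimator analyzed in the proof of Theorem~\ref{thm:l2l2-exp} — and sets $\hat{\vx}$ to be the vector supported on $I$ carrying these values, $\hat{\vy}$ to be $\vx$ restricted to the (at most $\zeta k$) significant heavy hitters of $H_k(\vx)$ lying outside $I$, and $\hat{\vz} := \vx - \hat{\vx} - \hat{\vy}$. Properties (1)–(3) are then immediate: (1) by construction, (2) because $|\supp(\hat{\vx})| \le |I| = O(k/\eta)$, and (3) by the count in the previous paragraph. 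For the running time, identification costs $O(|S|\cdot\ell + (k/\eta)\log|S|)$; computing $|I| \le |S|$ medians of $\ell$ values costs $O(|S|\cdot\ell)$; no further selection is needed since $|I|$ is already $O(k/\eta)$. So the total is within the claimed bound.

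The genuinely load-bearing step is property (4), $\|\hat{\vz}\|_2 \le (1 + O(\eta))\|\vz\|_2$, which unfolds as $\|\hat{\vz}\|_2^2 = \sum_{i \in I}(x_i - \hat x_i)^2 + \sum_{i \notin I,\ \text{insig. HH}} x_i^2 + \|\vz\text{ restricted to coords outside }I\|_2^2 \le (1+O(\eta))\|\vz\|_2^2$; the middle term is the $9\eta^2\|\vz\|_2^2$ already charged to insignificant heavy hitters, the last is at most $\|\vz\|_2^2$, and the first is the bound established inside the proof of Theorem~\ref{thm:l2l2-exp} in Appendix~\ref{app:thm-l2l2-exp}. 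There, the $(4k,\eps)$-expansion with $\eps = O(\gamma^3\eta^2)$ forces each heavy hitter to collide with another heavy hitter, or to see a large $\ell_2^2$-mass of tail, in only an $O(\eta)$-fraction of its $\ell$ neighborhoods, while the random $\pm 1$ signs convert the surviving per-bucket $\ell_2^2$ noise into an additive $\ell_2$-error via a Bernstein/Chernoff estimate; summed over the $O(k/\eta)$ estimated coordinates this yields the $O(\eta)\|\vz\|_2^2$ slack. The one new observation — the only thing this proof adds over Theorem~\ref{thm:l2l2-exp} — is that this calculation never used that the medians were taken over all of $[N]$: it goes through verbatim once $I$ is known to be a set of size $O(k/\eta)$ that misses at most $\zeta k$ significant heavy hitters, and Theorem~\ref{thm:l2l2-exp-identification} run on $S$ supplies exactly such an $I$. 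I expect this transfer to be routine; the one place needing care is bookkeeping the constants so that ``$\gamma k$ missed by $S$'' plus ``$\gamma k$ missed by identification'' lands at $\zeta k = 2\gamma k$ while the two $O(\eta)$ error contributions still sum to $O(\eta)\|\vz\|_2$.
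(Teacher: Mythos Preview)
Your approach is essentially the paper's: reuse the median-estimator analysis from Theorem~\ref{thm:l2l2-exp}, but restrict to the top $k+k/\eta$ estimates inside the given set $S$ rather than all of $[N]$, and absorb the $\zeta k/2$ elements of $H_{k+k/\eta}(\vx)$ missing from $S$ into $\hat{\vy}$. The paper's own sketch says exactly this in two sentences (``mimic the proof of Theorem~\ref{thm:l2l2-exp} except\dots adjust the constants so that Lemma~\ref{lem:estimate} gives $\zeta k/2$ bad estimates, and account for the $\zeta k/2$ elements missing from $S$''), so you are on the right track and arguably more thorough.

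There is, however, one concrete gap in your explicit construction of $\hat{\vy}$. You set $\hat{\vy}$ to be $\vx$ restricted to the significant heavy hitters \emph{outside} $I$, and then write $\|\hat{\vz}\|_2^2 = \sum_{i\in I}(x_i-\hat x_i)^2 + \cdots$. But Lemma~\ref{lem:estimate} only guarantees that all but $\gamma k$ indices have a good estimate; for the (up to $\gamma k$) indices $i\in I$ with a \emph{bad} estimate, $(x_i-\hat x_i)^2$ is unbounded, so your first sum is not controlled. The appendix proof you invoke handles exactly this in its case~2 by setting $\hat y_i = x_i - \hat x_i$ for bad-estimate $i\in\supp(\hat\vx)$, and similarly case~5 puts displaced items into $\hat\vy$. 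You need to do the same, which means $|\supp(\hat\vy)|$ picks up an additional $O(\gamma k)$ term beyond the two you already counted. This is precisely the ``bookkeeping the constants'' you flag at the end, so the repair is routine---but as written, your stated $\hat\vy$ is too small and your decomposition of $\|\hat\vz\|_2^2$ does not go through.
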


\subsubsection{Weak Identification to Top-Level system conversion}
\label{sec:conversion}

We begin with the following observation that follows by repeating the given weak identification matrix $s$ times (proof in Appendix~\ref{app:weak-amplify}):
\begin{lemma}
\label{lem:weak-identification-amplify}
Let $\M$ be a $(k,\zeta,\eta)$ weak identification matrix with $(\ell,p)$ guarantee.
Then there exists a $(k,3\zeta,\eta)$ weak identification matrix $\M'$ with $(2\ell,p^{\Omega(s)})$ guarantee with $s$ times more measurements.
\end{lemma}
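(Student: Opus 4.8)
The plan is to run $s$ independent copies of the given weak identification matrix $\M$ and combine their outputs by a majority/union rule that keeps the output set small while only mildly increasing the fraction of lost heavy hitters. Concretely, let $\M^{(1)},\dots,\M^{(s)}$ be independent draws of $\M$, each with fresh randomness, and let $\M'$ be their vertical stacking; this uses exactly $s$ times as many measurements. On input $\M'\vx$ and a set $S\subseteq[N]$, run the promised decoder for each copy to obtain sets $I_1,\dots,I_s\subseteq S$, each of size at most $\ell$. Define a coordinate $j\in S$ to be \emph{good for copy $t$} if $j\in I_t$, and let $I' = \{\, j\in S : j \in I_t \text{ for at least } s/2 \text{ values of } t \,\}$ be the set of coordinates appearing in a majority of the lists.

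The size bound follows by a counting argument: $\sum_{t=1}^{s} |I_t| \le s\ell$, and each element of $I'$ is counted at least $s/2$ times in this sum, so $|I'| \le 2\ell$, giving the claimed $(2\ell)$-guarantee. For the failure probability, call copy $t$ \emph{bad} if the decoder for $\M^{(t)}$ fails its guarantee, i.e. either $|I_t|>\ell$ or more than $\zeta k$ elements of $H_k(\vx)$ are missing from $I_t$; by hypothesis each copy is bad independently with probability at most $p$. First I would argue that if strictly fewer than $s/2$ copies are bad (which happens except with probability at most $\binom{s}{s/2} p^{s/2} \le (4p)^{s/2} = p^{\Omega(s)}$ since $p$ is bounded away from $1$), then $\M'$ succeeds with the relaxed parameter $3\zeta$. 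Indeed, fix a heavy hitter $h\in H_k(\vx)$ and suppose $h\notin I'$; then $h$ is missing from at least $s/2$ of the lists $I_t$. Let $B$ be the set of bad copies, $|B|<s/2$, so $h$ is missing from at least $s/2 - |B| > 0$ \emph{good} copies — but that alone is not yet a contradiction, so I need to count more carefully: charge each coordinate $h\in H_k(\vx)\setminus I'$ to the good copies that miss it. Each good copy $t\notin B$ misses at most $\zeta k$ heavy hitters, so the total number of (heavy hitter, good copy) missing-incidences is at most $(s-|B|)\zeta k \le s\zeta k$. On the other hand, each $h\in H_k(\vx)\setminus I'$ is missed by at least $s/2 - |B| \ge $ (using $|B|<s/2$, but we need a lower bound bounded away from $0$ proportionally to $s$) good copies; this is the delicate point.

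The main obstacle, and the place the argument needs care, is exactly this last step: a bare union bound over bad copies only gives that fewer than $s/2$ copies are bad, which does not by itself lower-bound the number of \emph{good} copies missing a given lost heavy hitter by a constant fraction of $s$. To fix this I would instead take the threshold for the majority rule and the threshold for "few bad copies" to be different constants — e.g. declare $\M'$ to fail only when more than $s/4$ copies are bad (probability $\le \binom{s}{s/4}p^{s/4} = p^{\Omega(s)}$), and include $j$ in $I'$ iff it appears in at least $s/4$ of the lists. Then $|I'|\le 4\ell$ (absorb into the constant, or tighten thresholds to recover $2\ell$), and if at most $s/4$ copies are bad, any $h\in H_k(\vx)\setminus I'$ is absent from at least $s - s/4 = 3s/4$ lists, hence from at least $3s/4 - s/4 = s/2$ \emph{good} lists; summing, $|H_k(\vx)\setminus I'|\cdot (s/2) \le (\text{good copies})\cdot \zeta k \le s\zeta k$, so $|H_k(\vx)\setminus I'|\le 2\zeta k$, comfortably within $3\zeta k$. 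Balancing the two thresholds to simultaneously get output size $\le 2\ell$ and loss $\le 3\zeta k$ (e.g. include $j$ if it appears in $> s/2$ lists, fail if $> \zeta'$-fraction are bad for a suitably small $\zeta'$) is the routine bookkeeping I would finish the proof with; the identification time is $s$ times the original plus an $O(s|S|)$ pass to compute the majorities, which is absorbed into the stated bounds.
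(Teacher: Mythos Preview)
Your proposal is correct and takes essentially the same approach as the paper: stack $s$ independent copies, output the majority, bound the output size by the double-counting argument, and use a Chernoff bound to limit the number of bad copies. The paper uses exactly the threshold combination you arrive at in your final paragraph---include $j$ if it appears in $>s/2$ lists, and declare failure if more than $s/4$ copies are bad---which (with $g\ge 3s/4$ good copies, each missing $\le \zeta k$ heavy hitters, and each lost heavy hitter missing from $\ge g-s/2\ge s/4$ good copies) yields $|I'|<2\ell$ and at most $3\zeta k$ lost heavy hitters directly, so no further ``balancing'' is needed.
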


By combining Lemma~\ref{lem:weak-identification-amplify} and Theorem~\ref{thm:l2l2-exp-weak}, we get:
\begin{lemma}
\label{lem:identify-to-weak}
Let $\M$ be a $(k,\zeta,\eta)$ weak identification matrix with $(\ell,p)$ guarantee with $m$ measurements and let $G$ be an expander as in Theorem~\ref{thm:l2l2-exp-weak}. Then for any integer $s\ge 1$, there exists a $(k,3\zeta,\eta)$ weak $\ltlt$ system with $O(m\cdot s+ M)$ measurements with failure probability $p^\Omega(s)+\binom{N}{\gamma k}^{-s}$.
\end{lemma}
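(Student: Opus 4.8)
The plan is to use the weak identification matrix only to produce, quickly, a small candidate set of coordinates, and then to estimate on that set with the expander-based machinery of Theorem~\ref{thm:l2l2-exp-weak}. First I would apply Lemma~\ref{lem:weak-identification-amplify} to $\M$ with the given parameter $s$, obtaining a $(k,3\zeta,\eta)$ weak identification matrix $\M'$ with a $(2\ell,p^{\Omega(s)})$-guarantee and $O(sm)$ measurements. Next, take $\M_G$ to be the $\pm1$-signed expander matrix of Theorem~\ref{thm:l2l2-exp-weak}, instantiated with a $(4k,\eps)$-expander having $\eps=O(\zeta^3\eta^2)$ and left degree $\Theta(s)\cdot\log(N/k)$; this matrix has $M$ measurements and, except with probability $\binom{N}{\gamma k}^{-s}$, is a $(k,\zeta,\eta)$ weak $\ltlt$ system for which the fast, set-restricted estimation algorithm of that theorem is available. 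The measurement matrix of the new system is the vertical concatenation of $\M'$ and $\M_G$, which has $O(sm+M)$ rows, as required.

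The decoder, given $(\M'\vx,\M_G\vx)$, works in two stages. In the first stage it runs the identification algorithm of $\M'$ with input set $S=[N]$; except with probability $p^{\Omega(s)}$ this returns $I\subseteq[N]$ with $|I|\le 2\ell=O(k/\eta)$ that misses at most $3\zeta k$ of the heavy hitters the estimation step will need. In the second stage it hands $S:=I$ to the estimation algorithm of Theorem~\ref{thm:l2l2-exp-weak} applied to $\M_G\vx$; except with probability $\binom{N}{\gamma k}^{-s}$ this returns $\hat{\vx}$ supported on $I$ together with an implicit splitting $\vx=\hat{\vx}+\hat{\vy}+\hat{\vz}$ with $|\supp(\hat{\vx})|=O(k/\eta)$, $|\supp(\hat{\vy})|\le\zeta k$, and $\|\hat{\vz}\|_2\le(1+O(\eta))\|\vz\|_2$. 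Moving the at most $3\zeta k$ missed heavy hitters into $\hat{\vy}$ and the remaining discrepancy into $\hat{\vz}$ (the missed coordinates outside $\supp(\hat{\vy})$ are heavy- or light-tail, so they contribute only $O(\eta)\|\vz\|_2$ in $\ell_2$) shows that the concatenated matrix is a $(k,3\zeta,\eta)$ weak $\ltlt$ system; a union bound over the two failure events bounds the failure probability by $p^{\Omega(s)}+\binom{N}{\gamma k}^{-s}$.

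I expect the one genuinely delicate point to be the interface between the two stages. Theorem~\ref{thm:l2l2-exp-weak} requires the input set to contain all but $\zeta k/2$ of $H_{k+k/\eta}(\vx)$, whereas the weak identification guarantee is phrased relative to $H_k(\vx)$; since $H_{k+k/\eta}(\vx)$ has $\Theta(k/\eta)$ elements, covering $H_k(\vx)$ alone is not enough. The fix is to invoke (the amplified) $\M'$ with sparsity parameter $\Theta(k/\eta)\ge k+k/\eta$ and identification parameter a small enough constant multiple of $\zeta$, which only multiplies $m$ and $\ell$ by factors depending on $\eta$ that are absorbed into the $O(\cdot)$'s; this is exactly the regime in which the lemma is used to prove Lemmas~\ref{cor:lw} and~\ref{cor:rs}. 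Keeping all of these constant-factor losses inside the factor-$3$ budget on $\zeta$ and the $O(k/\eta)$ budget on $|\supp(\hat{\vx})|$ is the only real bookkeeping; the structural part—concatenate, chain the two algorithms, union-bound the two bad events—is routine.
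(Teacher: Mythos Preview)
Your proposal is correct and follows exactly the paper's approach: the paper's ``proof'' is literally the sentence ``By combining Lemma~\ref{lem:weak-identification-amplify} and Theorem~\ref{thm:l2l2-exp-weak}, we get [the lemma],'' and you have spelled out that combination (amplify, stack with the signed-expander matrix, identify then estimate, union bound). You are also right to flag the $H_k(\vx)$ versus $H_{k+k/\eta}(\vx)$ mismatch as the one delicate interface; the paper glosses over this in the lemma statement but handles it precisely as you suggest when the lemma is applied (see the proof of Theorem~\ref{thm:exp-fail}, where the weak identification matrix is instantiated with sparsity parameter $O(k/\eta)$ so that the output list misses at most $k/2$ elements of $H_{k+k/\eta}(\vx)$).
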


Next we present a simple yet crucial modification of the weak level to top level system conversion from~\cite{ely-martin}. 
\begin{lemma}
\label{lem:weak-to-top} Let the following be true for any $0<\zeta,\eta<1$ and integers $s,k\ge 1$.
There is  a $(k,\zeta,\eta)$ weak $\ltlt$ system $\M$ with $O(s\cdot \zeta^{-d}\eta^{-c}\cdot k\cdot \log(N/k))$ measurements\footnote{$c$ and $d$ are absolute constants.}, failure probability $q^{-\Omega(sk)}$ and decoding time $T(k,\zeta,\eta,N,s)$. Then for any $\eps,\alpha>0$ and $k\ge 1$, there is a $(k,\eps)$-top level system with failure probability $q^{-\Omega(k/\log^{(1+\alpha)c+2+\alpha}{k})}$,   $O(\eps^{-c}\cdot k\cdot \log(N/k))$ measurements and $O(\log{k}\cdot T(k,\zeta,\eta,1))$ decoding time.
\end{lemma}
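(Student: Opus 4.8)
The plan is to run the loop-invariant iterative scheme of Porat--Strauss~\cite{ely-martin} and Gilbert et al.~\cite{GLPS12:ForeachOptimal}, but to instantiate each round with a \emph{non}-geometrically decaying choice of the precision and repetition parameters. We will use $J=\lceil\log_2 k\rceil+O(1)$ rounds; in round $j$ we invoke the hypothesized weak $\ltlt$ system with sparsity $k_j=\lceil k/2^{j-1}\rceil$, head-shrink parameter $\zeta_j=1/2$, precision parameter $\eta_j$, and repetition count $s_j$ (the last two fixed below), and the top-level matrix $\M$ is the vertical concatenation $[\M_1;\dots;\M_J]$ of these matrices. Decoding is iterative: keeping an estimate $\vx^{(j)}$ with $\vx^{(0)}=\vzero$, in round $j$ we form $\M_j(\vx-\vx^{(j-1)})=\M_j\vx-\M_j\vx^{(j-1)}$, run the round-$j$ weak-system decoder on it to obtain a sparse correction $\hat\vx_j$, and set $\vx^{(j)}=\vx^{(j-1)}+\hat\vx_j$, finally outputting $\vx^{(J)}$.

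The loop invariant, proved by induction on $j$, is that the residual $\vr_j:=\vx-\vx^{(j)}$ admits a decomposition $\vr_j=\vy_j+\vz_j$ with $|\supp(\vy_j)|\le\zeta_j k_j=k_{j+1}$ (up to rounding) and $\|\vz_j\|_2\le\prod_{i\le j}(1+O(\eta_i))\cdot\|\vz_0\|_2$, where $\vz_0=\vx-\vx_{H_k(\vx)}$ and the base case is the decomposition $\vx=\vx_{H_k(\vx)}+\vz_0$. For the inductive step, condition on the round-$j$ weak system not failing (probability $\ge 1-q^{-\Omega(s_j k_j)}$): applying properties (1)--(4) of a weak $\ltlt$ system to $\vr_{j-1}$ with the decomposition $\vy_{j-1}+\vz_{j-1}$ (which has $|\supp(\vy_{j-1})|\le k_j$) yields $\hat\vx_j,\hat\vy_j,\hat\vz_j$ with $\vr_{j-1}=\hat\vx_j+\hat\vy_j+\hat\vz_j$, $|\supp(\hat\vy_j)|\le\zeta_j k_j$, and $\|\hat\vz_j\|_2\le(1+O(\eta_j))\|\vz_{j-1}\|_2$; since $\vr_j=\vr_{j-1}-\hat\vx_j=\hat\vy_j+\hat\vz_j$, taking $\vy_j:=\hat\vy_j$ and $\vz_j:=\hat\vz_j$ preserves the invariant. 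Once $J$ is large enough that $\zeta_J k_J<1$ — which happens at $J=\lceil\log_2 k\rceil+O(1)$ — we get $\vy_J=\vzero$, hence $\|\vx-\vx^{(J)}\|_2=\|\vz_J\|_2\le\prod_j(1+O(\eta_j))\cdot\|\vx-\vx_{H_k(\vx)}\|_2$, so it suffices to ensure $\sum_j\eta_j=O(\eps)$.

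The crux is the choice of $\eta_j$ and $s_j$, which must simultaneously satisfy (i) $\sum_j\eta_j=O(\eps)$ (approximation); (ii) $\sum_j O\!\left(s_j\,\zeta_j^{-d}\eta_j^{-c}\,k_j\log(N/k_j)\right)=O(\eps^{-c}k\log(N/k))$ (measurements); and (iii) $\sum_j q^{-\Omega(s_j k_j)}=q^{-\Omega(k/\log^{(1+\alpha)c+2+\alpha}k)}$ (failure probability). This is exactly where we depart from prior work: a geometric decay of $\eta_j$ makes $\eta_j^{-c}$ — and with it the measurement count — blow up polynomially in $k$ once (iii) forces $s_j$ to grow, whereas a constant $\eta_j$ wastes a $\log k$ factor in (i). Instead we take $\eta_j=\Theta(\eps\alpha\,j^{-(1+\alpha)})$, which decays fast enough that $\sum_j\eta_j\le O(\eps\alpha)\sum_{j\ge1}j^{-(1+\alpha)}=O(\eps)$ but slowly enough that $\eta_j^{-c}=O((\eps\alpha)^{-c}j^{(1+\alpha)c})$ stays polynomial in $j=O(\log k)$, and we take $s_j=\Theta\!\left(2^{j-1}/\log^{(1+\alpha)c+2+\alpha}k\right)$ — the smallest choice making $s_jk_j=\Omega(k/\log^{(1+\alpha)c+2+\alpha}k)$ \emph{uniformly} over all $O(\log k)$ rounds, which yields (iii) by a union bound. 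With these choices round $j$ contributes $O\!\left((\eps\alpha)^{-c}(k/\log^{(1+\alpha)c+2+\alpha}k)\,j^{(1+\alpha)c}\log(N/k_j)\right)$ measurements, and since $\sum_{j\le J}j^{(1+\alpha)c}\log(N/k_j)=O(\log^{(1+\alpha)c+2}k\cdot\log(N/k))$ the total is $O(\eps^{-c}k\log(N/k))$ — the extra "$+\alpha$" in the exponent is what converts the leftover $\alpha^{-c}$ (and the $\log$-base constants) into a vanishing $\log^{-\alpha}k$ factor. Finally, round $j$'s decoding cost is the weak-system decoding time at parameters $(k_j,\zeta_j,\eta_j,N,s_j)$ plus $O(|\supp(\vx^{(j-1)})|\cdot\ell)$ to form the residual measurements, and $|\supp(\vx^{(j-1)})|\le\sum_{i<j}O(k_i/\eta_i)=O(k/(\eps\alpha))$ is bounded, so summing over the $O(\log k)$ rounds gives $O(\log k)$ times the weak-system decoding time, as claimed. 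The main obstacle is precisely this three-way balancing act; the induction and the remaining bookkeeping are routine once the decay functions $\eta_j\propto j^{-(1+\alpha)}$ and $s_j\propto 2^{j}$ are identified.
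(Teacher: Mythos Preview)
Your proposal is correct and follows essentially the same approach as the paper: the loop-invariant iterative reduction of Porat--Strauss/Gilbert et al., with the key new ingredient being the polynomially decaying precision $\eta_j\propto j^{-(1+\alpha)}$ (rather than geometric decay) so that $\sum_j\eta_j=O(\eps)$ while $\eta_j^{-c}$ only grows polylogarithmically in $k$. The one minor difference is in the repetition parameter: the paper takes $s_i=2^i/i^{(1+\alpha)c+2+\alpha}$, which makes the per-round measurement count itself a convergent series $\sum_i i^{-(1+\alpha)}$, whereas you take $s_j\propto 2^{j-1}/\log^{(1+\alpha)c+2+\alpha}k$, giving a uniform $s_jk_j$ and then absorbing the $\sum_j j^{(1+\alpha)c}$ growth into the $\log^{-\alpha}k$ slack---both choices yield the same final bounds.
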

\begin{proof}[Proof Sketch] We only sketch the differences from the corresponding argument in~\cite{ely-martin}. We will have $\log{k}$ stages. In stage $i$, we will pick a $\left(\frac{k}{2^i},\frac{1}{2},\eps\cdot\frac{1}{i^{1+\alpha}}\right)$ weak $\ltlt$ system with $s=2^i/i^{(1+\alpha)c+2+\alpha}$ and ``stack" them to obtain our final top level system. Using the ``loop invariant" technique of Gilbert et al., and arguments similar to~\cite{ely-martin}, one can design a decoding algorithm that has an approximation factor of
\[1+O(\eps)\cdot \sum_{i=1}^{\infty}\frac{1}{i^{1+\alpha}}=1+O(\eps),\]
as desired. (In the above we used the fact that $\sum_{i=1}^{\infty} i^{-1-\alpha}=O(1)$ for any constant $\alpha$.)

We briefly argue the claimed bounds on the number of measurements and the failure probability (the rest of the bounds are immediate). Note that at stage $i$, the number of measurements (ignoring the constant in the $O(\cdot)$ notation):
\[\frac{1}{i^{(1+\alpha)c+2+\alpha}}\cdot2^i\cdot 2^d\cdot \eps^{-c}\cdot \left(\frac{1}{i^{1+\alpha}}\right)^{-c}\cdot \frac{k}{2^i}\cdot \log(N 2^i/k)\le \frac{1}{i^{1+\alpha}}\cdot 2^d\cdot \eps^{-c}\cdot k\log(N/k).\]
The claimed bound on the final number of measurements follows from the fact that $\sum_{i=1}^{\infty}\frac{1}{i^{1+\alpha}}=O(1)$.
Next note that the failure probability at stage $i$ is $q^{-\Omega(2^ik/(2^i i^{(1+\alpha)c+2+\alpha})}=q^{-\Omega\left(\frac{k}{i^{(1+\alpha)c+2+\alpha}}\right)}$. The final failure probability is determined by the failure probability at $i=\log{k}$, which implies that the final failure probability is $q^{-\Omega(k/\log^{(1+\alpha)c+2+\alpha}{k})}$,
as desired
\end{proof}

\subsubsection{An Intermediate Result}
\label{sec:intermediate}

In this section, we will present an intermediate result that will be the building block of our recursive construction (in Section~\ref{sec:recursive}).
Let $N\ge M\ge 1$ be integers and let $h:[N]\rightarrow [M]$ be a random map that we will define shortly.
Let $G:[M]\times [\ell]\rightarrow [m]$ be a $(4k,\eps)$-expander. In this section, we will consider how good an identification matrix we can obtain from $\M_{G\circ h}$ (with random $\pm 1$) entries. 

Call a map $h:[N]\rightarrow [M]$ to be $(k',\alpha)$-random if the following is true.
For any subset $S\subseteq [N]$ of size $k'$, the probability that for any $i\not\in S$, $h(i)=h(j)$ for any $j\in S$ is upper bounded by $O\left(\frac{|S|}{M^{1-\alpha}}\right)$. Note that a $(k'+1)$-wise independent random map from $[N]\rightarrow [M]$ is $(k',0)$-random.

We will prove the following ``identification" analogue of Theorem~\ref{thm:l2l2-exp}
(the proof appears in Appendix~\ref{app:intermediate}):
\begin{lemma}
\label{lem:intermediate}
Let $0<\alpha,\eta,\gamma<1$ be reals. Let $f:[N]\rightarrow [M]$ be a $(k(\zeta^{-5}\eta^{-2}+1),\alpha)$-random map with $M\ge \Omega(\Mlb)$. Let $G:[M]\times [\ell]\rightarrow [m]$ be a $(4k,\eps)$-expander with $\eps=O(\gamma^3\eta^2)$ and $\ell\ge c\cdot \log(M/k)$ (for some large enough constant $c=\Theta(s)$ that can depend on $\eps$). Let $\M$ be the random matrix obtained by multiplying each (non-zero) entry in ${\M}_{G\circ f}$ by random (independent) $\pm 1$. Then except with probability $\left(\frac{M}{k}\right)^{\Omega(-\alpha \gamma k)}$, the following is true.

There exists an algorithm that given as input $S\subseteq [M]$ outputs in time $O(|S|\cdot \ell)$ $O(k/\eta)$ items $I\subseteq S$ that contain $f(i)$ for all but $\gamma k$ items $i\in S$ such that $|x_i|>3\sqrt{\eta^2/k}\|\vz\|_2$.
\end{lemma}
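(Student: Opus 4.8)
The plan is to reduce the statement to Theorem~\ref{thm:l2l2-exp-identification} applied on the hashed-down universe $[M]$, and then carefully account for the collisions introduced by the random map $f$. First I would set up the key bookkeeping: given the input vector $\vx = \vy + \vz \in \R^N$ with $|\supp(\vy)| \le k$, define the ``folded'' vector $\vx^h \in \R^M$ by $x^h_a = \sum_{i : f(i) = a} \pm x_i$, using exactly the random signs that were multiplied into $\M_{G \circ f}$, so that $\M \vx = \M_G' \vx^h$ where $\M_G'$ is the random-sign version of $\M_G$ from Theorem~\ref{thm:l2l2-exp-identification}. The heart of the argument is then to show that, conditioned on a ``good'' event for $f$, the folded vector still has a clean heavy/tail decomposition: the heavy hitters $H_k(\vx)$ map to distinct cells (no two collide), so their mass survives in $\vx^h$, and the $\ell_2^2$-mass of the remaining coordinates that land on top of a heavy hitter's cell is small. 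Because $f$ is $(k(\zeta^{-5}\eta^{-2}+1),\alpha)$-random, for the set $S$ of size up to $k(\zeta^{-5}\eta^{-2}+1)$ consisting of the heavy hitters together with the $O(k \zeta^{-5}\eta^{-2})$ largest remaining coordinates, the probability that any element outside $S$ collides with an element of $S$ is $O(|S|/M^{1-\alpha}) = O(k \zeta^{-5}\eta^{-2}/M^{1-\alpha})$; by the choice $M \ge \Omega(\Mlb)$ this is at most, say, $(M/k)^{-\Omega(\alpha\gamma k)}$ after we additionally take a union bound (or rather multiplicative-failure bound) over the necessary instances — this is where the stated failure probability comes from.

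Next I would verify that on this good event the folded vector $\vx^h$ decomposes as $\vx^h = \vy^h + \vz^h$ with $|\supp(\vy^h)| \le k$ and $\|\vz^h\|_2 \le (1 + O(\eta))\|\vz\|_2$ (the signs ensure the cross terms in $\|\vz^h\|_2^2$ concentrate, and the random-map property controls how much tail mass piles into any one cell in expectation, which one then upgrades to a high-probability statement, folding the failure into the same bound). Once we are in this regime, Theorem~\ref{thm:l2l2-exp-identification} applies verbatim to $\M_G'$ on $[M]$: given an input set $S' \subseteq [M]$ it outputs, in time $O(|S'|\cdot\ell + (k/\eta)\log|S'|)$, a set $I' \subseteq S'$ of size $O(k/\eta)$ containing all but $\gamma k$ of the indices $a \in S'$ with $|x^h_a| > 3\sqrt{\eta^2/k}\,\|\vz^h\|_2$. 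Feeding in $S' = f(S)$ (the image of the given input set $S \subseteq [M]$ — note the lemma's domain is already $[M]$, so in the statement ``$S \subseteq [M]$'' we take $f$ restricted appropriately, or more precisely $f$ maps the true universe $[N]$ down and $S$ is already a subset of the reduced universe in the recursive call; I would make this indexing consistent with how Section~\ref{sec:recursive} invokes it), and using that a coordinate $i$ with $|x_i| > 3\sqrt{\eta^2/k}\,\|\vz\|_2$ and no collision has $|x^h_{f(i)}| = |x_i| > 3\sqrt{\eta^2/k}\,\|\vz\|_2 \ge 3\sqrt{\eta^2/k}\,\|\vz^h\|_2 / (1+O(\eta))$, adjusting the threshold constant slightly, we get that $I = I'$ (pulled back appropriately) contains $f(i)$ for all but $\gamma k$ of the desired $i$'s, after absorbing the $\gamma k$ lost in the inner call together with at most $\gamma k$ extra losses from threshold slippage into the overall $\gamma k$ budget (rescaling $\gamma$ by a constant, which only changes the hidden constant in $\eps = O(\gamma^3\eta^2)$). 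The running time is $O(|S|\cdot\ell)$ since the fold is computed implicitly from $\M\vx$ and the $\log$ term is dominated.

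The main obstacle I expect is the collision analysis and in particular getting the failure probability down to $(M/k)^{-\Omega(\alpha\gamma k)}$ rather than the much weaker $O(k\,\poly(\zeta^{-1},\eta^{-1})/M^{1-\alpha})$ that a single application of the $(k',\alpha)$-random property gives. The resolution is that we do not merely want ``no collision with probability close to $1$''; we want the whole weak-identification guarantee to hold \emph{except with probability $(M/k)^{-\Omega(\alpha\gamma k)}$}, matching the exponentially small failure we are allowed. I would obtain this by combining the expander's intrinsic $\binom{N}{\gamma k}^{-s}$-type failure (which is already at least this small for suitable $s = \Theta(\log(M/k))$ hidden in the constant $c$) with a \emph{counting} argument over subsets: rather than bounding the probability that some fixed bad element collides, bound the probability that there exist $\gamma k$ heavy elements that get ``killed'' by collisions, which is at most $\binom{k}{\gamma k}$ times the probability a fixed $\gamma k$-subset is killed, and the latter is $(O(k\,\poly/M^{1-\alpha}))^{\gamma k}$ by $(k',\alpha)$-randomness applied to that subset — this product is $(M/k)^{-\Omega(\alpha\gamma k)}$ by the choice of $M \ge \Omega(\Mlb)$. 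Tuning $\Mlb = \zeta^{-6}\eta^{-2}\cdot k^{(1-\alpha)/(1-2\alpha)}\cdot(\log(N/k))^2$ so that this exponent works out is the one genuinely delicate calculation, and I would do it last, after the structural reduction above is in place.
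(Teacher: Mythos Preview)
Your high-level plan---control what $f$ does to the heavy hitters and the tail, then invoke the expander identification on $[M]$---is exactly the paper's strategy, and your union-bound/Chernoff argument for ``at most $\gamma k$ heavy hitters are killed by collisions'' is essentially the paper's Lemma~\ref{lem:corruption}. So the skeleton is right. Two points in the execution, however, do not go through as written.

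\textbf{The factorization $\M\vx=\M_G'\,\vx^h$ fails for this construction.} The random signs in $\M_{G\circ f}$ are assigned per \emph{entry}, i.e.\ one independent sign $\sigma_{b,i}$ for each pair (bucket $b$, index $i\in[N]$). If $f(i_1)=f(i_2)=a$, then the contribution of cell $a$ to bucket $b$ is $\sigma_{b,i_1}x_{i_1}+\sigma_{b,i_2}x_{i_2}$, which depends on $b$; there is no single scalar $x^h_a$ and sign $\tau_{b,a}$ with $\tau_{b,a}x^h_a$ equal to this for every $b\in\Gamma_G(a)$. Hence Theorem~\ref{thm:l2l2-exp-identification} cannot be applied as a black box to a fixed folded vector in $\R^M$. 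The paper does not attempt such a factorization; it re-opens the proof of Lemma~\ref{lem:estimate} and reruns the \badone--\badfour\ bucket analysis directly on the composite graph $G\circ f$, tracking the tail in $\ell_2^2$ (it literally defines a hybrid vector with tail coordinates squared before folding). Your reduction would work if the signs had the product form $\tau_{b,f(i)}\sigma_i$, but that is a different matrix than the one in the lemma.

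\textbf{The light-tail control needs more than a bound on $\|\vz^h\|_2$.} Your sketch (``signs make cross terms concentrate, upgrade expectation to high probability'') is not enough to get failure probability $(M/k)^{-\Omega(\alpha\gamma k)}$. What is actually needed is that at most $O(\zeta k)$ heavy hitters $h$ have their cell $f(h)$ receive more than $\zeta^2\eta\|\vz\|_2^2/k$ of folded $\ell_2^2$ tail mass. The paper obtains this by decomposing the light tail into $O(\log(N/k))$ level sets (flat tails): small-support levels contribute negligibly by a deterministic bound, while for each large-support level a Chernoff bound (using the $(k',\alpha)$-random property of $f$) shows that only a $\zeta^3\eta$ fraction of its mass lands on heavy-hitter cells except with probability $(M/k)^{-\Omega(\alpha\zeta k)}$, after which a union bound over levels and a Markov argument finish. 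This level-set decomposition is where the $(\log(N/k))^2$ factor in $\Mlb$ is actually spent, and it is the step you would need to supply.
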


\subsection{Proof of Lemma~\ref{cor:lw}}
\label{sec:recursive}

See Figure~\ref{fig:overview} for an overview of the proof of Lemma~\ref{cor:lw}.

\begin{figure}[h]
\begin{center}
\resizebox{5.25in}{!}{\rotatebox{90}{\input{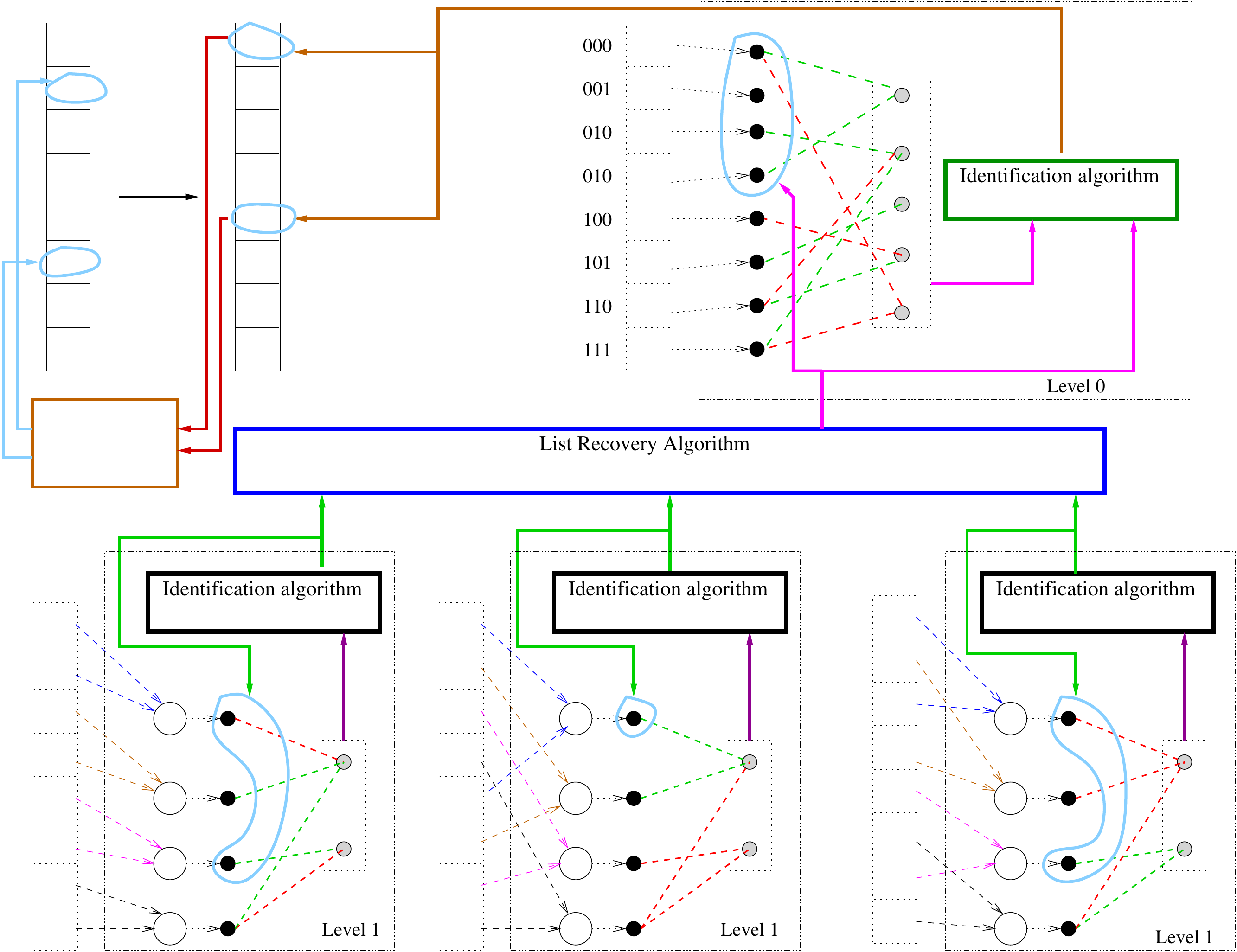_t}}}
\caption{\footnotesize{Overview of the proof of Lemma~\ref{cor:lw} for the specific example of $N=8$ and one level of recursion. The encoding can be thought of as follows. The input vector $(x_0,\dots,x_7)$ is moved around by a random map $f$ (in this case $f$ is a permutation). Then the shuffled vector is acted upon by one expander on level $0$ and three expanders on level $1$. (For clarity the shuffled vector is copied next to the four expanders.) The green and red labels on the expander edges denote $+1$ and $-1$ weights respectively. At the level $0$ expander, the eight values are routed on the edges and are collected in the output (gray) nodes by adding the incoming values (weighted by the corresponding $\pm 1$ label). At the level $1$ expander a similar process happens except the shuffled vector is combined using the following code on the indices (which are represented in the vector next to the level $0$ expander for convenience) from Lemma~\ref{lem:lw}: $C(b_0,b_1,b_2)=((b_0,b_1),(b_1,b_2),(b_0,b_2))$. The left most level $1$ expander corresponds to the first codeword position, the middle one to the second position and the right one to the third position. Also the values corresponding to the indices that map to the same symbol in $\{0,1\}^2$ are added up together. For convenience, we have used the same colored arrows for the same symbol in $\{0,1\}^2$ for every level $1$ expander. E.g., in the middle expander, the top left vertex corresponds to $(b_1,b_2)=(0,0)$, which means index $000$ (which has value $x_1$) and index $100$ (which has value $x_5$) get added up. The decoding process reverses the encoding logic. We first identify the heavy hitters for level $1$ expanders using the identification algorithm from Lemma~\ref{lem:intermediate}. These lists are then combined for the $\{0,1\}^3$ domain by the list recovery algorithm for $C$ from Lemma~\ref{lem:lw}. The output is then used as the set $S$ in the identification algorithm from Theorem~\ref{thm:l2l2-exp-identification} to get the locations of the heavy items in the shuffled vector. Finally, we used the inversion procedure as outlined in Section~\ref{sec:invert} to obtain our final indices, which are the 2nd and the 6th positions. (For clarity, the set of identified indices at every step are surrounded by a light blue curve.)}}
\label{fig:overview}
\end{center}
\end{figure}

We note that Theorem~\ref{thm:l2l2-exp-identification} instantiated with an optimal (random) expander, implies the following:
\begin{cor}
\label{cor:weak-identification-family}
There exists a family of $(k,\zeta,\eta)$-weak identification matrices $\{\M_n\}_{n\ge \zeta^{-2}}$ with \\ $(O(k/\eta),(n/k)^{-\Omega(\gamma k)})$-guarantee, $O(\zeta^{-6}\eta^{-4} k\log(n/k))$ measurements and $O(\zeta^{-3}\eta^{-1}|S|\cdot\log{n})$ identification time. 
\end{cor}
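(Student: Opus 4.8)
The plan is to deduce Corollary~\ref{cor:weak-identification-family} directly from Theorem~\ref{thm:l2l2-exp-identification} by instantiating the abstract expander $G$ appearing there with a near-optimal unbalanced bipartite expander and then reading off the parameters. Concretely, fix the boosting parameter $s=1$, set $\eps=\Theta(\zeta^3\eta^2)$ so that it matches the expander quality required by Theorem~\ref{thm:l2l2-exp-identification} (with that theorem's loss parameter $\gamma$ taken to be a constant multiple of $\zeta$), and invoke the standard fact that, for each $n\ge\zeta^{-2}$, a uniformly random left-$\ell$-regular bipartite graph $G\colon[n]\times[\ell]\to[M]$ with $\ell=\Theta(\eps^{-1}\log(n/k))$ and $M=\Theta(\eps^{-2}k\log(n/k))=\Theta(\zeta^{-6}\eta^{-4}k\log(n/k))$ is a $(4k,\eps)$-expander except with probability $o(1)$; we only need existence, so fix one such $G=G_n$. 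Since $\eps^{-1}=\Omega(1)$, this degree automatically satisfies the hypothesis $\ell\ge c\log(n/k)$ with $c=\Theta(1)=\Theta(s)$, so Theorem~\ref{thm:l2l2-exp-identification} is applicable. (One could boost $s$ further to push the failure probability down, but the degree constraint caps $s$ at $O(\eps^{-1})$ and $s=1$ already suffices, so we do not bother.)

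Now let $\M_n$ be the random matrix obtained from $\M_{G_n}$ by multiplying each nonzero entry by an independent uniform $\pm1$; its number of rows is $M=O(\zeta^{-6}\eta^{-4}k\log(n/k))$, the claimed number of measurements. Theorem~\ref{thm:l2l2-exp-identification} then tells us that, except with probability $\binom{n}{\zeta k}^{-1}$ over the choice of signs, there is a single algorithm that for every input $\vx$ and every $S\subseteq[n]$ runs in time $O(|S|\cdot\ell+(k/\eta)\log|S|)$ and outputs $I\subseteq S$ with $|I|=O(k/\eta)$ that misses at most $\zeta k$ of the coordinates $i\in S$ with $|x_i|>3\sqrt{\eta^2/k}\,\|\vz\|_2$. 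This ``for all inputs'' statement is a fortiori the ``for each input'' guarantee demanded of a weak identification matrix, and since $3\sqrt{\eta^2/k}\,\|\vz\|_2$ is exactly (a constant multiple of) the heavy-hitter threshold used throughout the paper, missing at most $\zeta k$ such coordinates in $S$ is precisely the requirement that at most $\zeta k$ heavy hitters of $\vx$ lying in $S$ are absent from $I$. Thus $\{\M_n\}_{n\ge\zeta^{-2}}$ is a family of $(k,\zeta,\eta)$-weak identification matrices with an $(O(k/\eta),\,p)$-guarantee. To finish, bound $p=\binom{n}{\zeta k}^{-1}\le(\zeta k/n)^{\zeta k}\le(n/k)^{-\zeta k}$ (using $\zeta<1$), which is the claimed $(n/k)^{-\Omega(\gamma k)}$ with $\gamma\asymp\zeta$; and substitute $\ell=\Theta(\eps^{-1}\log(n/k))=\Theta(\zeta^{-3}\eta^{-2}\log(n/k))$ into $O(|S|\ell+(k/\eta)\log|S|)$ to obtain the stated identification time (the additive $(k/\eta)\log|S|$ term being lower order in the only regime of interest, $|S|=\Omega(k)$).

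There is no substantive obstacle here---the statement is a corollary---so the write-up is essentially parameter bookkeeping. The two points that deserve a sentence of care are: (a) that a single choice of left-degree $\ell=\Theta(\eps^{-1}\log(n/k))$ simultaneously realizes a $(4k,\eps)$-expander with the optimal ($\Theta(\eps^{-2}k\log(n/k))$) number of right vertices and meets the hypothesis $\ell\ge c\log(n/k)$, which works because both constraints want $\ell$ proportional to $\log(n/k)$ and the factor $\eps^{-1}$ dominates the constant $c=\Theta(s)$; and (b) the translation, used above, between the magnitude-threshold phrasing of the guarantee in Theorem~\ref{thm:l2l2-exp-identification} and the $H_k(\vx)$ phrasing in the definition of a weak identification matrix.
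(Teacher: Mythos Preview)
Your proof is correct and follows exactly the approach the paper intends: the paper's own ``proof'' is the single sentence ``Theorem~\ref{thm:l2l2-exp-identification} instantiated with an optimal (random) expander,'' and you have carried out precisely that instantiation with the right parameters $\eps=\Theta(\zeta^3\eta^2)$, $\ell=\Theta(\eps^{-1}\log(n/k))$, $M=\Theta(\eps^{-2}k\log(n/k))$. One incidental remark: your computation gives identification time $O(\zeta^{-3}\eta^{-2}|S|\log(n/k))$, which has $\eta^{-2}$ rather than the $\eta^{-1}$ printed in the corollary; this appears to be a typo in the corollary statement, since the paper itself uses $\zeta^{-3}\eta^{-2}$ when it later invokes the corollary inside the proof of Theorem~\ref{thm:recursive-exp} (see \eqref{eq:decoding}).
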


Our main result is that we can recursively construct a weak identification matrix that has efficient identification from the (family) of weak identification matrices guaranteed by Corollary~\ref{cor:weak-identification-family} (that by itself does not have a sub-linear identification algorithm). The main idea is as follows: using a tree structure, we map $[N]$ to successively smaller sub-problems. At the ``leaves" the domain size is $\tilde{O}(k)$ and thus one can use any identification matrix (including the identity matrix). The main insight is on how to break up the domains at an internal node. For illustration purposes, consider the root. We use a uniform and (good) list recoverable code $(r,N)_{\sqrt[b]{N}}$-code $C$ (for some parameter $b>1$). (At the end we will use one of the codes from Section~\ref{sec:prelims} as $C$.) More precisely the domain $[N]$ is broken down to $r$ copies of the domain $[\sqrt[b]{N}]$ and  $i\in [N]$ gets mapped to $C(i)_j$ in the $j$th child/sub-problem. By induction, we will prove that each of the $r$ children, the weak identification problem can be solved efficiently. In other words, for each $j\in [r]$, we will have a candidate set $S_j\subseteq [\sqrt[b]{N}]$, each of size $O(k/\eta)$. As $C$ is list recoverable we can recover a list $S$ that contains $i$. However, this list could be bigger than the required $O(k/\eta)$ bound (though not much larger). To prune down this to a list of size $O(k/\eta)$ we use a weak identification matrix $\M$ to narrow down the list to $I$ in running time proportional to $|S|$.

Below we state our formal result.

\begin{theorem}
\label{thm:recursive-exp}
Let $0<\zeta,\alpha,\eta<1$ be real numbers.
Let $\{\M_n\}_{n\ge \zeta^{-2}}$ be a family of $m(n)\times n$ matrices from Corollary~\ref{cor:weak-identification-family} that are $(k,\zeta,\eta)$-weak identification matrices with $\left(\ell\stackrel{def}{=}O(k/\eta),p(n)\stackrel{def}{=}(n/k)^{-\Omega(\zeta k)}\right)$-guarantee where $m(n)=g(\zeta,\eta)\cdot k\cdot\log(n/k)$ for some function $g(\zeta,\eta)$. 
~Finally, for any real numbers $b>1$, $0\le \rho <1$ and integer $r\ge 1$, let $\{C_n\}_{n\ge 1}$ be a family of codes, where $C_n$ is a $(r,n)_{\sqrt[b]{n}}$ code that is $(\rho,\ell,L)$-list recoverable in time $T(\ell,n,b)$.

Then for large enough $N$ there exists a $m'(N)\times N$ matrix $\M^*$ that is $(k,\zeta',\eta)$-weak identification matrix with $(O(k/\eta),p'(N))$-guarantee and identification time complexity $D(k,N)$ as follows. In what follows let $A\ge \Omega(\Mlb)$ satisfy the lower bound in Lemma~\ref{lem:intermediate}.

There exists two integers $h\le O(\log_r\log_A{N})$ 
and
$\N=(\log_{A}{N})$ such that the following hold:

\begin{equation}
\label{eq:zeta-1}
\zeta'\le \left\{
\begin{array}{ll}
\zeta\cdot \N^{O\left(\frac{\log(1/\rho)}{\log{r}}\right)}&\text{ if } \rho>0\\
\zeta\cdot \N&\text{otherwise}
\end{array} \right. ,
\end{equation}
\begin{equation}
\label{eq:failure-1}
p'(N)\le \N\cdot \left(\frac{A}{k}\right)^{-\Omega(\zeta k)},
\end{equation}
\begin{equation}
\label{eq:measurements-1}
m'(N) \le O\left(g(\zeta,\eta)\cdot k\cdot \log(N/k)\cdot \N^{\frac{\log{\frac{r}{b}}}{\log{b}}}\right),
\end{equation}
and
\begin{equation}
\label{eq:decoding}
D(k,N)=O\left(\zeta^{-3}\eta^{-2}\cdot \N\cdot A\cdot \log{A}\right)+\sum_{j=0}^{h-1} \left(r^j T(O(k/\eta),\sqrt[b^j]{N},b)+ O\left(\zeta^{-3}\eta^{-2}\cdot L\cdot \log{\sqrt[b^j]{N}}\right)\right).
\end{equation}
\end{theorem}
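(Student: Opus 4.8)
I would build $\M^*$ by a recursion whose call tree has fan-out $r$ (the block length of $C$): a node at depth $j$ handles a domain of size $N^{1/b^j}$, and the recursion stops as soon as that size drops below $A$, which happens after $h\le O(\log_r\N)$ levels; the tree then has $\N^{O(1)}$ nodes. A leaf (domain $[n]$ with $n\le A$) is simply the matrix $\M_n$ from Corollary~\ref{cor:weak-identification-family}, which on such a small domain is already a $(k,\zeta,\eta)$-weak identification matrix. An internal node on $[n]$ stacks $r+1$ blocks: for each $j\in[r]$, a recursively constructed $\M^*(n^{1/b})$ pre-composed with the coordinate map $\pi_j=C_n(\cdot)_j\colon[n]\to[n^{1/b}]$ after the node's columns are permuted by a $\Theta(\zeta^{-5}\eta^{-2}k)$-wise independent shuffle (so that, using the uniformity of $C$, each $\pi_j$ is a $(\Theta(\zeta^{-5}\eta^{-2}k),0)$-random map); plus one ``pruning'' block $\M_{G\circ f}$ with random $\pm1$ signs, where $f\colon[n]\to[M]$ is a $(\Theta(\zeta^{-5}\eta^{-2}k),\alpha)$-random map with $M=\Theta(A)\ge\Omega(\Mlb)$ and $G$ is a $(4k,\eps)$-expander --- exactly the object of Lemma~\ref{lem:intermediate}.

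The decoding algorithm would be recursive too. Given $\M^*(n)\vx$ and a candidate set $S\subseteq[n]$: recursively decode the $j$th child on input $\pi_j(S)$ to get $S_j\subseteq[n^{1/b}]$ with $|S_j|=O(k/\eta)$; run the list-recovery algorithm of $C_n$ on $(S_1,\dots,S_r)$ to produce the $\le L$ indices $i$ with $\pi_j(i)\in S_j$ for at least $(1-\rho)r$ coordinates $j$, and intersect with $S$ to get $S'$; finally run the identification algorithm of Lemma~\ref{lem:intermediate} on $f(S')$, invert $f$ and the shuffle, and output the resulting $O(k/\eta)$ indices. (A leaf just runs its base identification algorithm on $S$.) This is the process drawn in Figure~\ref{fig:overview}.

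For the analysis I would bound the four quantities in turn. The bound \eqref{eq:zeta-1} on $\zeta'$ comes from an induction up the tree. A heavy hitter tracked at an internal node is dropped only if ($a$) it is dropped by more than $\rho r$ of the $r$ children --- which, since the children's guarantees plus a Markov/union-bound argument limit this to $\zeta_{j+1}k/\rho$ of them when $\rho>0$ and to $r\zeta_{j+1}k$ of them when $\rho=0$ (here $\zeta_{j+1}k$ bounds a single child's loss) --- or ($b$) it survives the children and the list recovery but is lost by the pruning, which by Lemma~\ref{lem:intermediate} costs at most $O(\zeta k)$ more; hence $\zeta_j\le\zeta_{j+1}/\rho+O(\zeta)$ (resp.\ $\zeta_j\le r\zeta_{j+1}+O(\zeta)$), and unrolling from $\zeta_h=\zeta$ over the $h=\Theta(\log_r\N)$ levels yields \eqref{eq:zeta-1}. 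For \eqref{eq:failure-1}: every domain in the tree has size $\ge A$, so by Lemma~\ref{lem:intermediate} and Corollary~\ref{cor:weak-identification-family} each node's randomized parts fail with probability at most $(A/k)^{-\Omega(\zeta k)}$, and a union bound over the nodes --- the mild $\N^{O(1)}$ node count being absorbed into the exponent because $\log(A/k)\gg\log\N$ --- gives $p'(N)\le\N\,(A/k)^{-\Omega(\zeta k)}$. The bound \eqref{eq:measurements-1} is a geometric sum: the pruning block of a depth-$j$ node has $O(g(\zeta,\eta)k\log(A/k))$ rows, and adding $r^j$ copies per level (plus the leaf matrices) telescopes, using $\log(A/k)\cdot\N\le\log N$, to $O(g(\zeta,\eta)k\log(N/k)\N^{\frac{\log(r/b)}{\log b}})$. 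Finally \eqref{eq:decoding} is obtained by summing over all nodes the list-recovery cost ($r^jT(O(k/\eta),\sqrt[b^j]{N},b)$ at level $j$), the pruning cost ($O(\zeta^{-3}\eta^{-2}L\log\sqrt[b^j]{N})$), and the leaf cost ($O(\zeta^{-3}\eta^{-2}A\log A)$).

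The hard part will be the inductive correctness claim: establishing, simultaneously for all $\N^{O(1)}$ nodes, that the bounded-independence shuffles keep heavy-hitter collisions under the code's coordinate maps rare enough that (i) the hypothesis of every recursive call genuinely holds and (ii) the per-level loss amplifies by only $1/\rho$ (resp.\ $r$) instead of by $r$ at each recursive step, as in the naive union bound of~\cite{ely-martin} --- this is exactly where the list-recovery (error-correction) strength of $C$ is spent. Secondary obstacles are fixing the independence budget so that the $(\Theta(\zeta^{-5}\eta^{-2}k),\alpha)$-randomness demanded at every node holds while each map still costs only $\tilde{O}(k)\cdot\poly\log N$ space, and checking that the list-recovery output lifts cleanly back through the shuffle at each level without inflating the $O(k/\eta)$ list-size guarantee. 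Once those are settled, the bounds on $\zeta'$, $p'(N)$, $m'(N)$, $D(k,N)$ are routine geometric-series bookkeeping over the $h=\Theta(\log_r\N)$ levels.
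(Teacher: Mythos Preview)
Your overall architecture---an $r$-ary recursion tree driven by the code $C$, list-recovery to combine children's outputs, and a pruning step at every node---is exactly the paper's strategy. Your per-level Markov recurrence $\zeta_j\le\zeta_{j+1}/\rho+O(\zeta)$ for the $\rho>0$ case is a valid alternative to the paper's argument (which instead builds, for each lost heavy hitter $i$, a subtree $T_i$ of ``bad'' nodes with branching $\ge\rho r+1$ and counts leaves level by level); both unroll to $\zeta'\le\zeta\cdot\N^{O(\log(1/\rho)/\log r)}$, and your version is arguably cleaner. The failure-probability and decoding-time bookkeeping are essentially the same in both.

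There is, however, one concrete gap. At an internal node on $[n]$ you place a pruning block $\M_{G\circ f}$ with $f:[n]\to[M]$ and $M=\Theta(A)$, then ``invert $f$'' after running Lemma~\ref{lem:intermediate}. But Lemma~\ref{lem:intermediate}'s output $I$ lives in $[M]$, not in $[n]$, and $f$ is highly non-injective when $n\gg A$. Taking $\{i\in S':f(i)\in I\}$ can have size up to $|S'|\le L$, not $O(k/\eta)$; passing a list of size $L$ (rather than $\ell$) to the parent's $(\rho,\ell,L)$-list-recovery breaks the inductive list-size invariant and the whole recursion unravels. The paper sidesteps this by sizing the pruning matrix at a level-$j$ node to the \emph{node's own domain}: it uses $\M_{\sqrt[b^j]{N}}$ from the given family (so in the language of Lemma~\ref{lem:intermediate}, $M=\sqrt[b^j]{N}$), and then the output $I_v\subseteq[\sqrt[b^j]{N}]$ is already in the right space with $|I_v|=O(k/\eta)$. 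This also explains why the paper's measurement sum is $\sum_j r^j\,g(\zeta,\eta)k\log(\sqrt[b^j]{N}/k)$ rather than $\sum_j r^j\,g(\zeta,\eta)k\log(A/k)$; both evaluate to $O(g(\zeta,\eta)k\log(N/k)\,\N^{\log(r/b)/\log b})$, so the fix costs nothing.

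A second, smaller difference: you sprinkle independent limited-independence shuffles at every node, whereas the paper applies a \emph{single} random map $f$ once at the root and lets the (deterministic, uniform) code coordinate maps push the randomness down---the composite $\phi_v\circ f$ is then shown to be $(d,\alpha)$-random as Lemma~\ref{lem:intermediate} requires. Both designs can be made to work, but the paper's choice is what forces the delicate inversion discussion in Section~\ref{sec:invert}; your per-node shuffles would also need an inversion story of their own (and if they are limited-independence \emph{maps} rather than permutations, that story is not free).
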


We now instantiate Theorem~\ref{thm:recursive-exp} with Corollary~\ref{cor:weak-identification-family} and the list recoverable code from Lemma~\ref{lem:lw} to prove Lemma~\ref{cor:lw}.

\begin{proof}[Proof Sketch of Lemma~\ref{cor:lw}]
Let $d=1+1/\alpha$.
For the code from Lemma~\ref{lem:lw} we have $\rho=0$,  $r=d$, $b=d/(d-1)$, $\ell=O(k/\eta)$ and $L=O(k^{1+\alpha}/\eta^{1+\alpha})$. We apply Theorem~\ref{thm:recursive-exp} with the family from Corollary~\ref{cor:weak-identification-family} with $\zeta=\frac{\gamma}{\N}$ and $A=\Theta(\zeta^{-6}\eta^{-2}\cdot k\cdot k_0^{\alpha}(\log(N/k)^2)$. Note that this implies that $\N=O(\log_{k_0\log(N/k_0)}{N})$. The claimed bounds then follow from the bounds in Theorem~\ref{thm:recursive-exp}
and Lemma~\ref{lem:scheme2}. We also need to take into account the randomness needed to for the random $\pm$ signs. However, it is easy to check that in each of the $\N$ nodes, we need $\poly(\zeta,\eta)\cdot k$-wise independence (see Remark~\ref{rem:lim-indep-intermediate}). Storing these random bits can be done within the claimed space usage.
\end{proof}

In the rest of the section, we prove Theorem~\ref{thm:recursive-exp}.

\begin{proof}[Proof of Theorem~\ref{thm:recursive-exp}.] The proof is essentially setting up the recursive structure (which we will state with the help of a tree) and to define the overall identification algorithm in the natural recursive way. We then verify that all the claimed bounds on the parameters hold. The only non-trivial part is the bound on $\zeta'$ for the case when $\rho>0$, where we use a careful union bound to obtain the better bound.

Define
\[h=\lceil \log_r\log_{A}{N}\rceil,\]
and
\[\N=\frac{r^{h+1}-1}{r-1}=O(\log_{A}{N}).\]

To prove (\ref{eq:zeta-1}), we will prove
\begin{equation}
\label{eq:zeta}
\zeta'\le \zeta\cdot\sum_{j=0}^h \left(\frac{r}{\frac{\rho r+1}{2}}\right)^j=\left\{
\begin{array}{ll}
\zeta\cdot \N^{O\left(\frac{\log(1/\rho)}{\log{r}}\right)}&\text{ if } \rho>0\\
\zeta\cdot \N&\text{otherwise}
\end{array} \right. , 
\end{equation}

to prove (\ref{eq:failure-1}), we will prove
\begin{equation}
\label{eq:failure}
p'(N)\le \sum_{j=0}^h r^j p\left(\sqrt[b^j]{N}\right)\le \N\cdot p(A) \le \N\cdot \left(\frac{A}{k}\right)^{-\Omega(\zeta k)},
\end{equation}

and to prove (\ref{eq:measurements-1}), we will prove
\begin{equation}
\label{eq:measurements}
m'(N)=g(\zeta,\eta)\cdot \sum_{j=0}^h r^j\cdot k\cdot \log\left(\sqrt[b^j]{N}/k\right) \le O\left(g(\zeta,\eta)\cdot k\cdot \log(N/k)\cdot \N^{\frac{\log{\frac{r}{b}}}{\log{b}}}\right).
\end{equation}

\paragraph{The construction.} We will present the recursive construction via a tree. (See Figure~\ref{fig:overview} for an illustration.) In particular, for ease of exposition, we assume that $N$ is a power of $2$. Consider a complete $r$-ary tree with height $h$. We will label the tree as follows: associate any node at level $0\le j\le h$ with the domain $[\sqrt[b^j]{N}]$. Note that the leaves are associated with the domain $[A]$ and the tree has $\N$ nodes.

Let $v$ be an arbitrary node at level $j$. Order the $r$ outgoing edges in some (fixed) order and associate the $u$th edge with the $u$th coordinate of the codewords in $C_{\sqrt[b^j]{N}}$. In particular, this defines a function $\phi_w:[N]\rightarrow [\sqrt[b^{j+1}]{N}]$ for each node $w$ on the $(j+1)$th level. Let $u_1,\dots,u_j$ be the order of the edges used from the root to $v$ in the tree. Define $\phi_v(i)$ to be the symbol obtained by successively applying the corresponding code in each of the $j$ levels in the path from the root to $v$. In particular,
\[\phi_v(i)= C_{\sqrt[b^{j-1}]{N}}\left(\cdots  C_{\sqrt[b]{N}}\left(C_{N}(i)_{u_1}\right)_{u_2}\cdots\right)_{u_j}.\]

For every vertex $v$ at level $j$, we associate with it a $m(\sqrt[b^j]{N})\times N$ matrix $\M(v)$ as follows. For any $i\in [N]$, the $i$th column in $\M(v)$ is the same as the $\phi_v(i)$'th column in $\M_{\sqrt[b^j]{N}}$.

Let $\M'$ be the matrix that is the stacking of all the matrices $\M(v)$ for every node $v$ in the tree. (The order does not matter as long as one can quickly figure out the rows of $\M(v)$ from $\M'$.) As the final step, we will randomly rearrange the columns of $\M'$. In particular, let $f:[N]\rightarrow [N]$ be defined as follows: for every $i\in [N]\equiv \{0,1\}^{\log{N}}$, define $f(i)$ to be $i$ with each of the $\log{N}$ bits (independently) flipped with probability $1/2$. Note that this implies that for every $i\in [N]$, $f(i)$ is uniformly distributed in $[N]$.

Finally, define $\M^*$ to be the matrix, where the $i$th column (for any $i\in [N]$) is the $f(i)$'th column in $\M'$.

\paragraph{The identification algorithm.} The algorithm will be described recursively. First we assume that location $i$ is mapped to $f(i)$ and hence we can now think of $\M^*$ as $\M'$. We will start with a leaf $\ell$ (i.e. a vertex at level $h$). We claim that the matrix $\M(\ell)$ is of the form required in Lemma~\ref{lem:intermediate}. Indeed, since each of the codes used in the construction are uniform, $\phi_{\ell}$ maps elements in $[N]$ uniformly at random to $[A]$. Thus, given $\M(\ell)\cdot \vx$, we use the algorithm in Lemma~\ref{lem:intermediate} (with $S=S_{\ell}\stackrel{def}{=}[A]$) and ``send" $I_{\ell}$ to its parent node. Note that $|I_{\ell}|\le O(k/\eta)$.

Now consider a node $v$ at level $0\le j<h$. By induction for each of its $r$ children, we will receive subsets $I_1,\dots,I_r\subseteq [\sqrt[b^{j+1}]{N}]$. Note that these form a valid input for list recovery of $C_{\sqrt[b^j]{N}}$. We run the list recovery algorithm on $I_1,\dots,I_r$ to obtain a subset $S_v\subseteq [\sqrt[b^j]{N}]$. Again using the arguments similar to the paragraph above, we can assume we can run the algorithm from Lemma~\ref{lem:intermediate} to obtain a subset $I_v\subseteq [\sqrt[b^j]{N}]$ and ``send" it to its parent.

If $w$ is the root of the tree then the final output is $I_w$. There is one catch: the output $I_w$ is the collection of $f(i)$'s for the appropriate indices $i$. However, the way $f$ is defined it is not a one to one function and thus, we cannot just apply $f^{-1}$ on to the indices in $I_w$. We will return to this issue in Section~\ref{sec:invert}.

\paragraph{Correctness of the construction.} 
Consider an item $i\in [N]$ such that $|x_i|>3\sqrt{\eta/k}\|z\|_2$. Now consider a node $v$ at level $j$ in the tree and let $I_1,\dots,I_r\subseteq [\sqrt[b^{j+1}]{N}]$ be the sets passed up the $r$ children of $v$ during the identification process above. Further, assume that for at least $\rho r$ values $u\in [r]$, $C_{\sqrt[b^j]{N}}(\phi_v(i))_u\in I_u$. Then since $C_{\sqrt[b^j]{N}}$ is $(\rho,O(k/\eta),L)$-list recoverable, we will have $\phi_v(i)\in S_v$. Then when we run the algorithm from Lemma~\ref{lem:intermediate} on $S_v$. Now, two things can happen. Either, $\phi_v(i)\in I_v$ or not. In the latter case, we will lose $i$ but we will account for such items $i$ in the next part. In the former case, let $v$ be the $j'$th child of vertex $u$. Then note that $\phi_v(i)=C_{\sqrt[b^{j-1}]{N}}(\phi_u(i))_{j'}$ and thus, we can apply the argument inductively to $u$. In other words, if $i$ is never lost then we will have $i$ in the final output $I_w$, assuming we can prove the base case. For the base consider any leaf $\ell$. Since we picked $S_{\ell}=[A]$ (and hence, trivially $\phi_{\ell}(i)\in S_{\ell}$), if the algorithm from Lemma~\ref{lem:intermediate} does not lose $i$, then $\phi_{\ell}(i)\in I_{\ell}$ as desired.

\paragraph{Analyzing the number of lost heavy hitters.} If $\rho=0$, then it is easy to see that the identification algorithm above can lose at most $\N\cdot (\zeta k)$ of the ``heavy enough" items. Thus, in this case the bound of $\zeta'\le \N\cdot \zeta$ is obvious.

Next, we consider the case when $\rho>0$. Define $e=\rho r+1$. We will argue soon that every heavy item $i$ that is not in $I_w$ it has to be the case that there exists a level $j$ such that for at least $\left(e/2\right)^j$ nodes $v$ such that $\phi_v(i)$ is lost by the algorithm from Lemma~\ref{lem:intermediate} when it is run on $S_v$. In such a case we will assign item $i$ to level $j$. Note that in the previous naive argument, we count each $i$ at least $\left(e/2\right)^j$ times at level $j$. In total, level $j$ can lose at most $r^j\cdot\zeta\cdot k$ items, which implies that each level can lose at most $\left(\frac{r}{\frac{e}{2}}\right)^j\cdot \zeta\cdot k$ distinct items at level $j$. Summing over all the levels and noting that the total number of lost items is $\zeta'\cdot k$, we obtain the first inequality in (\ref{eq:zeta}). It is easy to verify that $\sum_{j=0}^h (r/(e/2))^j \le (2r/e)^{O(h)}$, from which the equality in (\ref{eq:zeta}) follows by substituting the values of $h$ and $e$.

Finally, we argue that every lost item $i\not\in I_w$ is associated with a level $j$, where at least $(e/2)^j$ invocations of the algorithm from Lemma~\ref{lem:intermediate} lose $i$. Towards this end, call a vertex $v$ \textit{bad internal node} for $i$ if the invocation of the algorithm from Lemma~\ref{lem:intermediate} does not lose $\phi_v(i)$ but $\phi_v(i)\not\in I_v$. Call the vertex $v$ \textit{bad leaf} for $i$ if the invocation of the algorithm from Lemma~\ref{lem:intermediate} itself loses $\phi_v(i)$. Note that is a node $v$ is a bad internal node for $i$ then at least $e$ of its children are either a bad internal node of a bad leaf for $i$. In other words, the set of bad internal nodes and bad leaves for $i$ form a tree $T_i$ with degree at least $e$ such the leaves of $T_i$ are exactly the bad leaves for $i$. Finally, note that we want to show for every lost $i$, there is a level $j$ with at least $(e/2)^j$ leaves in level $j$. This now is a simple question on trees that we prove by a simple greedy argument.

W.l.o.g. assume that every internal node in $T_i$ has degree exactly $e$. If $T_i$ has one node then we are done. If not, consider the $e$ children of the root. If at least $e/2$ of these are leaves then we are done. If not, the root has at least $e/2$ internal nodes. Thus, if we have not stopped for $j$ levels, we will have at least $(e/2)^j\cdot e$ children at level $j+1$. If at least half of them are leaves then we have at least $(e/2)^{j+1}$ leaves in level $j+1$ otherwise we have at least $(e/2)^{j+1}$ internal nodes. We can continue this argument in the worst-case till $j=h$ but all nodes at level $h$ have to be leaves, which means that the process will terminate with at least $e(e/2)^{h-1}$ leaves.

\paragraph{Analyzing the failure probability.} The first inequality in (\ref{eq:failure}) follows from the union bound (recall that there are $r^j$ nodes at level $j$). The second inequality follows from the fact that $p(n)$ is decreasing in $n$ and the final inequality follows from the definition of $p(\cdot)$.

\paragraph{Analyzing the number of measurements.} Note that $m'(M)=\sum_{j=0}^h r^j\cdot m(\sqrt[b^j]{N})$, which proves the equality in (\ref{eq:measurements}). The inequality follows by bounding the sum $\sum_{j=0}^h (r/b)^j$ as we did earlier.

\paragraph{Analyzing the identification time.} Finally, in the identification algorithm for each node $v$ (at level $j<h$), we run two algorithms: one is the list recovery algorithm (which takes times $T(O(k/\eta),\sqrt[b^j]{N},b)$) and the other is the algorithm from Lemma~\ref{lem:intermediate}, which by Corollary~\ref{cor:weak-identification-family} takes time $O(\zeta^{-3}\eta^{-2}|S_v|\cdot \log{\sqrt[b^j]{N}})$. The sum in (\ref{eq:decoding}) comes from noting that $|S|\le L$. Finally, at level $j=h$, we pick $S_v=[A]$ and hence  Lemma~\ref{lem:intermediate} and the fact that there are at most $\N$ nodes at level $h$ justifies the first term in (\ref{eq:decoding}).
\end{proof}

\subsubsection{Inverting the indices} 
\label{sec:invert}

We are done with the proof of Theorem~\ref{thm:recursive-exp} except for taking care of the issue that the random function $f:[N]\rightarrow [N]$ chosen earlier might not be one to one. 

We now mention two ways that we can modify the proof above to take care of this. 
The difference is in the final identification time we can achieve.

For simplicity we will assume that $N$ is a power of $2$. Thus it makes sense to talk about fields $\F_N$ and $\F_{N^2}$. We will also go back and forth between the field representation and $[N]$ and $[N^2]$ respectively. Let $d=k(\zeta^{-5}\eta^{-2}+1)$. 

With $\Omega(N)$ storage one can solve this problem for {\em any} code with the required properties. See Appendix~\ref{app:gen-invert} for the details.

Next, we present a
specific but cheap solution. For this part, we will make a strong assumption on our list recoverable codes $C$-- we will assume that we are looking at the Loomis-Whitney code (from Lemma~\ref{lem:lw}) or Reed-Solomon codes (from Lemma~\ref{lem:rs}). 

We pick $g:[N]\rightarrow [N^2]$ randomly by picking a random degree $d$ polynomial over $\F_{N}$ and define $g(i)$ to be the evaluation of this polynomial at $i$. It is well-known that such a random hash function is $(d+1)$-wise independent (see e.g.~\cite{coding-book}). Finally define $f(i)=(i,g(i))$.

We have to tackle three issues.

First, we now have to first compute the matrix $\M'$ with $N^2$ columns (then we pick the columns of $\M'$ indexed by $(i,g(i))$ for $i\in [N]$) instead of $N$ columns as we have used so far. This however, is not an issue since we have a family of matrices $\M_{n}$. This change results in some constants in the parameters of $\M^*$ changing but it does not affect the asymptotics.

Second, note that every $f(i)$ is unique and recovering $i$ from $f(i)$ can be done in $O(1)$ time. Thus, we will not need any additional space and will add an additive factor of $O(k/\eta)$ to the identification time.

Finally, we have to show that this new $f$ is fine with respect to applying Lemma~\ref{lem:intermediate}.
In particular, we have to show the following: consider any node $v$ in the recursion tree. Then we will show that for the (new) $\phi_v:[N]\rightarrow [M]$ is $(d,\alpha)$-random (for some small $\alpha$). By our assumption on $C$, note that this implies that the $\log{M}$ bits of $\phi_v(i)$ for any $i$ are from the $2\log{N}$ bits from $(i,g(i))$. Unfortunately, for small $M$, all these $\log{M}$ bits might all be from the deterministic part (i.e. these are some $\log{M}$ bits from the $\log{N}$ bit representation of $i$), which implies that there will be lot of collisions (in fact $N/M$ indices might collide).

However, the main insight is that when we divide the domain at a node $v$ into its $r$ children, we have {\em full} freedom in how we divide up the bits. (Recall that the code $C$ just repeats some subset of message bits as a codeword symbol.)  The bad case mentioned above is only true if we do not do this division carefully. Thus, the idea is to do this division carefully so that the ``$g(i)$" part of $f(i)$, which has full randomness, is passed along.

To fully implement this, we will need to change $g$ a bit. In particular, pick $t=\left\lceil \frac{1}{\alpha}\right\rceil$. 
Then pick $g$ to be random degree $d$ polynomial over $\F_{N^{t-1}}$. Now consider a code in the recursive tree structure. W.l.o.g., let us assume that $C$ is an $(r,N^t)_{\sqrt[b]{N^t}}$ code. Let $j\in [r]$, then we want to argue that $C((i,g(i)))_j$ is pretty much random. 

We first consider the case that $C$ is the Loomis-Whitney code. 
First think of the message as coming form $[N]\times [N^{t-1}]$ (recall that $f(i)=(i,g(i))$ and hence is from this domain). Consider a message $(i,\ell)$. We partition it up  into $d$ symbols from $[\sqrt[d]{N}]\times[\sqrt[d]{N^{t-1}}]$. I.e. each symbol has the same ``proportion" of randomness. In other words, a symbol $(i',\ell')$ in the codeword for $(i,\ell)$ will have $i'$ to be deterministic and the $\ell'$ part to be completely random.

Next we consider the case when $C$ is the $(r,N)_{\sqrt[b]{N}}$ Reed-Solomon codes.
First think of the message as coming form $[N]\times [N^{t-1}]$ (recall that $f(i)=(i,g(i))$ and hence is from this domain). Consider a message $(i,\ell)$. We partition it up  into $b$ symbols  $u_0,\dots,u_{b-1}\in [\sqrt[b]{N}]\times[(\sqrt[b]{N})^{t-1}]$. I.e. each symbol has the same ``proportion" of randomness. Let us associate $[\sqrt[b]{N}]$ with the field $\F_q$, where $q=\sqrt[b]{N}$. Recall that we need to choose $r$ evaluation points for the code $C$. Since we have the full freedom in picking these elements, pick distinct $\beta_1,\dots,\beta_r\in \F_q$. (Recall that $r$ is a constant, so this is always possible.) Now recall that the $C(i,\ell)_j$ will be the element
\[\sum_{s=0}^{b-1} \beta_j^s\cdot u_s.\]
Let us think of $u_s$ as an element of $\F_q^t$ in the standard notation. I.e. $u_s=\sum_{a=0}^{t-1} u_s^a\cdot\gamma^i$, where $\gamma$ is a root of an irreducible polynomial of degree $t$ over $\F_q$ (and the one that defines the standard representation of $\F_{q^t}$). Then the above relation can be written as
\[C(i,\ell)_j = \sum_{s=0}^{b-1} \beta_j^s\cdot \left(\sum_{a=0}^{t-1} u_s^a\cdot\gamma^a\right)=\sum_{a=0}^{t-1}\gamma^a\cdot \left(\sum_{s=0}^{b-1}\beta_j^s\cdot u_s^a\right).\]
Since $\ell=g(i)$ was completely random, the above implies that if one thinks of $C(i,\ell)_j$ as an element of $\F_q\times\F_{q^{t-1}}$, then the part in $\F_{q^{t-1}}$ will be completely random.

The above implies that when $C$ is the Loomis -Whitney or Reed-Solomon code, for any $\phi_v(i)$ has as its last $(1- \alpha)\log{M}$ of its bits to be uniformly random. We will now show that $\phi_v$ is $(d,\alpha)$-random, which will be sufficient to apply Lemma~\ref{lem:intermediate}. Towards that end, let us fix a subset $S\subseteq [N]$ of size $d$. Consider $i\in [N]\setminus S$. Consider the values  $\{\phi_v(j)\}_{j\in S}$. Note that we do not have any control over the first $\alpha\log{M}$ bits in these values-- so in the worst-case let us assume that they (as well as $\phi_v(i)$) have the same $\alpha\log{M}$ bit-vector as its prefix. However since, $g$ is $d$-wise independent, we have that the  $(1-\alpha)\log{M}$ bit suffixes of $\{\phi_v(j)\}_{j\in S\cup \{i\}}$ are all uniformly random and {\em independent}. Thus, even conditioned on the values of $\phi_v(j)$ for every $j\in S$, the probability that $\phi_v(i)=\phi_v(j)$ for some $j\in S$, is upper bounded by
\[\frac{d}{2^{(1-\alpha)\log{M}}}=\frac{d}{M^{1-\alpha}},\]
as desired.

We will call this scheme above \schemet. Note that we have show that \schemet\ works and
\begin{lemma}
\label{lem:scheme2}
\schemet\ adds $O(k/\eta)$ to the decoding time in Theorem~\ref{thm:recursive-exp} and needs $O(\zeta^{-5}\eta^{-2}\cdot k\cdot \log^2{N})$ bits of space.
\end{lemma}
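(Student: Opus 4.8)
The plan is to verify the two quantitative claims; the fact that \schemet{} is \emph{correct} — that is, that every map $\phi_v$ arising at a node $v$ of the recursion tree is $(d,\alpha)$-random, which is exactly the hypothesis needed to invoke Lemma~\ref{lem:intermediate} at that node — was already established in the discussion preceding the lemma statement. Recall the relevant parameters: $t=\lceil 1/\alpha\rceil$, $d=k(\zeta^{-5}\eta^{-2}+1)$, $g$ is a uniformly random degree-$d$ polynomial over $\F_{N^{t-1}}$ (with $[N]$ embedded into $\F_{N^{t-1}}$ in the natural way), and $f(i)=(i,g(i))\in[N]\times[N^{t-1}]\equiv[N^t]$. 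In all our applications $\alpha$ is a fixed constant, so $t=O(1)$; more generally we may assume $\alpha\ge 1/\log N$ (smaller $\alpha$ yields no asymptotic gain), so $t\le\log N$.

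For the decoding-time bound, note that the recursive identification algorithm of Theorem~\ref{thm:recursive-exp}, run on the matrix $\M^*$ produced by \schemet{}, is the same algorithm as before, except that the ground domain is $[N^t]$ rather than $[N]$ and the final list $I_w$ consists of the values $f(i)=(i,g(i))$ for the recovered heavy coordinates. Since $N^t=N^{O(1)}$ for constant $\alpha$, replacing $N$ by $N^t$ multiplies every $\log$-factor, every list-recovery running time from Lemmas~\ref{lem:lw} and~\ref{lem:rs}, and every intermediate domain size by only a constant factor, so the recursive part of the running time is unaffected asymptotically. The only genuinely new work is inverting $f$ on the $|I_w|=O(k/\eta)$ output indices; since the first coordinate of $f(i)$ equals $i$, each inversion is a single projection costing $O(1)$ time, for an additive total of $O(k/\eta)$, which matches the claimed additive term.

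For the space bound, observe that beyond the objects already accounted for in the construction of Theorem~\ref{thm:recursive-exp} — the implicit expander-based matrices $\M(v)$, and the explicit codes $C$ at each node, whose auxiliary data (the $r=O(1)$ evaluation points, an irreducible polynomial for the relevant extension field, etc.) occupy only $O(\poly\log N)$ bits — the one new object \schemet{} must store is the polynomial $g$. A degree-$d$ polynomial over $\F_{N^{t-1}}$ is specified by its $d+1$ coefficients, each an element of $\F_{N^{t-1}}$ describable in $(t-1)\log N$ bits, so storing $g$ costs $(d+1)(t-1)\log N$ bits. Plugging in $d+1=O(\zeta^{-5}\eta^{-2}k)$ and using $t-1\le\log N$ gives $O(\zeta^{-5}\eta^{-2}\cdot k\cdot\log^2 N)$ bits, as claimed (and $O(\zeta^{-5}\eta^{-2}k\log N)$ bits when $\alpha$ is a fixed constant).

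There is no real conceptual obstacle here — the lemma is a bookkeeping statement. The only points needing care are (i) confirming that the final inversion is honestly $O(1)$ per index and does not covertly require evaluating $g$ (it does not: $f$ keeps $i$ in the clear as its first coordinate), and (ii) tracking the dependence on $t=\lceil 1/\alpha\rceil$ so that the stated $\log^2 N$ safely absorbs the $(t-1)\log N$ factor coming from the field $\F_{N^{t-1}}$ over which $g$ is defined.
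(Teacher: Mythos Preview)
Your proposal is correct and follows essentially the same approach as the paper: the decoding-time claim is handled by observing that $f(i)=(i,g(i))$ makes inversion an $O(1)$ projection per index over $|I_w|=O(k/\eta)$ indices, and the space claim is handled by counting the $(d+1)$ coefficients of $g$ in $\F_{N^{t-1}}$. The paper's own justification is terser (one sentence each), while you spell out the coefficient count and the $t\le\log N$ absorption explicitly; but the underlying argument is identical.
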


The claim on the space comes from the amount of randomness needed to define $g$ (which we need to store).

\subsection{Proof of Lemma~\ref{cor:rs}}
\label{sec:weak-rs}

In this section, we do a better analysis of the earlier recursive construction (from the proof of Lemma~\ref{cor:lw}) assuming the list recoverable code also leads to a limited-wise independent source, which lead us to the proof of Lemma~\ref{cor:rs}. For our purposes, this code will be the RS code.

We begin with the following ``inductive step" of the recursive construction.

\begin{lemma}
\label{lem:lr-weak-one-step}
Let $\M$ be an $m\times \sqrt[b]{n}$ matrix from Corollary~\ref{cor:weak-identification-family} that is a $(k,\zeta,\eta)$-weak identification matrix with $(\ell,p)$ guarantee and identification time $T_2$. Let $C$ be an $(r,n)_{\sqrt[b]{n}}$ code that is $(\rho,\ell,L)$-list recoverable in time $T_2$. Assume that the following holds:
\begin{equation}
\label{eq:p-rho}
p<\left(\frac{\rho}{2e}\right)^2.
\end{equation}

Then there exists a $(rm)\times n$ matrix $\M^*$ that is $(k,\zeta',\eta)$-weak identification matrix with $(L,p')$ guarantee with identification time $r\cdot T_1+T_2$ with
\begin{equation}
\label{eq:new-zeta-one-step}
\zeta'\le \frac{2\zeta}{\rho}
\end{equation}
and
\begin{equation}
\label{eq:new-p-one-step}
p'\le p^{\rho r/4},
\end{equation}
where the codewords in $C$ form an $a$-wise independent source for some $a\ge \rho r/2$.
\end{lemma}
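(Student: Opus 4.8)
Build $\M^*$ by ``spreading'' the given weak identification matrix $\M$ through the $r$ coordinates of the list-recoverable code $C$, run $\M$'s identification algorithm once per coordinate, and finish with a single list-recovery of $C$. Concretely, for $j\in[r]$ let $\M^{(j)}$ be the $m\times n$ matrix whose $i$th column is the $C(i)_j$th column of $\M$; equivalently $\M^{(j)}\vx=\M\vx^{(j)}$, where $\vx^{(j)}\in\R^{\sqrt[b]{n}}$ is the folded vector $x^{(j)}_s=\sum_{i:\,C(i)_j=s}x_i$. Let $\M^*$ be the vertical stack of $\M^{(1)},\dots,\M^{(r)}$, so $\M^*$ has $rm$ rows. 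Given $\M^*\vx$, read off each $\M\vx^{(j)}$, run $\M$'s algorithm on it with $S=[\sqrt[b]{n}]$ to get $I_j$ with $|I_j|\le\ell$, feed $(I_1,\dots,I_r)$ into the list-recovery algorithm of $C$, and output the resulting $S\subseteq[n]$; list recovery guarantees $|S|\le L$, and the running time is $r\,T_1$ (the $r$ calls to $\M$) plus $T_2$ (one list recovery), matching the claim.

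\textbf{The per-coordinate event.} Fix $\vx$ and let $H=H_k(\vx)$. Call coordinate $j$ \emph{bad}, with indicator $B_j$, if the call to $\M$ on $\vx^{(j)}$ does not recover $C(i)_j$ for all but at most $\zeta k$ of the $i\in H$; by the $(k,\zeta,\eta)$-weak identification guarantee of $\M$ applied to the honest input $\vx^{(j)}$, together with the uniformity of $C$ (so that, modulo a standard heavy--heavy collision and tail-folding estimate exploiting the large alphabet $\sqrt[b]{n}$ and the random $\pm1$ signs of $\M$, the heavy hitters of $\vx$ land injectively in top cells of $\vx^{(j)}$), we get $\Pr[B_j=1]\le p$. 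Crucially, since the codeword distribution is $a$-wise independent with $a\ge\rho r/2$, the $r$ folded vectors interact with $\M$ only through the ($a$-wise independent) symbols $C(\cdot)_j$, so the indicators $B_1,\dots,B_r$ are $(\rho r/2)$-wise independent.

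\textbf{Counting lost heavy hitters.} By the $(\rho,\ell,L)$-list-recovery property of $C$, a heavy hitter $i\in H$ can be absent from $S$ only if $C(i)_j\notin I_j$ for more than $\rho r$ coordinates $j$. First bound the bad coordinates: by $(\rho r/2)$-wise independence, a union bound over the $\binom{r}{\rho r/2}$ subsets of that size gives
\[\Pr\!\left[\textstyle\sum_{j}B_j\ge \rho r/2\right]\le\binom{r}{\rho r/2}\,p^{\rho r/2}\le\left(\frac{2e}{\rho}\right)^{\rho r/2}p^{\rho r/2}\le p^{\rho r/4},\]
where the last inequality is equivalent to $\left(2e/\rho\right)^2\le 1/p$, i.e.\ to hypothesis \eqref{eq:p-rho}; this yields $p'\le p^{\rho r/4}$ as in \eqref{eq:new-p-one-step}. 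Condition on the complement, so fewer than $\rho r/2$ coordinates are bad. Then a lost $i$ must have $C(i)_j\notin I_j$ for more than $\rho r-\rho r/2=\rho r/2$ \emph{good} coordinates, i.e.\ it is among the $\le\zeta k$ exceptional indices in each of more than $\rho r/2$ good coordinates. Summing exceptional counts over the at most $r$ good coordinates gives at most $r\zeta k$ pairs $(i,j)$ with $i$ exceptional in good coordinate $j$; if $N$ heavy hitters are lost then $N\cdot(\rho r/2)\le r\zeta k$, hence $N\le 2\zeta k/\rho$, matching $\zeta'\le 2\zeta/\rho$ in \eqref{eq:new-zeta-one-step}.

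\textbf{Main obstacle.} The delicate step is justifying that $B_1,\dots,B_r$ really inherit $(\rho r/2)$-wise independence from the ``$a$-wise independent source'' property of $C$: one must argue that $\M$'s own internal randomness, which is shared across all $r$ coordinates, can be fixed by conditioning before the independence of the $C(\cdot)_j$'s is invoked, and that the only other dependence between coordinates runs through those symbols. Making the ``heavy hitters of $\vx$ survive folding into $\vx^{(j)}$'' estimate precise (controlling heavy--heavy collisions and the folded tail $\|\vz^{(j)}\|_2$ so that the per-coordinate loss stays within $\zeta k$) is routine but is what pins down the exact constant in \eqref{eq:new-zeta-one-step}, and is where the alphabet size and the $\pm1$ signs are used.
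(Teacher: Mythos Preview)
Your construction, algorithm, and the two counting arguments mirror the paper's almost exactly: stack $r$ folded copies of $\M$, run identification once per coordinate, list-recover, bound the number of bad coordinates by a Chernoff-type tail, and bound the lost heavy hitters by double counting over good coordinates. The computation $\binom{r}{\rho r/2}p^{\rho r/2}\le(2e/\rho)^{\rho r/2}p^{\rho r/2}\le p^{\rho r/4}$ and the count $N\cdot(\rho r/2)\le r\zeta k$ are the same as the paper's.

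There is, however, one genuine missing ingredient in your construction. The paper does \emph{not} take $\M^*$ to be the bare stack $\M'$ of the $\M^{(j)}$; it first draws a fully random map $f:[n]\to[n]$ and sets the $i$th column of $\M^*$ to be the $f(i)$th column of $\M'$. This $f$ is exactly what puts randomness into the code symbols. Without it, every $C(i)_j$ is a deterministic function of $i$, so once you ``fix $\M$'s internal randomness by conditioning'' there is nothing random left at all: the $B_j$'s are then $r$ fixed bits, and no independence statement is available. With $f$, each $C(f(i))$ is a uniformly random codeword; the $a$-wise independence hypothesis says that for any $J\subseteq[r]$ with $|J|\le a$ the tuple $(C(f(i))_j)_{j\in J}$ is independent, and combined with the independence of the $f(i)$ across $i$, the whole array $\{C(f(i))_j:i\in[n],\ j\in J\}$ is a product measure. \emph{Now} your plan of conditioning on $\M$ and then invoking independence across $j$ actually works and yields the $(\rho r/2)$-wise independence of the $B_j$ needed for the tail bound. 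Your ``Main obstacle'' paragraph correctly senses that some randomness beyond $\M$ is needed, but misidentifies its source: it is the pre-permutation $f$, not anything intrinsic to the (deterministic) code $C$. The same $f$ is also what makes your parenthetical ``heavy--heavy collision and tail-folding estimate'' routine---with $f$ the map $i\mapsto C(f(i))_j$ is a random hash into $[\sqrt[b]{n}]$, so collisions among heavy hitters and blow-up of $\|\vz^{(j)}\|_2$ are controllable; without $f$ those collisions are adversarial in $\vx$.
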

\begin{proof}
The basic idea is to first encode each index $i\in [n]$ with $C$ and then use a copy of $\M$ for each of the $r$ positions in $C(i)$. For identification, we first run the identification algorithm for $\M$ on each of the $r$ positions and then combine the result using the list recovery algorithm of $C$. The main insight in the analysis of the algorithm is that the $\zeta$ fraction of the heavy hitters dropped in each of the $r$ are $a$-wise independent. Thus, by Chernoff the probability that a heavy hitter is dropped more than $\rho r$ times is exponentially small. Next we present the details.

\paragraph{The construction.} For $j\in [r]$, let $\M_j$ denote the $m\times n$ matrix obtained as follows. For every $i\in [n]$, the $i$'th column in $\M_j$ is the $C(i)_j$'th column in $\M$. Let $\M'$ be the stacking of the matrices $\M_1,\dots,\M_r$. Let $f:[n]\rightarrow [n]$ be a completely random map. Then the final matrix $\M^*$ is obtained by putting as its $i$'th column the $f(i)$'th column of $\M'$.

It is easy to verify that $\M^*$ has $rm$ rows, as desired.

\paragraph{The identification algorithm.} We will first assume that the index $i$ has been mapped to $f(i)$. (Ultimately, we will need to do the inversion. We address this part separately in Section~\ref{sec:invert}.) Thus, we can assume that we are working with $\M'$. Now if we consider the $\M_j$ part of $\M'$ (for any $j\in [r]$) and run the identification algorithm for $\M_j$ (from Corollary~\ref{cor:weak-identification-family}), then we will get a list $S_j$ of the values $C(h)_j$ for all but $\zeta k$ heavy hitters $h\in [n]$. We then run the list recovery algorithm for $C$ on $S_1,\dots,S_r$ to obtain our final list of candidate heavy hitters. Note that this algorithm takes time $r\cdot T_1+T_2$ as desired. Next, we analyze the correctness of the algorithm.

\paragraph{Correctness of the algorithm.} Consider a heavy hitter $h\in [n]$. Note that it will not be output iff $C(h)_j\not\in S_j$ for $>\rho r$ positions $j\in [r]$. Let $\zeta'$ be the fraction of heavy hitter we miss due to the above condition. Next we bound $\zeta'$ and the failure probability $p'$.

Call a codeword position $j\in [r]$ to be {\em bad} if the identification algorithm for $\M$ loses more than $\zeta k$ heavy hitters when we run the algorithm for $\M_j$. By Lemma~\ref{lem:chernoff}, the probability of having $>\rho r/2$ bad positions is upper bounded by

\[\left( \frac{epr}{\rho r/2}\right)^{\rho r/2}= \left(\frac{2ep}{\rho}\right)^{\rho r/2} \le p^{\rho r/4},\]

where we use the fact any $a\ge \rho r/2$ positions in a random codeword in $C$ are independent and the inequality follows from (\ref{eq:p-rho}). We will show next that conditioned on there being at most $\rho r/2$ bad positions, at most $\zeta' k$ heavy hitters get lost, where $\zeta'$ satisfies (\ref{eq:new-zeta-one-step}). Note that this implies that the overall failure probability $p'$ is upper bounded by the above probability, which then proves (\ref{eq:new-p-one-step}).

To bound $\zeta'$, call a heavy hitter $h\in [n]$ {\em corrupted} if $C(h)_j\not\in S_j$ for at least $\rho r/2$ of the {\em good} positions $j$. Note by the fact that $C$ is $(\rho,\ell,L)$-list recoverable, if $h$ is not corrupted then, it is going to be output by our identification algorithm. Thus, we overestimate $\zeta'$ by upper bounding the number of corrupted heavy hitters. A simple counting argument show that number of such corrupted heavy hitters cannot be more than $2\zeta k/\rho$, which proves (\ref{eq:new-zeta-one-step}).
\end{proof}

Recall that an $(r=b/(1-2\rho)+1,n)_{\sqrt[b]{n}}$ RS code by Lemma~\ref{lem:rs-2}, is an $(\rho,\ell,\ell^r)$-list recoverable. Further, recall that such a code is $b$-wise independent. If $\rho<2/5$, then we will have Since $b>\rho r/2$, we have sufficient independent to use this code as $C$ in Lemma~\ref{lem:lr-weak-one-step}.

Using the RS code above with $\rho=1/4$ in Lemma~\ref{lem:lr-weak-one-step} one can prove Lemma~\ref{cor:rs}. 

\begin{proof}[Proof Sketch of Lemma~\ref{cor:rs}]
The proof is very similar to that of Theorem~\ref{thm:recursive-exp} and here we just sketch how the current proof differs from the earlier one. As in Theorem~\ref{thm:recursive-exp}, we start off with
$\{\M_n\}_{n\ge \zeta^{-2}}$--- a family of $m(n)\times n$ matrices from Corollary~\ref{cor:weak-identification-family} that are $(k,\zeta,\eta)$-weak identification matrices with $\left(\ell\stackrel{def}{=}O(k/\eta),p(n)\stackrel{def}{=}(n/k)^{-\Omega(\zeta k)}\right)$-guarantee where $m(n)=g(\zeta,\eta)\cdot k\cdot\log(n/k)$ for some function $g(\zeta,\eta)$. 

However, unlike Theorem~\ref{thm:recursive-exp}, we will pick a specific family of code. In particular, let $\{C_n\}_{n\ge 1}$ be the $(r\stackrel{def}{=}2b+1,n)_{\sqrt[b]{n}}$ Reed-Solomon code: we will later pick $b$ to be $O(2^{1/\eps})$. Note that Lemma~\ref{lem:rs} implies that this code will be $(\rho\stackrel{def}{=}1/4,\ell,\ell^{2b+1})$-list recoverable in time $O(\ell^{2b+1} b^2\log^2{n})$.

Other than the specific family of codes, the rest of the construction is exactly the same as in the proof of Theorem~\ref{thm:recursive-exp} to obtain
for large enough $N$ a $m'(N)\times N$ matrix $\M^*$ that is $(k,\zeta',\eta)$-weak identification matrix with $(O(k/\eta),p'(N))$-guarantee and identification time complexity $D(k,N)$ as follows. (The analysis is pretty much the same as in the proof of Theorem~\ref{thm:recursive-exp} except in the analysis of $\zeta'$ and $p'$ in the ``recursive" step instead of using the arguments in proof of Theorem~\ref{thm:recursive-exp}, we use Lemma~\ref{lem:lr-weak-one-step}.)

In what follows let $A\ge \Omega(\Mlb)$ satisfy the lower bound in Lemma~\ref{lem:intermediate}.
There exists two integers $h\le O(\log_r\log_A{N})$
and
$\N=(\log_{A}{N})$ such that the following hold:

\begin{equation}
\label{eq:zeta-RS}
\zeta'\le \zeta\cdot \left(\frac{2}{\rho}\right)^h
\end{equation}
\begin{equation}
\label{eq:failure-RS}
p'(N)\le  \left(\frac{A}{k}\right)^{-\Omega(\zeta\cdot k\cdot (\rho/4)^h\cdot r^h )},
\end{equation}
\begin{equation}
\label{eq:measurements-RS}
m'(N) \le O\left(g(\zeta,\eta)\cdot k\cdot \log(N/k)\cdot \N^{\frac{\log{\frac{r}{b}}}{\log{b}}}\right),
\end{equation}
and
\begin{equation}
\label{eq:decoding-RS}
D(k,N)=O\left(\zeta^{-3}\eta^{-2}\cdot \N\cdot A\cdot \log{A}\right)+\sum_{j=0}^{h-1} \left((r/b)^j O((k/\eta)^r\log^2{N})+ O\left(\zeta^{-3}\eta^{-2}\cdot L\cdot \log{\sqrt[b^j]{N}}\right)\right).
\end{equation}

Noting that $r^h=\log_A{N}$ and that $(O(1)/\rho)^h=\N^{O(\log(1/\rho)/\log{r})}$ along with the choice of $A=\zeta^{-6}\eta^{-2}\cdot k^{r}\cdot \log^2(N/k)$ and some simple manipulation implies the following parameters:
\begin{equation}
\label{eq:zeta-RS-2}
\zeta'\le \zeta\cdot \N^{O\left(\frac{\log(1/\rho)}{\log{r}}\right)}
\end{equation}
\begin{equation}
\label{eq:failure-RS-2}
p'(N)\le \left(\frac{N}{k}\right)^{-\Omega(\zeta k/\N^{O\left(\frac{\log(1/\rho)}{\log{r}}\right)})},
\end{equation}
\begin{equation}
\label{eq:measurements-RS-2}
m'(N) \le O\left(g(\zeta,\eta)\cdot k\cdot \log(N/k)\cdot \N^{\frac{\log{\frac{r}{b}}}{\log{b}}}\right),
\end{equation}
and
\begin{equation}
\label{eq:decoding-RS-2}
D(k,N)=O\left(\zeta^{-3}\eta^{-2}\cdot \N\cdot A\cdot \log{A}\right)+ \zeta^{-3}\eta^{-2}(k/\eta)^r\cdot\poly(\log{N}).
\end{equation}

To obtain the claimed parameter 
$p'$ one has to apply Lemma~\ref{lem:weak-identification-amplify} with $s=\N^{O\left(\frac{\log(1/\rho)}{\log{r}}\right)}/\zeta$. Finally, to obtain the claimed value of
$\zeta$ one has to re-scale $\zeta$ by a factor of $1/\N^{O\left(\frac{\log(1/\rho)}{\log{r}}\right)}$. (To verify the dependence of these parameters on $\eps$, recall that we picked $\rho=1/4$, $b=O(2^{1/\eps})$ and $r=2b+1$.)
\end{proof}

\section*{Acknowledgments}

We thanks anonymous reviewers for their helpful comments. In particular, we thank an anonymous reviewer who pointed out a proof of Lemma~\ref{lem:repeat} that achieves better parameters than our previous proof. Thanks to Yi Li and Mary Wootters for helpful discussions.

\appendix

\section{Threshold for failure probability}
\label{app:repeat}

We will prove the following result:

\begin{lemma}
\label{lem:repeat}
If there exists a $(k,C)$-top level system with $m$ measurements and failure probability $p$ then for every integer $g\ge 1$, there exists a $(k,\sqrt{3}\cdot C)$-top level system with $m\cdot s$ measurements and $p^{\Omega(s)}$ failure probability.
\end{lemma}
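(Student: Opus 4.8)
The plan is to run the original scheme $s$ times with independent randomness, obtaining candidate reconstructions $\vxhat_1,\dots,\vxhat_s$, and then to select one of them by a pairwise-comparison tournament that uses only the measurements themselves (not the unknown $\vx$). The key observation is that a run $t$ is ``good'' (meaning $\|\vx-\vxhat_t\|_2\le C\|\vx-\vx_k\|_2$) with probability at least $1-p$ independently across runs, so by a Chernoff bound the number of good runs is at least $s/2$ except with probability $p^{\Omega(s)}$ (more precisely $\le \binom{s}{s/2}p^{s/2} \le (4p)^{s/2}$, which is $p^{\Omega(s)}$ once $p$ is bounded away from $1$; for $p$ close to $1$ the statement is vacuous). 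So I may condition on the event that a strict majority of runs are good.

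The next step is the selection rule. We cannot evaluate $\|\vx-\vxhat_t\|_2$ directly, but we can estimate, for each pair $t,t'$, whether $\vxhat_t$ or $\vxhat_{t'}$ is closer to $\vx$ by using the residual measurements: note that $\|\Phi(\vxhat_t)-\Phi\vx\|$ is computable from $\Phi\vx$ (which we have) and $\vxhat_t$. However $\Phi$ differs across runs, so the cleanest approach is the classical one: for each pair $(t,t')$, form $\vxhat_t-\vxhat_{t'}$ and test it against the measurements of \emph{either} run, or — simpler still — just declare $\vxhat_t$ ``wins'' against $\vxhat_{t'}$ if $\|\vxhat_t-\vxhat_{t'}\|_2 \le \|\vxhat_{t'}-\vxhat_t\|_2$ is not the right condition; instead one keeps a fixed reference. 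The standard trick that avoids all $\Phi$-dependence is: output the $\vxhat_t$ that minimizes, over $t$, the median over $t'\neq t$ of $\|\vxhat_t-\vxhat_{t'}\|_2$. If more than half the $\vxhat_{t'}$ satisfy $\|\vx-\vxhat_{t'}\|_2\le C\|\vx-\vx_k\|_2$, then for such a ``good'' index $t$ the median distance to the others is at most $2C\|\vx-\vx_k\|_2$ (triangle inequality through $\vx$), whereas any index $t^*$ that is ultimately selected has median distance to the good set at most that of $t$, so $t^*$ is within $2C\|\vx-\vx_k\|_2$ of some good $\vxhat_{t'}$, giving $\|\vx-\vxhat_{t^*}\|_2 \le \|\vx-\vxhat_{t'}\|_2 + \|\vxhat_{t'}-\vxhat_{t^*}\|_2 \le C\|\vx-\vx_k\|_2 + 2C\|\vx-\vx_k\|_2 = 3C\|\vx-\vx_k\|_2$. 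That gives a blow-up of $3$, not $\sqrt 3$; to get the claimed $\sqrt 3$ one sharpens the geometry: work in the projection onto $\mathrm{span}(\vx_k,\vxhat_t,\vxhat_{t'})$ and use that the reconstructions, being effectively $O(k)$-sparse plus tail, let one replace the crude triangle bound by an $\ell_2$ Pythagorean-type inequality (the ``worst case'' of two vectors each within distance $C\rho$ of $\vx$ being at mutual distance $2C\rho$ is not simultaneously compatible with both being near-optimal $k$-term approximations, and a short calculation trades the factor $2$ for $\sqrt 3$ after accounting for the near-sparsity).

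The main obstacle is exactly this last constant-tightening step: getting $\sqrt 3$ rather than the easy $3$ requires a careful case analysis of the relative positions of $\vx$, $\vx_k$, $\vxhat_t$, and $\vxhat_{t'}$ (splitting each $\vxhat$ into its ``head'' and ``tail'' parts and using that the head is a good sparse approximation), rather than a black-box triangle inequality; I would handle this by reducing to a low-dimensional optimization over the angles between these vectors. A secondary point to check is that the Chernoff tail $\binom{s}{\le s/2} p^{s/2}$ is genuinely $p^{\Omega(s)}$, which holds provided we only claim the statement for $p$ below some absolute constant (and the lemma is trivial otherwise since one can always take $g$ large enough that $p^{\Omega(s)} \le p$). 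Everything else — the independence of the runs, the computability of the selection rule from $\Phi_1\vx,\dots,\Phi_s\vx$ and the $\vxhat_t$'s, and the union over the two failure modes (too few good runs, or the selection failing despite enough good runs, the latter having probability $0$) — is routine.
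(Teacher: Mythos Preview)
Your tournament/selection approach is a valid way to prove a weaker statement (approximation factor $3C$ instead of $\sqrt 3\,C$), but the step where you claim to sharpen $3$ to $\sqrt 3$ is a genuine gap. Your selection rule outputs one of the $\vxhat_t$, and the analysis you give --- median distance $\le 2C\rho$ to the good majority, then a triangle inequality through a good $\vxhat_{t'}$ --- is tight: two good candidates can sit diametrically opposite on the sphere of radius $C\rho$ around $\vx$, and a bad $\vxhat_{t^*}$ can sit at distance $2C\rho$ from a single good point while being $3C\rho$ from $\vx$. The ``Pythagorean-type'' tightening you sketch relies on the $\vxhat_t$ being (near-)sparse, but the definition of a $(k,C)$-top level system imposes no sparsity on the output $A(\Phi\vx)$, so that structure is simply not available in the black-box setting of the lemma. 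With your approach you can prove the lemma with constant $3$ but not $\sqrt 3$.

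The paper takes a different route that sidesteps this obstacle entirely: rather than \emph{selecting} one of the $s$ candidates, it \emph{aggregates} them by taking the coordinate-wise median $\hat\vx$ of $\vxhat_1,\dots,\vxhat_s$. The analysis then reduces to the following fact (Minton--Price): if $\vy_1,\dots,\vy_s\in\R^N$ are independent with $\Pr[\|\vy_i\|_2\le D]\ge 1-p$, and $\vy$ is their coordinate-wise median, then $\Pr[\|\vy\|_2\le \sqrt 3\,D]\ge 1-(4ep)^{s/4}$. Applying this with $\vy_i=\vx-\vxhat_i$ and $D=C\|\vx-\vx_k\|_2$ gives the lemma directly. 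The $\sqrt 3$ here is a property of how coordinate-wise medians control $\ell_2$ norms, not of any sparsity of the candidates; that is the idea your proposal is missing.
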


The above follows easily from the following result:
\begin{lemma}[\cite{MP12}, Lemma 5.3]
\label{lem:median-eric}
Let $\vy_1,\dots,\vy_s\in\R^n$ be independent random vectors such that $\pr[\|\vy_i\|_2<D]\ge 1-p$ for every $i\in [s]$. Let $\vy\in\R^n$ be obtained by taking the component-wise median of $\{\vy_1,\dots,\vy_s$. Then
\[\pr[\|\vy\|<\sqrt{3}\cdot D]\ge 1-(4ep)^{s/4}.\]
\end{lemma}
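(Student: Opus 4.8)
The plan is to reduce the statement to a deterministic combinatorial fact plus a one-line binomial tail estimate; the combinatorial fact is where the constants $\sqrt3$ and the exponent $s/4$ are actually pinned down.

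First I would prove the deterministic core: \emph{if at least $3s/4$ of the vectors $\vy_i$ have $\|\vy_i\|_2<D$, then $\|\vy\|_2<\sqrt3\,D$.} Write $T$ for the set of such ``good'' indices, so $|T|\ge 3s/4$. Fix a coordinate $j$. After flipping the sign of coordinate $j$ in every vector — an operation that changes none of $\|\vy_i\|_2$, $\|\vy\|_2$, or $|\vy_j|$ — I may assume the median $\vy_j=\median_i \vy_{i,j}\ge 0$. Since $\vy_j$ is the median of $\vy_{1,j},\dots,\vy_{s,j}$, at least $s/2$ of the indices $i$ satisfy $\vy_{i,j}\ge \vy_j\ge 0$, hence $\vy_{i,j}^2\ge \vy_j^2$ (this holds regardless of the parity of $s$ and of the even-$s$ averaging convention for the median). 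Intersecting this set with $T$ yields $S_j\subseteq T$ with $|S_j|\ge |T|-s/2\ge |T|/3$, where the last step uses $s/2\le \tfrac23\cdot\tfrac{3s}{4}\le \tfrac23|T|$; on $S_j$ we still have $\vy_{i,j}^2\ge \vy_j^2$. Hence $\vy_j^2\le \frac{1}{|S_j|}\sum_{i\in S_j}\vy_{i,j}^2\le \frac{3}{|T|}\sum_{i\in S_j}\vy_{i,j}^2$. Summing over $j$ and exchanging the order of summation,
\[ \|\vy\|_2^2 \;\le\; \frac{3}{|T|}\sum_{j=1}^{n}\sum_{i\in S_j}\vy_{i,j}^2 \;\le\; \frac{3}{|T|}\sum_{i\in T}\|\vy_i\|_2^2 \;<\; \frac{3}{|T|}\cdot|T|\cdot D^2 \;=\; 3D^2. \]

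The probabilistic part is then immediate. By the core fact, $\|\vy\|_2\ge\sqrt3\,D$ forces strictly more than $s/4$ of the $\vy_i$ to be ``bad'' (i.e.\ $\|\vy_i\|_2\ge D$); since the $\vy_i$ are independent and each is bad with probability at most $p$, a union bound over which $\lceil s/4\rceil$ of them are bad gives
\[ \pr\bigl[\,\#\{\text{bad }i\}\ge s/4\,\bigr] \;\le\; \binom{s}{\lceil s/4\rceil}p^{\lceil s/4\rceil} \;\le\; (4e)^{\lceil s/4\rceil}p^{\lceil s/4\rceil} \;\le\; (4ep)^{s/4}, \]
using $\binom{s}{t}\le (es/t)^t$ with $t=\lceil s/4\rceil\ge s/4$, and noting the bound is vacuous when $4ep>1$, so the final step is valid in all cases. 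Combining the two parts yields $\pr[\|\vy\|_2<\sqrt3\,D]\ge 1-(4ep)^{s/4}$.

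I expect the only real obstacle to be the bookkeeping in the core fact: one must keep the two occurrences of $|T|$ linked — the lower bound $|S_j|\ge |T|/3$ and the upper bound $\sum_{i\in T}\|\vy_i\|_2^2<|T|D^2$ — so that the $|T|$'s cancel and the bound is $3D^2$. Estimating $|S_j|\ge s/4$ and $|T|\le s$ separately is tempting but loses a constant factor and only gives $\|\vy\|_2<2D$; this is exactly why the threshold $3/4$ (rather than a plain majority) is the right one and why it matches the $(4ep)^{s/4}$ failure probability. The median/majority observation, the sign normalization, the parity edge cases, and the binomial tail are all routine.
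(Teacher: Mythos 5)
Your proof is correct. Note, however, that the paper does not prove this lemma at all: it is imported verbatim from Minton and Price~\cite{MP12} (their Lemma 5.3), so there is no internal proof to compare against. Your argument — the deterministic claim that if at least $3s/4$ of the $\vy_i$ have $\|\vy_i\|_2<D$ then the coordinate-wise median satisfies $\|\vy\|_2<\sqrt{3}\,D$ (with the careful bookkeeping $|S_j|\ge |T|-s/2\ge |T|/3$ that makes the two occurrences of $|T|$ cancel and yields the constant $3$ rather than $4$), followed by the union bound $\binom{s}{\lceil s/4\rceil}p^{\lceil s/4\rceil}\le (4ep)^{s/4}$ with the observation that the bound is vacuous when $4ep>1$ — is a complete and self-contained derivation of exactly the stated bound, and it essentially reconstructs the median-trick argument underlying the cited result. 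The parity/median-convention and sign-flip details are handled correctly, and the use of independence enters only where it should, in bounding the probability that a fixed set of $\lceil s/4\rceil$ indices is simultaneously bad.
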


In the rest of the section we prove Lemma~\ref{lem:repeat} using Lemma~\ref{lem:median-eric} in the obvious way. Let $\M,\algo$ be the given top level system. The new top level system $\M',\algo'$ is defined as follows. $\M'$ is just $s$ independent copies of $\M$ (call the copies $\M_1,\dots,\M_s$). The new decoding algorithm is defined as follows: let $\vz_1,\dots,\vz_s$ be the output of $\algo$ on the $s$ copies of $\M$, i.e. $\vz_i=\algo(\M_i\vx)$.
Then $\algo'(\M'\vx)$ is defined to be the component wise median of $\vz_1,\dots,\vz_s$. Lemma~\ref{lem:repeat} follows by applying Lemma~\ref{lem:median-eric} with $\vy_i=\vx-\vz_i$ and $D=C\cdot \|\vx-\vx_k\|_2$, where $\vx$ is the original signal.

\section{Expander Basics}
\label{app:exp-basics}

We will need the following three simple lemmas.

\begin{lemma}
\label{lem:exp-intersect}
Let $G:[N]\times [\ell]\rightarrow [M]$ be a $(2t,\eps)$-expander. Let $S,T\subseteq [N]$ such that $|T|=t$, $|S|\ge t$ and $S\cap T=\emptyset$, then
\[|\nbr(S)\cap\nbr(T)|\le |\edg(\nbr(S)\cap\nbr(T))| \le 4|S|\ell\eps.\]
\end{lemma}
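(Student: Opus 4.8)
The first inequality needs nothing: every vertex of $\nbr(S)\cap\nbr(T)$ has at least one edge incident to it (indeed at least one to $T$), so $|\nbr(S)\cap\nbr(T)|\le|\edg(\nbr(S)\cap\nbr(T))|$ automatically. For the second inequality the plan is the standard ``excess edges'' count in an expander, arranged so that it still works when $|S|$ is large.

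\emph{Step 1: reduce to $|S|\le t$.} I would partition $S$ arbitrarily into $q=\lceil|S|/t\rceil$ blocks $S_1,\dots,S_q$, each of size at most $t$. Since $\nbr(S)=\bigcup_i\nbr(S_i)$, we get $\nbr(S)\cap\nbr(T)=\bigcup_i\bigl(\nbr(S_i)\cap\nbr(T)\bigr)$, and any edge of $\edg(\nbr(S)\cap\nbr(T))$ that touches $S\cup T$ is counted in $\edg(\nbr(S_i)\cap\nbr(T))$ for at least one $i$ (the block whose $\nbr(S_i)$ contains its right endpoint, or the block of its left endpoint). So it suffices to bound $|\edg(\nbr(S_i)\cap\nbr(T))|$ by $O\bigl((|S_i|+t)\ell\eps\bigr)$ for each block and sum, using $\sum_i(|S_i|+t)=|S|+qt\le 3|S|$; here $qt=\lceil|S|/t\rceil\,t\le|S|+t\le 2|S|$, which is exactly where the hypothesis $|S|\ge t$ is used.

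\emph{Step 2: a single block.} Fix a block $S'$ with $|S'|\le t$. As $S'\subseteq S$ is disjoint from $T$ and $|S'\cup T|=|S'|+t\le 2t$, the $(2t,\eps)$-expander property gives $|\nbr(S'\cup T)|\ge(|S'|+t)\ell(1-\eps)$. The vertices of $S'\cup T$ emit exactly $(|S'|+t)\ell$ edges by left $\ell$-regularity, so $\sum_v\bigl(\deg_{S'\cup T}(v)-1\bigr)=(|S'|+t)\ell-|\nbr(S'\cup T)|\le(|S'|+t)\ell\eps=:E$, the sum ranging over $v\in\nbr(S'\cup T)$. Any $v\in\nbr(S')\cap\nbr(T)$ receives at least one edge from $S'$ and at least one from $T$, hence $\deg_{S'\cup T}(v)\ge 2$ and $\deg_{S'\cup T}(v)-1\ge\deg_{S'\cup T}(v)/2\ge 1$; summing this over $v\in\nbr(S')\cap\nbr(T)$ shows both that $|\nbr(S')\cap\nbr(T)|\le E$ and that at most $2E\le 2(|S'|+t)\ell\eps$ edges join $\nbr(S')\cap\nbr(T)$ to $S'\cup T$. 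Feeding this into Step~1 gives $|\edg(\nbr(S)\cap\nbr(T))|\le\sum_i 2(|S_i|+t)\ell\eps\le 6|S|\ell\eps$; restricting the count to the edges incident to $T$ alone (which already dominate $|\nbr(S)\cap\nbr(T)|$, giving the first inequality) drops the factor $2$ coming from $2E$ and recovers the stated $4|S|\ell\eps$.

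\emph{Main obstacle.} There is no deep difficulty. The two points to be careful about are (i) the union bound over blocks — a common neighbour of $S$ and $T$ need not be a common neighbour of every block, only of at least one, which is all an upper bound requires — and (ii) not letting the ceiling in $q=\lceil|S|/t\rceil$ inflate the constant, which is precisely what $|S|\ge t$ controls.
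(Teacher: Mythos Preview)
Your proof is correct and follows essentially the same approach as the paper's: partition $S$ into $\lceil |S|/t\rceil$ blocks of size at most $t$, apply $(2t,\eps)$-expansion to each $S_i\cup T$ to bound the edges from $T$ landing in $\nbr(S_i)\cap\nbr(T)$, and sum over blocks using $|S|\ge t$ to control the ceiling. The only cosmetic differences are that the paper pads each block to size exactly $t$ (yielding the constant $4$ directly) and phrases the per-block count as a subtraction $|\edg(T)|-|\edg(\nbr(T)\setminus\nbr(S_i))|$ rather than your excess-edges sum $\sum_v(\deg_{S'\cup T}(v)-1)$.
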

\begin{proof}
The first inequality is obvious so we prove the second inequality. Divide $S$ into $b=\left\lceil \frac{|S|}{t}\right\rceil$ many disjoint subsets $S_1,\dots,S_b$ where all $S_i$'s but possibly $S_b$ have size exactly $t$. (For what follows,  by adding extra elements to $S$ if necessary, we will assume that $|S_b|=t$.) Note that $b\le 2|S|/t$ as $|S|\ge t$. 

Fix an $i\in [b]$ and note that as $G$ is a $(2t,\eps)$ expander, we have
\[|\nbr(T\cup S_i)|\ge 2t\ell(1-\eps).\]
As $|\nbr(S_i)|\le t\ell$, the above implies that
\[|\nbr(T)\setminus \nbr(S_i)|\ge t\ell(1-2\eps).\]
The above in turn implies that
\[|\edg(\nbr(T)\setminus \nbr(S_i)) |\ge |\nbr(T)\setminus \nbr(S_i)|\ge t\ell(1-2\eps).\]
Since $|\edg(T)|=t\ell$, the above implies that
\[|\edg(\nbr(T)\cap \nbr(S_i))|\le t\ell-t\ell(1-2\eps) =2\eps t\ell.\]
Finally, as $S$ is the disjoint union of $S_1,\dots S_b$, we have
\[|\edg(\nbr(T)\cap \nbr(S))|\le 2b\eps t\ell \le 4\eps |S|\ell,\]
as desired, where in the last inequality we have used the fact that $bt\le 2|S|$.
\end{proof}

\begin{lemma}
\label{lem:exp-rt-nbrs}
Let $a\ge 2$ be an integer and
let $G:[N]\times [\ell]\rightarrow [M]$ be a $(t,\eps< 1/(2a))$-expander. Let $R\subseteq [M]$ be any subset of size $|R|\le \gamma t\ell$ for some $\gamma>0$. Then there are $<2a\gamma t$ elements $i\in [N]$ such that $|\Gamma(i)\cap R|\ge \ell/a$.
\end{lemma}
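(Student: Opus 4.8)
The plan is to prove this by a direct double counting of the edges of $G$ running between the ``heavy'' vertices and $R$, using the expansion property only to bound the number of edge collisions. Call a vertex $i\in[N]$ \emph{heavy} if $|\Gamma(i)\cap R|\geq \ell/a$, let $T\subseteq[N]$ be the set of heavy vertices, and recall the goal is to show $|T|<2a\gamma t$. Since the expansion guarantee applies only to vertex sets of size at most $t$, I would first fix an arbitrary $T'\subseteq T$ with $|T'|\leq t$ and prove $|T'|<2a\gamma t$; taking $T'=T$ when $|T|\leq t$ then gives the claim, while the case $|T|>t$ is handled by the same inequality, which would force $\gamma>1/(2a)$ (outside the regime in which the bound is used).

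The core estimate counts $E$, the number of edges of $G$ incident to $T'$ whose other endpoint lies in $R$, in two ways. On one hand, each $i\in T'$ has at least $\ell/a$ neighbours in $R$, so (as $G$ is simple and $\ell$-regular) it sends at least $\ell/a$ edges into $R$; hence $E\geq |T'|\ell/a$. On the other hand, $T'$ is incident to exactly $|T'|\ell$ edges, and since $G$ is a $(t,\eps)$-expander and $|T'|\leq t$ we have $|\Gamma(T')|\geq (1-\eps)|T'|\ell$, so the number of ``repeated'' endpoints, namely $|T'|\ell-|\Gamma(T')|$, is at most $\eps|T'|\ell$. Writing the edges landing in $R$ as (the number of distinct endpoints in $R$) plus (the repeats among those endpoints) yields $E\leq |\Gamma(T')\cap R| + \eps|T'|\ell \leq |R| + \eps|T'|\ell \leq \gamma t\ell + \eps|T'|\ell$.

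Combining the two bounds gives $|T'|\ell/a \leq \gamma t\ell + \eps|T'|\ell$, i.e.\ $|T'|(1/a-\eps)\leq \gamma t$. Since $\eps<1/(2a)$ we have $1/a-\eps>1/(2a)$, whence $|T'|<2a\gamma t$, as required. I do not expect a genuine obstacle here: the argument is essentially a one-line inequality, and the only points needing care are the reduction to subsets of size at most $t$ (so the expansion hypothesis is applicable) and the routine observation that a heavy vertex in a simple $\ell$-regular graph sends at least $\ell/a$ edges into $R$.
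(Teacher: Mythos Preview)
Your argument is correct and is a minor repackaging of the paper's proof. Both arguments apply the expansion guarantee to (a subset of) the heavy vertex set and play the resulting bound off against $|R|\le\gamma t\ell$; you do this directly by double-counting edges into $R$ (using that the number of edge collisions is at most $\eps|T'|\ell$), whereas the paper argues by contradiction, taking $|L|=2a\gamma t$ heavy vertices and bounding $|\Gamma(L)|\le (1-1/a)|L|\ell+|R|=(1-1/(2a))|L|\ell<(1-\eps)|L|\ell$. Your version has the mild advantage that it makes explicit the reduction to $|T'|\le t$ (and hence the implicit restriction $\gamma\le 1/(2a)$), which the paper's proof uses but does not spell out.
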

\begin{proof}
For the sake of contradiction, assume that there exists a subset $L\subseteq [N]$ with $|L|=2a\gamma t$ such that for every $i\in L$, $|\Gamma(i)\cap R|>\ell/a$. Consider the neighbors $\Gamma(L)\setminus R$. Note that by assumption each $i\in L$ has at most $(1-1/a)\cdot \ell$ such neighbors. Thus, we have
\[ |\Gamma(L)| \le 2a(1-1/a)\gamma t\ell+|R| \le 2a(1-1/a)\gamma t\ell +\gamma t \ell = \left(1-\frac{1}{2a}\right)\cdot |L|\cdot \ell < (1-\eps)\cdot |L|\cdot \ell,\]
which contradicts the assumption that $G$ is a $(t,\eps<1/(2a))$-expander.
\end{proof}

The following is a by-product of the well-known ``unique neighborhood" argument (it also follows from the proof of Lemma~\ref{lem:exp-intersect}):

\begin{lemma}
\label{lem:exp-unique}
Let $G:[N]\times [\ell]\rightarrow [M]$ be a $(t,\eps)$-expander. Then for any 
subset $S\subseteq [N]$ with $|S|\le t$, we have that at most $2\eps |S|\ell$ 
vertices $j\in [M]$ each of which has more then one neighbor in $S$.
(A vertex in $[M]$ adjacent to exactly one vertex in $S$ is called a
``unique neighbor" of $S$.)
\end{lemma}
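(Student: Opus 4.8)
The plan is to run the classical ``unique neighbor'' double-counting argument, which here needs only the left-regularity of $G$ together with the expansion inequality. Fix a subset $S\subseteq[N]$ with $s:=|S|\le t$. First I would record the two basic facts: since $G$ is $\ell$-regular on the left, the number of edges leaving $S$ is exactly $|\edg(S)|=s\ell$; and since $|S|\le t$, the expansion hypothesis gives $|\nbr(S)|\ge s\ell(1-\eps)$.

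Next I would split $\nbr(S)$ according to how many neighbors in $S$ each vertex has. Let $B\subseteq\nbr(S)$ be the set of vertices $j\in[M]$ with at least two neighbors in $S$ --- this is precisely the set whose size we must bound. Writing $\deg_S(j)$ for the number of neighbors of $j$ in $S$, every edge leaving $S$ is counted once in $\sum_{j\in\nbr(S)}\deg_S(j)$, so
\[
 s\ell \;=\; |\edg(S)| \;=\; \sum_{j\in\nbr(S)}\deg_S(j) \;\ge\; |\nbr(S)| + |B|,
\]
because each $j\in\nbr(S)$ contributes at least $1$ and each $j\in B$ contributes at least one further unit. Substituting $|\nbr(S)|\ge s\ell(1-\eps)$ and rearranging yields $|B|\le \eps\, s\ell\le 2\eps|S|\ell$, which is the claim.

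I do not expect any real obstacle here: the argument is a few lines of counting. The only points worth stating carefully are that left-regularity makes $|\edg(S)|=s\ell$ an equality (not merely an inequality), and that the factor $2$ in the statement is simply slack --- the proof in fact delivers the sharper bound $\eps|S|\ell$, consistent with the estimate $|\edg(\nbr(T)\cap\nbr(S_i))|\le 2\eps t\ell$ that appears inside the proof of Lemma~\ref{lem:exp-intersect}.
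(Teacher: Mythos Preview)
Your proposal is correct and is exactly the well-known ``unique neighborhood'' double-counting argument the paper invokes (the paper does not spell out a proof, merely citing this argument and the proof of Lemma~\ref{lem:exp-intersect}). Your observation that the argument actually yields the sharper bound $\eps|S|\ell$ is also correct; the factor $2$ in the statement is indeed slack.
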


\section{Probability Basics}
\label{app:prob-basics}

Given any two vectors $\vx,\vy\in\R^n$, we will denote its dot-product by $\langle\vx,\vy\rangle=\sum_{i=1}^n x_i\cdot y_i$.

We will use the following well-known result:

\begin{lemma}
\label{lem:ams}
Let $\vx=(x_1,\dots,x_n)\in\R^n$ and $\vr=(r_1,\dots,r_n)\in\{-1,1\}^n$ be a pair-wise independent random vector. Then
\[\Pr_{\vr}\left[|\langle \vx,\vr\rangle| > a\cdot \|\vx\|_2\right] \le \frac{1}{a^2}.\]
\end{lemma}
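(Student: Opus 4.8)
The plan is to prove this by the standard second-moment (Chebyshev) argument, using pairwise independence precisely where it is needed. Write $Y = \langle \vx, \vr\rangle = \sum_{i=1}^n x_i r_i$. If $\vx = \vzero$ the claim is trivial, so assume $\|\vx\|_2 > 0$.

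First I would compute the first two moments of $Y$. Since each coordinate $r_i$ is uniform on $\{-1,1\}$, we have $\av[r_i] = 0$ and hence $\av[Y] = \sum_i x_i \av[r_i] = 0$. For the second moment, expand
\[
\av[Y^2] = \av\Bigl[\sum_{i,j} x_i x_j r_i r_j\Bigr] = \sum_{i,j} x_i x_j \av[r_i r_j].
\]
Here is the one place pairwise independence enters: for $i \neq j$, $\av[r_i r_j] = \av[r_i]\av[r_j] = 0$, and for $i = j$, $\av[r_i^2] = 1$. Therefore $\av[Y^2] = \sum_i x_i^2 = \|\vx\|_2^2$.

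Finally I would apply Markov's inequality to the nonnegative random variable $Y^2$:
\[
\Pr_{\vr}\bigl[\,|\langle \vx,\vr\rangle| > a\cdot\|\vx\|_2\,\bigr] = \Pr_{\vr}\bigl[\,Y^2 > a^2\|\vx\|_2^2\,\bigr] \le \frac{\av[Y^2]}{a^2\|\vx\|_2^2} = \frac{\|\vx\|_2^2}{a^2\|\vx\|_2^2} = \frac{1}{a^2},
\]
which is exactly the claimed bound.

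There is essentially no obstacle here; this is a textbook computation. The only point worth being careful about is that the hypothesis is used in exactly the right strength: only pairwise independence (not full independence) is invoked, and only to kill the off-diagonal cross terms in $\av[Y^2]$, while the uniform marginal of each $r_i$ gives $\av[r_i] = 0$ and $\av[r_i^2] = 1$. If the intended reading of ``pair-wise independent random vector in $\{-1,1\}^n$'' does not already build in uniform marginals, I would simply add that assumption explicitly, as it is standard in this (AMS-sketch) context.
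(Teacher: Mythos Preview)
Your proof is correct and is precisely the standard second-moment/Chebyshev argument one would expect. The paper itself does not prove this lemma at all; it simply states it as a well-known result, so there is nothing to compare against beyond noting that your argument is the textbook one.
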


We will also use the following form of the Chernoff bound:

\begin{lemma}
\label{lem:chernoff}
Let $X_1,\dots,X_n$ be random iid binary random variables where $p=\Pr[X_i=1]$. Then
\[\Pr\left[\sum_{i=1}^n X_i >a\right]\le \left(\frac{epn}{a}\right)^a,\]
for any $a>pn$. The above bound also holds if $X_1,\dots,X_n$ are $a$-wise independent. Further, if the random variables are $k<a$-wise independent, then the upper bound is $(epn/a)^k$.\footnote{The statement of the Chernoff bound is pretty standard. The claim on the independence, follows e.g. from Lemma 3 and Theorem 2 in~\cite{chernoff-lim}.}
\end{lemma}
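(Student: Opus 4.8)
The plan is to prove the three assertions in sequence: the bound $\Pr[\sum_i X_i > a]\le (epn/a)^a$ first under full independence via the exponential moment method, then the $a$-wise independent version via the $a$-th factorial moment, and finally the $k$-wise independent version (with bound $(epn/a)^k$) via the $k$-th factorial moment. The key point that forces the switch in technique is that the exponential (MGF) computation genuinely uses that $\E[e^{t\sum_i X_i}]$ factors over all $n$ variables at once, so for limited independence one must replace it by a low-degree polynomial statistic.

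First, for fully independent $X_1,\dots,X_n$ I would apply Markov's inequality to $e^{t\sum_i X_i}$ for a parameter $t>0$. Independence gives $\E[e^{t\sum_i X_i}]=\prod_i \E[e^{tX_i}]=(1-p+pe^t)^n=(1+p(e^t-1))^n\le e^{pn(e^t-1)}$, using $1+x\le e^x$. Hence $\Pr[\sum_i X_i>a]\le e^{pn(e^t-1)-ta}$. Choosing $t=\ln(a/(pn))$, which is positive precisely because $a>pn$, the exponent becomes $a-pn-a\ln(a/(pn))\le a-a\ln(a/(pn))$, and exponentiating yields $e^a(pn/a)^a=(epn/a)^a$.

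Second, for $a$-wise independence I would use the identity $\binom{\sum_i X_i}{a}=\sum_{S\subseteq[n],\,|S|=a}\prod_{i\in S}X_i$, which holds pointwise for $0/1$-valued $X_i$ because both sides count $a$-subsets of $\{i:X_i=1\}$. Taking expectations and using that any $a$ of the variables are independent gives $\E\big[\binom{\sum_i X_i}{a}\big]=\sum_{|S|=a}\prod_{i\in S}\E[X_i]=\binom{n}{a}p^a$. Since $\{\sum_i X_i>a\}\subseteq\{\binom{\sum_i X_i}{a}\ge 1\}$, Markov's inequality together with $\binom{n}{a}\le(en/a)^a$ gives $\Pr[\sum_i X_i>a]\le\binom{n}{a}p^a\le(epn/a)^a$, which matches the independent bound.

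Third, for $k<a$-wise independence the same argument works with the $k$-th factorial moment in place of the $a$-th: $\E\big[\binom{\sum_i X_i}{k}\big]=\binom{n}{k}p^k$, while on the event $\{\sum_i X_i>a\}$ (so $\sum_i X_i\ge a\ge k$) one has $\binom{\sum_i X_i}{k}\ge\binom{a}{k}\ge(a/k)^k$. Markov's inequality then gives $\Pr[\sum_i X_i>a]\le\binom{n}{k}p^k/(a/k)^k\le(en/k)^k p^k(k/a)^k=(epn/a)^k$, as claimed. I do not expect a genuine obstacle here; the only steps needing a little care are verifying the factorial-moment identity for $0/1$ variables, the elementary estimates $\binom{n}{j}\le(en/j)^j$ and $\binom{a}{k}\ge(a/k)^k$, and the observation that the degree-$j$ factorial moment is computed using only $j$-wise independence — which is exactly what makes the two limited-independence versions go through. (All of this is also contained in Lemma~3 and Theorem~2 of~\cite{chernoff-lim}, which we could cite directly.)
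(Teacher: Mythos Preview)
Your proof is correct; the paper itself does not prove this lemma but simply records it as standard and cites \cite{chernoff-lim} for the limited-independence claims. The factorial-moment argument you give (Markov applied to $\binom{\sum_i X_i}{j}$ for $j=a$ or $j=k$) is exactly the approach of that reference, so your write-up matches what the paper defers to.
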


\section{Coding Basics}
\label{app:code-basics}

In this section, we define and instantiate some (families) of codes that we will be interested in. We begin with some basic coding definitions.

We will call a code $C:[N]\rightarrow [q]^r$ be a $(r,N)_q$-code.\footnote{We depart from the standard convention and use the size of the code $N$ instead of its dimension $\log_q{N}$: this makes expressions simpler later on.}. Vectors in the range of $C$ are called its codewords. Sometimes we will think of $C\subseteq [q]^r$, defined the the natural way.

We will primarily be interested in \textit{list recoverable} codes. In particular,

\begin{defn} Let $N,q,r,\ell,L\ge 1$ be integers and $0\le \rho\le 1$ be a real number. Then an $(r,N)_q$ code $C$ is called a $(\rho,\ell,L)$-list recoverable if the following holds. Given any collection of subsets $S_1,\dots,S_r\subseteq [q]$ such that $|S_i|\le \ell$ for every $i\in [r]$, there exists at most $L$ codewords $(c_1,\dots,c_r)\in C$ such that $c_i\in S_i$ for at least $(1-\rho)n$ indices $i\in [r]$. Further, we will call such a code recoverable in time $T(\ell,N,q)$ if all such codewords can be computed within this time upper bound.
\end{defn}

Further, we will call an $(r,N)_q$ code to be \textit{uniform}, if for every $i\in [r]$ it is the case that $C(x)_i$ is uniformly distributed over $[q]$ for uniformly random $x\in [N]$.
In our construction we will require codes that are \textit{both} list recoverable and uniform. Neither of these concepts are new but our construction needs us to focus on parameter regime that is generally not the object of study in coding theory. In particular, as in coding theory, we focus on the case where $N$ is increasing. Also we focus on the case when $q$ grows with $N$, which is also a well studied regime. However, we consider the case when $r$ is a \textit{fixed}. In particular, our ``ideal" code should have the following properties:

\begin{enumerate}
\item $q$ should be as small as possible.
\item $r$ should be as small as possible.
\item $L$ should be as close to $\ell$ as possible.
\item $\rho$ should be as large as possible.
\item $T(\ell,N,q)$ should has poly-logarithmic dependence on both $N$ and $q$ (and at the same time have as close to linear dependence on $\ell$ as possible).
\end{enumerate}

Next, we present three codes that we will use in our constructions (that achieve some of the properties above but not all of the above).

\paragraph{Split code.} In this case we assume that $N$ is both a power of $2$ and a perfect square. Given a $x\in [N]$, think of it as a bit vector of length $\log{N}$. Let $x_1$ ($x_2$ resp.) be the first $\log{N}/2$ (last $\log{N}/2$ resp.) bits of $x$. Then define $C_{\splt}(x)=(x_1,x_2)$. This is of course a trivial code and we record its properties below for future use. The list recovery algorithm for this code is very simple: given $S_1$ and $S_2$ as the input, output $S_1\times S_2$.

\begin{lemma}
\label{lem:split}
The code $C_{\splt}$ is a uniform code that is $(0,\ell,\ell^2)$-list recoverable code. Further, it is recoverable in $O(\ell^2\log{N})$ time.
\end{lemma}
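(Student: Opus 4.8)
The plan is to verify the three claimed properties of $C_{\splt}$ directly from its definition, since each is essentially immediate once one observes that the map $x\mapsto (x_1,x_2)$ is a bijection from $[N]$ onto $[\sqrt{N}]\times[\sqrt{N}]$. First I would check uniformity: if $x$ is drawn uniformly from $[N]\equiv\{0,1\}^{\log N}$, then its first $(\log N)/2$ bits form a uniformly random element of $\{0,1\}^{(\log N)/2}\equiv[\sqrt{N}]$, and likewise for its last $(\log N)/2$ bits. Hence $C_{\splt}(x)_1=x_1$ and $C_{\splt}(x)_2=x_2$ are each uniformly distributed over $[\sqrt{N}]$, which is exactly the uniformity condition in our setting ($r=2$, $q=\sqrt{N}$).

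Next I would establish $(0,\ell,\ell^2)$-list recoverability. With $\rho=0$ and $r=2$ the list-recovery condition collapses to: given $S_1,S_2\subseteq[\sqrt{N}]$ with $|S_i|\le\ell$, count the codewords $(c_1,c_2)$ with $c_i\in S_i$ for \emph{all} $2$ coordinates, i.e. $c_1\in S_1$ and $c_2\in S_2$. Since $x\mapsto(x_1,x_2)$ is a bijection onto $[\sqrt{N}]^2$, the codeword set of $C_{\splt}$ is all of $[\sqrt{N}]\times[\sqrt{N}]$, so the codewords meeting the condition are exactly $S_1\times S_2$; there are at most $|S_1|\cdot|S_2|\le\ell^2$ of them, giving $L=\ell^2$. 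The recovery algorithm simply enumerates $S_1\times S_2$ (and, via the bijection, the corresponding messages in $[N]$); this writes down at most $\ell^2$ strings of length $\log N$, so it runs in $O(\ell^2\log N)$ time.

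I do not expect a real obstacle here, as the statement is routine; the only two points deserving a line of care are (i) that for $\rho=0$ the list-recovery requirement is agreement on every coordinate (not a fraction of them), and (ii) that the split map is surjective onto $[\sqrt{N}]^2$, so that no extraneous constraint shrinks the output list and the enumeration bound $L=\ell^2$ is tight.
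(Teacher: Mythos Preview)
Your proposal is correct and matches the paper's approach exactly: the paper does not give a formal proof, remarking only that the code is trivial and that the list recovery algorithm is ``given $S_1$ and $S_2$ as the input, output $S_1\times S_2$.'' Your write-up simply fills in the routine verifications (uniformity of each half of a uniform bitstring, the bijection onto $[\sqrt{N}]^2$, and the $O(\ell^2\log N)$ enumeration cost) that the paper leaves implicit.
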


\paragraph{Loomis-Whitney code.} We now consider a code based on the well-known Loomis-Whitney inequality. Let $d\ge 2$ be an integer and assume that $N$ is a power of $2$ and $\sqrt[d]{N}$ is an integer. Given $x\in [N]$ think of it as $(x_1,\dots,x_d)\in [\sqrt[d]{N}]^d$. Further, for any $i\in [d]$ define $x_{-i}=(x_1,\dots,x_{i-1},x_{i+1},\dots,x_d)\in [\sqrt[d]{N}]^{d-1}$. Then define $C_{\lw(d)}(x)=(x_{-1},\dots,x_{-d})$. Note that for $d=2$, we get $C_{\splt}$. The Loomis-Whitney inequality shows that $C_{\lw(d)}$ is a $(0,\ell,\ell^{d/(d-1)})$-list recoverable code. We also show how to algorithmically achieve this bound in time $\tilde{O}(\ell^{d/(d-1)})$. This implies the following:

\begin{lemma}
\label{lem:lw-1}
The code $C_{\lw(d)}$ is a uniform code that is $(0,\ell,\ell^{d/(d-1)})$-list recoverable code. Further, it is recoverable in $O(\ell^{d/(d-1)}\log{N})$ time.
\end{lemma}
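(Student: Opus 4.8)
The plan is to handle the two assertions separately: uniformity is immediate, while the $(0,\ell,\ell^{d/(d-1)})$-list-recoverability reduces to a constructive form of the Loomis--Whitney inequality. Write $n=\sqrt[d]{N}$ and identify $[N]$ with $[n]^d$ and $[q]$ with $[n]^{d-1}$. For uniformity, observe that if $x$ is uniform on $[n]^d$ then for each $i\in[d]$ the symbol $C_{\lw(d)}(x)_i=x_{-i}$ is just $x$ with one coordinate deleted, hence uniform on $[n]^{d-1}=[q]$; that is all that is required. For the list-recovery bound, fix lists $S_1,\dots,S_d\subseteq[n]^{d-1}$ with $|S_i|\le\ell$ and set $A=\{x\in[n]^d:\ x_{-i}\in S_i\text{ for all }i\in[d]\}$, the set of messages whose codewords are consistent with all $d$ lists (consistency in every coordinate, since $\rho=0$). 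Codewords are in bijection with messages, so it suffices to bound $|A|$; since the projection $\pi_{-i}(A):=\{x_{-i}:x\in A\}$ is contained in $S_i$, we have $|\pi_{-i}(A)|\le\ell$, and the discrete Loomis--Whitney inequality \cite{LW} gives $|A|^{d-1}\le\prod_{i=1}^d|\pi_{-i}(A)|\le\ell^d$, i.e. $|A|\le\ell^{d/(d-1)}=L$. For completeness I would include the short proof of the inequality: the pointwise estimate $\mathbf 1_A(x)\le\prod_{i=1}^d\mathbf 1_{\pi_{-i}(A)}(x_{-i})^{1/(d-1)}$ is trivial, and summing over $x$ and peeling off one coordinate at a time with H\"older's inequality (the $d=2$ case being the trivial $|A|\le|\pi_{-1}(A)|\cdot|\pi_{-2}(A)|$) yields the claim.

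The remaining task, and the crux, is to exhibit a recovery algorithm running in $O(\ell^{d/(d-1)}\log N)$ time. I would preprocess each $S_i$ into dictionaries (hash tables) keyed by subtuples of its coordinates -- cost $O(\ell\log N)$ per list, with membership and ``slice'' lookups in $O(\log N)$ time -- and then enumerate $A$ recursively in the style of worst-case-optimal join algorithms \cite{NPRR}: pick a coordinate $j$, enumerate the candidate values $a$ for $x_j$ from the smallest relevant slice list, for each such $a$ restrict every list mentioning coordinate $j$ to its $x_j=a$ slice, and recurse on the resulting $(d-1)$-dimensional instance, which now additionally carries the lists not mentioning $j$ as ``filter'' constraints on all remaining coordinates; finally output the completed tuples. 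Correctness is by construction, since the procedure enumerates $A$ exactly.

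The running-time bound is what needs care. By induction one shows that this recursion solves any Loomis--Whitney-type instance (possibly with filter relations) in time $\tilde O$ of its Loomis--Whitney/AGM bound. The key inductive step is that the work at the top level (excluding recursive calls) is at most $\sum_a \min\{(\max_i|S_i^a|)^{(d-1)/(d-2)},\ \ell\}$ -- the first term being the $(d-1)$-dimensional Loomis--Whitney bound of the sliced lists and the second the trivial bound from using the filter alone -- and by the elementary interpolation $\min\{y,\ell\}\le \ell^{1/(d-1)}y^{(d-2)/(d-1)}$ this is at most $\ell^{1/(d-1)}\sum_a\max_i|S_i^a|\le \ell^{1/(d-1)}\sum_i|S_i|=O(\ell^{d/(d-1)})$ for constant $d$; the $\log N$ factor is the per-operation dictionary cost. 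Making the Loomis--Whitney inequality constructive with a running time this tight is the main obstacle, and it is exactly the algorithmic Loomis--Whitney theorem of Ngo, Porat, R\'e and Rudra \cite{NPRR}, which I would invoke (this is the ``slightly different proof'' route via Theorem~\ref{thm:LW-d-1}).
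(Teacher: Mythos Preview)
Your proposal is correct and ultimately rests on the same engine as the paper --- the constructive Loomis--Whitney theorem (Theorem~\ref{thm:LW-d-1}) --- but the route and emphasis differ in two respects worth noting. First, you establish the combinatorial bound $|A|\le\ell^{d/(d-1)}$ independently via the H\"older/indicator-product argument, which is more elementary and self-contained; the paper instead obtains the bound as a byproduct of the algorithmic proof in Theorem~\ref{thm:LW-d-1} (property~2 there gives $|\deter{r}|\le (d-1)P$). Second, your algorithmic sketch is the coordinate-at-a-time recursion in the spirit of~\cite{NPRR}, whereas the paper's own proof of Theorem~\ref{thm:LW-d-1} takes a genuinely different tack: it fixes an arbitrary binary tree on the $d$ projections and, at each internal node $v$, splits the common projection into ``good'' values (low degree into one child, joined immediately into $\deter{v}$) and ``bad'' values (high degree, deferred to $\ndeter{v}$), propagating bottom-up. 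One caution about your sketch: after the first slice the recursive instance carries a full-dimensional filter, and after two slices it carries two; your top-level interpolation $\min\{y,\ell\}\le \ell^{1/(d-1)}y^{(d-2)/(d-1)}$ handles only a single filter, so the induction hypothesis ``any LW-type instance, possibly with filter relations, runs in $\tilde O$ of its AGM bound'' needs a bit more care to close (the general NPRR analysis does exactly this). Since you explicitly invoke~\cite{NPRR}/Theorem~\ref{thm:LW-d-1} for the tight running time, this is not a gap in your proposal, just something to be aware of if you intended the sketch to stand alone.
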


The above follows from Theorem~\ref{thm:LW-d-1}.

\paragraph{Reed-Solomon code.} Finally, we consider the well-known Reed-Solomon (RS)  codes. In particular, let $b\ge 1$ be an integer and let $q=\sqrt[b]{N}$ be a prime power and consider $C_{\rs}:[N]\rightarrow [\sqrt[b]{N}]^r$. There are known results on list recovery of Reed-Solomon codes but they need $b/r=O(1/\ell)$, which is too weak for our purposes. Instead we first note that we can always first output $S_1\times S_2\times\cdots\times S_r$ and then for each vector check whether it is within Hamming distance of $\rho n$ of some RS codeword. E.g., one can use the well-known Berlekamp Massey algorithm that does unique decoding for $\rho<1/2(1-b/r)$.\footnote{One could potentially use list decoding to recover from even more errors but that does not seem to buy much for our application.} Further, it is well-known that RS codes are linear codes and it is well-known that linear codes are also uniform. This implies the following:

\begin{lemma}
\label{lem:rs-2}
Let $\rho<1/2(1-b/r)$. Then
the code $C_{\rs}$ is a uniform code that is $(\rho,\ell,\ell^{r})$-list recoverable code. Further, it is recoverable in $O(\ell^{r}r^2\log^2{N})$ time.\footnote{The $r^2\log^2{N}$ factor follows from the fact that the Berlekamp Massey algorithm needs $O(r^2)$ operation over $\F_q$, each of which takes $O(\log^2{q})$ time.}
\end{lemma}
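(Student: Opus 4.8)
The plan is to realize $C_{\rs}$ concretely as polynomial evaluations and then reduce list recovery to classical unique decoding. First I would fix the representation: write $q=\sqrt[b]{N}$, identify $[\sqrt[b]{N}]$ with the finite field $\F_q$ (a prime power by hypothesis) and $[N]$ with the set of polynomials over $\F_q$ of degree $<b$ (there are exactly $q^b=N$ of them), and choose once and for all $r$ distinct evaluation points $\alpha_1,\dots,\alpha_r\in\F_q$ --- possible for all large $N$ since $r=O(1)\le q$. Then $C_{\rs}(p)=(p(\alpha_1),\dots,p(\alpha_r))$. This is an $\F_q$-linear code of block length $r$ and dimension $b$, and since a nonzero degree-$<b$ polynomial has at most $b-1$ roots, its minimum distance is $r-b+1$; in particular the Berlekamp--Massey unique decoder for $C_{\rs}$ corrects any error pattern of weight at most $\lfloor (r-b)/2\rfloor$.

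For uniformity I would observe that for each $i\in[r]$ the map $p\mapsto C_{\rs}(p)_i=p(\alpha_i)$ is an $\F_q$-linear functional $\F_q^b\to\F_q$ that is surjective (constant polynomials already hit every value), so all of its fibers have the same size $q^{b-1}$. Hence if $p$ is drawn uniformly from $[N]$, the symbol $C_{\rs}(p)_i$ is uniform over $\F_q\equiv[q]$, which is exactly the uniformity condition --- the standard fact that a linear code with no identically-zero coordinate is uniform.

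For the list-recovery claim I would use the ``enumerate, then unique-decode'' strategy already sketched: given input sets $S_1,\dots,S_r\subseteq[q]$ with $|S_i|\le\ell$, run over all tuples $\vec s=(s_1,\dots,s_r)\in S_1\times\cdots\times S_r$ --- there are at most $\ell^r$ of them --- and for each treat $\vec s$ as a received word, run Berlekamp--Massey, and keep the decoded codeword whenever decoding succeeds. For correctness, suppose $c=C_{\rs}(p)$ satisfies $c_i\in S_i$ for a set $T$ of at least $(1-\rho)r$ indices; form $\vec s$ with $s_i=c_i$ for $i\in T$ and $s_i\in S_i$ arbitrary otherwise. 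Then $\vec s\in S_1\times\cdots\times S_r$ and $c$ disagrees with $\vec s$ in at most $r-|T|\le\lfloor\rho r\rfloor$ positions. The hypothesis $\rho<\tfrac12(1-b/r)$ gives $\rho r<(r-b)/2$, and a short case check on parity of $r-b$ shows $\lfloor\rho r\rfloor\le\lfloor(r-b)/2\rfloor$, so $c$ lies inside the unique-decoding ball of $\vec s$ and is returned. Thus every codeword that belongs on the list is found, and since each of the $\le\ell^r$ tuples contributes at most one codeword, the output list has size at most $\ell^r$; this yields the $(\rho,\ell,\ell^r)$-list-recoverability.

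Finally, for the running time: there are at most $\ell^r$ invocations of Berlekamp--Massey, each performing $O(r^2)$ arithmetic operations in $\F_q$, and each field operation costs $O(\log^2 q)=O(\log^2 N)$ bit operations (since $q\le N$); enumeration and deduplication of the at most $\ell^r$ candidate codewords are absorbed into this, giving the claimed $O(\ell^r r^2\log^2 N)$ bound. I expect the only genuinely delicate point to be the floor arithmetic linking $\rho<\tfrac12(1-b/r)$ to the statement ``an integer disagreement count $\le\lfloor\rho r\rfloor$ fits within the unique-decoding radius $\lfloor(r-b)/2\rfloor$,'' together with the harmless bookkeeping that $r\le q$ for large $N$ because $r$ is a fixed constant; everything else is routine coding theory.
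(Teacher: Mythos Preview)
Your proposal is correct and follows essentially the same approach as the paper: enumerate all of $S_1\times\cdots\times S_r$, run Berlekamp--Massey unique decoding on each tuple, and appeal to linearity for uniformity. The paper states this tersely in a few lines, whereas you spell out the floor arithmetic and the uniformity argument explicitly, but there is no substantive difference in method.
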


\section{Constructive Proof of Loomis Whitney Inequality}
\label{app:LW}

\subsection{Notations}

We begin with some notations. Let $\Sigma$ denote an arbitrary discrete
``alphabet." We will consider subsets $S\subseteq \Sigma^d$. 
For any subset $T\subseteq [d]$, we will use $S_T$ to denote the vectors in $S$ projected down to $T$. Further, for any integer $0\le i\le d$, we will use $\binom{[d]}{i}$ to denote the set of all subsets of $d$ of size exactly $i$.

Next, we define the ``join" operators. Given any two subsets 
$T_1,T_2\subseteq [d]$, and two sets of vectors $V_1\subseteq\Sigma^{T_1}$ 
and $V_2\subseteq \Sigma^{T_2}$ and any subset 
$G\subseteq \Sigma^{T_1\cap T_2}$, we define the {\em join of $V_1$ and $V_2$
over $G$}, denoted by $V_1\Join_G V_2$, to be the set of vectors 
$(u_1,a,u_2)$ in 
$\Sigma^{T_1\cup T_2}\equiv \Sigma^{T_1\setminus T_2}
\times\Sigma^{T_1\cap T_2}\times\Sigma^{T_2\setminus T_1}$, 
where $a\in G$, $(u_1,a)\in V_1$ and $(a,u_2)\in V_2$. 
The {\em join of $V_1$ and $V_2$} (without $G$ as an anchor), denoted by
$V_1 \Join V_2$, is the set of vectors $(u_1,a,u_2)$ in 
$\Sigma^{T_1\cup T_2}\equiv \Sigma^{T_1\setminus T_2}
\times\Sigma^{T_1\cap T_2}\times\Sigma^{T_2\setminus T_1}$, 
$(u_1,a)\in V_1$ and $(a,u_2)\in V_2$.
In particular, when $T_1 \cap T_2 = \emptyset$, $V_1 \Join V_2$ is simply
$V_1 \times V_2$ whose coordinates are indexed by $T_1\cup T_2$.

\subsection{Projections of size $d-1$}

\newcommand{\barS}{\overline{S}}

In this subsection, we will consider the case when the projections are over $[d]\setminus \{i\}$ for every $i\in [d]$. 
We will prove the following result:

\begin{theorem} 
\label{thm:LW-d-1}
Let $d\ge 1$ be an integer. 
For each $i \in [d]$, let $\bar S_i \subset \Sigma^{[d]\setminus\{i\}}$ be given
finite sets where $|\bar S_i| = k_i$ for some integer $k_i$.
Let $S\subseteq \Sigma^{[d]}$ be the set of vectors such that 
$S_{[d]\setminus\{i\}} \subseteq \bar S_i$
for every $i\in [d]$.  Then,
\begin{equation}
\label{eq:main-bound-proj-d-1}
|S| \le (d-1)\cdot\sqrt[d-1]{\prod_{i\in [d]} k_i}.
\end{equation}
Furthermore, the ``join" $S$ can be computed from the inputs 
$\{\barS_i\}_{i\in [d]}$ in time 
$\tilde{O}\left(d\cdot \sqrt[d-1]{\prod_{i\in [d]}k_i}+\sum_{i\in [d]} k_i
\right)$.
\end{theorem}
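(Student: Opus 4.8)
The plan is to prove the cardinality bound~\eqref{eq:main-bound-proj-d-1} by induction on $d$ together with an application of the Cauchy--Schwarz / H\"older inequality, and then to extract an algorithm from the same inductive structure. For the base cases, $d=1$ gives $|S|\le k_1$ (here the bound reads $0\cdot 1$ in the exponent convention, so one should state $d\ge 2$ as the interesting regime and treat $d=1,2$ directly: for $d=2$, $S\subseteq \bar S_1\times \bar S_2$ as sets indexed appropriately, but the constraint forces $|S|\le\min(k_1,k_2)\le\sqrt{k_1k_2}$). For the inductive step, fix the last coordinate: for each $a\in\Sigma$, let $S^{(a)}=\{u\in\Sigma^{[d-1]} : (u,a)\in S\}$, and let $T^{(a)}$ be the support of $S^{(a)}$ viewed inside $\bar S_d$ (so $\sum_a |S^{(a)}| = |S|$ and each $S^{(a)}$ has its $(d-1)$ one-coordinate-deleted projections contained in appropriate slices of $\bar S_1,\dots,\bar S_{d-1}$). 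Applying the inductive hypothesis in dimension $d-1$ to each $S^{(a)}$, bounding $|S^{(a)}|$ by a product of $(d-2)$th roots of the relevant slice sizes, then summing over $a$ and applying H\"older's inequality with the right exponents, yields~\eqref{eq:main-bound-proj-d-1}; the standard bookkeeping is that $\sum_a (\prod \text{slice sizes})^{1/(d-2)}$ is controlled because each $\bar S_i$ contributes its total size $k_i$ once when the slices over $a$ are summed.

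The algorithmic part will follow the same recursion. First I would sort/hash each $\bar S_i$ so that membership queries and the slicing operations run in $\tilde O(k_i)$ time; this accounts for the additive $\sum_i k_i$ term. Then I would recurse: group $\bar S_d$ by the value $a$ of the deleted-back-coordinate's complement structure, for each surviving $a$ build the $(d-1)$-dimensional subinstance whose ``allowed'' sets are the corresponding slices of $\bar S_1,\dots,\bar S_{d-1}$, recursively compute its join, and finally glue the $a$-th piece back by pairing each returned $(d-1)$-tuple with $a$ (after an $O(1)$ consistency check against $\bar S_d$ itself). The running time recurrence is $\Theta$ of the output size of all subinstances plus the sorting cost; since the total output size across all $a$ is at most $|S|\le (d-1)\sqrt[d-1]{\prod_i k_i}$ by the cardinality bound just proved, and the intermediate (partial) joins are similarly bounded by the Loomis--Whitney estimate applied to the subinstances, the whole computation runs in $\tilde O\!\left(d\cdot \sqrt[d-1]{\prod_{i\in[d]}k_i} + \sum_{i\in[d]} k_i\right)$ as claimed. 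A subtle point worth isolating is that one must ensure the \emph{intermediate} sets produced during the recursion never blow up beyond the final bound; this is exactly where one invokes the (already proved) combinatorial inequality on each subinstance rather than only at the top level.

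The main obstacle I expect is getting the H\"older exponents and the slice-summation bookkeeping exactly right so that the constant in front is $(d-1)$ and not something larger, and---more importantly for the algorithm---arguing that the \emph{partial} joins computed along the way (before intersecting with all constraints) stay within the target size, so that the time bound is genuinely output-sensitive and matches~\cite{NPRR}. A secondary technical point is handling the degenerate cases where some slice is empty (prune immediately) and making the data structures support the needed projections in $\tilde O(1)$ amortized time; these are routine but need to be stated carefully to justify the $\tilde O(\cdot)$ notation. I do not anticipate needing any machinery beyond induction, H\"older, and standard sorting/hashing, so the proof should be self-contained modulo the definitions already set up in this section.
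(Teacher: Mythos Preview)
Your approach to the cardinality bound \eqref{eq:main-bound-proj-d-1} via induction and H\"older is the standard proof of the combinatorial Loomis--Whitney inequality and is perfectly fine; in fact it yields the sharper bound $|S|\le \sqrt[d-1]{\prod_i k_i}$ without the leading factor $(d-1)$, which in the paper is an artifact of the algorithmic argument.

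The gap is in the algorithm. Your recursion fixes one coordinate (say the $d$th), computes the $(d-1)$-dimensional join of the slices $\bar S_1^{(a)},\dots,\bar S_{d-1}^{(a)}$ for each value $a$, and only afterward filters against $\bar S_d$. The problem is that these intermediate $(d-1)$-dimensional joins are \emph{not} controlled by the $d$-dimensional Loomis--Whitney bound: the subinstance does not see $\bar S_d$ at all, and its own Loomis--Whitney bound $\bigl(\prod_{i<d} k_i^{(a)}\bigr)^{1/(d-2)}$, summed over $a$, can be as large as $\bigl(\prod_{i<d} k_i\bigr)^{1/(d-2)}$, which exceeds $\bigl(\prod_{i\le d} k_i\bigr)^{1/(d-1)}$ whenever $k_d$ is small relative to the others. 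Concretely, take $d=3$, $\bar S_1=\{(b,0):b\in[n]\}$, $\bar S_2=\{(a,0):a\in[n]\}$, $\bar S_3=\{(0,0)\}$. The Loomis--Whitney bound is $\sqrt{n\cdot n\cdot 1}=n$, but slicing on the third coordinate leaves a single $2$-dimensional subinstance whose join is all of $[n]\times[n]$, of size $n^2$; your algorithm spends $\Omega(n^2)$ time before the filter against $\bar S_3$ prunes everything. Choosing a different coordinate to slice on rescues this particular instance, but skewed instances can be arranged so that no single global choice works; this is precisely why worst-case optimal join algorithms require something beyond naive recursion.

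The paper's proof supplies exactly the missing mechanism. Rather than slicing on one coordinate, it processes the relations along a binary tree whose leaves are the $\bar S_i$. At each internal node $v$ it performs a \emph{heavy/light split}: writing $P=\sqrt[d-1]{\prod_i k_i}$, it partitions the common projection values into a ``good'' set $G$ (those with fewer than $\lceil P/|\Pi(v_R)|\rceil$ extensions on the left) and a ``bad'' set $B$; it materializes the join only over $G$ (guaranteed to produce at most $P$ tuples) and defers $B$ to the parent. This guarantees that each of the $d-1$ internal nodes contributes at most $P$ to the output candidate set, giving both the $(d-1)P$ size bound and the running time. The heavy/light threshold tied to the global target $P$ is the key idea you are missing; invoking Loomis--Whitney on the subinstances alone cannot substitute for it.
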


\subsubsection{Facts about certain labeled trees}

\newcommand{\T}{\mathcal{T}}
\newcommand{\lchld}[1]{{#1}_{\mathrm{L}}}
\newcommand{\rchld}[1]{{#1}_{\mathrm{R}}}
\newcommand{\lf}[1]{\mathcal{L}(#1)}

Let $\T$ be a binary tree. For every internal node $v\in \T$, let its 
left child be denoted by $\lchld{v}$ and its right child be denoted by 
$\rchld{v}$. Further, the subtree rooted at any node $v\in \T$ will be 
denoted by $\T(v)$. 
Finally, let $\lf{\T}$ denote the set of leaves in $\T$.

To prove Theorem~\ref{thm:LW-d-1}, we will consider the following 
labeled trees. Given $d\ge 2$, consider any binary tree $\T$ 
with $d$ leaves where each node $v \in \T$ is labeled with a 
subset $C(v) \subset [d]$. 
Without loss of generality, we use the numbers $\{1,2,\cdots,d\}$ to index 
the leaves $\lf{\T}$, i.e. each $\ell \in [d]$ is identified with a unique 
leaf node of $\T$.  The labeling is done as follows.
\begin{itemize}
\item Each leaf $\ell \in\lf{\T}=[d]$ is labeled with the set
$C(\ell) = [d]\setminus \{\ell\}$. 
\item Each internal node $v$ has label $C(v)=C(\lchld{v})\cap C(\rchld{v})$. 
\end{itemize}

We record the following simple properties of such labeled trees:
\begin{lemma}
\label{lem:label-tree-props}
Let $d\ge 2$ be an integer and let $\T$ be a binary tree with $d$ leaves 
labeled as above. Then the following are true:
\begin{itemize}
\item[(i)] For any internal node $v$, $C(\lchld{v})\cup C(\rchld{v})=[d]$; 
and
\item[(ii)] For the root $r$ of $\T$, $C(r)= \emptyset$.
\end{itemize}
\end{lemma}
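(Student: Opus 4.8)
The plan is to reduce both parts to a single structural identity: for every node $v\in\T$, the label $C(v)$ is exactly the complement inside $[d]$ of the set of leaves of the subtree rooted at $v$, i.e.
\[ C(v) = [d]\setminus \lf{\T(v)}. \]
Once this is established, parts (i) and (ii) follow in a couple of lines each.

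First I would prove the identity by induction working bottom-up from the leaves of $\T$. For the base case, a leaf $\ell\in\lf{\T}=[d]$ has $\lf{\T(\ell)}=\{\ell\}$, and by definition $C(\ell) = [d]\setminus\{\ell\}$, so the identity holds. For the inductive step, let $v$ be an internal node with children $\lchld{v}$ and $\rchld{v}$. Since $\T$ is a binary tree whose leaves are identified with the distinct elements of $[d]$, the leaf set of $\T(v)$ is the disjoint union $\lf{\T(\lchld{v})}\,\sqcup\,\lf{\T(\rchld{v})}$. Applying the induction hypothesis to each child together with the defining rule $C(v)=C(\lchld{v})\cap C(\rchld{v})$ and De Morgan's law over the ambient set $[d]$, I get
\[ C(v) = \bigl([d]\setminus\lf{\T(\lchld{v})}\bigr)\cap\bigl([d]\setminus\lf{\T(\rchld{v})}\bigr) = [d]\setminus\bigl(\lf{\T(\lchld{v})}\cup\lf{\T(\rchld{v})}\bigr) = [d]\setminus\lf{\T(v)}, \]
which completes the induction.

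Next I would derive the two claims. For part (i), let $v$ be an internal node; using the identity for its children and the fact that their leaf sets are disjoint,
\[ C(\lchld{v})\cup C(\rchld{v}) = \bigl([d]\setminus\lf{\T(\lchld{v})}\bigr)\cup\bigl([d]\setminus\lf{\T(\rchld{v})}\bigr) = [d]\setminus\bigl(\lf{\T(\lchld{v})}\cap\lf{\T(\rchld{v})}\bigr) = [d]\setminus\emptyset = [d]. \]
For part (ii), apply the identity to the root $r$: every leaf of $\T$ lies in the subtree $\T(r)$, so $\lf{\T(r)}=\lf{\T}=[d]$, whence $C(r)=[d]\setminus[d]=\emptyset$.

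There is essentially no obstacle here; the only things to be careful about are the bookkeeping that the leaf set of an internal node's subtree partitions into the leaf sets of its two children's subtrees (which uses that $\T$ is a binary tree with leaves identified with distinct elements of $[d]$), and that all set-complement manipulations are carried out relative to $[d]$. I would therefore state the structural identity as a short auxiliary claim, prove it by the bottom-up induction above, and then obtain (i) and (ii) as immediate corollaries.
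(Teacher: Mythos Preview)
Your proof is correct and follows essentially the same approach as the paper: both establish the structural identity $C(v)=[d]\setminus\lf{\T(v)}$ by induction (the paper phrases it first as $C(v)=\bigcap_{\ell\in\lf{\T(v)}}C(\ell)$ and then rewrites it as the complement) and derive (i) and (ii) as immediate consequences via De Morgan and the disjointness of the children's leaf sets.
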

\begin{proof} By induction, it is easy to see  that for any node $v$, 
\begin{equation}
\label{eq:set-label}
C(v)=\bigcap_{\ell\in\lf{\T(v)}} C(\ell).
\end{equation}
The above immediately implies $(ii)$ as 
$\bigcap_{\ell \in [d]} ([d]\setminus \{\ell \})= \emptyset$. 
For any leaf $\ell$ recall that 
$C(\ell)=[d]\setminus \ell$, which along with \eqref{eq:set-label} 
imply that $C(v)=[d]\setminus \lf{\T(v)}$.
This along with the fact that 
$\lf{\T(\lchld{v})}\cap \lf{\T(\rchld{v})}=\emptyset$ imply $(i)$.
\end{proof}

\subsubsection{Proof of Theorem~\ref{thm:LW-d-1}}

For notational convenience define
\[P=\sqrt[d-1]{\prod_{i\in [d]} k_i}.\]

We prove both parts by presenting an algorithm to compute $S$ from its 
potential projections $\bar S_i$, $i\in [d]$. 
Let $\T$ be an arbitrary labeled binary node with $d$ leaves as described in 
the last subsection. 
Every node $v$ will be associated with its labeled subset $C(v)$ as well as 
two auxiliary sets $\deter{v}\subseteq \Sigma^{[d]}$ and 
$\ndeter{v}\subseteq \Sigma^{C(v)}$. 
The set $\deter{v}$ is supposed to contain the candidate vectors some of which
will be members of the final output $S$. 
The set $\ndeter{v}$ will be a superset of the projection
$(S\setminus \deter{v})_{C(v)}$.

For each leaf $\ell \in \lf{\T}$, define $\deter{\ell} = \emptyset$
and $\ndeter{\ell} = \bar S_{\ell}$.
We next describe how to algorithmically compute the sets 
$\deter{v}$ and $\ndeter{v}$ for
internal nodes $v$ recursively from the leaves up to the root.

Let $v$ be any internal node in $\T$ whose left and right children's auxiliary
sets have already been computed. 
Without loss of generality, assume that 
$\ndeter{\lchld{v}}_{C(v)} = \ndeter{\rchld{v}}_{C(v)}$. 
(If not, we can simply remove the elements from $\ndeter{\lchld{v}}$ and 
$\ndeter{\rchld{v}}$ whose projections on to $C(v)$ lie in 
the symmetric difference 
$\ndeter{\lchld{v}}_{C(v)}\bigtriangleup\ndeter{\rchld{v}}_{C(v)}$.) 

When $v$ is {\em not} the root, partition $\ndeter{\lchld{v}}_{C(v)}$ into 
two sets $G$ and $B$ such that for every vector $u$ in the ``good" set $G$, 
the number of vectors in $\ndeter{\lchld{v}}$ whose projection onto $C(v)$ is 
$u$ is upper bounded by 
$\left\lceil \frac{P}{|\ndeter{\rchld{v}}|}\right\rceil - 1$.
The ``bad" set $B$ is $\ndeter{\lchld{v}}_{C(v)}\setminus G$. 
Finally, compute
\begin{eqnarray*}
\deter{v} &=& (\ndeter{\rchld{v}} \Join_G \ndeter{\lchld{v}}) 
\cup \deter{\lchld{v}} \cup \deter{\rchld{v}}\\
\ndeter{v}&=& B.
\end{eqnarray*}
When $v$ {\em is} the root, we compute
$\deter{v} = (\ndeter{\rchld{v}} \Join \ndeter{\lchld{v}})
\cup \deter{\lchld{v}} \cup \deter{\rchld{v}}$
and $\ndeter{v} =\emptyset$. 
By induction on each step of the algorithm, we will show that the following
three properties hold for every node $v \in \T$:
\begin{enumerate}
\item $(S\setminus \deter{v})_{C(v)} \subseteq \ndeter{v}$;
\item $|\deter{v}|\le (|\lf{\T(v)}|-1) \cdot P$; and
\item $|\ndeter{v}|\le \min\left( \min_{\ell\in \lf{\T(v)}} k_{\ell}, \frac{\prod_{\ell \in \lf{\T(v)}} k_{\ell}}{P^{|\lf{\T(v)}|-1}}\right)$.
\end{enumerate}

Assuming the above are true, we first complete the proof of the theorem. 
Let $r$ denote the root of the tree $\T$. By property 1, 
$(S\setminus \deter{r_L})_{C(r_L)} \subseteq \ndeter{r_L}$ and
$(S\setminus \deter{r_R})_{C(r_R)} \subseteq \ndeter{r_R}$.
Also recall that by Lemma~\ref{lem:label-tree-props} 
$C(r_L) \cup C(r_R) = [d]$ and $C(r_L) \cap C(r_R) = \emptyset$.
Hence, 
\[ S \setminus (\deter{r_L} \cup \deter{r_R}) \subseteq 
   \ndeter{r_L} \times \ndeter{r_R}  = 
   \ndeter{r_L} \Join \ndeter{r_R}.
\]
This implies $S \subseteq \deter{r}$. Thus, from $\deter{r}$ we can compute $S$
by keeping only vectors in $\deter{r}$ 
whose projection on any subset $L\in \binom{[d]}{d-1}$ is contained in $S_L$.
In particular, $|S| \leq \deter{r} \leq (d-1)P$, proving 
\eqref{eq:main-bound-proj-d-1}.

For the run time complexity of the above algorithm, we claim that for every 
node $v$, we need time $\tilde{O}(|\deter{v}|+|\ndeter{v}|)$. 
To see this note that for each node $v$, we need to do the following: 
\begin{itemize}
 \item[(i)] Make sure $\ndeter{\lchld{v}}_{C(v)}=\ndeter{\rchld{v}}_{C(v)}$, 
 \item[(ii)] Compute $G$ from $\ndeter{\lchld{v}}$, 
 \item[(iii)] Compute $\deter{v} = \ndeter{\rchld{v}} \Join_G 
\ndeter{\lchld{v}} \cup \deter{\lchld{v}} \cup \deter{\rchld{v}}$ and 
$\ndeter{v} =B$. 
\end{itemize}
It can be verified that all of these steps can be computed in time near-linear 
in the size of the largest set involved (after sorting the sets all the 
required computation can be done with a linear scan of the input lists), 
which along with property 3 leads to a (loose) upper bound of  
$\tilde{O}(P+ \min_{\ell \in \lf{\T(v)}} k_{\ell})$ on 
the run time for node $v$. 
Summing the run time over all the nodes in the tree gives the claimed run 
time.

To complete the proof we argue that properties 1-3 hold. For the base case, 
consider $\ell\in\lf{\T}$. Recall that in this case 
$\deter{\ell}=\emptyset$ and 
$\ndeter{\ell} = \bar S_{\ell}$. 
It can be verified that for this case, properties 1-3 hold. 

Now assume that properties 1-3 hold for all children of an internal node $v$.
We first verify properties 2-3 for $v$. From the definition of $G$, 
\[ |\ndeter{\rchld{v}} \Join_G \ndeter{\lchld{v}}|\le 
   \left(\left\lceil \frac{P}{|\ndeter{\rchld{v}}|}\right\rceil-1\right)
   \cdot |\ndeter{\rchld{v}}|\le P. 
\] 
From the inductive upper bounds on $\deter{\lchld{v}}$ and $\deter{\rchld{v}}$, 
property 2 holds at $v$. 
By definition of $G$ and an averaging argument, note that 
\[ |B|=|\ndeter{v}|\le|\ndeter{\lchld{v}}|\cdot 
   \frac{1}{\lceil P/|\ndeter{\rchld{v}}|\rceil}\le 
   \frac{|\ndeter{\lchld{v}}| \cdot |\ndeter{\rchld{v}}|}{P}. \]
From the induction hypotheses on $v_L$ and $v_R$, we have
$|\ndeter{\lchld{v}}|\le \frac{\prod_{\ell\in\lf{\T(\lchld{v})}} k_{\ell}}{P^{|\lf{\T(\lchld{v})}|-1}}$ and $|\ndeter{\rchld{v}}|\le \frac{\prod_{\ell\in\lf{\T(\rchld{v})}} k_{\ell}}{P^{|\lf{\T(\rchld{v})}|-1}}$, which implies that $|\ndeter{{v}}|\le \frac{\prod_{\ell\in\lf{\T({v})}} k_{\ell}}{P^{|\lf{\T({v})}|-1}}$. Further, it is easy to see that $|\ndeter{{v}}|\le \min(|\ndeter{\lchld{v}}|, |\ndeter{\rchld{v}}|)$, 
which by induction implies that 
$|\ndeter{v}|\le \min_{\ell\in \lf{\T(v)}} k_{\ell}$. Property 3 is thus
verified.

Finally, we verify property 1. By induction, we have 
$(S\setminus\deter{\lchld{v}})_{C(\lchld{v})} \subseteq
\ndeter{\lchld{v}}$ and
$(S\setminus \deter{\rchld{v}})_{C(\rchld{v})} \subseteq
\ndeter{\rchld{v}}$.
This along with the fact that $C(\lchld{v}) \cap C(\rchld{v}) = C(v)$ 
implies that $(S\setminus \deter{\lchld{v}} \cup \deter{\rchld{v}})_{C(v)} \subseteq \ndeter{\lchld{v}}_{C(v)}\cap \ndeter{\rchld{v}}_{C(v)} =B\uplus G$. 
Further, every vector in 
$(S\setminus\deter{\lchld{v}}\cup\deter{\rchld{v}})$ whose projection onto 
$C(v)$ is in $G$ also belongs to 
$\ndeter{\rchld{v}} \Join_G \ndeter{\lchld{v}}$. 
This implies that $(S\setminus \deter{v})_{C(v)}= B= \ndeter{v}$, 
as desired.

\subsection{Error-Tolerant Constructive Loomis-Whitney Inequality}

In this section, we prove the following  version of the Loomis-Whitney 
inequality that can handle certain ``errors," i.e. we are interested in the 
set $S$ such that a few projections need not lie in the given input 
projection sets. 
In particular, we will prove the following:

\begin{theorem}
\label{thm:LW-d-1-error}
Let $d\geq 2$  and $0\le e\le d-2$ be integers. 
Let $\bar S_i\subseteq \Sigma^{[d]\setminus\{i\}}$ be  sets of vectors such 
that for every $i\in [d]$,
$|\bar S_i| = k_i$ for some positive integer $k_i$. 
Let $S\subseteq \Sigma^{d}$ be the largest set such that for every 
vector $(v_1,\dots,v_d)\in S$, there are at least $n-e$ values of $i \in [d]$ 
for which $(v_1, \dots, v_{i-1},v_{i+1},\dots,v_d) \in \bar S_i$. 
Then,
\begin{equation}
\label{eq:main-bound-proj-d-1-error}
|S| \le \sum_{i=0}^e \sum_{B\in \binom{[d]}{i}} (d-i-1) \cdot \left(\prod_{j\in [d]\setminus B} k_j\right)^{\frac{1}{d-i-1}}.
\end{equation}
Further $S$ can be computed from the projections 
$\{\bar S_i\}_{i\in [d]}$ in time 
\[\tilde{O}\left( \sum_{i=0}^e \sum_{B \in \binom{[d]}{i}} 
 (d-i-1)\left(\prod_{j\in [d]\setminus B} k_j\right)^{\frac{1}{d-i-1}}+
 \sum_{\ell=1}^{e}\binom{d-1}{\ell-1}\sum_{j\in [d]} k_j\right).
\]
\end{theorem}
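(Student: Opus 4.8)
The plan is to reduce the error-tolerant inequality to the error-free constructive version, Theorem~\ref{thm:LW-d-1}, by conditioning on the set of ``bad'' coordinates. For $v=(v_1,\dots,v_d)\in\Sigma^{[d]}$ call $j\in[d]$ \emph{bad for $v$} if $v_{[d]\setminus\{j\}}\notin\bar S_j$, and let $B(v)\subseteq[d]$ be the set of bad coordinates; then $S=\{v:|B(v)|\le e\}$. For a fixed $B\subseteq[d]$ with $|B|=i$ put $S_B=\{v\in\Sigma^{[d]}:v_{[d]\setminus\{j\}}\in\bar S_j\text{ for all }j\in[d]\setminus B\}$, so that $S=\bigcup_{|B|\le e}S_B$ and, since the classes $\{v:B(v)=B\}$ partition $S$ with each contained in $S_B$, we get $|S|\le\sum_{i=0}^{e}\sum_{B\in\binom{[d]}{i}}|S_B|$. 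It therefore suffices to prove, for each such $B$ (note $i\le e\le d-2$, so $m:=d-i\ge 2$), the bound $|S_B|\le(m-1)\bigl(\prod_{j\in[d]\setminus B}k_j\bigr)^{1/(m-1)}$ and to exhibit an algorithm computing $S_B$ within the corresponding time budget; summing over $B$ yields both parts of the theorem.

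Fix such a $B$ and set $D=[d]\setminus B$, $|D|=m$. For each assignment $a\in\Sigma^{B}$ let $S_B^{a}=\{v_D:v\in S_B,\ v_B=a\}\subseteq\Sigma^{D}$; these fibers partition $S_B$, so $|S_B|=\sum_{a}|S_B^{a}|$. For $j\in D$ we have $B\subseteq[d]\setminus\{j\}$, so the slice $\bar S_j^{\,a}:=\{x_{D\setminus\{j\}}:x\in\bar S_j,\ x_B=a\}\subseteq\Sigma^{D\setminus\{j\}}$ is well defined, has size $k_j^{(a)}:=|\{x\in\bar S_j:x_B=a\}|$, and $\sum_{a}k_j^{(a)}=k_j$. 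One checks directly that $S_B^{a}$ is exactly the set computed by Theorem~\ref{thm:LW-d-1} in dimension $m$ from the inputs $\{\bar S_j^{\,a}\}_{j\in D}$; hence $|S_B^{a}|\le(m-1)\prod_{j\in D}\bigl(k_j^{(a)}\bigr)^{1/(m-1)}$, and that theorem also produces $S_B^{a}$ (so, ranging over the nonempty fibers, $S_B$) in time near-linear in $(m-1)\prod_{j\in D}\bigl(k_j^{(a)}\bigr)^{1/(m-1)}+\sum_{j\in D}k_j^{(a)}$, after an $\tilde O(\sum_{j\in D}k_j)$ preprocessing step that groups each $\bar S_j$ by its $B$-coordinates.

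It then remains to sum over $a$. The key inequality is
\[
  \sum_{a\in\Sigma^{B}}\ \prod_{j\in D}\bigl(k_j^{(a)}\bigr)^{1/(m-1)}\ \le\ \prod_{j\in D}k_j^{1/(m-1)},
\]
which I would prove by generalized H\"older with all exponents equal to $m$ (so $\sum_{j\in D}\tfrac1m=1$): the left side is $\le\prod_{j\in D}\bigl(\sum_{a}(k_j^{(a)})^{m/(m-1)}\bigr)^{1/m}$, and since $k_j^{(a)}\le k_j$ one has $(k_j^{(a)})^{m/(m-1)}\le k_j^{1/(m-1)}\,k_j^{(a)}$, so $\sum_{a}(k_j^{(a)})^{m/(m-1)}\le k_j^{m/(m-1)}$ and the claim follows. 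Plugging this in gives $|S_B|\le(m-1)\prod_{j\in[d]\setminus B}k_j^{1/(m-1)}=(d-i-1)\bigl(\prod_{j\in[d]\setminus B}k_j\bigr)^{1/(d-i-1)}$, and summing over all $B$ with $|B|\le e$ proves \eqref{eq:main-bound-proj-d-1-error}. For the running time, the ``$\prod$'' contributions sum over fibers and then over $B$, via exactly the same H\"older estimate, to $\tilde O$ of the first term of the claimed bound, while the ``$\sum_j k_j^{(a)}$''-plus-preprocessing contributions sum to $\tilde O$ of $\sum_{i=0}^{e}\sum_{B\in\binom{[d]}{i}}\sum_{j\notin B}k_j$, which a routine re-indexing rewrites in the stated $\sum_{\ell=1}^{e}\binom{d-1}{\ell-1}\sum_{j}k_j$ form. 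I expect the main obstacle to be precisely this last accounting --- making the fiber enumeration and the H\"older bookkeeping reproduce the claimed expression with no spurious logarithmic or exponent losses; note in particular that the device $k_j^{(a)}\le k_j$ is what rescues the H\"older step, whose exponents would otherwise sum to $m/(m-1)>1$ and fail.
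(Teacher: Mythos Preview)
Your proof is correct, but the paper takes a shorter route that avoids the fibering-plus-H\"older step entirely.

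The paper observes that the tree-based algorithm inside the proof of Theorem~\ref{thm:LW-d-1} can simply be re-run with only $m=d-|B|$ leaves, one for each $\ell\in D=[d]\setminus B$, while keeping the leaf labels $C(\ell)=[d]\setminus\{\ell\}$ as subsets of the \emph{full} index set $[d]$ and keeping $\ndeter{\ell}=\bar S_\ell\subseteq\Sigma^{[d]\setminus\{\ell\}}$. One still has $C(v_L)\cup C(v_R)=[d]$ at every internal node, so the determined sets $\deter{v}$ live in $\Sigma^{[d]}$ throughout; the only change is that the root label becomes $C(r)=B$ rather than $\emptyset$, and $P$ is redefined as $\bigl(\prod_{j\in D}k_j\bigr)^{1/(m-1)}$. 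The three invariants in that proof go through verbatim, yielding $|S_B|\le |\deter{r}|\le (m-1)P$ directly. Summing over $B$ finishes.

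Your route instead treats Theorem~\ref{thm:LW-d-1} as a black box in dimension $m$: you slice every $\bar S_j$ along its $B$-coordinates, apply the $m$-dimensional bound fiberwise, and then reassemble via the generalized H\"older inequality $\sum_a\prod_{j\in D}(k_j^{(a)})^{1/(m-1)}\le\prod_{j\in D}k_j^{1/(m-1)}$ (your use of $k_j^{(a)}\le k_j$ to repair the exponent is exactly right). This is a legitimate and self-contained reduction; it buys you not having to reopen the proof of Theorem~\ref{thm:LW-d-1}, at the cost of the extra H\"older bookkeeping. The paper's approach buys the opposite trade: no H\"older, but you must verify the tree argument still works with fewer leaves and a nonempty root label.

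One small remark on your final ``routine re-indexing'': your $\sum_{i=0}^{e}\sum_{|B|=i}\sum_{j\notin B}k_j$ equals $\bigl(\sum_{i=0}^{e}\binom{d-1}{i}\bigr)\sum_j k_j$, whereas the stated second term is $\bigl(\sum_{i=0}^{e-1}\binom{d-1}{i}\bigr)\sum_j k_j$; the two differ by $\binom{d-1}{e}\sum_j k_j$. This is not a flaw in your argument---the paper's own approach produces the same expression you do---so the discrepancy is in the theorem's stated bound, and in any case the difference is absorbed by the $\tilde O$.
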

\begin{proof}
For each potential ``error set" $B \subset [d]$ with the number of errors
$|B| \leq e$, we apply the algorithm from Theorem \ref{thm:LW-d-1} to join
all $\bar S_\ell$, for $\ell \in [d] \setminus B$. 
The algorithm is identical to that of Theorem \ref{thm:LW-d-1}, except for
two facts. First, the tree $\T$ now only has $d-|B|$ leaves, each identified
by a member $\ell \in [d]\setminus B$.
Each leaf $\ell$ has label $C(\ell) = [d]\setminus \{\ell\}$ as before.
Second, the product $P$ is now defined to be 
\[ P = \left(\prod_{j\in [d]\setminus B} k_j\right)^{\frac{1}{d-|B|-1}}. \]
Note also that the label $C(r)$ of the root $r$ of $\T$ is no longer the
emptyset; however, this fact does not change the analysis one bit.
\end{proof}

\section{Omitted Material from Section~\ref{sec:lb}}

Note that
$\mathbf P = \mathbf P^2 = \mathbf P^T\mathbf P$ because any orthogonal
projection matrix is symmetric and idempotent.
Hence, for any two vectors $\mathbf x, \mathbf y \in \mathbb R^N$,
\[ \langle \mathbf{Px}, \mathbf y \rangle = \langle \mathbf x, \mathbf{Py} \rangle = \langle \mathbf x, \mathbf P^T\mathbf{Py}\rangle
 = \langle \mathbf{Px}, \mathbf {Py} \rangle.
\]
In particular,
$\langle \mathbf P\mathbf e_j, \mathbf e_j\rangle = \|\mathbf{Pe}_j\|_2^2$
for any $j\in [N]$.
The following was proved in \cite{MR2449058}. We provide here a very short
proof.

\begin{prop}
Let $\Phi$ be an arbitrary real matrix of dimension $m\times N$.
Then, there exists $j^* \in [N]$ such that
\[ \|\mv P\mv e_{j^*}\|_2^2 =
   \langle \mathbf{Pe}_{j^*}, \mathbf e_{j^*} \rangle \geq  1 - m/N. \]
\label{prop:DeVore-non-flat}
\end{prop}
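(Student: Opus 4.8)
The plan is to use a simple trace/averaging argument. First I would recall the identity already noted in the excerpt: since $\mathbf P = \mathbf P^2 = \mathbf P^T\mathbf P$, for every $j\in[N]$ we have $\|\mathbf{Pe}_j\|_2^2 = \langle \mathbf{Pe}_j,\mathbf{Pe}_j\rangle = \langle \mathbf{P}^T\mathbf{Pe}_j,\mathbf e_j\rangle = \langle \mathbf{Pe}_j,\mathbf e_j\rangle$, which is exactly the $j$-th diagonal entry of the matrix $\mathbf P$.

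Next I would sum over $j$. The sum of the diagonal entries is the trace, so
\[
\sum_{j=1}^N \|\mathbf{Pe}_j\|_2^2 = \sum_{j=1}^N \langle \mathbf{Pe}_j,\mathbf e_j\rangle = \operatorname{tr}(\mathbf P).
\]
Since $\mathbf P$ is the orthoprojection onto the null space $\mathcal N$ of $\Phi$, its trace equals its rank, which equals $\dim(\mathcal N)$. Using the assumption (without loss of generality) that $\Phi$ has full row rank $m$, the rank-nullity theorem gives $\dim(\mathcal N) = N - m$. Hence $\sum_{j=1}^N \|\mathbf{Pe}_j\|_2^2 = N - m$.

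Finally, by averaging there must exist an index $j^*\in[N]$ for which $\|\mathbf{Pe}_{j^*}\|_2^2 \geq (N-m)/N = 1 - m/N$, and by the identity from the first step this quantity also equals $\langle \mathbf{Pe}_{j^*},\mathbf e_{j^*}\rangle$, which is the claim. I do not anticipate a genuine obstacle here; the only point requiring a word of care is justifying $\operatorname{tr}(\mathbf P) = \dim(\mathcal N)$ (e.g.\ by writing $\mathbf P = \sum_i \mathbf u_i\mathbf u_i^T$ for an orthonormal basis $\{\mathbf u_i\}$ of $\mathcal N$, so $\operatorname{tr}(\mathbf P) = \sum_i \|\mathbf u_i\|_2^2 = \dim(\mathcal N)$) and handling the case where $\Phi$ does not have full row rank, which only makes $\dim(\mathcal N)$ larger and thus only helps.
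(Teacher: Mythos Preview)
Your proposal is correct and is essentially the same trace/averaging argument the paper gives: observe that $\langle \mathbf{Pe}_j,\mathbf e_j\rangle$ is the $j$th diagonal entry of $\mathbf P$, sum to get $\operatorname{tr}(\mathbf P)=N-m$, and pick the largest term. The paper's proof is in fact terser than yours, omitting the justification that $\operatorname{tr}(\mathbf P)=\dim(\mathcal N)$ and the remark about the non-full-rank case.
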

\begin{proof}
Since $\langle \mathbf P\mathbf e_j, \mathbf e_j\rangle$ is precisely the
$j$th diagonal entry of the matrix $\mathbf P$, we have
$\mbox{trace}(\mathbf P) =
   \sum_{j=1}^N\langle \mathbf P\mathbf e_j, \mathbf e_j\rangle$.
But the trace of an orthoprojector is the dimension of the target
space which is $N-m$ in this case. Hence,
$N-m = \sum_{j=1}^N\langle \mathbf P\mathbf e_j, \mathbf e_j\rangle$,
which completes the proof.
\end{proof}

\subsection{The forall case}
\label{app:forall}

\begin{cor}[Cohen-Dahmen-DeVore \cite{MR2449058}]
\label{cor:l2l2-lb-N}
Let $\Phi$ be an $m \times N$ $\ltlt$ ``forall'' sparse recovery
measurement matrix with $k\geq 1$, i.e. there exists a decoding
algorithm $A$ and a constant $C\geq 1$ such that
for any input signal $\mathbf x \in \R^N$,
\begin{equation}
 \|\mv x - A(\Phi\mv x)\|_2 \leq C \cdot \|\mv x - \mv x_k\|_2
\label{eqn:forall-guarantee}
\end{equation}
then it must be the case that $m\geq N/C^2$.
\end{cor}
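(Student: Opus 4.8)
The plan is to argue by contradiction: assuming $m < N/C^2$, I would exhibit two distinct inputs with identical measurements on which no decoder can succeed simultaneously. First I would invoke Proposition~\ref{prop:DeVore-non-flat} to obtain a coordinate $j^* \in [N]$ with $\|\mathbf{Pe}_{j^*}\|_2^2 \geq 1 - m/N$, and set $\mathbf n := \mathbf{Pe}_{j^*}$, where $\mathbf P$ is the orthoprojection onto the null space of $\Phi$. Since $\mathbf n$ lies in the null space of $\Phi$, we have $\Phi\mathbf n = \mathbf 0 = \Phi\mathbf 0$, so the (deterministic) decoder $A$ returns the same vector on inputs $\mathbf n$ and $\mathbf 0$.

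Next I would record two elementary facts about $\mathbf n$. Writing $\beta := \langle \mathbf n, \mathbf e_{j^*}\rangle = \|\mathbf{Pe}_{j^*}\|_2^2$, symmetry and idempotence of $\mathbf P$ give $\|\mathbf n\|_2^2 = \langle \mathbf{Pe}_{j^*}, \mathbf{Pe}_{j^*}\rangle = \langle \mathbf{Pe}_{j^*}, \mathbf e_{j^*}\rangle = \beta$, while the $j^*$-th coordinate of $\mathbf n$ equals $\beta$. Hence, for every $k\geq 1$,
\[
\|\mathbf n - \mathbf n_k\|_2^2 \;\leq\; \|\mathbf n - \mathbf n_1\|_2^2 \;\leq\; \|\mathbf n\|_2^2 - \beta^2 \;=\; \beta(1-\beta).
\]
Applying the forall guarantee \eqref{eqn:forall-guarantee} to $\mathbf x = \mathbf 0$ forces $A(\Phi\mathbf 0) = \mathbf 0$, since the right-hand side is $0$; applying it to $\mathbf x = \mathbf n$ and using $A(\Phi\mathbf n) = A(\Phi\mathbf 0) = \mathbf 0$ then forces $\|\mathbf n\|_2 \leq C\|\mathbf n - \mathbf n_k\|_2$, i.e. $\sqrt{\beta} \leq C\sqrt{\beta(1-\beta)}$. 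Dividing by $\sqrt\beta > 0$ and squaring gives $1 \leq C^2(1-\beta)$, i.e. $\beta \leq 1 - 1/C^2$. On the other hand, $\beta \geq 1 - m/N > 1 - 1/C^2$ under the assumption $m < N/C^2$, a contradiction. Therefore $m \geq N/C^2$.

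There is essentially no obstacle beyond what is already established: the entire difficulty is absorbed into Proposition~\ref{prop:DeVore-non-flat} (the trace/averaging argument producing a ``non-flat'' null space vector). The only minor points to check are that $\mathbf n \neq \mathbf 0$ — which holds since $\|\mathbf n\|_2^2 = \beta \geq 1 - m/N > 0$ whenever $m < N$, so $\mathbf 0$ and $\mathbf n$ are genuinely different inputs with the same image under $\Phi$ — and that the bound $\|\mathbf n - \mathbf n_k\|_2^2 \leq \beta(1-\beta)$ uses only that the single largest-magnitude coordinate of $\mathbf n$ has squared value $\beta$, so it holds for all $k \geq 1$ without any assumption on the support of $\mathbf n$.
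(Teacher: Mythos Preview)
Your proof is correct and follows essentially the same approach as the paper: use Proposition~\ref{prop:DeVore-non-flat} to produce a ``spiky'' null-space vector, observe that the decoder must output $\mathbf 0$ on both $\mathbf 0$ and this vector, and compare $\|\mathbf n\|_2$ against $C\|\mathbf n-\mathbf n_k\|_2$. The only cosmetic differences are that the paper normalizes the null-space vector to unit length (taking $\mathbf x=\mathbf P\mathbf e_{j^*}/\|\mathbf P\mathbf e_{j^*}\|_2$) and derives $1\le C^2 m/N$ directly rather than by contradiction; your unnormalized version with $\|\mathbf n\|_2^2=\beta$ and $n_{j^*}=\beta$ yields the same inequality after cancelling $\beta$. (One tiny slip in your closing remark: the $j^*$-th coordinate has squared value $\beta^2$, not $\beta$; what you actually use is $\|\mathbf n-\mathbf n_1\|_2^2\le\|\mathbf n\|_2^2-n_{j^*}^2=\beta-\beta^2$, which is valid since zeroing the $j^*$-th coordinate is one admissible $1$-sparse approximation.)
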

\begin{proof}
Let $\mv y = A(\Phi \mv 0) = A(\mv 0)$. Then,
By applying \eqref{eqn:forall-guarantee} with $\mv x = \mv 0$, it is easy to see
that $A(\mv 0) = A(\mv \Phi \mv 0) = \mv 0$.
Next, from Proposition \ref{prop:DeVore-non-flat} there exists $j^*\in [N]$
such that
$\langle \mv P \mv e_{j^*}, \mv e_{j^*} \rangle \geq 1 - m/N$.
Let $\mv x = \frac{\mv P\mv e_{j^*}}{\|\mv P\mv e_{j^*}\|_2}$ then
$\|\mv x\|_2 = 1$ and $\langle \mv x, \mv e_{j^*}\rangle^2 \ge
\|\mv P\mv e_{j^*}\|_2^2$.
Moreover, $A(\Phi \mv x) = A(\mv 0) = \mv 0$
because $\mv P\mv e_{j^*}$ is in the null space of $\Phi$.
Consequently, from \eqref{eqn:forall-guarantee} we obtain
\[
 1 = \|\mv x\|_2^2
\leq C^2 \cdot \|\mv x - \mv x_k\|^2_2
\leq C^2 \cdot \left(\sum_{j\neq j^*} x_j^2\right)
= C^2 \cdot \left( 1 - \langle \mv x, \mv e_{j^*} \rangle^2\right)
= C^2 \cdot \left( 1 - \|\mv P\mv e_{j^*}\|_2^2\right)
\leq C^2 \cdot m/N,
\]
which is the desired result.
\end{proof}

\subsection{Proof of Theorem~\ref{thm:main-lb}}
\label{app:yao}

To complete our lowerbound proof, we need a ``continuous'' version of Yao's
minimax principle.

	\begin{defn}
		A sparse recovery system $S = (\Phi,A)$ is defined to be a pair consisting of a measurement matrix $\Phi \in \R^{m \times N}$ and a mapping $A:\R^m \to \R^N$ .  We assume that the mapping algorithm $A$ has a finite description length, is deterministic, and is the best mapping for the matrix $\Phi$. 
	\end{defn}
We consider sparse recovery systems $(\Phi,A)$ in a compact set $Y$ (that is, our matrices $\Phi$ are in a compact set in $\R^{m \times N}$).  Let ${\mathcal R}$ be a probability measure on sparse recovery systems in $Y$, by which we mean ${\Rdist}$ specifies a distribution on measurement matrices $\Phi$ (and the mapping $A$ is the best possible deterministic, finite description length mapping for that distribution on $\Phi$).  Let ${\mathcal Y}$ be a compact set of convex combinations of probability measures $\Rdist$ on sparse recovery systems in $Y$.

Let us assume that our input vectors $\vx \in X$ and that $X$ is a compact subset of $\R^N$.  Let $\D$ be a probability measure on input vectors in $X$ and let ${\mathcal X}$ be a convex set of all such measures.

\begin{defn}
	We say that a sparse recovery system $S = (\Phi, A)$ decodes $\vx$ correctly if
	\[
		\|\vx - A(\Phi \vx)\|_2 \leq C \|\vx - \vx_k\|_2.
	\]
	We define the cost of a sparse recovery system $S$ on input $\vx$ as
	\[
		\cost(\vx, S) = \begin{cases}
			1 & \text{if $A$ does not decode $\vx$ correctly} \\
			0 & \text{otherwise}.
		\end{cases}
	\]
\end{defn}
Thus, the failure probability of a randomized sparse recovery system 
$S$ on input $\vx$ is
\[
	\cost(\vx,\Rdist) = \E_{S \sim \Rdist} \Big( \cost(\vx,S) \Big).
\]

\begin{defn}
 Let $\D \in \X$ and $\Rdist \in \Y$ and define
 \[
 	f(\D,\Rdist) = \cost(\D,\Rdist) = \E_{\vx \sim \D} \Bigg( \E_{S \sim \Rdist}
\Big( \cost(\vx, S) \Big) \Bigg) =  \E_{S \sim \Rdist} \Bigg( \E_{\vx \sim \D}
\Big( \cost(\vx, S) \Big) \Bigg).
 \] 
\end{defn}
Observe that $f$ is bi-linear and always finite (hence, it's a proper convex and concave function).  Furthermore, we can change the order of expectation by Fubini's theorem. 

\begin{lemma}[Continuous Yao's Lemma]
	With the compact, convex sets $\X$ and $\Y$ and the function $f: \X \times \Y \to \R$ defined above, 
	\[
		\max_{\D \in \X} \min_{S \in Y} \cost(\D, S) =  \min_{\Rdist \in \Y}
\max_{\vx \in X} \cost(\vx, \Rdist).
	\]
\end{lemma}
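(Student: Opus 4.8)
The identity is an instance of Sion's minimax theorem \cite{Sion:58}; the plan is to reduce both sides to a single min--max of the bilinear payoff $f$ over the measure spaces $\X$ and $\Y$, and then invoke that theorem.

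First I would record the easy ``weak duality'' inequality: for every $\D \in \X$ and $\Rdist \in \Y$, linearity of expectation and Fubini give
\[
\inf_{S \in Y}\cost(\D,S) \;\le\; \E_{S \sim \Rdist}\cost(\D,S) \;=\; f(\D,\Rdist) \;=\; \E_{\vx \sim \D}\cost(\vx,\Rdist) \;\le\; \sup_{\vx \in X}\cost(\vx,\Rdist),
\]
so taking the supremum over $\D$ and then the infimum over $\Rdist$ yields $\max_{\D}\min_{S}\cost(\D,S) \le \min_{\Rdist}\max_{\vx}\cost(\vx,\Rdist)$. Next I would remove the deterministic optimizations from both sides. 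Since $\X$ contains the Dirac measures $\delta_{\vx}$ and $f(\cdot,\Rdist)$ is affine, $\max_{\D \in \X} f(\D,\Rdist) = \max_{\vx \in X}\cost(\vx,\Rdist)$ for each fixed $\Rdist$ (a mixture never beats the pointwise maximum, with equality attained at a Dirac mass); symmetrically, since $\Y$ contains the Dirac measures $\delta_S$ and $f(\D,\cdot)$ is affine, $\min_{\Rdist \in \Y} f(\D,\Rdist) = \min_{S \in Y}\cost(\D,S)$ for each fixed $\D$. Hence it suffices to prove the minimax identity for $f$ alone, namely $\max_{\D \in \X}\min_{\Rdist \in \Y} f(\D,\Rdist) = \min_{\Rdist \in \Y}\max_{\D \in \X} f(\D,\Rdist)$.

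For this last identity I would apply Sion's theorem: $\X$ and $\Y$ are convex, $\Y$ is compact, and $f$ is bilinear, so $f(\cdot,\Rdist)$ is concave (hence quasi-concave) and $f(\D,\cdot)$ is convex (hence quasi-convex) --- everything except semicontinuity is immediate. The remaining hypothesis, that $f(\cdot,\Rdist)$ be upper semicontinuous on $\X$ and $f(\D,\cdot)$ lower semicontinuous on $\Y$, is the main obstacle and the only place that requires real care, because $\cost(\vx,S)$ is a $0/1$ indicator of the failure event $\|\vx - A(\Phi\vx)\|_2 > C\|\vx - \vx_k\|_2$ and the best finite-description-length decoder $A$ attached to $\Phi$ (as well as the map $\vx \mapsto \vx_k$) need not be continuous. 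I would address this by equipping $\X$ and $\Y$ with the weak-$*$ topologies induced by the compact sets $X \subseteq \R^N$ and $Y \subseteq \R^{m \times N}$ (under which both are compact), writing $f(\D,\Rdist) = \int\!\int \cost(\vx,S)\, d\D(\vx)\, d\Rdist(S)$, and invoking the portmanteau theorem; the content is to show that the failure region, for each fixed $\Phi$ with its optimal decoder, is a finite Boolean combination of polynomially-defined sets --- so it is Borel with a lower-dimensional boundary --- which suffices to push the integrals through weak-$*$ limits and yields the needed (semi)continuity. (If one prefers to sidestep these measure-theoretic details, an equivalent route is to prove the statement first for finite nets in $X$ and $Y$, where it is the classical von Neumann/LP duality, and then pass to the limit using compactness; this merely relocates the same continuity bookkeeping.) Chaining the three displayed relations then gives the theorem.
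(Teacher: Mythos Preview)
Your proposal is correct and follows essentially the same route as the paper: apply Sion's minimax theorem to the bilinear payoff $f$ on $\X\times\Y$, and then collapse the optimizations over measures to optimizations over points using that Dirac masses lie in $\X$ and $\Y$. The paper does exactly this (in the reverse order), simply asserting that ``the hypotheses of the theorem match those of Sion's Minimax Theorem'' and then noting $\max_{\D}\cost(\D,\Rdist')=\max_{\vx}\cost(\vx,\Rdist')$ and $\min_{\Rdist}\cost(\D',\Rdist)=\min_{S}\cost(\D',S)$ from convexity and compactness.

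The one place you go further than the paper is in worrying about the semicontinuity hypothesis of Sion's theorem; the paper does not verify this at all. Your discussion of how to handle it (weak-$*$ topology, portmanteau, or a finite-net discretization argument) is a reasonable sketch, though it is more scaffolding than the paper itself provides.
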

\begin{proof}
First, we observe that the hypotheses of the theorem match those of Sion's Minimax Theorem~\cite{Sion:58}; hence, we immediately have
\[
	\max_{\D \in \X} \min_{\Rdist \in \Y} \cost(\D,\Rdist) = \min_{\Rdist \in \Y} \max_{\D \in \X} 
	\cost(\D,\Rdist).
\]
(in particular, we have used the fact that $\X$ and $\Y$ are compact to ensure that the suprema and infima are attained.)

To finish the proof, we argue that for all distributions on recovery systems $\Rdist' \in \Y$,
\[
	\max_{\D} \cost(\D,\Rdist') = \max_{\D} \E_{\vx \sim \D} \E_{S \sim \Rdist'}
\Big( \cost(\vx, S) \Big) = 
	\max_{\vx \in X} \E_{S \sim \Rdist'} \Big( \cost(\vx, S) \Big)
\]
and that for all distributions on inputs $\D' \in \X$,
\[
	\min_{\Rdist} \cost(\D', \Rdist) = \min_{\Rdist} \E_{S \sim \Rdist} \E_{\vx \sim \D'} 
	    \Big( \cost(\vx, S) \Big) = \min_{S \in Y} \E_{\vx \sim \D'} 
	    \Big( \cost(\vx, S) \Big)
\]
from the convexity of $\X$ and $\Y$ and the compactness of $X$ and $Y$.
\end{proof}

So, if we find some distribution $\D'$ on inputs for which the best 
sparse recovery system has failure probability at least $p$ (i.e. high cost), 
then we have established a lower bound on the failure probability for 
randomized recovery systems:
\begin{align*}
	p & \leq \min_{\Rdist \in \Y} \cost(\D', \Rdist) \\
	  & \leq \max_{\D \in \X} \min_{\Rdist \in \Y}  \cost(\D,\Rdist) \\
	  & = \max_{\D \in \X} \min_{S \in Y} \cost(\D, S) \\
	  & = \min_{\Rdist \in \Y} \max_{\D \in \X} 	\cost(\D,\Rdist) \\
	  &= \min_{\Rdist \in \Y} \max_{\vx \in X} \cost(\vx, \Rdist).
\end{align*}
In Lemma \ref{lmm:randomx}, we exhibited a hard distribution on input vectors 
$\vx$ for which the best sparse recovery system has failure probability at
least $\sqrt{1/\gamma} \cdot e^{-\frac N 2 \ln(2/\gamma)}$,
given that $\gamma$ and $\delta=m/N$ satisfy \eqref{eqn:gammadelta}.
It is not hard to see that
$\gamma=\delta=\frac{1}{12+16C^2}$ satisfy \eqref{eqn:gammadelta}.
Hence, any foreach sparse recovery system with failure probability
at most $p = \sqrt{12+16C^2} \cdot e^{-\frac{\ln(6+8C^2)}{2} \cdot N}$
must have at least $m \geq \delta N = \frac{N}{12+16C^2}$ measurements.
In particular, we have shown that for failure probability 
$2^{-\Theta(N)}$, the number of measurements is $\Omega(N)$. 

We next give the simple reduction to handle the case of
larger failure probability 
$p>\sqrt{12+16C^2} \cdot e^{-\frac{\ln(6+8C^2)}{2}
\cdot N}$. Define 
$N'$
such that $p = \sqrt{12+16C^2} \cdot e^{-\frac{\ln(6+8C^2)}{2} \cdot N'}$,
i.e.
\[ N' = \frac{2}{\ln(6+8C^2)}\ln\left(\frac{\sqrt{12+16C^2}}{p}\right) = \Theta(\log(1/p)). \]
In this case for the
hard distribution, we zero out the last $N-N'$  entries in the input vectors 
and
then apply the hard distribution on the first $N'$ coordinates. This with the
previous result implies for failure probability at most $p$, 
we need $m = \delta N' = \Omega(\log(1/p))$ measurements, as desired.
We just proved Theorem~\ref{thm:main-lb}.

\section{Omitted Material from Section~\ref{sec:ub}}

\subsection{Known Results}
\label{app:known}

The following result from Cohen, Dahmen, DeVore~\cite{CDD2007:NearOptimall2l2} establishes a tight upper bound on the number of measurements with a polynomial time decoding algorithm for the foreach sparse recovery problem.  The algorithm $\algo$ is Orthogonal Matching Pursuit (OMP) which runs in time $O(MNk)$.
\begin{theorem}
\label{thm:upper-poly}
	There is a distribution on $M \times N$ matrices $\Phi$ such that for each $\vx \in \R^N$, the output of OMP after $2k$ iterations satisfies
	\[
		\|\vx - \algo(\Phi \vx) \|_2 \leq C \|\vx - \vx_k\|_2
	\]
	with probability larger than $1 - p$ provided that 
	$M \geq C (k \log(N/k) + \log(1/p))$.
\end{theorem}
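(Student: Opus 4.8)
The plan is to take $\Phi$ to be a random $M\times N$ matrix whose entries are independent, mean-zero, subgaussian with variance $1/M$ (e.g.\ rescaled Gaussian or Bernoulli), with $M = C_0\bigl(k\log(N/k)+\log(1/p)\bigr)$ for a large enough absolute constant $C_0$, and to analyze OMP run for exactly $2k$ iterations. Fix an arbitrary $\vx\in\R^N$, let $T=\supp(\vx_k)$ be the index set of the $k$ largest-magnitude coordinates, and split $\vx=\vx_T+\vd$ with $\vd:=\vx-\vx_k$, so that $\Phi\vx=\Phi\vx_T+\Phi\vd$ and $\|\vd\|_2=\|\vx-\vx_k\|_2$. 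The problem then becomes: recover the $k$-sparse vector $\vx_T$ from $\Phi\vx_T$ corrupted by the additive ``noise'' $\vw:=\Phi\vd$, where both $\|\vw\|_2$ and the coherence $\|\Phi^{T}\vw\|_\infty=\max_{j}|\langle\Phi\mathbf{e}_j,\Phi\vd\rangle|$ must be controlled. Note that $T$ and $\vd$ are fixed once $\vx$ is fixed, so although the construction must work ``for each $\vx$'', we only need $\Phi$ to behave well with respect to a bounded amount of structure.

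First I would isolate three events on $\Phi$ (each depending on $\vx$ only through $T$ and $\vd$): (E1) $\Phi$ satisfies the restricted isometry property of order $4k$ with a small absolute constant $\delta$ (say $\delta=1/10$); (E2) $\|\Phi\vd\|_2\le 2\|\vd\|_2$; and (E3) $\|\Phi^{T}\Phi\vd\|_\infty\le\delta\,\|\vd\|_2$. Standard subgaussian concentration then gives: (E1) fails with probability $\binom{N}{4k}e^{-\Omega(\delta^2 M)}\le e^{-\Omega(M)}$ once $M\gtrsim\delta^{-2}k\log(N/k)$ (a net over each $4k$-dimensional coordinate subspace plus a union bound over supports); (E2) fails with probability $e^{-\Omega(M)}$ (concentration of $\|\Phi\vy\|_2^2$ for the single fixed vector $\vy=\vd$); and (E3) fails with probability at most $N e^{-\Omega(\delta^2 M)}$ (a subgaussian tail bound for each coordinate $\langle\Phi\mathbf{e}_j,\Phi\vd\rangle$ conditioned on the remaining columns, followed by a union bound over $j\in[N]$). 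Choosing $C_0$ large enough — and using that $k\log(N/k)\ge\log N$ for $1\le k\le N/2$, so $M\gtrsim\log N+\log(1/p)$ — all three events hold simultaneously except with probability $<p$; this union bound is precisely where the lower bound on $M$ is consumed.

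Conditioned on (E1)--(E3), I would run OMP for $2k$ iterations on $\Phi\vx=\Phi\vx_T+\vw$ and argue by induction on the iteration count that the residual shrinks geometrically down to the ``noise floor'' $\Theta(\|\vw\|_2)$. Split $T=T_{\mathrm{big}}\cup T_{\mathrm{small}}$ according to whether $|x_j|$ exceeds a fixed multiple of $\|\vd\|_2/\sqrt{k}$; then the standard noisy OMP-under-RIP estimate (the RIP of order $|S^{(t)}|+k\le 3k\le 4k$ controls the restricted conditioning, while (E3) bounds how far the tail can mislead the greedy selection) shows that within $2k$ steps OMP selects all of $T_{\mathrm{big}}$ and its least-squares estimate $\hat\vx=\Phi_{S^{(2k)}}^{\dagger}\Phi\vx$ satisfies $\|\vx_T-\hat\vx\|_2\le C_1\bigl(\|\vw\|_2+\sqrt{k}\,\|\Phi^{T}\vw\|_\infty+\|\vx_{T_{\mathrm{small}}}\|_2\bigr)$ for an absolute constant $C_1$. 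By (E2), (E3), and $\|\vx_{T_{\mathrm{small}}}\|_2\le\sqrt{k}\cdot\|\vd\|_2/\sqrt{k}=\|\vd\|_2$, the right-hand side is $O(\|\vd\|_2)$, so by the triangle inequality $\|\vx-\hat\vx\|_2\le\|\vx_T-\hat\vx\|_2+\|\vd\|_2=C\|\vx-\vx_k\|_2$ after absorbing constants. The running-time bound $O(MNk)$ follows from $2k$ iterations, each costing $O(MN)$ for the correlation step and $O(M|S|^2)$ for the least-squares update.

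The step I expect to be the main obstacle is the OMP progress analysis. Unlike CoSaMP or iterative hard thresholding, a single OMP iteration can select a coordinate outside $T$, so one cannot simply track $|T\setminus S^{(t)}|$; one must show, using the RIP of order $\approx 3k$ together with the coherence bound (E3), that the residual energy drops by a constant fraction at each step until it reaches $\Theta(\|\vw\|_2)$, that this happens within $2k$ (not merely $O(k)$) iterations, and that the unrecovered small coefficients $\vx_{T_{\mathrm{small}}}$ contribute only $O(\|\vd\|_2)$ and do not accumulate. Pinning this down with the precise iteration count $2k$ and an explicit constant $C$ is delicate; I would expect it to require either a careful bespoke OMP-under-RIP argument or a direct exploitation of the random structure of $\Phi$ rather than a black-box invocation, and this is where most of the technical work lies.
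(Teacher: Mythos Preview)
The paper does not prove this theorem at all: it is quoted verbatim as a known result of Cohen, Dahmen and DeVore~\cite{CDD2007:NearOptimall2l2}, and the only additional content the paper supplies is the remark that the distribution on $\Phi$ can be taken to be (i) i.i.d.\ Gaussian entries with variance $1/M$, (ii) i.i.d.\ $\pm 1/\sqrt{M}$ Bernoulli entries, or (iii) columns uniform on $S^{M-1}$. There is therefore no ``paper's own proof'' to compare your proposal against.

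That said, your sketch is a reasonable reconstruction of how such a statement is typically established, and it is consistent with the hint the paper does give (subgaussian ensembles). The decomposition into a fixed $k$-sparse part $\vx_T$ plus a deterministic ``noise'' $\vd=\vx-\vx_k$, followed by concentration for the three events (E1)--(E3), is the natural route; the union-bound bookkeeping you give is exactly what produces the additive $\log(1/p)$ term in $M$. You are also right that the delicate step is not the concentration but the deterministic OMP analysis: showing that \emph{exactly} $2k$ (rather than $Ck$ for some unspecified $C$) greedy steps suffice under RIP plus the coherence bound (E3), and that wrong selections do not accumulate, is where Cohen--Dahmen--DeVore do the real work. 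Your outline of that step (split $T$ into big and small coefficients, geometric residual decay to the noise floor) is the correct shape, but as you anticipate, turning it into a proof with the stated iteration count requires either reproducing their argument or invoking a comparably sharp OMP-under-RIP lemma; a black-box citation to generic ``noisy OMP'' results would likely give only $O(k)$ iterations.
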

Cohen, et al. provide three examples of such distributions on matrices: (i) iid Gaussian random entries $\Phi_{i,j}$ with variance $1/M$, (ii) iid Bernoulli random entries $\Phi_{i,j}$ with values $\pm 1/\sqrt{M}$, and (iii) columns of $\Phi$ drawn from a uniform distribution on $S^{M-1}$.

\subsection{Proof of Lemma~\ref{lem:weak-identification-amplify}}
\label{app:weak-amplify}
\begin{proof}
We will use a simple repetition trick. $\M'$ is just $s$ copies of $\M$, where each copy gets fresh random bits. The decoding algorithm is as follows: given the outputs $I_1,\dots,I_s$ from the $s$ copies of $\M$, output an $i\in \cup_{j=1}^s I_j$ if it appears in $>s/2$ $I_j$'s. Next we argue that the claimed bounds hold.

First note that since $|\cup_{j=1}^s I_j|\le s\ell$ and each $i$ that is output appear in $>s/2$ intermediate outputs $I_j$, we can only output $<2\ell$ such indices. Next note that by Lemma~\ref{lem:chernoff}, except with probability $p^{\Omega(s)}$, at most $s/4$ outputs $I_j$ miss more than $\zeta k$ elements from $H_k(\vx)$. Call the remaining (at least $3s/4$) intermediate outputs to be {\em good}. Note that we will not lose an $i$ if it appears in $>s/2$ good $I_j$'s. Then a simple counting argument implies that we can have at most $3\zeta k$ that are missing form at least $s/4$ good $I_j$'s.
\end{proof}

\section{Proof of Theorem~\ref{thm:l2l2-exp}}
\label{app:thm-l2l2-exp}

Let $\vu=(u_1,\dots,u_M)=\M \vx$. For any $i\in [N]$, define it's estimate $\bar{x}_i$ to the median of the values $\{ \M_{i,j}\cdot u_j\}_{j\in \Gamma(i)}$. We will show that 

\begin{lemma}
\label{lem:estimate} The following holds with probability $1-\binom{N}{\gamma k}^{-s}$.
Except for $\gamma k$ positions $i\in [N]$, every other index will have a good estimate: i.e., 
\[|x_i-\bar{x}_i|\le \sqrt{\eta/k}\cdot\|\vz\|_2.\]
\end{lemma}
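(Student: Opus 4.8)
The plan is to show that for a fixed index $i \in [N]$, the estimate $\bar x_i$ (the median of $\{\M_{i,j} u_j\}_{j \in \Gamma(i)}$) is good except with small probability, and then argue that the \emph{number} of ``bad'' indices concentrates sharply enough to take a union bound over all $\binom{N}{\gamma k}$ subsets of size $\gamma k$. First I would fix a signal $\vx = \vy + \vz$ with $|\supp(\vy)| \le k$. Think of $\vy$ as the heavy hitters (plus heavy-tail mass) and $\vz$ as the light tail. For each measurement $j \in \Gamma(i)$ we have $u_j = \sum_{i' : j \in \Gamma(i')} \M_{i',j} x_{i'} = \M_{i,j} x_i + \sum_{i' \ne i, j \in \Gamma(i')} \M_{i',j} x_{i'}$, so $\M_{i,j} u_j = x_i + (\text{noise}_j)$ where $\text{noise}_j = \M_{i,j}\sum_{i' \ne i} \M_{i',j} x_{i'}$ collects all the colliding contributions. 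I would split the noise into a ``heavy collision'' part (collisions with $\supp(\vy)$) and a ``tail'' part (collisions with coordinates in the light tail). By the expansion property of $G$ (a $(4k,\eps)$-expander with $\eps = O(\gamma^3\eta)$), Lemma~\ref{lem:exp-unique} / Lemma~\ref{lem:exp-intersect} imply that for all but $O(\gamma k)$ heavy hitters $i$, at least, say, a $(1-1/10)$-fraction of the $\ell$ measurements $j \in \Gamma(i)$ contain \emph{no} other element of $\supp(\vy)$; call such a measurement ``$\vy$-isolated for $i$.''

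Second, for the tail noise in a $\vy$-isolated measurement, I would bound $\E_{\pm}[\text{noise}_j^2]$ over the random $\pm 1$ signs: since $\M_{i,j}$ multiplies the whole sum and the interfering signs are independent, $\E[\text{noise}_j^2] = \sum_{i' : j \in \Gamma(i'), i' \ne i} x_{i'}^2 \le \|\vz_{\Gamma^{-1}(j)}\|_2^2$, the $\ell_2^2$ mass of tail coordinates routed to bucket $j$. Here the expansion property is used again: it forces the total tail mass to spread out so that, except for $O(\gamma k)$ buckets (equivalently affecting $O(\gamma k)$ indices $i$ via Lemma~\ref{lem:exp-rt-nbrs}), a typical bucket in $\Gamma(i)$ carries only an $O(\eta/k)\|\vz\|_2^2$ share of the squared tail mass. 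By Markov over the signs, for each such ``low-noise, $\vy$-isolated'' measurement $j$, $\Pr_{\pm}[|\text{noise}_j| > \sqrt{\eta/k}\,\|\vz\|_2] \le 1/10$. Thus for a good index $i$, each of a $(1 - O(1))$-fraction of its $\ell$ measurements independently (pairwise independence of the signs suffices for this estimate) satisfies $|\M_{i,j} u_j - x_i| \le \sqrt{\eta/k}\,\|\vz\|_2$ with probability $\ge 9/10$; the median of $\ell$ such values is then within $\sqrt{\eta/k}\,\|\vz\|_2$ of $x_i$ unless more than half the measurements are bad.

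Third — and this is the step I expect to be the main obstacle — I need the failure probability to be $\binom{N}{\gamma k}^{-s}$, not merely small per index. The per-index median fails only if $\ge \ell/2$ of its measurements are bad, which by a Chernoff bound (using $\ell \ge c\log(N/k)$ with $c = \Theta(s)$ and the limited-independence version, Lemma~\ref{lem:chernoff}) happens with probability $\le (N/k)^{-\Omega(s)}$ per index, \emph{but} the bad events across indices are not independent. The clean way around this is to bound the number of bad indices by a union bound over subsets: the probability that there exists a set $B$ of $\gamma k$ indices all of which have bad medians is at most $\binom{N}{\gamma k} \cdot \big((N/k)^{-\Omega(s)}\big)^{\gamma k}$ if the bad-median events, restricted to a set of $\gamma k$ indices whose neighborhoods are suitably ``separated,'' were independent — which they are not in general. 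The correct argument, following the expander-based analysis adapted from~\cite{ely-martin}, is instead to count: the total $\ell_2^2$ mass of the tail is fixed, so the number of buckets $j$ with $\|\vz_{\Gamma^{-1}(j)}\|_2^2 > \eta/(Ck)\|\vz\|_2^2$ is at most $Ck/\eta$, and by Lemma~\ref{lem:exp-rt-nbrs} these ``heavy'' buckets touch at most $O(\gamma k)$ indices with more than an $\eps'$-fraction of heavy buckets in their neighborhood; similarly the collision structure with $\supp(\vy)$ is \emph{deterministic} given $G$ and affects only $O(\gamma k)$ indices by expansion. The only randomness is in the $\pm1$ signs, and the relevant tail event is that for some index $i$ outside the $O(\gamma k)$ structurally-bad ones, more than $\ell/2$ of its (structurally good) measurements have $|\text{noise}_j| > \sqrt{\eta/k}\|\vz\|_2$; since within such a measurement the bad-sign event has probability $\le 1/10$ and, across the $\ell$ measurements of a fixed $i$, the signs are independent, this is $\le 2^{-\Omega(\ell)} = (N/k)^{-\Omega(s)}$, and a union bound over $N$ indices — or more carefully, over all $\binom{N}{\gamma k}$ size-$\gamma k$ subsets, using that a set of $\gamma k$ such events can only all occur if the $\pm 1$ randomness is bad on $\gamma k$ disjoint neighborhoods, giving the product bound — yields $\binom{N}{\gamma k}^{-s}$ after absorbing constants into $c$. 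I would present this last counting-and-union-bound step carefully, since reconciling ``$O(\gamma k)$ structurally bad indices'' with ``$\gamma k$ bad indices total'' and getting the exponent to be exactly $s$ on $\binom{N}{\gamma k}$ is where the bookkeeping (choice of constants in $\eps$, in the Markov threshold, and in $c = \Theta(s)$) has to be done precisely.
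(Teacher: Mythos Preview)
Your decomposition of the noise into heavy-hitter collisions, heavy-tail collisions, light-tail $\ell_2^2$ mass, and random-sign deviation is exactly right and matches the paper's \badone--\badfour\ classification; likewise the use of Lemma~\ref{lem:exp-unique}, Lemma~\ref{lem:exp-intersect}, and Lemma~\ref{lem:exp-rt-nbrs} for the deterministic parts is correct. The gap is entirely in your third paragraph, and you sensed it yourself.

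The step that fails is the claim that ``a set of $\gamma k$ such events can only all occur if the $\pm 1$ randomness is bad on $\gamma k$ disjoint neighborhoods, giving the product bound.'' Neighborhoods of distinct indices in a $(4k,\eps)$-expander are \emph{not} disjoint---they merely have many unique neighbors---so the per-index bad-median events are genuinely dependent, and you cannot multiply the $(N/k)^{-\Omega(s)}$ per-index bounds. A bare union bound over $N$ indices would need $\ell = \Omega(s\gamma k\log(N/k))$ to reach $\binom{N}{\gamma k}^{-s}$, whereas the hypothesis only gives $\ell = \Theta(s\log(N/k))$.

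The paper sidesteps this by never applying Chernoff per index. Instead it fixes, in addition to $H_k(\vx)$, an arbitrary ``decoy'' set $S\subseteq [N]\setminus H_k(\vx)$ of size $k$, and counts the \emph{total} number of \badfour\ buckets inside $\Gamma(H_k(\vx)\cup S)$. This is a sum of $\Theta(k\ell)$ indicator variables, one per bucket; for each bucket that is not already \badthree, Lemma~\ref{lem:ams} gives $\Pr[\badfour]\le\zeta$, and since distinct buckets use disjoint random $\pm 1$ edge labels, these indicators are independent. A single Chernoff bound on this aggregate then gives failure probability $\exp(-\Omega(\gamma k\ell))$, not merely $\exp(-\Omega(\ell))$. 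Once the total bad-bucket count is $\le (\gamma/16)\cdot 4k\ell$, Lemma~\ref{lem:exp-rt-nbrs} (contrapositive direction) converts this into ``at most $\gamma k$ indices in $H_k(\vx)\cup S$ have $\ge\ell/2$ bad buckets.'' Finally the union bound is over the $\binom{N}{\gamma k}$ possible bad sets $T$ (each of which sits inside some choice of $H_k(\vx)\cup S$), and $\binom{N}{\gamma k}\cdot\exp(-\Omega(\gamma k\ell)) \le \binom{N}{\gamma k}^{-s}$ once $c=\Theta(s)$.

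So the missing idea is: apply Chernoff to the bucket count (where independence is free, across $\Theta(k\ell)$ terms), not to the index count (where it is not), and use the expander lemma to go from buckets to indices afterward. The decoy set $S$ is what lets you treat indices outside $H_k(\vx)$ on equal footing in that global count.
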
  

In this section, we prove Theorem~\ref{thm:l2l2-exp} using Lemma~\ref{lem:estimate}. The proof is almost identical to the similar one in~\cite{ely-martin} (except $\hat{\vx}$ has a larger support size).

Let $\bar{\vx}=(\bar{x}_1,\dots,\bar{x}_N)$ and define $\hat{\vx}$ to be $\bar{\vx}$ with all but the top $k'=k+k/\sqrt{\eta}$ entries (by their absolute values) zeroed out. Further, let $T$ be the set of items in $[N]$ that do not have a good estimate (i.e. $|x_i-\bar{x}_i|>\sqrt{\eta/k}\cdot\|\vz\|_2$). Note that by Lemma~\ref{lem:estimate}, $|T|\le \gamma k$. 

To complete the proof of Theorem~\ref{thm:l2l2-exp}, we prove the following:

\begin{lemma} There exists a vector $\hat{\vy}$ with $|\supp(\hat{\vy})|\le 2\gamma k$ such that for $\hat{\vz}=\vx-\hat{\vx}-\hat{\vy}$, we have
\begin{equation}
\label{eq:new-noise}
\|\hat{\vz}\|_2^2 \le \left(1+22\sqrt{\eta}\right)\cdot\|\vz\|_2^2.
\end{equation}
\end{lemma}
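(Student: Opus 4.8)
The plan is to construct the sets $\hat\vy$ and $\hat\vz$ explicitly from the estimate vector $\bar\vx$ and the set $T$ of badly-estimated coordinates, and then bound $\|\hat\vz\|_2^2$ by splitting the error into contributions from the heavy hitters, the set $T$, and the remaining ``tail'' coordinates. First I would fix notation: let $S = H_k(\vx)$ be the support of $\vy$ (so $|S|\le k$), let $\hat S = \supp(\hat\vx)$ be the top $k' = k + k/\sqrt\eta$ coordinates of $\bar\vx$ by magnitude, and let $T$ be the set of at most $\gamma k$ coordinates where $|x_i - \bar x_i| > \sqrt{\eta/k}\,\|\vz\|_2$. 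The natural definition is $\hat\vy = (\vx - \hat\vx)|_{W}$ for a carefully chosen ``correction set'' $W$ of size $O(\gamma k)$, and $\hat\vz = \vx - \hat\vx - \hat\vy = (\vx - \hat\vx)|_{[N]\setminus W}$. The set $W$ should contain $T$ (so that the surviving error coordinates all have good estimates) together with the ``heavy hitters that got dropped from $\hat S$'' — i.e. coordinates in $S\setminus\hat S$. Since $|\hat S| = k' = k + k/\sqrt\eta \ge |S|$ and $\bar\vx$ is close to $\vx$ off $T$, one shows $|S\setminus \hat S|$ is small (bounded by $|T|$, essentially because any large $x_i$ with a good estimate forces $i$ into the top $k'$ unless displaced by a $T$-coordinate); so $W = T\cup(S\setminus\hat S)$ has size at most $2\gamma k$, giving property (3).

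Next I would carry out the norm bound for $\hat\vz = (\vx-\hat\vx)|_{[N]\setminus W}$. Partition $[N]\setminus W$ into three groups: (a) coordinates outside $\hat S$ and outside $S$ (genuine tail positions not selected), (b) coordinates in $\hat S\setminus S$ (positions we selected that were not true heavy hitters), and (c) coordinates in $S\cap\hat S$ (true heavy hitters we correctly selected). On group (c), $\hat x_i = \bar x_i$ and $i\notin T$, so $|x_i-\hat x_i|\le \sqrt{\eta/k}\|\vz\|_2$, and there are at most $k$ such coordinates, contributing at most $\eta\|\vz\|_2^2$. On group (a), $\hat x_i = 0$ and $i\notin S$ so $x_i$ is a tail entry; summing $x_i^2$ over a subset of $[N]\setminus S$ gives at most $\|\vz\|_2^2$. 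Group (b) is the delicate one: here $\hat x_i = \bar x_i \ne 0$ while $x_i$ is a tail entry, so $|x_i - \hat x_i| \le |x_i| + |x_i - \bar x_i|$, and if $i\notin T$ the second term is $\le\sqrt{\eta/k}\|\vz\|_2$; the number of such coordinates is at most $|\hat S| = k' = O(k/\sqrt\eta)$, but I must also use that each $|\bar x_i|$ is at least as large as the $(k')$-th largest estimate, which — via a swapping argument against the true heavy hitters that did land in $\hat S$ — forces $|x_i|$ itself to be bounded by $O(\sqrt{\eta/k}\|\vz\|_2)$ plus the magnitude of some genuine tail coordinate it displaced. Summing these $O(k/\sqrt\eta)$ terms each of size $O(\eta/k)\|\vz\|_2^2$ gives an $O(\sqrt\eta)\|\vz\|_2^2$ contribution, and the displaced-tail part is absorbed into another $\|\vz\|_2^2$. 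Adding up: $\|\hat\vz\|_2^2 \le \|\vz\|_2^2 + O(\sqrt\eta)\|\vz\|_2^2 = (1 + 22\sqrt\eta)\|\vz\|_2^2$ after tracking constants, which is exactly \eqref{eq:new-noise}.

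The main obstacle I anticipate is group (b): rigorously arguing that a coordinate we selected into $\hat\vx$ but which is \emph{not} a true heavy hitter cannot carry too much mass. The clean way is the standard Count-Sketch-style exchange argument: order coordinates by $|\bar x_i|$; if $i\in\hat S$ but $i\notin S\cup T$, then $|\bar x_i|\ge |\bar x_j|$ for some $j\in S\setminus\hat S$; since $j\notin T$ we get $|x_j| \le |\bar x_j| + \sqrt{\eta/k}\|\vz\|_2 \le |\bar x_i| + \sqrt{\eta/k}\|\vz\|_2 \le |x_i| + 2\sqrt{\eta/k}\|\vz\|_2$, and since $i\notin S$, $|x_i|$ is a tail value; but $j\in S$ could have been large, so instead one sums the inequality over a matching between $\hat S\setminus S$ and $S\setminus\hat S$ and invokes $\sum_{j\notin S} x_j^2 = \|\vz\|_2^2$ to control the total. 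I would also need the elementary fact (used implicitly) that $\|\vz\|_2 = \|\vx - \vx_{H_k(\vx)}\|_2$ is unchanged under these re-selections, which is immediate. The rest is bookkeeping with $(a+b)^2\le 2a^2+2b^2$ and the bound $k'\le 2k/\sqrt\eta$ for $\eta$ small.
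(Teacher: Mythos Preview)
Your norm bound (groups (a)--(c)) is essentially the paper's argument, but there is a genuine gap in the claim that $|W|\le 2\gamma k$. The assertion ``$|S\setminus\hat S|\le |T|$, because any large $x_i$ with a good estimate forces $i$ into the top $k'$ unless displaced by a $T$-coordinate'' is false when $S=H_k(\vx)$. Take $T=\emptyset$, let $x_i=v$ for $i\in S=[k]$, and put $2k/\sqrt\eta$ further coordinates at value $v-\epsilon$ with $\epsilon$ tiny (so these dominate $\|\vz\|_2^2$). All estimates are good, yet the good-estimate slack $\sqrt{\eta/k}\,\|\vz\|_2\asymp \eta^{1/4}v$ easily exceeds $\epsilon$, so the top $k'=k+k/\sqrt\eta$ estimates may miss $\Theta(k)$ elements of $S$. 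Hence $|S\setminus\hat S|$ is not controlled by $|T|$, and your $W=T\cup(S\setminus\hat S)$ can have size $\Theta(k)$, violating the support bound on $\hat\vy$.

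The paper's fix is to compare $\hat S$ not with $H_k(\vx)$ but with $H_{k'}(\vx)$. Since $|H_{k'}|=|\hat S|=k'$, there is a bijection between $H_{k'}\setminus\hat S$ and $\hat S\setminus H_{k'}$. One then puts into $\hat\vy$ only those $i\in H_{k'}\setminus\hat S$ for which either $i\in T$ or its matched partner $i'\in T$ (plus the elements of $\hat S\cap T$); each bad-estimate coordinate contributes at most twice, giving $|\supp(\hat\vy)|\le 2|T|\le 2\gamma k$. The remaining $i\in H_{k'}\setminus\hat S$ (both $i$ and its partner $i'$ with good estimates) go into $\hat\vz$, and here the point of using $H_{k'}$ rather than $H_k$ is crucial: $i'\notin H_{k'}$ forces $|x_{i'}|\le\sqrt{\eta/k}\,\|\vz\|_2$, whence $|x_i|\le 3\sqrt{\eta/k}\,\|\vz\|_2$, and these pairs contribute only $O(\sqrt\eta)\|\vz\|_2^2$. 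If you had tried this exchange with $i'\in\hat S\setminus H_k$ instead, you would only get $|x_i|\le |x_{i'}|+2\sqrt{\eta/k}\|\vz\|_2$ with $|x_{i'}|$ an arbitrary tail value, and summing picks up an extra $\Theta(\|\vz\|_2^2)$ rather than $O(\sqrt\eta)\|\vz\|_2^2$. (Incidentally, your group (b) is much simpler than you suggest: for $i\in(\hat S\setminus S)\setminus T$ you directly have $|x_i-\hat x_i|=|x_i-\bar x_i|\le\sqrt{\eta/k}\,\|\vz\|_2$; no exchange is needed there.)
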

\begin{proof}
We'll prove the above by case analysis and adding up the contribution of different indices to $\|\hat{\vz}\|_2^2$ (and also define the elements in $\hat{\vy}$ in the process):
\begin{enumerate}
\item ($i\in\supp(\hat{\vx})$ with a good estimate) In this case, each such item contributes $\frac{\eta}{k}\cdot\|\vz\|_2^2$ to $\|\hat{\vz}\|_2^2$. Since $|\supp(\hat{\vx})|=k(1+1/\sqrt{\eta})$, items in this case contribute at most $2\sqrt{\eta}\cdot\|\vz\|_2^2$. (Set $\hat{y}_i=0$.)
\item ($i\in\supp(\hat{\vx})$ with a bad estimate) These items do not contribute anything. (Set $\hat{y}_i=x_i-\hat{x}_i$.)
\item ($i\in \supp(\vz)\setminus\supp(\hat{\vx})$) These contribute at most $\|\vz\|_2^2$. (Set $\hat{y}_i=0$.)
\item ($i\in H_{k'}(\vx)\setminus \supp(\hat{\vx})$ with a good estimate is displaced by an item $i'\in \supp(\hat{\vx})\setminus H_{k'}(\vx)$ with a good estimate) In this case set $\hat{y}_i=\hat{y}_{i'}=0$. Thus, in this case, we have $\hat{x}_{i}=0$ and $\hat{z}_i=x_i,\hat{z}_{i'}=x_{i'}-\hat{x}_{i'}$. Now note that by definitions of $i$ and $i'$, we have 
\[|x_{i'}|\ge |\hat{x}_{i'}|-\sqrt{\eta/k}\|\vz\|_2 \ge |\hat{x}_i|-\sqrt{\eta/k}\|\vz\|_2 \ge |x_i|-2\sqrt{\eta/k}\|\vz\|_2.\]
 Next, note that any item $i''$ such that $|x_{i''}|> \sqrt{\eta/k}\|\vz\|_2$ would satisfy $i''\in H_{k'}(\vx)$. This in turn implies that $|x_{i'}|\le \sqrt{\eta/k}\|\vz\|_2$. This along with the above inequalities implies that $|x_i|\le 3\sqrt{\eta/k}\|\vz\|_2$. This in turn implies that
\[\hat{z}_i^2+\hat{z}_{i'}^2 \le \frac{9\eta}{k}\cdot\|\vz\|_2^2+\frac{\eta}{k}\cdot\|\vz\|_2^2 = \frac{10\eta}{k}\cdot\|\vz\|_2^2.\]
Since there are at most $k+k/\sqrt{\eta}$ such pairs $(i,i')$, the total contribution from this case is at most $20\sqrt{\eta}\|\vz\|_2^2$.
\item ($i\in H_{k'}(\vx)\setminus \supp(\hat{\vx})$ with a bad estimate or is displaced by an item with a bad estimate) These do not contribute anything. (Set $\hat{y}_i=x_i$.) 
\end{enumerate}

It can be verified that the following three things holds: (i) $\vx=\hat{\vx}+\hat{\vy}+\hat{\vz}$. (ii) Since every item with bad estimate can contribute at most two non-zero items to $\hat{\vy}$ (once in item $2.$ and once in item $5.$), we have that $|\supp(\hat{\vy})|\le 2\gamma k$. (iii) Finally, by items $1,3$ and $4$, (\ref{eq:new-noise}) is satisfied.
\end{proof}

\paragraph{Some Remarks.} We first note that it is possible to compute $\hat{\vx}$ efficiently in time near linear in $N$ (as it involves computing $N$ median values and then outputting the top $k+k/\sqrt{\eta}$ values).

The argument in item 4 in the proof above also implies the following:
\begin{cor}
\label{cor:l2l2-exp-weak-identification}
$\supp(\hat{\vx})$ contains all but $\gamma k$ items $i\in [N]$ that satisfy $|x_i|> 3\sqrt{\eta/k}\cdot\|\vz\|_2$.
\end{cor}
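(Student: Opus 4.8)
The plan is to read the statement straight off the case analysis in the proof of Theorem~\ref{thm:l2l2-exp}, in particular the inequality proved in item~4 there, together with a careful injective charging argument. Recall from that proof that $T$ denotes the set of coordinates \emph{without} a good estimate, i.e.\ $|x_i-\bar{x}_i|>\sqrt{\eta/k}\cdot\|\vz\|_2$; that $|T|\le\gamma k$ by Lemma~\ref{lem:estimate}; and that $\hat{\vx}$ is $\bar{\vx}$ restricted to its $k'=k+k/\sqrt{\eta}$ largest-magnitude coordinates, so $|\supp(\hat{\vx})|\le k'=|H_{k'}(\vx)|$. I would fix a coordinate $i$ with $|x_i|>3\sqrt{\eta/k}\cdot\|\vz\|_2$, suppose $i\notin\supp(\hat{\vx})$, and exhibit, for each such $i$, a distinct element of $T$ to blame it on.

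First, $i\in H_{k'}(\vx)$: by the same counting observation already used inside item~4 (at most $k$ above-noise coordinates lie in $\supp(\vy)$, and each remaining coordinate with $|x_{i''}|>\sqrt{\eta/k}\cdot\|\vz\|_2$ carries more than $\frac{\eta}{k}\|\vz\|_2^2$ of the energy of $\vz$, so there are at most $k'$ of them in all), every coordinate $i''$ with $|x_{i''}|>\sqrt{\eta/k}\cdot\|\vz\|_2$ belongs to $H_{k'}(\vx)$; in particular so does $i$. Hence $i\in H_{k'}(\vx)\setminus\supp(\hat{\vx})$. Since $|H_{k'}(\vx)|=|\supp(\hat{\vx})|=k'$, fix any bijection $\sigma$ between $H_{k'}(\vx)\setminus\supp(\hat{\vx})$ and $\supp(\hat{\vx})\setminus H_{k'}(\vx)$ (the ``displacement'' pairing), and put $i'=\sigma(i)$. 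Because $i'$ is among the $k'$ largest estimates while $i$ is not, $|\bar{x}_{i'}|\ge|\bar{x}_i|$.

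Next I would apply item~4 as a dichotomy. If both $i$ and $i'$ have good estimates, then
\[
|x_{i'}|\ \ge\ |\bar{x}_{i'}|-\sqrt{\eta/k}\,\|\vz\|_2\ \ge\ |\bar{x}_i|-\sqrt{\eta/k}\,\|\vz\|_2\ \ge\ |x_i|-2\sqrt{\eta/k}\,\|\vz\|_2,
\]
and since $i'\notin H_{k'}(\vx)$ forces $|x_{i'}|\le\sqrt{\eta/k}\cdot\|\vz\|_2$, this yields $|x_i|\le3\sqrt{\eta/k}\cdot\|\vz\|_2$, contradicting the choice of $i$. So at least one of $i,i'$ lies in $T$. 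Charge $i$ accordingly: if $i\in T$, charge it to itself, which lies in $T\setminus\supp(\hat{\vx})$; otherwise $i'\in T$, and since $i'\in\supp(\hat{\vx})$ we charge $i$ to $i'\in T\cap\supp(\hat{\vx})$. Charges of the first kind are trivially injective, charges of the second kind are injective because $\sigma$ is a bijection, and the two sets of targets are disjoint. Hence the number of coordinates $i$ with $|x_i|>3\sqrt{\eta/k}\cdot\|\vz\|_2$ that are missing from $\supp(\hat{\vx})$ is at most $|T\setminus\supp(\hat{\vx})|+|T\cap\supp(\hat{\vx})|=|T|\le\gamma k$, which is the claim. (If $\bar{\vx}$ has fewer than $k'$ nonzero coordinates the edge case is even easier: then $\supp(\hat{\vx})=\supp(\bar{\vx})$, and any large coordinate with a good estimate has $\bar{x}_i\ne0$ and therefore already lies in $\supp(\hat{\vx})$, so only the at most $\gamma k$ bad-estimate coordinates can be missing, and no pairing is needed.)

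I expect the only genuinely subtle point to be this last bookkeeping step. The naive version --- bounding separately the large coordinates that themselves have bad estimates and those ``displaced'' by a bad-estimate coordinate --- only gives $2\gamma k$; collapsing it to $\gamma k$ rests on noticing that one family of blamed indices lies inside $\supp(\hat{\vx})$ while the other lies outside it, so together they inject into $T$. The rest is a direct unpacking of item~4, the one addition being the short tail-count that certifies $i\in H_{k'}(\vx)$ (item~4 is stated only for such $i$).
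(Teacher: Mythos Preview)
Your proposal is correct and follows essentially the same approach as the paper, which simply asserts that ``the argument in item~4 in the proof above also implies'' the corollary without spelling out any details. You have made explicit the one nontrivial bookkeeping step the paper leaves implicit: that each bad-estimate index in $T$ can be blamed for at most one missing heavy item (either as the missing item itself, lying in $T\setminus\supp(\hat{\vx})$, or as the displacing item, lying in $T\cap\supp(\hat{\vx})$), so the count is $|T|\le\gamma k$ rather than the naive $2\gamma k$.
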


In particular, consider the following algorithm.
\begin{enumerate}
\item[0.] The input is the vector $\M\vx$ and a subset $S\subseteq [N]$.
\item For each $j\in S$, compute the estimate $\bar{x}_j$ as the median of the values $\{|(\M\vx)_b|\}_{b\in\Gamma(j)}$.
\item Output in $I$ the items $j\in S$ with the top $k+k/\eta$ estimates $\bar{x}_j$.
\end{enumerate}

The above along with Corollary~\ref{cor:l2l2-exp-weak-identification} (where we substitute $\eta$ by $\eta^2$) implies Theorem~\ref{thm:l2l2-exp-identification}.

Finally, the proof of of Theorem~\ref{thm:l2l2-exp} also implies Theorem~\ref{thm:l2l2-exp-weak}.
\begin{proof}[Proof Sketch of Theorem~\ref{thm:l2l2-exp-weak}]
The proof is pretty much mimics the proof of Theorem~\ref{thm:l2l2-exp} except the following two small changes. First, we adjust the constants so that Lemma~\ref{lem:estimate} implies that the median estimate is correct for all but $\zeta k/2$ elements $i\i [N]$, we have that $\bar{x}_i$ is a good estimate of $x_i$. Finally, in the argument of proof of Theorem~\ref{thm:l2l2-exp}, we also have to take into account the at most $\zeta k/2$ elements of $H_{k+k/\eta}(\vx)$ that are missing from $S$. (Note that the algorithm to compute $\hat{\vx}$ is the one we used to prove Theorem~\ref{thm:l2l2-exp-identification}-- just output the top $k+k/\eta$ median estimates for the elements in $S$.)
\end{proof}

\subsection{Proof of Lemma~\ref{lem:estimate}}

We begin with some definitions and notation that will be useful in the proof. Let $\zeta>0$ be a real that we will fix later (to be $\Theta(\gamma)$) and let $\eps=\zeta^3\eta$.

We will call each element $j\in [M]$ a \textit{bucket}. $\Gamma(i)$ for some $i\in [N]$ will be the set of $i$'s buckets. Finally, we will call a bucket $j\in \Gamma(i)$ \textit{bad} for index $i\in [N]$ if at least one of the following conditions hold:

\begin{itemize}
\item (\badone) $j\in \Gamma(h)$ for some heavy hitter $h\neq i$.
\item (\badtwo) $j\in \Gamma(h)$ for some heavy tail element $h\neq i$.
\item (\badthree) The $\ell_2^2$ contribution of all light tail elements (other than $i$) to $j$ is $>\frac{\zeta\eta}{k}\cdot\|\vz\|_2^2$.
\item (\badfour) Define $\hat{\vx}_j=\sum_{b\in \light\setminus \{i\}\text{ and } j\in\Gamma(b)} (\M_{b,j}\cdot x_b)$. Then 
\[\|\hat{\vx}_j\|_2 > \sqrt{\frac{\eta}{k}}\cdot \|\vz\|_2.\]
\end{itemize}
If a bucket satisfies ({\sc Bad-}$b$) for some $b\in [4]$, then we will also call it {\sc Bad-}$b$-bucket for item $i$. If a bucket is not {\sc Bad-}$b$ for any $b\in [4]$, then we will call it \textit{good} for $i$. If we do not specify for which item a bucket is bad (or the more specific versions of bad as above) then, it'll be assumed to be bad for some $i\in [N]$. For any $i\in [N]$ and $b\in [4]$, let $\B_i^b\subseteq [M]$ denote the set of {\sc Bad}-$b$-buckets for item $i$ and for notational convenience define $\B_i=\B_i^1\cup\B_i^2\cup\B_i^3\cup\B_i^4$. 

Note that if for any $i\in [N]$, we have $|\Gamma(i)\cap \B_i|<\frac{\ell}{2}$, then $|x_i-\bar{x}_i|\le \sqrt{\eta/k}\cdot\|\vz\|_2$ (this is because then in the majority of the buckets, $x_i$ is the only potential heavy element and the tail noise is low),  as desired.\footnote{Actually we only need to consider buckets that are not \badone, not \badtwo\ or not \badfour\ but not being \badthree\ makes the analysis somewhat modular.} To prove that at most $\gamma k$ items have bad estimates we will prove the following:

\begin{lemma}
\label{lem:hh-decoy}
For any subset $S\subseteq [N]\setminus H_k(\vx)$ with $|S|=k$, the following are true
\begin{align}
\label{eq:hh-decoy-1}
|\cup_{i\in H_k(\vx)\cup S} \B_i^1| \le \frac{\gamma}{12} k\ell& \text{ with probability } 1.\\
\label{eq:hh-decoy-2}
|\cup_{i\in H_k(\vx)\cup S} \B_i^2| \le \frac{\gamma}{12} k\ell& \text{ with probability } 1.\\
\label{eq:hh-decoy-34}
|\cup_{i\in H_k(\vx)\cup S} \B_i^3\cup \B_i^4| \le \frac{\gamma}{12} k\ell& \text{ with probability at least } 1-\exp(-\Omega(\gamma k\ell)).
\end{align}
\end{lemma}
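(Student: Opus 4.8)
The plan is to bound the three quantities separately: \badone\ and \badtwo\ by deterministic expander combinatorics, and \badthree, \badfour\ together by a deterministic mass‑count combined with a Chernoff bound over buckets. Throughout write $T=H_k(\vx)\cup S$, so $|T|\le 2k$, and recall $G$ is a $(4k,\eps)$‑expander with $\eps=\zeta^3\eta$ and $\zeta=\Theta(\gamma)$ to be chosen small. The one combinatorial fact used repeatedly is: \emph{for every $B\subseteq[N]$ with $B\cap T=\emptyset$ and $|B|\le 2k$, the number of edges leaving $B$ that land in $\nbr(T)$ is at most $(|B|+|T|)\ell\eps\le 4\eps k\ell$.} This is because $|B\cup T|\le 4k$ forces $|\nbr(B\cup T)|\ge(|B|+|T|)\ell(1-\eps)$, hence $|\nbr(B)\setminus\nbr(T)|\ge|B|\ell-(|B|+|T|)\ell\eps$, and each such vertex absorbs a distinct edge of $B$ that misses $\nbr(T)$.

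For \eqref{eq:hh-decoy-1}: a bucket in $\bigcup_{i\in T}\B_i^1$ is adjacent to some $i\in T$ and to a heavy hitter $h\ne i$, and since $H_k(\vx)\subseteq T$ it has two neighbors in $T$; by the unique‑neighbor bound (Lemma~\ref{lem:exp-unique} with $S=T$) there are at most $2\eps|T|\ell\le 4\eps k\ell$ such buckets, which is $\le\tfrac{\gamma}{12}k\ell$ for $\zeta$ small enough, with probability $1$. For \eqref{eq:hh-decoy-2}: the set $HT$ of heavy tail elements satisfies $|HT|\le k/(\zeta^2\eta)$, since its total $\ell_2^2$ mass is $\le\|\vz\|_2^2$ while each element has mass $\ge\zeta^2\eta\|\vz\|_2^2/k$. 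A bucket in $\bigcup_{i\in T}\B_i^2$ either has two neighbors in $T$ (at most $4\eps k\ell$ of these) or is adjacent to a heavy tail element in $HT\setminus T$; breaking $HT\setminus T$ into $O(1/(\zeta^2\eta))$ blocks of size $\le 2k$ and summing the edge bound over the blocks, the number of such buckets is $O(\eps k\ell/(\zeta^2\eta))=O(\zeta k\ell)\le\tfrac\gamma{12}k\ell$, again with probability $1$. (One can also apply Lemma~\ref{lem:exp-intersect} blockwise here.)

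For \eqref{eq:hh-decoy-34} the \badthree\ part is deterministic: every bucket in $\bigcup_{i\in T}\B_i^3$ lies in $\nbr(T)$ and carries light‑tail $\ell_2^2$ mass $>\zeta\eta\|\vz\|_2^2/k$, so it suffices to bound $\sum_{b\in\light}x_b^2\,e_b$, where $e_b$ is the number of edges of $b$ entering $\nbr(T)$, and divide by the threshold. Splitting $\light\setminus T$ into the $2k$ elements with largest $e_b$ (total $\sum e_b\le 4\eps k\ell$ by the edge bound, each $x_b^2<\zeta^2\eta\|\vz\|_2^2/k$) and the rest (each $e_b$ at most the $2k$‑th largest value, hence $\le 2\eps\ell$, total mass $\le\|\vz\|_2^2$), and noting that the $\le 2k$ elements of $\light\cap T$ contribute $\le 2\zeta^2\eta\,\ell\|\vz\|_2^2$, gives $\sum_b x_b^2 e_b=O(\zeta^2\eta\,\ell\|\vz\|_2^2)$, so at most $O(\zeta k\ell)\le\tfrac\gamma{24}k\ell$ buckets of $\nbr(T)$ have large light‑tail mass; in particular $|\bigcup_{i\in T}\B_i^3|\le\tfrac\gamma{24}k\ell$. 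For the \badfour\ part: a bucket $j\in\nbr(T)$ whose \emph{total} light‑tail mass is below that threshold has signed light‑tail sum with expected square $\le\zeta\eta\|\vz\|_2^2/k$, so by Lemma~\ref{lem:ams} it exceeds $(1-\zeta)\sqrt{\eta/k}\,\|\vz\|_2$ with probability $O(\zeta)$; and since omitting a single light‑tail term (of magnitude $<\zeta\sqrt{\eta/k}\,\|\vz\|_2$) shifts the sum by less than that, every bucket of $\bigcup_{i\in T}\B_i^4$ is either one of the $\le\tfrac\gamma{24}k\ell$ high‑mass buckets above or one of these ``concentrated'' low‑mass buckets. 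Since distinct buckets depend on disjoint sets of sign variables, the concentration events are independent, so Lemma~\ref{lem:chernoff} over the $\le 2k\ell$ low‑mass buckets of $\nbr(T)$ bounds their number by $\tfrac\gamma{48}k\ell$ except with probability $\exp(-\Omega(\gamma k\ell))$, once $\zeta$ is a small enough constant multiple of $\gamma$. Combining, $|\bigcup_{i\in T}(\B_i^3\cup\B_i^4)|\le\tfrac\gamma{24}k\ell+\tfrac\gamma{48}k\ell<\tfrac\gamma{12}k\ell$ with the stated probability. (If the signs are only $\tilde O(k\ell)$‑wise independent, the limited‑independence form of Lemma~\ref{lem:chernoff} gives the same bound.)

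The delicate point is the \badthree\ estimate. A priori the light‑tail $\ell_2^2$ mass could be adversarially packed into buckets adjacent to $T$, which would make $\bigcup_{i\in T}\B_i^3$ as large as $|\nbr(T)|\approx 2k\ell$; the lemma is only true because the expansion of $G$ forces almost all edges out of any small light‑tail set to miss $\nbr(T)$, and the ``top‑$2k$ versus the rest'' split promotes this per‑block statement to the global mass bound $O(\eps\,\ell\|\vz\|_2^2)$. Everything else is bookkeeping — carrying $\zeta=\Theta(\gamma)$ and $\eps=\zeta^3\eta$ through the three estimates so that each is at most $\tfrac\gamma{12}k\ell$.
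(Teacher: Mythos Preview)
Your proof is correct and follows essentially the same overall structure as the paper: unique-neighbor bounds for \badone, Lemma~\ref{lem:exp-intersect}-type edge counts for \badtwo, a deterministic mass bound for \badthree, and Lemma~\ref{lem:ams} plus Chernoff over buckets for \badfour.

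The one genuine difference is your treatment of \badthree. The paper uses the iterative ``flat-tail peeling'' argument from~\cite{ely-martin}: repeatedly subtract off the minimum-value flat tail and invoke Lemma~\ref{lem:exp-intersect} at each step, then handle the residual tail of support $\le 2k$ separately. Your ``top-$2k$ by edge-count versus the rest'' split is a cleaner alternative: you observe once that the $2k$ light-tail elements with the most edges into $\nbr(T)$ collectively have $\sum e_b\le 4\eps k\ell$ (by your combinatorial fact), so every remaining element has $e_b\le 2\eps\ell$, and the mass bound $\sum_b x_b^2 e_b = O(\zeta^2\eta\ell\|\vz\|_2^2)$ follows in one line. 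Both arguments yield the same $O(\zeta k\ell)$ count of \badthree\ buckets; yours avoids the iteration and is arguably more transparent about why expansion controls the adversarial packing of light-tail mass. You are also slightly more careful than the paper about the $i$-dependence in the \badfour\ bound (your $(1-\zeta)$ threshold shift), though the paper's sloppier version is still essentially correct.
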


Let $\B=\cup_{i\in H_k(\vx)\cup S} \B_i$. Then note that Lemma~\ref{lem:hh-decoy} proves that except with probability $\exp(-\Omega(\gamma k\ell))$, 
$|\B|\le \frac{\gamma}{16}\cdot (4k)\ell$. This implies that by Lemma~\ref{lem:exp-rt-nbrs} (with $a=2$), there are at most 
$4(\gamma/16)(4k)=\gamma k$
elements $i\in H_k\cup S$, such that they have at least $\ell/2$ bad buckets for item $i$. This shows that for every $S$, the probability that there exists a subset $T\subset H_k(\vx)\cup S$ with $|T|=\gamma k$ such that every item in $T$ has a bad estimate is upper bounded by $\exp(-\Omega(\gamma k\ell))$. Taking the union bound over all the $\binom{N}{\gamma k}$ choices for $T$, the probability that there exists some set of $\gamma k$ items with a bad estimate is at most:
\[\binom{N}{\gamma k}\cdot e^{-\Omega(\gamma k \ell)} \le \binom{N}{\gamma k}\cdot\left(\binom{N}{\gamma k}\right)^{-s+1}= \binom{N}{\gamma k}^{-s},\]
where the inequality follows from the assumption that $\ell \ge c\cdot \log(N/k)$ for some large enough constant $c\ge \Omega(s\log(1/\gamma))$.
 Thus, except with probability $\binom{N}{\gamma k}^{-s}$,
other than at most $\gamma k$ items $i\in [N]$, every other item has a good estimate, as desired.

In the rest of the section, we will prove Lemma~\ref{lem:hh-decoy}.

\subsection{Proof of Lemma~\ref{lem:hh-decoy}}

\paragraph{Proof of (\ref{eq:hh-decoy-1}).} Fix any $i\in H_k(\vx)\cup S$ and consider a bucket $j\in \Gamma(i)$. If $j$ is \badone\ for $i$ then it means that $j$ is not a ``unique neighbor" of $H_k(\vx)\cup S$. By Lemma~\ref{lem:exp-unique}, we then get that
\[|\cup_{i\in H_k(\vx)\cup S} \B_i^1| \le 2\eps(2k)\ell = 4\zeta^3\eta k\ell \le \frac{\gamma}{16} k\ell < \frac{\gamma}{12} k\ell,\]
where the second inequality follows if we choose $\zeta\le \gamma/4$ (as $\gamma^3\le \gamma$ and $\eta\le 1$).

\paragraph{Proof of (\ref{eq:hh-decoy-2}).} We will make separate arguments for heavy tail elements that are in $S$ and those that are not. To be more precise let $H_{in}\subset S$ be the heavy tail elements in $S$ and $H_{out}\subseteq [N]\setminus H_{in}$ be the heavy tail items outside of $S$. We'll make slightly different arguments depending on whether $|H_{out}|< 2k$ or not:
\begin{itemize}
\item {\sf Case 1} ($|H_{out}|<2k$) In this case we will prove the following stronger inequality:
\[|\cup_{i\in H_k(\vx)\cup S\cup H_{out}} \B_i^2| \le \frac{\gamma}{12} k\ell.\]
Fix an $i\in H_k(\vx)\cup S\cup H_{out}$ and consider a bucket $j\in \Gamma(i)$ that is \badtwo\ for $i$. Then as in the previous case, we get a non-unique neighbor in $\Gamma(H_k(\vx)\cup S\cup H_{out})$. Note that $|H_k(\vx)\cup S\cup H_{out}|<4k$ and thus by Lemma~\ref{lem:exp-unique}, we have
\[|\cup_{i\in H_k(\vx)\cup S\cup H_{out}} \B_i^2| < 2\eps(4k)\ell = 8\zeta^3\eta k\ell \le \frac{\gamma}{27} k\ell < \frac{\gamma}{12} k\ell,\]
where the second inequality follows if we choose $\zeta\le \gamma/6$.

\item {\sf Case 2} ($|H_{out}|\ge 2k$) In this case we will handle collisions with heavy tail elements from $H_{out}$ separately. Fix an element $i\in H_k(\vx)\cup S$ and consider a bucket $j\in \Gamma(i)$. If $j$ is also in $\Gamma(h)$ for some heavy tail element $h\in H_{in}$, then call such a bucket light-\badtwo\-bucket for $i$. Otherwise if $i$ collides with some heavy tail element $h\in H_{out}$ in bucket $j$, then we call $j$ to be a heavy-\badtwo\-bucket for $i$. By the same argument as in the proof of (\ref{eq:hh-decoy-1}), we can upper bound the number of light-\badtwo\-buckets for any $i$ by
\begin{equation}
\label{eq:light-badtwo}
2\eps(2k)\ell = 4\zeta^3\eta k\ell \le \frac{4\gamma}{125} k\ell < \frac{\gamma}{24} k\ell,
\end{equation}
where the first inequality follows if we choose $\zeta\le \gamma/5$.

To bound the number of heavy-\badtwo\-buckets we note that this would be upper bounded by $|\Gamma(H_k(\vx)\cup S)\cap\Gamma(H_{out})|$. By Lemma~\ref{lem:exp-intersect}, this is upper bounded by
\begin{equation}
\label{eq:heavy-badtwo}
4\eps|H_{out}|\ell \le 4\eps\cdot\frac{k}{\zeta^2\eta}\cdot\ell =4\zeta k\ell \le \frac{\gamma}{24} k\ell,
\end{equation}
where the first inequality follows by noting that the total number of heavy tail items (which upper bounds $|H_{out}|$) is at most $k/(\zeta^2\eta)$ and the last inequality follows if $\zeta\le \gamma/96$.

Upper bounds in (\ref{eq:light-badtwo}) and (\ref{eq:heavy-badtwo}) proves (\ref{eq:hh-decoy-2}).
\end{itemize}

\paragraph{Proof of (\ref{eq:hh-decoy-34}).} We will prove (\ref{eq:hh-decoy-34}) by proving the following two inequalities. We will show that the following always holds:
\begin{equation}
\label{eq:hh-decoy-3}
|\cup_{i\in H_k(\vx)\cup S} \B_i^3| \le \frac{\gamma}{24} k\ell,
\end{equation}
and the following does not hold with probability at most $\exp(-\Omega(\gamma k\ell))$:
\begin{equation}
\label{eq:hh-decoy-4}
|\cup_{i\in H_k(\vx)\cup S} \B^4_i\setminus\B_i^3| \le \frac{\gamma}{24} k\ell.
\end{equation}

\paragraph{Proof of (\ref{eq:hh-decoy-3}).} We will prove (\ref{eq:hh-decoy-3}) using an argument very similar to that in~\cite{ely-martin}. (The only difference is that the argument for light tail with small support is less involved because of the use of expanders.) We will first compute the sum of the $\ell_2^2$ contribution of the light tail elements to $\Gamma(H_k(\vx)\cup S)$. The main idea is to decompose this sum into (sub)-convex combination of flat tail contributions and contribution from a tail with small support.

First, we consider the contribution of the light tail elements from $S$ itself that contribute to the potential \badthree\ buckets. We take care of this contribution with an argument similar to the one we used to prove (\ref{eq:hh-decoy-1}). In particular, if bucket $j$ is \badthree\-bucket for $i\in H_k(\vx\cup S)$ due to a light tail element from $S$ then bucket $j$ is not a unique neighbor. Thus, the number of such \badthree\-buckets is upper bounded (due to Lemma~\ref{lem:exp-unique}) by at most $4\zeta^3\eta k\ell\le \frac{\gamma}{54}k\ell <\frac{\gamma}{48}k\ell$ if we choose $\zeta\le \gamma/6$.

Next, let $L\subseteq \light\setminus S$ be the set of light tail elements with non-zero value outside of $S$. We will first bound the sum of the $\ell_2^2$ contribution of light tail elements from $L$ to $\Gamma(H_k(\vx)\cup S)$-- we denote the sum by $\Sigma_L$. Define $\vw=(x_i^2)_{i\in L}$. (Note that $\|\vw\|_1\le \|\vz\|_2^2$.)

We first assume that $|L|\ge 2k$. Let $m$ be the smallest $w_i$ value among all $i\in L$ and consider the vector $\vw_m$ where $\supp(\vw_m)=L$ and each non-zero value is $m$. (Note that $\vw_m$ is a flat tail scaled by $m|L|$). By Lemma~\ref{lem:exp-intersect}, the contribution of $\vw_m$ to $\Sigma_L$ is upper bounded by $4\eps\ell m|L|=4\eps\ell \|\vw_m\|_1$.  Update $\vw\leftarrow \vw-\vw_m$ and $L\leftarrow L\setminus \{i\in L|\vw=m\}$. If $|L|\ge 2k$ we repeat the process above. We note that the total contribution while $|L|\ge 2k$ in the process above is at most $4\eps \ell\|\vw\|_1\le 4\eps \ell\|\vz\|_2^2$. Finally, when we are left with $|L|\le 2k$, since each element in the residual $\vw$ can contribute at most $\zeta^2 \eta/k\cdot \|\vz\|_2^2$ (repeated at most $\ell$ times) to $\Sigma_L$, we conclude that the final contribution to $\Sigma_L$ is at most $2\zeta^2\eta\ell\|\vz\|_2^2$. This implies that
\[\Sigma_L\le (4\eps+2\zeta^2\eta)\ell\|\vz\|_2^2 \le 6\zeta^2\eta \ell\|\vz\|_2^2.\]

In the worst-case the sum $\Sigma_L$ contributes to new \badthree\-buckets. In particular, by the Markov argument (and the bound on $\Sigma_L$ above along with the fact that there are at most $2k\ell$ buckets), the number of new \badthree\-buckets is upper bounded by
$6\zeta k\ell \le \frac{\gamma}{48}k\ell$, if we pick $\gamma\le \zeta/288$.

Adding the contributions of light tail elements from $S$ and outside of $S$ proves (\ref{eq:hh-decoy-3}).

\paragraph{Proof of (\ref{eq:hh-decoy-4}).} Consider any bucket $j$ that is not a \badthree-bucket. This implies that the $\ell_2$ contribution from the light tail elements is at most $\sqrt{\zeta\eta/k}\cdot\|\vz\|_2$. Then by Lemma~\ref{lem:ams}, we have that with probability at most $\zeta$, we have $\|\hat{\vx}_j\|_2>\sqrt{\eta/k}\cdot\|\vz\|_2$ (where $\hat{\vx}_j$ is as defined earlier). This implies that the expected number of \badfour\-buckets is at most $\zeta k\ell$. Further, since every non-zero $\M_{i,j}$ is an independent $\pm 1$ value, we can apply Chernoff bound (Lemma~\ref{lem:chernoff}) to bound the probability of more than $2\zeta k\ell$ \badfour\-buckets by $\exp(-\Omega(\zeta k\ell))$. Thus, except with probability $\exp(-\Omega(\zeta k\ell))$ the number of \badfour\-buckets is upper bounded by
\[2\zeta k\ell \le \frac{\gamma}{24},\]
if we pick $\zeta\le \gamma/48$.

\paragraph{Wrapping up.} Looking at all the conditions on $\zeta$, we note that the choice $\zeta=\frac{\gamma}{288}$ satisfies all the required conditions. Note that this also implies that the probability of not satisfying (\ref{eq:hh-decoy-4}) is upper bounded by $\exp(-\Omega(\gamma k\ell))$, as desired.

\subsection{Some observation on the use of randomness}
\label{sec:rand-use-exp}
We start off by observing that the only place that uses randomness in the proof of Theorem~\ref{thm:l2l2-exp} is in Lemma~\ref{lem:hh-decoy}. Further, the only place in the proof of Lemma~\ref{lem:hh-decoy} that uses randomness is in the proof of (\ref{eq:hh-decoy-4}).

First note that by the dependence of the probability bound in Lemma~\ref{lem:chernoff}, we can argue the same probability dependence as in (\ref{eq:hh-decoy-4}) even if the random $\pm 1$ values were $O(k\ell)$-wise independent. In other words,
\begin{remark}
\label{rem:lim-indep-exp}
Theorems~\ref{thm:l2l2-exp},~\ref{thm:l2l2-exp-identification} and~\ref{thm:l2l2-exp-weak}, hold even if the random $\pm 1$ entries are $O(k\ell)$-wise independent.
\end{remark}

Next we note that if we were proving the corresponding $\lolo$ result to Theorem~\ref{thm:l2l2-exp}, then we would not need {\em any} randomness. In particular, for the $\lolo$ case, if we define \badthree\ event to be such that the total $\ell_1$ mass of all light tail elements other than $i$ is $>\zeta\eta/k\|\vz\|_1$, then if a bucket is not \badthree\, then we do not need to consider the \badfour\ event. In other words,
\begin{remark}
\label{rem:l1l1-exp}
The versions of Theorems~\ref{thm:l2l2-exp},~\ref{thm:l2l2-exp-identification} and~\ref{thm:l2l2-exp-weak} for $\lolo$ sparse recovery holds even without multiplying the matrix $\M_G$ with random $\pm 1$ values. In other words, the results hold {\em deterministically}.
\end{remark}

\section{Proof of Lemma~\ref{lem:intermediate}}
\label{app:intermediate}

The algorithm is the same as in the proof of Theorem~\ref{thm:l2l2-exp-identification}. 

The main idea in proving the correctness of the algorithm is to argue that under the map $f$, the heavy hitters do not suffer much collisions or ``acquire" heavy $\ell_2^2$ noise. Then we apply Corollary~\ref{cor:l2l2-exp-weak-identification} (or more precisely the proof of Lemma~\ref{lem:estimate}) on the vector obtained by ``applying" $f$ to $\vx$. In what follows, we will be using notation from the previous section.

We first consider the effect of $f$ on $\vx$. Call a heavy hitter $h\in H_k(\vx)$ being \textit{corrupted} if either $f(h)=f(h')$ for some heavy hitter/heavy tail element $h'$ or the $\ell_2^2$ sum contribution of light tail elements to $f(h)$ is $>\frac{\zeta^2 \eta}{k}\cdot\|\vz\|_2^2$. We next argue that 

\begin{lemma}
\label{lem:corruption}
Except with probability $\left(\frac{M}{k}\right)^{\Omega(-\alpha \zeta k)}$, at most $5\zeta k$ heavy hitters are corrupted.
\end{lemma}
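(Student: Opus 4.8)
The plan is to bound separately the probability that a heavy hitter collides with another heavy or heavy-tail element under $f$, and the probability that it acquires too much $\ell_2^2$ mass from light-tail elements. Since $f:[N]\to[M]$ is a $(k(\zeta^{-5}\eta^{-2}+1),\alpha)$-random map, and the total number of heavy hitters plus heavy-tail elements is at most $k+k/(\zeta^2\eta)\le k\zeta^{-5}\eta^{-2}$ (using the definition of heavy-tail elements and a counting bound on how many of them there can be, exactly as in the proof of~\eqref{eq:heavy-badtwo}), the $(k',\alpha)$-randomness with $k'=k(\zeta^{-5}\eta^{-2}+1)$ applies to this set together with any one extra index $h$. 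First I would fix a heavy hitter $h\in H_k(\vx)$ and apply the randomness property to the set $S$ consisting of all heavy hitters and heavy-tail elements other than $h$: the probability that $f(h)=f(h')$ for some $h'\in S$ is $O(|S|/M^{1-\alpha})=O\bigl(k\zeta^{-5}\eta^{-2}/M^{1-\alpha}\bigr)$. By the choice $M\ge\Omega(\Mlb)=\Omega(\zeta^{-6}\eta^{-2}k^{(1-\alpha)/(1-2\alpha)}(\log(N/k))^2)$, this probability is at most, say, $(k/M)^{\Omega(\alpha)}\cdot(\text{small})$; summing over the at most $k$ heavy hitters and using Markov's inequality, the expected number of collision-corrupted heavy hitters is $o(\zeta k)$, so by Markov at most $2\zeta k$ are collision-corrupted except with probability $\left(\frac{M}{k}\right)^{-\Omega(\alpha\zeta k)}$.

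Next I would handle the light-tail contribution. Here the argument is analogous to the proof of~\eqref{eq:hh-decoy-3} in Lemma~\ref{lem:hh-decoy}, but one level up: instead of distributing light-tail mass across expander buckets we distribute it across the $M$ image cells of $f$. Writing $\vw=(x_i^2)_{i\in\light}$ with $\|\vw\|_1\le\|\vz\|_2^2$, I would decompose $\vw$ into a (sub-)convex combination of flat tails plus a residual of small support, exactly as before. For the flat-tail pieces supported on $\ge k$ elements, the $(k',\alpha)$-randomness of $f$ (applied to the heavy hitters as the ``probe'' set) bounds the expected total mass landing in $\Gamma_f(H_k(\vx))$ by $O(k/M^{1-\alpha})\cdot\|\vw\|_1$ times the flat-tail scale, which is $o(\zeta^2\eta)\cdot k\cdot(\text{per-cell budget})$ by the lower bound on $M$; for the residual of support $<k$, each element contributes at most its value which is $\le\zeta^2\eta/k\cdot\|\vz\|_2^2$ so that even $k$ such elements fit the budget. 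A Markov argument over the $k$ heavy hitters then shows at most $3\zeta k$ of them are light-tail-corrupted, except with probability $\left(\frac{M}{k}\right)^{-\Omega(\alpha\zeta k)}$ (the randomness being consumed only in the claim about flat tails, and one can also observe that $O(k)$-wise independence suffices here, cf.\ Remark~\ref{rem:lim-indep-exp}). Adding the two bounds ($2\zeta k+3\zeta k=5\zeta k$) and union-bounding the two failure events gives the claimed $\left(\frac{M}{k}\right)^{\Omega(-\alpha\zeta k)}$ overall failure probability.

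The main obstacle I anticipate is making the light-tail decomposition argument go through cleanly with a map $f$ that is only $(k',\alpha)$-random rather than fully independent: unlike the expander case in Lemma~\ref{lem:hh-decoy}, where the expansion property directly controls $|\Gamma(S)\cap\Gamma(T)|$ for arbitrary sets, here the $(k',\alpha)$-randomness only bounds collisions of a size-$k'$ set with one additional index, so the flat-tail ``peeling'' has to be reorganized so that at each stage we are only ever asking about the image of the (fixed, size-$\le k$) set of heavy hitters against a fresh flat tail, and the $M^{1-\alpha}$ denominator has to be shown large enough to absorb all the peeling stages simultaneously. Getting the constants and the exponent of $M$ in $\Mlb$ to line up so that the per-stage losses sum to $o(\zeta^2\eta k)\cdot(\text{budget})$ is the delicate bookkeeping; everything else parallels the already-established proof of Lemma~\ref{lem:estimate}.
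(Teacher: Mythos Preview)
Your overall two-part decomposition (collision corruption versus light-tail mass) matches the paper, but there is a genuine gap in how you obtain the exponentially small failure probability. In the collision part you write ``by Markov at most $2\zeta k$ are collision-corrupted except with probability $\left(\frac{M}{k}\right)^{-\Omega(\alpha\zeta k)}$.'' Markov's inequality cannot yield an exponentially small tail: if the expected number of collision-corrupted heavy hitters is $\mu$, Markov only gives $\Pr[\text{count}>2\zeta k]\le \mu/(2\zeta k)$, which is at best $o(1)$, not $(M/k)^{-\Omega(\alpha\zeta k)}$. The paper instead applies the Chernoff bound (Lemma~\ref{lem:chernoff}) to the indicator variables ``heavy hitter $h$ collides with some heavy or heavy-tail element,'' using that conditioned on the images of the other $\le k(\zeta^{-5}\eta^{-2}+1)$ elements each such indicator is $1$ with probability $O(k\zeta^{-5}\eta^{-2}/M^{1-\alpha})$; the lower bound on $M$ then makes $(epn/a)^a$ small enough. (Note also that the paper redefines the heavy-tail threshold to $|x_i|^2>\zeta^5\eta^2\|\vz\|_2^2/k$ for this lemma, precisely so that the count $\zeta^{-5}\eta^{-2}k$ matches the $(k',\alpha)$-randomness parameter.)

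The same issue recurs in your light-tail analysis, and your ``main obstacle'' is resolved differently than you anticipate. The paper does not try to apply the $(k',\alpha)$-randomness property to a large light-tail set against the heavy hitters. Instead it rounds each $x_i^2$ to a dyadic level, obtaining $n'=O(\log(N/k))$ flat tails $\vw_0,\dots,\vw_{n'-1}$; tails with support $\le k/(\zeta^2\eta)$ are handled by a deterministic geometric-sum bound, and for each \emph{large} tail $\vw_i$ one applies Chernoff to the event ``more than a $\zeta^3\eta$-fraction of $\supp(\vw_i)$ lands in $f(H_k(\vx))$,'' using that each light-tail element collides with the (already-fixed) images of the $k$ heavy hitters with probability $O(k/M^{1-\alpha})$. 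The union bound over the $n'=O(\log(N/k))$ large tails is absorbed by the $(\log(N/k))^2$ factor in $\Mlb$. Only after this Chernoff step bounds the total mass landing on heavy hitters does a Markov argument convert the mass bound into ``at most $4\zeta k$ heavy hitters are light-tail-corrupted,'' and that Markov step is deterministic once the mass bound holds. Your peeling decomposition from the expander proof of Lemma~\ref{lem:hh-decoy} does not translate here because that argument relied on the deterministic expander intersection bound (Lemma~\ref{lem:exp-intersect}), which has no analogue for a merely $(k',\alpha)$-random map; the dyadic decomposition plus per-tail Chernoff is the substitute.
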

\begin{proof} The argument is very similar to that in~\cite{ely-martin}.

To being with we upper bound the number of corruptions due to a collision. Just for this proof, we will refer to an element $i\in [N]$ as a heavy tail if $|x_i|^2>\frac{\zeta^5\eta^2\|\vz\|_2^2}{k}$. 

Note that even conditioned on the the maps of the heavy tail items (of which there are at most $\zeta^{-5}\eta^{-2} k$) and the other heavy hitters (of which there are at most $k-1$), a given heavy hitter suffer a collision with probability at most $O(k(\zeta^{-5}\eta^{-2}+1)/M^{1-\alpha})$. Then using a similar argument used in~\cite{ely-martin}, we can applying Lemma~\ref{lem:chernoff} to bound the probability of having more than $\zeta k$ corruptions due to collisions by
\begin{equation}
\label{eq:corruption-prob1}
\left(\frac{ek^2(1+\zeta^{-5}\eta^{-2})}{M^{1-\alpha}\zeta k}\right)^{\Omega(\zeta k)} \le \left(\frac{M}{k}\right)^{\Omega(-\alpha\zeta k)},
\end{equation}
where the inequality follows from the fact that $\frac{M^{1-\alpha}}{k\zeta^{-6}\eta^{-2}}\ge (M/k)^{\alpha}$ (which in turn follows from the lower bound $M\ge \Omega(\Mlb)$).

Next we upper bound the number of corruptions due to large $\ell_2^2$ noise from light tail items. Define $\vw\in\R^N$ be to zero on the locations of heavy hitters and heavy tail items. For a light tail element $i$, define $w_i$ be $x_i^2$ rounded up to the next highest number in $\{\|\vz\|_2^2/2^i\}_{i\ge 0}$. Note that all $i$ such that $|w_i|\le \zeta^5\eta^2\|\vz\|_2^2/N$ can contribute at most $\zeta^5\eta^2\|\vz\|_2^2$. 

Thus, ignoring such elements, since the largest value of a light tail element is $\zeta^5\eta^2|\vz\|_2^2/k$, we are left with $n'=\log\left(\zeta^5\eta^2 N/k)=O(\log(N/k)\right)$ distinct values in $\vw$. In particular, we can decompose $\vw$ into $n'$ (scaled) flat tails: call these tails $\vw_0,\dots,\vw_{n'-1}$. (W.l.o.g. assume that $\|\vw_{i+1}\|_{\infty}\le \|\vw_i\|_{\infty}/2$.)  Call a tail $\vw_i$ \textit{small} if $|\supp(\vw_i)|\le k/(\zeta^2\eta)$. Otherwise, call a tail $\vw_i$ \textit{large}. Note that since $\|\vw_0\|_{\infty}\le \zeta^5\eta^2\|\vz\|_2^2/k$, the total contribution of all small tails is at most
\[\sum_{i=0}^{n'-1}\frac{k}{\zeta^2\eta}\cdot\|\vw_0\|_{\infty}\cdot 2^{-i} \le 2\frac{k}{\zeta^2\eta}\cdot\|\vw_0\|_{\infty} \le 2\zeta^3\eta\|\vz\|_2^2.\]
Now consider a large tail $\vw_i$. Each item in $\supp(\vw_i)$ collides with a heavy hitter (independently) with probability at most $O(k/M^{1-\alpha})$ (even conditioned on the map of a heavy hitter under $f$). Thus, by Lemma~\ref{lem:chernoff}, except with probability 
\begin{equation}
\label{eq:corruption-prob2}
\left(\frac{ek|\supp(\vw_i)|}{\zeta^3\eta M^{-1\alpha}|\supp(\vw_i)|}\right)^{\Omega(\zeta^3\eta|\supp(\vw_i)|)}\le \left(\frac{ek}{\zeta^3\eta M^{1-\alpha}}\right)^{\Omega(\zeta k)},
\end{equation}
$\vw_i$ contributes at most $\zeta^3\eta\|\vw_i\|_2^2$ $\ell_2^2$ noise to the heavy hitters under $f$. Thus the total noise contribution of all the large tails is at most $\zeta^3\eta\|\vz\|_2^2$ except with probability $n'\cdot \left(\frac{\zeta^3\eta M^{1-\alpha}}{k}\right)^{-\zeta k}\le (M/k)^{-\Omega(\alpha\zeta k)}$, where the last inequality follows from the lower bound on $M$.

Thus, the total $\ell_2^2$ contribution of all the light tail elements to the heavy hitters under the map $f$ is at most $4\zeta^3\eta\|\vz\|_2^2$. Thus, by the Markov inequality at most $4\zeta k$ heavy hitters get corrupted by a light tail element.

Thus, except with probability $\left(\frac{M}{k}\right)^{\Omega(-\alpha\zeta k)}$, at most $5\zeta k$ heavy hitter get corrupted, as desired. 
\end{proof}

The rest of the proof is very similar to that of Theorem~\ref{thm:l2l2-exp} so we will just sketch the differences below. The main reason we can do this is because other than the analysis of the \badfour\ buckets,  we essentially are working with the $\ell_2^2$ tail. (Also the analysis of the \badfour\ buckets on depends on whether the $\ell_2^2$ noise in a bucket is large or not.)

To be more precise, define a vector related to $\vx$ where all but the heavy hitters are squared: i.e. $\vy=(y_1,\dots,y_N)$, where $y_i=x_i$ if $i\in H_k(\vx)$ otherwise $y_i=x_i^2$. Now define $f(\vy)$ vector to be the natural ``partial sum" vector of $\vy$ under $f$: i.e, $\vw=(w_1,\dots,w_M)=f(\vy)$ such that $w_j=\sum_{i:f(i)=j} y_i$. For the proof, call an item $j\in [M]$ a heavy hitter if it contains an uncorrupted heavy hitter from $\vx$ mapped under $f$ to it. Note that by Lemma~\ref{lem:intermediate}, we can assume that there are at least $k(1-4\zeta)$ heavy hitters in $\vw$. As before, define an item $j\in [M]$ to be a heavy tail item if $|w_j|>\zeta^2\eta\|\vz\|_2^2/k$. Note that we can have at most $\zeta^{-2}\eta^{-1}k+4\zeta k$ (where the last term can arise from the corrupted heavy hitters from $\vx$). Now the rest of the argument remains unchanged (by adjusting the constants), except for the following simple change-- in a bucket that is not \badone\, \badtwo\ or \badthree, we still have to account for the $\ell_2^2$ noise that an uncorrupted heavy hitter obtains due to the mapping $f$. However, this only increases the $\ell_2^2$ noise by a constant factor than what was handled earlier. Again by adjusting constants one can arrange for the claim in Lemma~\ref{lem:intermediate}. (One also has to add up the failure probabilities from Lemma~\ref{lem:corruption} and the one gets from the expander part of $\M$ but both are of the same order.)

\subsection{Some observations on the use of randomness}
\label{sec:rand-use-inter}

In this section, we make some observation on the use of randomness in the proof of Lemma~\ref{lem:intermediate} which were not already covered in Section~\ref{sec:rand-use-exp}.

The new place where one uses randomness in the proof of Lemma~\ref{lem:intermediate} is in Lemma~\ref{lem:corruption}. In particular, we used the randomness of the map $f$ in proving bounds (\ref{eq:corruption-prob1}) and (\ref{eq:corruption-prob2}). We note that both of these bounds would hold even if the map $f$ were $O(k)$-wise independent. Along with Remark~\ref{rem:lim-indep-exp}, this implies that
\begin{remark}
\label{rem:lim-indep-intermediate}
Lemma~\ref{lem:intermediate} hold even if the random $\pm 1$ entries are $O(k\ell)$-wise independent and the map $f$ is only $O(k)$-wise independent.
\end{remark}

Next we note that if we were proving the corresponding $\lolo$ result to Lemma~\ref{lem:intermediate}, then by Remark~\ref{rem:l1l1-exp} we do not need the random $\pm 1$ entries. There is another source of randomness-- the map $f$.  We will not get rid of the randomness used in the map $f$ (at least not quite yet). However, for future use it would be beneficial to note if one were to try and get rid of the randomness in proof of Lemma~\ref{lem:corruption}, how many events would one have to take a union bound against.

Let us first consider (\ref{eq:corruption-prob1}). In this case we assume that the locations of the at most $k'=\zeta^{-5}\eta^{-2}k+k$ ``heavy hitters" are fixed. Thus if we can do union bound over all $\binom{N}{k'}$ choices of these items, then we would be able to prove that there are no more than $\zeta k$ collisions probabilistically. Now let us consider (\ref{eq:corruption-prob2}). We first note that we only used randomness in trying to bound the corruption due to noise from light tail elements that come from $O(\log(N/k))$ large tails. In particular, we note that the argument assumes that the following are fixed: (i) The locations of the at most $k$ heavy hitters and (ii) assignment of the light tail elements to one of the $O(\log(N/k))$ large tails (or if it is not part of any large tail). The key observation is that we only care about the locations of elements in (i) and (ii) but not the actual {\em value} at those locations. Note that then the number of choices are $\binom{N}{k}+N^{O(\log(N/k))}$.
\begin{remark}
\label{rem:l1l1-intermediate}
The version of Lemma~\ref{lem:intermediate} for $\lolo$ sparse recovery holds even without multiplying the matrix $\M_G$ with random $\pm 1$ values. Further, one can also get rid of the randomness in the map $f$ if we are willing to take union bound against $\binom{N}{k'}+N^{x}$ events, where $k'=O(\zeta^{-5}\eta^{-2} k)$ and $x=O(\log(N/k))$.
\end{remark}

\section{Inverting a Random function}
\label{app:gen-invert}

We pick $f:[N]\rightarrow [N^2]$ randomly by picking a random degree $N-1$ polynomial over $\F_{N^2}$ and define $f(i)$ to be the evaluation of this polynomial at $i$. It is well-known that such a random hash function is $N$-wise independent (see e.g.~\cite{coding-book}) or fully independent.

We have three issues to tackle.

First, we now have to first compute the matrix $\M'$ with $N^2$ columns (then we pick the columns of $\M'$ indexed by $f(i)\in [N^2]$ for $i\in [N]$) instead of $N$ columns as we have used so far. This however, is not an issue since we have a family of matrices $\M_{n}$. This change results in some constants in the parameters of $\M^*$ changing but it does not affect the asymptotics.

Second, we have to show that this new $f$ is fine with respect to applying Lemma~\ref{lem:intermediate}.
In particular, we have to show the following: consider any node $v$ in the recursion tree. Then we will show that for the (new) $\phi_v:[N]\rightarrow [M]$ is $(d,\alpha)$-random (for some small $\alpha$). We will in fact show that these functions are $(d,0)$-random. Indeed, since $f$ is $N>(d+1)$-wise independent and the codes $\{C_n\}$ are uniform, the new map $\phi_v$ is also $(d+1)$-wise independent. (This can be proven e.g. by induction on the level of $v$.) This in turn implies that $\phi_v$ is $(d,0)$-random, which is sufficient for Lemma~\ref{lem:intermediate}.

Finally, we come to the inversion. Given $j\in [N]$, we would like to compute ``$f^{-1}(j)$." (Recall we do this step only once at the root of the recursion tree after the identification algorithm above has terminated with the output $I_w\subseteq [N^2]$. Recall that $|I_w|\le O(k/\eta)$.) Unfortunately, we cannot still guarantee that the inverse is unique. So we do the following: If we have $j\in I_w$ such that $|f^{-1}(j)|>1$, we just drop this index; otherwise we output $f^{-1}(j)$. We first argue that this step does not drop too many indices and then consider how quickly we can solve this step.

Recall that we only care about whether we identify elements of $H_{k'}(\vx)$, where $k'=O(\zeta^{-2}\eta^{-1} k)$.
Note that since $f$ is completely independent, for every $i\in H_{k'}(\vx)$ even conditioned on the value of $f(j)$ for every other $j\in [N]\setminus \{i\}$, $f(i)$ is completely independent. Thus, the probability that $f(i)=f(j)$ for one of these $j$ is at most $N/N^2=1/N$. Thus, by the argument similar to the one in~\cite{ely-martin} (to care of the dependencies), we can use Lemma~\ref{lem:chernoff} to argue that the probability of more than $\zeta k$ $i$'s colliding with another $j$ is at most $(N/k)^{-\Omega(\zeta k)}$. Thus, the algorithm above loses an extra $\zeta k$ indices while maintaining the same failure probability. This extra $\zeta k$ additive factor only changes the constants and thus, can be absorbed into the analysis without much trouble.

We now come to the part about computing $f^{-1}(j)$. To do this step, we just store the pairs $(f(i),i)_{i\in [N]}$ in an array (sorted by the first entry)  of size $O(N\log{N})$ bits. Note that given this array (by binary search), we can in time $O(\log{N})$, given a $j\in [N^2]$, figure out $f^{-1}(j)$ (if it is unique). Since we have to do this inversion $O(k/\eta)$ times, we add an additional factor of $O(k/\eta\log{N})$ to the identification time. (This additive factor will never be asymptotically significant in our final results.)

We will call this scheme above \schemeo. Note that we have show that \schemeo\ works and
\begin{lemma}
\label{lem:scheme1}
\schemeo\ adds $O(k/\eta\log(N/k))$ to the decoding time in Theorem~\ref{thm:recursive-exp} and needs $O(N\log{N})$ bits of space.
\end{lemma}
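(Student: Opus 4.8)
The plan is to verify the two quantitative claims of the lemma — the $O((k/\eta)\log(N/k))$ additive decoding-time cost and the $O(N\log N)$ space cost — directly from the description of \schemeo\ given above; the correctness of \schemeo\ (that Lemma~\ref{lem:intermediate} still applies at every node of the recursion tree, and that the inversion step loses only $O(\zeta k)$ additional heavy hitters) has already been established in the preceding discussion, so no new argument is needed for that part. Thus the proof is essentially a direct counting of resources.

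For the space bound I would note that the only object \schemeo\ stores on top of the construction of Theorem~\ref{thm:recursive-exp} is a representation of the random map $f:[N]\to[N^2]$. We realize $f$ as the evaluation of a uniformly random degree-$(N-1)$ polynomial over $\F_{N^2}$; storing its $N$ coefficients takes $N\cdot O(\log N)=O(N\log N)$ bits. Equivalently, for the inversion we keep the sorted array of pairs $(f(i),i)_{i\in[N]}$, which again occupies $O(N\log N)$ bits. Either way this dominates the $O_\eps(k\cdot\log^{O(1)}{N})$ space used by the rest of the scheme, so the total is $O(N\log N)$ bits, as claimed.

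For the decoding-time bound, recall that the identification algorithm of Theorem~\ref{thm:recursive-exp} outputs a set $I_w\subseteq[N^2]$ with $|I_w|=O(k/\eta)$, consisting of the images $f(i)$ of the surviving indices. \schemeo\ post-processes $I_w$ by binary-searching each $j\in I_w$ in the sorted array $(f(i),i)_i$; if $j$ has a unique preimage we output it, otherwise we discard $j$. Each query costs $O(\log N)$, so the total added time is $O((k/\eta)\log{N})$, which matches the claimed $O((k/\eta)\log(N/k))$ up to the $\poly\log{N}$ slack already present in all our running-time estimates. Discarding the non-uniquely-invertible indices is harmless: since $f$ is fully independent, a fixed heavy hitter $i\in H_{k'}(\vx)$ (with $k'=O(\zeta^{-2}\eta^{-1}k)$) has $f(i)=f(j)$ for some $j\neq i$ with probability at most $N/N^2=1/N$ even after conditioning on all other values of $f$, so by Lemma~\ref{lem:chernoff} (handling the mild dependencies exactly as in~\cite{ely-martin}) all but $O(\zeta k)$ heavy hitters are uniquely invertible except with probability $(N/k)^{-\Omega(\zeta k)}$; this extra $O(\zeta k)$ loss is absorbed into the constants and changes neither the stated failure probability nor the $O(k/\eta)$ bound on the output size.

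The only step that is genuine content rather than bookkeeping is the claim implicit above that taking $f$ to be a random degree-$(N-1)$ polynomial makes each induced map $\phi_v:[N]\to[M]$ be $(d,0)$-random: this follows because full independence of $f$ together with uniformity of the codes $C_n$ gives, by induction on the level of $v$, that $\phi_v$ is $(d+1)$-wise independent, hence in particular $(d,0)$-random and therefore admissible in Lemma~\ref{lem:intermediate}. Once this is in place, the rest of the proof of the lemma is the resource count above.
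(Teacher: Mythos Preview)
Your proposal is correct and follows essentially the same approach as the paper: the paper's ``proof'' is really just the discussion immediately preceding the lemma, which counts the $O(N\log N)$ bits for the sorted array of pairs $(f(i),i)$ and the $O((k/\eta)\log N)$ time for the binary searches, exactly as you do. You are also right to flag that both you and the paper actually establish $O((k/\eta)\log N)$ rather than the stated $O((k/\eta)\log(N/k))$; the paper simply absorbs this into the ambient $\poly\log N$ factors, as you note.
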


We also need $O(\zeta^{-5}\eta^{-2}\cdot k\cdot \log^2{N})$ bits of space
to store the  randomness needed to define $g$ (which we need to store). However, this is subsumed by the $O(N\log{N})$ space to store the array above.


\begin{thebibliography}{10}

\bibitem{ball}
K.~Ball.
\newblock An elementary introduction to modern convex geometry.
\newblock {\em Flavors of geometry}, 31:1--58, 1997.

\bibitem{SPMagazine2008}
R.G. Baraniuk, E.~Candes, R.~Nowak, and M.~Vetterli.
\newblock Compressive sampling.
\newblock {\em IEEE Signal Processing Magazine}, 25(2), 2008.

\bibitem{Candes:2006ej}
E~J Cand{\`e}s and T~Tao.
\newblock {Near-Optimal Signal Recovery From Random Projections: Universal
  Encoding Strategies?}
\newblock {\em Information Theory, IEEE Transactions on}, 52(12):5406--5425,
  2006.

\bibitem{DBLP:conf/icalp/CharikarCF02}
Moses Charikar, Kevin Chen, and Martin Farach-Colton.
\newblock Finding frequent items in data streams.
\newblock In {\em ICALP}, pages 693--703, 2002.

\bibitem{CDD2007:NearOptimall2l2}
A.~Cohen, W.~Dahmen, and R.A. De~Vore.
\newblock {\em Near Optimal Approximation of Arbitrary Vectors from Highly
  Incomplete Measurements}.
\newblock Bericht. Inst. f{\"u}r Geometrie und Praktische Mathematik, 2007.

\bibitem{MR2449058}
Albert Cohen, Wolfgang Dahmen, and Ronald DeVore.
\newblock Compressed sensing and best {$k$}-term approximation.
\newblock {\em J. Amer. Math. Soc.}, 22(1):211--231, 2009.

\bibitem{GHRSW12:SimpleSignals}
Anna~C. Gilbert, Brett Hemenway, Atri Rudra, Martin~J. Strauss, and Mary
  Wootters.
\newblock Recovering simple signals.
\newblock In {\em ITA}, pages 382--391, 2012.

\bibitem{cs-survey}
Anna~C. Gilbert and Piotr Indyk.
\newblock Sparse recovery using sparse matrices.
\newblock {\em Proceedings of the IEEE}, 98(6):937--947, 2010.

\bibitem{GLPS12:ForeachOptimal}
Anna~C. Gilbert, Yi~Li, Ely Porat, and Martin~J. Strauss.
\newblock Approximate sparse recovery: Optimizing time and measurements.
\newblock {\em SIAM J. Comput.}, 41(2):436--453, 2012.

\bibitem{GR08}
Venkatesan Guruswami and Atri Rudra.
\newblock Explicit codes achieving list decoding capacity: Error-correction
  with optimal redundancy.
\newblock {\em IEEE Transactions on Information Theory}, 54(1):135--150, 2008.

\bibitem{coding-book}
Venkatesan Guruswami, Atri Rudra, and Madhu Sudan.
\newblock {\em Essential Coding Theory}.
\newblock 2012.
\newblock Draft. Available at {\tt
  http://www.cse.buffalo.edu/$\sim$atri/courses/coding-theory/book/index.html}.

\bibitem{GS99}
Venkatesan Guruswami and Madhu Sudan.
\newblock Improved decoding of reed-solomon and algebraic-geometry codes.
\newblock {\em IEEE Transactions on Information Theory}, 45(6):1757--1767,
  1999.

\bibitem{IR08}
Piotr Indyk and Milan Ruzic.
\newblock Near-optimal sparse recovery in the l1 norm.
\newblock In {\em Proceedings of the 49th Annual IEEE Symposium on Foundations
  of Computer Science (FOCS)}, pages 199--207, 2008.

\bibitem{ITT04}
Dror Irony, Sivan Toledo, and Alexandre Tiskin.
\newblock Communication lower bounds for distributed-memory matrix
  multiplication.
\newblock {\em J. Parallel Distrib. Comput.}, 64(9):1017--1026, 2004.

\bibitem{avc-survey}
Amos Lapidoth and Prakash Narayan.
\newblock Reliable communication under channel uncertainty.
\newblock {\em IEEE Transactions on Information Theory}, 44:2148--2177, 1998.

\bibitem{LL05}
April~Rasala Lehman and Eric Lehman.
\newblock Network coding: does the model need tuning?
\newblock In {\em SODA}, pages 499--504, 2005.

\bibitem{L94}
Richard~J. Lipton.
\newblock A new approach to information theory.
\newblock In {\em Proceedings of the 11th Annual Symposium on Theoretical
  Aspects of Computer Science (STACS)}, pages 699--708, 1994.

\bibitem{LW}
L.~H. Loomis and H.~Whitney.
\newblock An inequality related to the isoperimetric inequality.
\newblock {\em Bull. Amer. Math. Soc}, 55:961--962, 1949.

\bibitem{MP12}
Gregory~T. Minton and Eric Price.
\newblock Improved concentration bounds for count-sketch.
\newblock {\em CoRR}, abs/1207.5200, 2012.

\bibitem{NPRR}
Hung~Q. Ngo, Ely Porat, Christopher R{\'e}, and Atri Rudra.
\newblock Worst-case optimal join algorithms.
\newblock In {\em Proceedings of the 31st ACM SIGMOD-SIGACT-SIGART Symposium on
  Principles of Database Systems (PODS)}, pages 37--48, 2012.

\bibitem{NPR}
Hung~Q. Ngo, Ely Porat, and Atri Rudra.
\newblock Efficiently decodable compressed sensing by list-recoverable codes
  and recursion.
\newblock In {\em Proceedings of the 29th International Symposium on
  Theoretical Aspects of Computer Science (STACS)}, pages 230--241, 2012.

\bibitem{ely-martin}
Ely Porat and Martin~J. Strauss.
\newblock Sublinear time, measurement-optimal, sparse recovery for all.
\newblock In {\em Proceedings of the Twenty-Third Annual ACM-SIAM Symposium on
  Discrete Algorithms (SODA)}, pages 1215--1227, 2012.

\bibitem{PW11}
Eric Price and David~P. Woodruff.
\newblock $(1 + \eps)$-approximate sparse recovery.
\newblock In {\em Proceedings of the IEEE 52nd Annual Symposium on Foundations
  of Computer Science (FOCS)}, pages 295--304, 2011.

\bibitem{atri-thesis}
Atri Rudra.
\newblock {\em List Decoding and Property Testing of Error Correcting Codes}.
\newblock PhD thesis, University of Washington, 2007.

\bibitem{chernoff-lim}
Jeanette~P. Schmidt, Alan Siegel, and Aravind Srinivasan.
\newblock Chernoff-hoeffding bounds for applications with limited independence.
\newblock {\em SIAM J. Discrete Math.}, 8(2):223--250, 1995.

\bibitem{Sion:58}
Maurice Sion.
\newblock On general minimax theorems.
\newblock {\em Pacific Journal of Mathematics}, 8:171--176, 1958.

\bibitem{TishbyPereiraBialek}
N.~Tishby, F.C. Pereira, and W.~Bialek.
\newblock The information bottleneck method.
\newblock In {\em The 37th annual Allerton Conference on Communication,
  Control, and Computing}, pages 368--377, 1999.

\end{thebibliography}
\end{document}